\DeclareMathOperator*{\E}{\mathbb{E}}
\DeclareMathOperator*{\Ind}{\mathbb{1}}
\DeclarePairedDelimiter\norm{\lVert}{\rVert}
\DeclarePairedDelimiter\ip{\langle}{\rangle}
\newcommand\skipi{{\vskip 10pt}}
\def \F {{\mathbb F}}
\def \N {\mathbb{N}}
\def \S {S}
\def \cB {\mathcal{B}}
\def \cD {\mathcal{D}}
\def \cS {\mathcal{S}}
\def \cT {\mathcal{T}}
\def \Gr {\text{Gr}}
\def \GL {\text{GL}}
\def \incons {\text{incons}}
\def \poly {\mathrm{poly}}
\def \eps {{\varepsilon}}
\def \val {\mathrm{val}}
\def \viol {\mathrm{viol}}
\def\ggg{\gtrsim}
\def\lll{\lesssim}
\def \supp {{\sf supp}}
\def \viol {\text{viol}}
\def \bad {\text{Bad}}
\def \good {\text{Good}}
\def \sp {\text{SP}}
\def \spn {\text{span}}
\def \symp {\text{Symp}}
\def \diam {\text{diam}}
\renewcommand{\leq}{\leqslant}
\renewcommand{\le}{\leqslant}
\renewcommand{\geq}{\geqslant}
\newcommand{\wt}[1]{\widetilde{#1}}
\newcommand{\card}[1]{\left|#1\right|}
\newcommand{\dor}[1]{#1}
\newcommand{\mitali}[1]{#1}
\newtheorem{theorem}{Theorem}[section]
\newtheorem{fact}[theorem]{Fact}
\newcounter{Assumption}
\newtheorem{assumption}[Assumption]{Assumption}
\newtheorem{lemma}[theorem]{Lemma}
\newtheorem*{lemma*}{Lemma}
\newtheorem*{theorem*}{Theorem}
\newtheorem{claim}[theorem]{Claim}
\newtheorem*{claim*}{Claim}
\newtheorem{remark}[theorem]{Remark}
\newtheorem{definition}[theorem]{Definition}
\theoremstyle{definition}
\newtheorem{agr-test}{Agreement-Test}
\newtheorem{algo}{Algorithm}
\newtheorem{list-agr-test}{List-Agreement-Test}
\let\c@fconjecture\c@conjecture
\let\c@fconj\c@conj
\title{Constant Degree Direct Product Testers with Small Soundness}
\author{Mitali Bafna
\thanks{Department of Mathematics, Massachusetts Institute of Technology. \texttt{mitali.bafna@gmail.com}}
\and Noam Lifshitz
\thanks{Einstein institute of Mathematics, Hebrew University. \texttt{noamlifshitz@gmail.com}. Supported by the Israel Science Foundation (grant no.~1980/22).}
\and Dor Minzer
\thanks{Department of Mathematics, Massachusetts Institute of Technology. \texttt{dminzer@mit.edu}. Supported by NSF CCF award 2227876 and NSF CAREER award 2239160.}
}
\date{\vspace{-5ex}}
\begin{document}
\maketitle
\begin{abstract}
Let $X$ be a $d$-dimensional simplicial complex. A function $F\colon X(k)\to \{0,1\}^k$ is said to be a direct product function if there exists a function $f\colon X(1)\to \{0,1\}$ such that $F(\sigma) = (f(\sigma_1), \ldots, f(\sigma_k))$ for each $k$-face $\sigma$. In an effort to simplify components of the PCP theorem, Goldreich and Safra~\cite{GoldreichSafra}
introduced the problem of direct product testing, which asks whether one can test if $F\colon  X(k)\to \{0,1\}^k$ is correlated with a direct product function by querying $F$ on only $2$ inputs. Dinur and Kaufman~\cite{DinurK17} conjectured that there exist bounded degree complexes with a direct product test in the small soundness regime.
We resolve their conjecture by showing that for all $\delta>0$, there exists a family of high-dimensional expanders with degree 
$O_{\delta}(1)$ and a 
$2$-query direct product tester
with soundness $\delta$.

We use the characterization given by~\cite{BafnaMinzer} and independently by~\cite{DiksteinD-agreement}, who showed that some form of non-Abelian coboundary expansion (which they called ``Unique-Games coboundary expansion'') is a necessary and sufficient condition for a complex to admit such direct product testers. 
Our main technical contribution is a general technique for showing coboundary expansion of complexes with coefficients in a non-Abelian group. This allows us to prove that the high dimensional expanders constructed by~\cite{ChapmanL} satisfy the conditions of~\cite{BafnaMinzer}, thus admitting a 2-query direct product tester with small soundness. 
\end{abstract}

\section{Introduction}
The primary goal of this paper is 
to construct direct product testers 
with constant degree. Earlier works by a subset of the authors~\cite{BafnaMinzer} and by Dinur and Dikstein~\cite{DiksteinD-agreement} have shown 
that there are coboundary-type properties that are sufficient for a complex to have in order to 
admit such a direct product tester. 
The main contribution of the
current paper is to establish that
the simplicial complex 
of~\cite{ChapmanL} has the sufficient condition presented in~\cite{BafnaMinzer}, implying that it admits direct product testers
with small soundness.

\subsection{Direct Product Testers}
The goal in direct product testing 
is to encode the values of a function $f\colon [n]\to\{0,1\}$
via a table $F$ which is testable. By that, we mean that there is an 
    efficient, randomized tester 
    that makes queries to $F$, 
    performs a test on the received 
    values, and decides to accept or reject. The tester should always 
    accept if $F$ is indeed a valid
    encoding of a function $f$; this property is often referred to
    as the completeness of the test. 
    In the converse direction, if 
    the tester accepts $F$ with 
    probability at least $s>0$, then
    $F$ must be close (in the case $s$ is close to $1$) or correlated (in the case $s$ is close to $0$) to a valid encoding of some function $f$. This property is often referred to as the soundness of the tester. The smallest $s$ for 
    which the last property holds 
    is called the soundness parameter
    of the encoding, and the number
    of queries the algorithm makes 
    is called the query complexity 
    of the algorithm. In this paper,
    we will restrict our attention to
    direct product testers with $2$ queries, which is the smallest 
    query complexity one can hope for. 

    The study of direct product testing and 
    the importance of its parameters such as completeness, soundness and 
    query complexity, originally
    comes from the work of 
    Goldreich and Safra~\cite{GoldreichSafra}.
    Their motivation was to 
    simplify components in the 
    proof of the PCP theorem~\cite{FGLSS,AS,ALMSS}, 
    and the parameters of the 
    direct product tester correspond exactly
    to the parameters of the
    PCP verifier. Another 
    motivation for the study of 
    direct product testing comes 
    from hardness amplification, 
    particularly in the 
    style of the parallel repetition theorem~\cite{Raz,Holenstein,Rao,DinurSteurerAnalytical,BravermanGarg}. Indeed, direct product testers with small soundness 
    can be viewed as a combinatorial
    analog of parallel repetition
    theorems, and these can sometimes
    be turned into proper parallel
    repetition theorems~\cite{ImpagliazzoKW09,DinurMeir}. We
    refer the reader to~\cite{DinurK17,BafnaMinzer} for further discussion on the 
    role of direct product testing
    in theoretical computer science.

    The most natural direct product 
    encoding of a function $f\colon [n]\to\{0,1\}$ is given by 
    the Johnson scheme. Here, for 
    $k\in\mathbb{N}$ thought of as
    a constant, one may encode $f$ 
    via the table $F\colon \binom{[n]}{k}\to\{0,1\}^k$ 
    as $F[A] = f|_{A}$.\footnote{We think of $[n]$ as being order in a canonical way, thus for a set $A$ of size $k$ containing $i_1<\ldots <i_k$ we define 
    $F[A] = (f(i_1),\ldots,f(i_k))$.}
    The Johnson encoding scheme also
    admits a corresponding natural
    direct product tester:
    \begin{enumerate}
        \item Sample $I\subseteq [n]$ of size $t$.
        \item Independently sample 
        $A,A'\subseteq [n]$ of size
        $k$ containing $I$.
        \item Query $F[A]$ and $F[A']$ and check that they
        agree on the coordinates of
        $I$.
    \end{enumerate}
    The above direct product tester
    has received significant attention,  as well as its variations. By now, all of these are well understood in the entire range of parameters $t$, see
    for example~\cite{DinurG08,ImpagliazzoKW09,DinurSteurer,BKM3}. In particular, it is well known that the tester has vanishing soundness in the setting that $t\approx \sqrt{k}$ and for sufficiently large $k$.\footnote{In fact, in that 
    case it is known that the soundness is exponentially small in $t$.} 
    
    The primary disadvantage with the Johnson
    encoding scheme and the corresponding direct product
    tester is its size. Indeed, 
    the size of the encoding $F$ of
    a function $f$ is $n^{k}$, which
    is polynomially large in the size of $f$. Additionally, if
    one wishes the soundness of the
    Johnson direct product tester to 
    be small, say $\delta$, one must
    take $k$ to be sufficiently large, hence getting a large blow-up in the encoding size. 
    For some applications, this 
    size blow-up is too costly. 
    For instance, in applications in
    PCPs, the blow-up introduced by
    the use of parallel repetition
    theorem is often what dominates
    the blow-up in the instance size
    reductions produce.

    \subsection{Direct Product Testers via High Dimensional 
    Expanders}
    In the quest for more efficient ways to amplify hardness (often called derandomized hardness amplification), Dinur and Kaufman~\cite{DinurK17}
    suggested high dimensional 
    expanders as a sparse object
    that may facilitate direct
    product testers. To describe 
    their result, we first take a quick detour to present several basic notions from the field of high dimensional expansion, abbreviated HDX henceforth.
    
    A $d$-dimensional 
    simplicial complex 
    $X = (X(0),\ldots,X(d))$ 
    with vertex set $X(1) = [n]$
    is a downwards closed 
    collection of subsets of $[n]$.
    We follow the convention that 
    $X(0) = \{\emptyset\}$, 
    and for each $i>1$ the set of 
    $i$-faces $X(i)$ is a collection
    of subsets of $X(1)$ of size $i$. The size of $X$ is the total
    number of faces in $X$. The 
    degree of a vertex $v\in X(1)$
    is the number of faces containing
    it, and the degree of $X$ is 
    the maximum degree over all $v\in X(1)$.
    \begin{definition}
For a $d$-dimensional simplicial complex $X = (X(0),X(1),\ldots,X(d))$, $0\leq i\leq d-2$
and $I\in X(i)$, the link of $I$ is the $(d-i)$-dimensional complex $X_I$ whose faces are given as
\[
X_I(j-i) = \{J\setminus I~|~J\in X(j), J\supseteq I\}.
\]
\end{definition}
For a $d$-dimensional complex $X=(X(0),X(1),\ldots,X(d))$ 
and $I\in X$ of size at most $d-2$, the 
graph underlying the link of $I$ is 
the graph whose vertices are $X_I(1)$ 
and whose edges are $X_I(2)$. 
We associate with $X$ a collection 
of distributions over faces. The distribution $\mu_d$ is the uniform 
distribution over $X(d)$, and 
for each $i<d$ the distribution 
$\mu_i$ is a distribution over 
$X(i)$ which results by picking 
$D\sim \mu_d$, and then taking 
$I\subseteq D$ of size $i$ 
uniformly. For convenience, we 
encourage the reader to think of $\mu_i$ as the uniform distribution 
over $X(i)$ (though often times this
is not the case, the fact that $\mu_i$ is not actually the uniform
distribution is rarely an issue).

\begin{definition}
    We say a $d$-dimensional
    simplicial complex $X$ is 
    a $\gamma$ one-sided local
    spectral expander if for 
    every $I\in X$ of size at 
    most $d-2$, the second 
    eigenvalue of the 
    normalized adjacency matrix
    of the graph $(X_I(1), X_I(2))$ 
    is at most $\gamma$.
\end{definition}
With this definition, the result of 
Dinur and Kaufman~\cite{DinurK17} 
asserts that if $X$ is a $\gamma$ 
one-sided local spectral expander
for a sufficiently small $\gamma$, 
then $X$ admits a $2$-query direct
product tester with soundness $s=1-\eps$. Their direct product tester 
is a direct analog of the Johnson
direct product tester. It is parameterized by $k\in\mathbb{N}$
which is much smaller than $d$, 
and proceeds as follows. 
Given an assignment $F\colon X(k)\to\{0,1\}^k$, the tester proceeds as follows:
\begin{enumerate}
    \item Sample $D\sim \mu_d$.
    \item Sample $I\subseteq D$
    of size $\sqrt{k}$ uniformly.
    \item Sample $I\subseteq A,A'\subseteq D$ independently.
    \item Query $F[A]$ and $F[A']$
    and check that they agree on 
    the coordinates of $I$.
\end{enumerate}
The result of Dinur and Kaufman~\cite{DinurK17} asserts 
that provided that $\gamma$ is small enough, if $F$ passes the above tester with probability $1-\eps$,
then there exists $f\colon X(1)\to\{0,1\}$ such that with probability 
$1-O(\eps)$ over the choice of $A\sim \mu_k$ we have that $F[A] = f|_{A}$. We refer to the above tester
as the canonical direct product 
tester of $X$.

The main open problem left from the work~\cite{DinurK17} is whether 
there are high-dimensional expanders
that facilitate direct product testers in the low soundness regime.
With regards to this, the works~\cite{BafnaMinzer,DiksteinD-agreement} 
both show that spectral expansion is
insufficient. Namely, there are 
high-dimensional expanders with 
arbitrarily good local spectral
expansion for which the above natural
direct product tester fails in the 
low-soundness regime. These two 
works then went on to seek additional properties
of high-dimensional expanders 
that will imply that the canonical
direct product tester has 
small soundness.
    
\subsection{Main Result}
The main result of this paper is that there exist families of high-dimensional expanders for which the canonical direct product tester has small soundness. In fact, the complexes we prove this for are  variants of the complexes
constructed by Chapman and Lubotzky~\cite[Section 5]{ChapmanL} for sufficiently large parameters $p,n\in\mathbb{N}$, which
are explicit and efficiently computable. The relevance of variants of the complexes of~\cite{ChapmanL} was communicated to us in~\cite{DDLpersonal}.

\begin{theorem}\label{thm:main}
For all $\delta,\gamma>0$, for 
sufficiently large $d$, there 
    is $\Delta\in\mathbb{N}$ such 
    that the following holds. There are infinitely many $n$ for which there is a $d$-dimensional simplicial complex $X$ such that:
    \begin{enumerate}
        \item The complex $X$ has $n$ vertices and is a $\gamma$ one-sided spectral expander.
        \item For $k \geq k_0(\delta)$, the canonical direct product tester for $X$ 
        has soundness $\delta$.
        \item The complex $X$ has
        degree at most $\Delta$.
    \end{enumerate}
    \end{theorem}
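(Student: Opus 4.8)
The plan is to derive Theorem~\ref{thm:main} by combining the sufficient condition for small soundness proved in~\cite{BafnaMinzer} (and, independently, in~\cite{DiksteinD-agreement}) with structural properties of the complexes of Chapman and Lubotzky~\cite{ChapmanL}. Recall that~\cite{BafnaMinzer} shows, roughly speaking, that if a $d$-dimensional complex $X$ is a $\gamma$ one-sided local spectral expander for a sufficiently small $\gamma=\gamma(\delta)$, and in addition $X$ has good ``Unique-Games coboundary expansion'' --- which by the standard local-to-global paradigm one establishes by controlling the (non-Abelian) coboundary expansion of all links $X_I$ with $|I|$ at most some absolute constant --- then the canonical direct product tester of $X$ with window $k\geq k_0(\delta)$ has soundness at most $\delta$. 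So it suffices to produce one family of bounded-degree complexes satisfying all of these hypotheses. I would take $X$ to be the variant of the Chapman--Lubotzky complex of~\cite[Section~5]{ChapmanL} (as suggested in~\cite{DDLpersonal}) with parameters $p,n$ chosen large. The analysis of~\cite{ChapmanL} already gives, for every target $\gamma$ and every large enough $d$, that for infinitely many $n$ there is such a $d$-dimensional complex on $n$ vertices that is a $\gamma$ one-sided local spectral expander and has degree bounded by a constant $\Delta=\Delta(d,\gamma)$; this yields items~(1) and~(3) of the theorem and the spectral hypothesis needed to invoke~\cite{BafnaMinzer}. Hence everything reduces to item~(2): verifying the Unique-Games coboundary expansion of $X$.

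For that statement, the plan is a two-stage argument, which is the technical core. The first stage is a local-to-global reduction: using the one-sided spectral expansion of $X$ together with a ``cocycle'' version of the expander mixing lemma --- a non-Abelian analogue of the standard Abelian local-to-global arguments for cosystolic expansion --- I would reduce the coboundary expansion of $X$ and of each of its links to coboundary expansion of the links of low dimension. In the Chapman--Lubotzky construction these bottom links are highly structured: they carry a transitive group action and are closely related to spherical buildings of type~$A$, which are simply connected and for which coboundary expansion with Abelian coefficients is classical. The second stage is to bound their coboundary expansion \emph{with non-Abelian coefficients} by hand. Concretely, given a $1$-cochain $\phi$ valued in a symmetric group $\mathrm{Sym}(\Sigma)$ that is consistent on all but an $\eps$-fraction of triangles, I would fix a base vertex $v_0$, propagate the values of $\phi$ along a carefully chosen system of short paths from $v_0$ to define a candidate vertex cochain, and argue via the building structure and the spectral gap that the candidate disagrees with $\phi$ on an $O(\eps)$-fraction of edges, giving a constant coboundary expansion rate. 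Composing the two stages then gives the required coboundary expansion for all links of bounded codimension, and hence the hypothesis of~\cite{BafnaMinzer}.

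The main obstacle is the base-case analysis in the non-Abelian setting. With Abelian coefficients one fills cocycles and bounds the filling cost using Fourier-analytic or eigenvalue arguments; with coefficients in $\mathrm{Sym}(\Sigma)$ these tools are unavailable, and the error one must control \emph{multiplies} rather than adds as one composes transformations along paths, so a single inconsistent triangle threatens to corrupt an entire region of the filling. Overcoming this requires organizing the filling carefully --- choosing the composition order, the base point, and a system of paths with few pairwise collisions --- and exploiting the strong symmetry and the spectral and combinatorial expansion of the Chapman--Lubotzky links to show that the set of edges affected by inconsistent triangles stays an $O(\eps)$-fraction. I expect this, together with setting up the right non-Abelian local-to-global inequality in the first stage, to absorb essentially all of the technical effort. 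The remaining bookkeeping --- fixing $k_0(\delta)$, then taking $d$ large in terms of $\delta$ and $\gamma$, and finally reading off $\Delta$ from~\cite{ChapmanL} --- is routine.
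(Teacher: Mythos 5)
Your high-level framing is right --- appeal to the sufficient condition of~\cite{BafnaMinzer}, instantiate with the Chapman--Lubotzky complex, verify the UG coboundary condition via a local-to-global argument plus a hand analysis of the bottom links via path-propagation (cones). But there are two substantive gaps, plus a factual error, that together make your second stage fall short of what is actually needed.

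First, the local-to-global step alone does not give coboundary expansion. What the non-Abelian local-to-global machinery (\`a la~\cite{DiksteinD23}) yields from coboundary expansion of the links is only \emph{cosystolic} expansion of the ambient complex: given a UG instance with few inconsistent triangles, you can change few constraints to make it fully triangle-consistent. To pass from there to actual coboundary expansion (a global vertex labelling realizing the constraints) you still need to know that every \emph{fully} triangle-consistent instance on $G_r[X]$ is trivial --- that is, $X$ must have vanishing first cohomology with coefficients in $S_m$. This is a genuine additional ingredient, not a consequence of spectral expansion, and it is precisely why the parameters $p,n$ of the Chapman--Lubotzky complex must be chosen appropriately (for instance so that $H^1(\Gamma\backslash B, S_m)=0$ via the analysis of $\mathrm{Hom}(\Gamma, S_m)$). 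Your outline never invokes or proves this, so the pipeline as written ends at cosystolic expansion.

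Second, and more importantly, the quantitative demand is much steeper than your base-case plan can meet. Theorem~\ref{thm:BM} requires $X$ to be an $(m,r,2^{-r/\log\log\log r}, c)$ UG coboundary expander for a large parameter $r$, which means the coboundary constant of $G_r[X]$ (and hence of $G_r$ of every vertex link) must be $2^{o(r)}$ --- subexponential in $r$. A cones-style propagation argument on a fixed tripartite graph with a single base vertex, short paths and a small triangulation indeed gives a \emph{constant} (or $\poly(r)$) coboundary constant, but only on the constant-size ``base-case'' pieces (the tripartite graphs over one dimension per part, or the well-ordered blocks). Lifting that to level $r$ by iterating a one-coordinate-at-a-time recursion gives a multiplicative loss per step and hence $2^{\Theta(r\log r)}$ or worse, which is far too large to satisfy the $2^{-r/\log\log\log r}$ requirement. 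The paper's core technical content is exactly engineered to beat this: an ``extended base case'' handling well-ordered blocks of size $k = r^{0.01}$ with only a $\poly(r)$ loss (using productization after restriction), combined with a non-lopsided induction that removes $\Theta(k)$ coordinates per step so the total is $\poly(r)^{r/k} = 2^{O(r^{0.99}\log r)}$. Your proposal does not contain anything achieving this crucial subexponential improvement; the ``careful system of paths'' and base-point argument, as stated, only address the $t=O(1)$ base case.

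Finally, a factual correction: the vertex links of the Chapman--Lubotzky complex ($\tilde C_n$-type Bruhat--Tits building quotient) are \emph{not} type~$A$ spherical buildings; they are spherical buildings of type $C_{n-1}$, or tensor products of type $A_1$ with $C_{n-2}$, or of $C_k$ with $C_{n-k-1}$. The type $C$ (symplectic) case is genuinely harder than type $A$ --- for instance, the existing coboundary bounds for the type $A$ building from prior work (e.g.~\cite{dikstein2023swap}) are insufficient here, and the paper has to redo the cones/triangulation analysis for isotropic flags from scratch, including a dedicated treatment of the special $8$-cycles that arise. Identifying the links as type~$A$-like would substantially understate the difficulty.
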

    It has been brought to our attention that Dikstein, Dinur, and Lubotzky have concurrently and independently established Theorem~\ref{thm:main}.

    
    \subsection{UG Coboundary Expansion}
    The starting point of the proof
    of Theorem~\ref{thm:main} is the
    characterization of~\cite{BafnaMinzer} of high-dimensional expanders 
    for which the canonical tester
    has small soundness.
    Towards this end, we begin by defining non-Abelian affine unique games over graphs.
\begin{definition}\label{def:unique-games}
An instance of Affine Unique Games $\Psi = (G, \Pi)$ over the symmetric group $\S_m$ consists of a graph $G = (V,E)$ and a collection of permutations, one for each ordered edge, $\Pi = \{\pi_{u,v}\}_{(u,v)\in E}$, where $\pi_{u,v} \in \S_m$ and $\pi_{u,v} = \pi_{v,u}^{-1}$. An assignment to $\Psi$ is a function $A: V \to \S_m$, and we denote by $\val(A)$ the fraction of constraints satisfied by $A$, that is, 
\[
\val(A) = \Pr_{(u,v)\sim E}[A(u) = \pi_{u,v}A(v)].
\]
We denote by $\viol(A)$ the fraction of constraints violated by $A$, that is, $\viol(A) = 1-\val(A)$. The value of the instance $\Psi$ is defined as 
$\val(\Psi) = \max_{A: V \to \S_m}\val(A)$.
\end{definition}
We refer to the above instances as affine Unique Games because they can be thought of as systems of equations of the
form $A(u)A(v)^{-1} = \pi_{u,v}$ 
over $S_m$, wherein the goal is to find an $\S_m$-labeling of the vertices that satisfies as many of the equations as possible. 

\begin{definition}
Let $G = (V,E)$ be a graph, equipped with a distribution $\cD$ over triangles in $G$. We say that a UG instance $\Phi = (G,\{\pi_{u,v}\}_{(u,v) \in E(G)})$ is $(1-\delta)$-triangle consistent if:
\[\Pr_{(u,v,w) \sim \cD}[\pi_{u,v}\pi_{v,w}\pi_{w,u} = \text{id}] \geq 1-\delta.\]
We let $\incons(\Phi)$ denote the fraction of triangles that are inconsistent.
\end{definition}

Note that if we have a graph $G=(V,E)$ and an assignment 
$A\colon V\to \S_m$, then defining 
$\pi_{u,v} = A(u)A(v)^{-1}$ gives
a fully triangle consistent instance
of affine Unique-Games. The property
of coboundary expansion asserts that every highly triangle consistent instance on $G$ 
arises in this way. To define UG coboundary expansion for a complex we first look at a certain family of graphs arising from a complex.

\begin{definition}
Let $X$ be a $d$-dimensional simplicial complex, and let $r\leq d/3$. We define the graph 
$G_r[X]$ whose vertex set is 
$X(r)$, and whose set of edges 
$E_r[X]$ consists of pairs of 
vertices $(u,v)$ such that $u\cup v\in X(2r)$. The distribution over
triangles associated with this graph
is the distribution where we pick 
a face from $X(3r)$ according to 
the measure $\mu_{3r}$, and then 
split it randomly as $u\cup v\cup w$
where $u,v,w\in X(r)$.
\end{definition}

\dor{With the definition of the graph 
$G_r[X]$, we may introduce the notion of UG coboundary expansion from~\cite{BafnaMinzer}. A similar definition also appeared in~\cite{DiksteinD-agreement} and~\cite{dikstein2023swap}, but it will be more convenient for us to use the terminology of~\cite{BafnaMinzer}.}
\begin{definition}\label{def:ug_coboundary_expander}
We say that a $d$-dimensional simplicial 
complex $X$ is an $(m,r,\xi,c)$ UG coboundary expander if for all $t\leq r$
and for all functions $f\colon E_t[X]\to S_m$ that are $(1-\xi)$-consistent on triangles, there is $g\colon X(t)\to S_m$
such that
\[
\Pr_{u\cup v\sim \mu_{2t}}
\big[\pi(u,v) = g(u)g(v)^{-1}\big]\geq 1-c.
\]
\end{definition}
We remark that the notion of UG coboundary expansion can be seen as a non-Abelian version
of the usual notion of coboundary 
expansion. For $r=1$ this definition
appeared in earlier works \dor{of~\cite{DinurMeshulam,GotlibK23,DiksteinD23}}. 
In the case of Abelian groups, there is a natural extension of coboundary
expansion to higher dimensions. That definition however does not make much sense over non-Abelian groups, and therefore one has to come up with a 
different definition for non-Abelian groups and $r>1$. Indeed, this is the main point of Definition~\ref{def:ug_coboundary_expander}, and this will be the only notion of coboundary expansion discussed herein.

The main result of~\cite{BafnaMinzer}
asserts that a high-dimensional 
expander $X$ with sufficiently strong
UG coboundary parameters has a canonical direct product tester with 
small soundness; see~\cite[Theorem 1.9, Theorem B.1]{BafnaMinzer}. 
For future reference, we give here 
the version we use in our application:
\begin{theorem}\label{thm:BM}
    There exists $c>0$ 
    such that for all $\eps,\delta>0$ there is 
    $\eta$ and sufficiently large $m,r\in\mathbb{N}$, 
    such that for sufficiently large
    $k$, sufficiently large $d$ 
    and sufficiently small $\gamma>0$
    the following holds. Suppose that $X$ is a $d$-dimensional simplicial complex such that:
    \begin{enumerate}
        \item A $\gamma$ one-sided local spectral expander.
        \item An $(m,r,2^{-r/\log\log\log r},c)$ UG
        coboundary expander.
    \end{enumerate}
    If $F\colon X(k)\to\{0,1\}^k$ passes the canonical direct product tester on $X$ with
    probability at least $\delta$, 
    then there exists $f\colon X(1)\to\{0,1\}$ such that
    \[
        \Pr_{A\sim \mu_k}\left[\Delta(F[A],f|_{A})\leq \eps k\right]\geq \eta. 
    \]
\end{theorem}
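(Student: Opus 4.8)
This theorem is a reformulation of \cite[Theorem 1.9 and Theorem B.1]{BafnaMinzer}, and the plan is to derive it by combining those two statements and reconciling the quantifiers; I sketch the structure below so as to make clear where each hypothesis enters. The overall strategy is a \emph{local-to-global agreement} argument. Starting from $F\colon X(k)\to\{0,1\}^k$ that passes the canonical tester with probability $\delta$, one first produces consistent local information on faces of intermediate dimension $t\le r$, then invokes UG coboundary expansion to stitch this information into a single global object, and finally decodes that object into a function $f\colon X(1)\to\{0,1\}$ meeting the claimed agreement bound.

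\emph{Step 1: from the two-query test to agreement on intermediate faces.} Conditioning on the top face $D\sim\mu_d$, the queried pair $(A,A')$ behaves like the Johnson direct product test inside $D$; combining the (expander-free) analysis of the Johnson test with the one-sided local spectral expansion of $X$ and the usual HDX sampling and trickling-down machinery, one extracts from a success probability $\delta$ the following: for a noticeable fraction of faces $u\in X(t)$ with $t\le r$ there is a short list of ``popular'' partial functions $f_u\colon u\to\{0,1\}$, each of which agrees with $F[A]$ on the coordinates of $u$ up to $\eps|u|$ errors for a noticeable fraction of $A\supseteq u$. (The list is needed because, for small $\delta$, a passing table is only correlated with --- not close to --- a direct product.) This reduction, together with the soundness amplification it relies on, is essentially the content of \cite[Theorem B.1]{BafnaMinzer}; the spectral hypothesis is precisely what allows one to move between the global distribution $\mu_k$ and the local distributions inside links.

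\emph{Step 2: globalizing via UG coboundary expansion.} For each $t\le r$, choose a representative popular function per face and read off, for every edge $(u,v)$ of $G_t[X]$, the permutation $\pi(u,v)\in\S_m$ recording how the two local opinions are related on $u\cap v$. Using the expansion of the triangle distribution on $X(3t)$ one shows that these permutations compose to the identity on all but a $\xi=2^{-r/\log\log\log r}$ fraction of triangles, i.e.\ the resulting affine Unique Games instance is $(1-\xi)$-triangle consistent. The hypothesis that $X$ is an $(m,r,\xi,c)$ UG coboundary expander then yields $g\colon X(t)\to\S_m$ with $\pi(u,v)=g(u)g(v)^{-1}$ on a $(1-c)$ fraction of edges, so the chosen local functions are mutually consistent on a $(1-O(c))$ fraction of faces; from this one reads off a single $f\colon X(1)\to\{0,1\}$ and, propagating agreement back up through the links (again using spectral expansion), obtains $\Pr_{A\sim\mu_k}[\Delta(F[A],f|_A)\le\eps k]\ge\eta$. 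This last propagation, and the base-case amplification underlying it, is \cite[Theorem 1.9]{BafnaMinzer}.

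The main obstacle is the parameter bookkeeping across this chain: one must fix $\eps,\delta$, then choose $m$ and $r$, then $\eta$, and only then $k$, $d$, and $\gamma$, in an order that makes (i) the Johnson analysis inside a $d$-face operate in its low-soundness regime, (ii) the induced UG instance triangle consistent at the demanding rate $2^{-r/\log\log\log r}$ to which the coboundary-expansion hypothesis can be applied, and (iii) the loss incurred in turning ``$g$ exists'' into ``$f$ agrees on a $(1-\eps)$-fraction of the coordinates of a typical $A$'' stay below the target $\eta$. The delicate point is that the triangle-consistency rate one can prove for the induced instance degrades as $r$ grows, so the coboundary-expansion input is only available at a rate mildly subconstant in $r$ rather than a fixed constant; carrying the argument through with such a weak rate is exactly what forces the quantitatively strong form $\xi=2^{-r/\log\log\log r}$ in hypothesis (2), and matching this is the crux.
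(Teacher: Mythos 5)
You correctly identify that the paper does not prove Theorem~\ref{thm:BM} at all: it is imported directly from~\cite{BafnaMinzer} (combining Theorems~1.9 and~B.1 there) and stated only as the specific version needed for the application, so your plan to derive it by citing and reconciling those two results matches the paper's treatment. The internal sketch you give of the local-to-global argument and the parameter bookkeeping is a plausible reconstruction of what happens inside~\cite{BafnaMinzer}, but since the paper reproduces no proof, there is nothing in it to check that sketch against.
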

Using Theorem~\ref{thm:BM}, it
suffices to construct a complex  
which is simultaneously a strong
spectral expander, as well as a UG
coboundary expander for sufficiently
good parameters. With regard to 
this, we mention the work of~\cite{DiksteinD23} that presents a technique that allows one to 
prove coboundary-type properties with bounds that are independent of the dimension of the complex. We also mention the recent work
~\cite{dikstein2023swap} which 
shows coboundary-type properties
for general buildings with for up to $\delta = 2^{-O(r)}$ inconsistent
triangles, and an improved bound 
of $\delta = 2^{-O(\sqrt{r})}$ 
for the spherical building of type A.

Studying spherical buildings and
proving coboundary expansion 
results for them is a central part
of our proof as well, however the
result of~\cite{dikstein2023swap} is
insufficient as far as we know. 
The spherical buildings we have
to study are spherical buildings of
type $C_{n}$, and for them we have to 
be able to handle Unique-Games instances with $\delta = 2^{-o(r)}$
fraction of inconsistent triangles.
Indeed, the main technical contribution is a fairly general technique for proving that a simplicial complex is a UG coboundary expander.

\subsection{Our Techniques}
The rest of this introductory section
is devoted to an overview of the proof of Theorem~\ref{thm:main}, 
and we start with the following
definition.
\begin{definition}\label{def:coboundary-constant}
We say that a graph is a $(C(G), \beta(G))$-coboundary expander over $\S_m$ if for all $\delta \in [0,1]$ and all UG instances $\Psi$ over $S_m$, with $\incons(\Psi) \leq \delta$, there exists an assignment $A: V \to S_m$ with $\viol_\Psi(A) \leq C(G)\delta+\beta(G)$.
When $\beta(G)$ is $0$, we say that a graph is a $C(G)$-coboundary expander over $\S_m$.
\end{definition}
The additive error $\beta(G)$ is necessary in our proofs, but 
we encourage the reader to ignore
it and think of it as $0$ for 
the purposes of this section.
Fix a simplicial complex $X$; we
will eventually take $X$ to be Chapman Lubotzky complex with an appropriate choice of parameteres~\cite{ChapmanL}. 
The main components of our proof 
proceed as follows:
\begin{enumerate}
    \item {\bf Local to global 
    for cosystolic expansion:} 
    we use an idea of~\cite{EvraKaufman, DiksteinD23} 
    who show that to prove
    that a simplicial complex 
    is a cosystolic expander, 
    it suffices to show that its
    links are coboundary expanders.
    Here and throughout, we 
    say that a graph $G=(V,E)$ is 
    a $C$-cosystolic expander if
    for every Unique-Games instances
    over $G$ with at most $\delta$ 
    fraction of inconsistent triangles, we may
    modify the constraints of $G$ 
    in at most $C\delta$ fraction 
    of the edges and get that 
    all triangles are fully consistent. We use this idea
    to show that if the links
    of $X$ are coboundary 
    expanders, then $X$ itself 
    is a cosystolic expander 
    with similar parameters.
    \item {\bf Vanishing Cohomology of $X$:} inspired by the first 
    item, one is motivated to ask
    the question of what is the 
    constraint structure of fully 
    triangle consistent Unique Games instances on the graphs
    associated with $X$. Here, we use the fact, 
    communicated to us by Dikstein, Dinur and 
    Lubotzky~\cite{DDLpersonal}, that the parameters of the Chapman--Lubotzky~\cite{ChapmanL} can be chosen 
    appropriately so that the complex has vanishing cohomology over $\S_m$.
    
    \item {\bf Coboundary expansion of the links of $X$:} Together, 
    the first two items imply 
    that to prove that $X$ is a 
    coboundary expander, it suffices
    to prove that the links of $X$
    are coboundary expanders. Indeed, in that case, if $\Psi$ 
    is a $(1-\delta)$-consistent 
    Unique-Games instance over 
    $\S_m$, by the first 
    item we can modify the constraints
    of $\Psi$ on at most $C\delta$
    fraction of edges for $C = 2^{o(r)}$ to get an instance $\Psi'$
    which is fully triangle consistent. Invoking the 
    second item, we conclude a structural result about the constraints of $\Psi'$, which 
    automatically implies a 
    similar result for $\Psi$.

    Proving that the links of $X$ have sufficiently good coboundary expansion consists of the bulk of our effort in this paper. We do so
    by an inductive argument on
    the parameter $r$. 
\end{enumerate}
In the rest of this section, we
elaborate on the third item above
as we consider it to be the main
contribution of this paper. 
We begin by noting that for our complex
$X$, the links are either isomorphic to
the spherical buildings 
of type $C_n$ or to product of two such spherical buildings. Given a 
prime $p$ and a dimension $d$, the spherical building of type $A$ refers to the complex whose vertex
set is the set of all nontrivial subspaces of 
$\mathbb{F}_p^d$, and whose $k$-faces
are $k$-flags of subspaces, namely 
$\{A_1,\ldots,A_k\}$ where $A_1\subseteq A_2\subseteq\ldots\subseteq A_k$. 

The spherical buildings with type 
$C$ are similar, except that 
the vertices only consist of isotropic subspaces with respect to a symplectic form. 
Here, the dimension $d$ is an even
number $2n$, and one defines a 
symplectic form
$\omega\colon \mathbb{F}_p^{2n}\times\mathbb{F}_p^{2n}\to\mathbb{F}_p$ 
by
\[
\omega(x,y) = \sum\limits_{i=1}^nx_i y_{i+n} -x_{i+n}y_i.
\]
A subspace $V\subseteq \mathbb{F}_p^{2n}$ is then called 
\emph{isotropic} if $\omega(x,y) = 0$
for all $x,y\in V$. With this in 
mind, the spherical building of 
type $C$ certainly has a nice 
structure. However, as we are 
aiming for an inductive approach
which will necessitate for us to 
look at various ``restrictions''
of this complex, it turns out 
to be convenient to think about
this complex more abstractly 
as a measure $\mu$ over $\prod\limits_{i=1}^{n}X_i$, 
where $X_i$ is the collection 
of all isotropic subspaces of $\mathbb{F}_p^d$ of dimension $i$.
Namely, $\mu$ is just the uniform distribution over top dimensional 
faces of $X$.

The most crucial feature of $\mu$ 
that makes it easier to work with
is the fact that it is an $\eps$-product
measure in the sense of~\cite{GurLL22}, where $\eps = O(1/\sqrt{p})$. Informally, an $\eps$-product measure is one in which one can perform the usual type of discrete Fourier
analysis as in product
domains such as the Boolean 
hypercube $\{0,1\}^n$, or product-like domains as the Johnson scheme
$\binom{[n]}{k}$ with constant $k$. 
Additionally, $\eps$-productness 
is a property that is preserved under
conditioning, making it friendly for inductive processes. We defer a 
precise definition of $\eps$-product 
measures to Section~\ref{sec:prelim}. Below,
we will focus on the measure $\mu$ itself, but we remark 
that our arguments have to also
work with restrictions of $\mu$ 
(which they do; the notations become somewhat more complicated, and hence we omit this discussion).

\subsubsection{Moving to Tripartite
Instances and $\eps$-product Distributions}
\paragraph{Set-up: moving to the tripartite problem:} 
consider the $n$-dimensional 
spherical building
of type $C$ complex $X$ as above,
and let $\Psi$ be a Unique-Games
instance over its $r$ faces which
is $(1-\delta)$-triangle consistent. 
Sample $R_1,R_2,R_3\subseteq [n]$
independently of size $r$, and consider the 
Unique-Games instance $\Psi'$ 
induced by $\Psi$ on the tripartite
graph $T(R_1,R_2,R_3)$. Here and 
throughout, the tripartite weighted graph $T(R_1,R_2,R_3)$ is the graph
whose faces are $r$-faces that are 
subsets of $R_1, R_2$ and $R_3$, 
and whose edges and constraints are 
induced by $\Psi$. It can be shown
via standard arguments that the 
fraction of inconsistent triangles in
$\Psi'$ is $(1+o(1))\delta$. Furthermore, using an idea from~\cite{DiksteinD23}, we show that it suffices
to prove that the instance $\Psi'$ has a good cosystolic coefficient 
for a good fraction of choices of $R_1,R_2,R_3$. Roughly speaking, if
we show that with probability at least $p$, the induced instance $\Psi'$ has cosystolic coefficient 
$C$, then the instance $\Psi$ will
have a cosystolic coefficient 
at most $O(C/p)$. 
\dor{We remark that the idea to switch to a restricted instance appears in~\cite{dikstein2023swap} too, and they switch to a $\sqrt{d}$-partite version of the problem (see Section 8.2 therein).}

Our argument will not 
show a strong enough cosystolic coefficient for every choice of 
$R_1,R_2,R_3$, and it depends 
on two features of them:
\begin{enumerate}
    \item Separatedness: we would like the elements in $R_1\cup R_2\cup R_3$ to be as far apart as possible form each other. Note
that the expected magnitude of 
an element in the union is $\Theta(n)$, and as we are choosing 
$3r$ of them uniformly, we expect
them to be roughly $\Theta(n/r)$ 
apart. Ideally, we would have liked any two distinct elements from $R_1\cup R_2\cup R_3$ to be $\Theta(n/r)$ apart, however the probability for that (over the choice of $R_1,R_2,R_3$)  is $2^{-\Theta(r)}$,
which is too small for our purposes as we are shooting for a 
$2^{o(r)}$ coboundary constant. 
We thus settle for weaker separatedness, and for our argument it suffices to have $R_1,R_2,R_3$ to be $\Theta(n/r^2)$-separated (which happens with probability $\Theta(1)$).
\item Well spread: consider 
an interval $I\subseteq [d]$, 
say $I = \{s,s+1,\ldots,s+L-1\}$, and think of $L$ as being of the order $n/r^{0.99}$. Note that in expectation, each one of $R_1,R_2,R_3$ contains 
$L\frac{r}{n}$ element from 
the interval. We say that $R_1$
is well spread if for every 
interval $I$ of length $L$ we 
have that $R_1\cap I$ has 
size roughly $L\frac{r}{n}$, and 
say that $R_1,R_2,R_3$ is well 
spread if each one of $R_1$, $R_2$ and $R_3$ is well spread. 
We would like $R_1,R_2,R_3$ to 
be well spread, and a standard 
application of concentration bounds show that the probability 
a randomly chosen $R_1,R_2,R_3$ 
is well spread with probability $1-o(1)$.
\end{enumerate}
\dor{We remark that notions that are close in spirit to the notion of well-spread above have also been used in~\cite{dikstein2023swap}, see Definition 8.3 therein}.
For the rest of this section 
we fix $R_1,R_2,R_3$ that are well spread and $\Theta(n/r^2)$ 
separated.

\paragraph{Moving to the language of $\eps$-product distributions:} 
now that we are considering tripartite
instances $\Psi'$ over $R_1,R_2,R_3$,
it will be useful to think of the 
measure $\mu$ over $\prod\limits_{i\in R_1\cup R_2\cup R_3} X_i$ that underlies this complex; namely, this is the distribution $\mu_{3r}$ restricted
to triangles in $\Psi'$. This measure
can be proved to be an $\eps$-product
measure for $\eps = \eps(p)$ which
is a vanishing function of the field 
size $p$; for our intents this means
that $\eps$ can be guaranteed to be
as small as we wish compared to the
dimension of the complex and $r$. 
We denote by $C_{r,r,r}(\mu)$ the
coboundary constant of the tripartite
graph underlying $\Psi'$. With 
these notations, our primary objective now is to prove that $C_{r,r,r}(\mu) = 2^{o(r)}$. More 
generally, for $1\leq t\leq r$, 
we define
\[
C_{t,t,t}(\mu)
=\max_{
\substack{
R_1'\subseteq R_1, \card{R_1'}=t\\ 
R_2'\subseteq R_2, \card{R_2'}=t\\ 
R_3'\subseteq R_3, \card{R_3'} = t}\\
}\max_a \text{ Coboundary constant of }T(R_1',R_2',R'_3~|~x_{(R_1\cup R_2\cup R_3)\setminus (R_1'\cup R_2'\cup R_3')} = a), 
\]
where the graph 
$T(R_1',R_2',R'_3~|~x_{(R_1\cup R_2\cup R_3)\setminus (R_1'\cup R_2'\cup R_3')} = a)$ is
the induced subgraph of $T$ 
on the vertices that agree 
with the restriction 
$x_{(R_1\cup R_2\cup R_3)\setminus (R_1'\cup R_2'\cup R_3')} = a$. The main benefit 
of $\eps$-product measures 
is that it provides a clean 
abstraction of he spectral 
properties of our tripartite 
graphs which is suitable for induction.

\subsubsection{An Inductive 
Approach to Coboundary Expansion: the Base Case}
To get some intuition, we first 
consider the case that $r=1$. 
In that case, the underlying graph
of $\Psi'$ is a tripartite inclusion
graph induced by $3$ distinct dimensions, say $i<j<\ell$, which we know to be $\Theta(n/r^2)$
separated. Inspecting this graph, 
one can show that the diameter of
this graph is at most 
\[
O\left(\max\left(\frac{j}{\card{i-j}},\frac{\ell}{\card{\ell-j}}\right)\right)
=O(r^2),
\]
which suggests that the graph is 
very well connected. 
In this case, 
we use the cones method \dor{as in~\cite{gromov2010singularities,lubotzky2016expansion,KaufmanMass2,kozlov2019quantitative,kaufman2019coboundary} in the setting of Abelian groups, and~\cite{DiksteinD23,ChapmanL,dikstein2023swap} in the setting of non-Abelian groups}. The cones method is a standard combinatorial technique allowing one to deduce
global assignments from good local consistency. The method requires us to construct a set of canonical paths between vertices which will be used for ``propagation'', as well as triangulations of cycles that are formed by an edge $(V,W)$ in the graph and the two canonical constructed paths from some vertex $U$, using only a small number of triangles.\footnote{The cones method also requires sufficient good transitive symmetry from the complex, which holds for free in the case of the spherical buildings we study.} To make 
this possible, we have to construct the set of
canonical paths in a rather careful manner. 
With these paths, we are able to break each
formed cycle between $U,V,W$ into a small 
collection of $8$-cycles (as well as some triangles), which we then triangulate via 
a careful case analysis.

In our context, using the cones method 
we can show that the coboundary
constant of the graph underlying
$G$ is polynomial in its diameter. 
In particular, this implies 
that $C_{1,1,1}(\mu)\leq r^{O(1)}$. We remark that working with well-separated indices 
is a useful idea if one
wishes to prove results that are 
independent of the dimension of the
complex, first introduced by~\cite{DiksteinD23}.
For us, it will be important that 
the above applies not
only to the measure $\mu$ itself, 
but rather to any restriction of $\mu$ leaving at least $3$ coordinates
alive. 

\mitali{We remark that a related statement\footnote{See Section~\ref{sec:grass-base-case} for a comparison between the statements.} was established in~\cite[Lemma 8.10]{dikstein2023swap} for spherical buildings of type A.}

\subsubsection{An Inductive 
Approach to Coboundary Expansion: the Extended Base Case}
The inductive base case above
for $C_{1,1,1}(\mu)$ is 
insufficient for our purposes;
indeed, using it and our inductive step yields a bound
of $2^{\tilde{O}(r)}$ on the 
coboundary expansion of $T(R_1,R_2,R_3)$.\footnote{We remark that with additional work, this bound may be improved to $2^{O(r)}$, but we do not
know how to break this barrier using only the base case for $r=1$.} To go beyond this bound 
we identify another base case, 
that we explain next, for which 
we can also directly bound the
coboundary constant. 

Let $k$ be a parameter to be thought of as a small power of 
$r$, and consider subsets 
$S_1, S_2, S_3\subseteq R_1\cup R_2\cup R_3$ with the following
properties:
\begin{enumerate}
    \item Equal sizes: we have that $|S_1| = |S_2| = |S_3| = 3k$ and 
    $|S_i\cap R_j| = k$ for any $i,j\in \{1,2,3\}$.
    \item Well ordered: we have that $\max_{s_1\in S_1} s_1 < \min_{s_2\in S_2}s_2\leq \max_{s_2' \in S_2} s_2'<\min_{s_3\in S_3} s_3$. 
    In words, all elements of $S_1$ are smaller than 
    all elements of $S_2$, 
    and all elements of $S_2$ are smaller than all elements of $S_3$.
\end{enumerate}
Our extended base case gives
good bounds for the coboundary 
constant of the graph $T(S_1,S_2,S_3;\mu)$. To get 
some intuition, note that 
when we attempt to construct paths and 
triangulations in the graph 
$T(S_1,S_2,S_3)$, incidences between vertices depend only on partial information on them. For example, if $U\in \supp(\mu^{S_1})$ and $V\in \supp(\mu^{S_2})$ are vertices, 
then the fact of whether $(U,V)$ is an edge or not depends only 
on the subspace in $U$ corresponding to dimension $\max_{s_1\in S_1} s_1$ and 
the subspace in $V$ corresponding to dimension $\min_{s_2\in S_2} s_2$. Thus, 
incidences in this graph depend
only on the $4$ parameters in the second condition above, and 
it thus stands to reason one 
can extend the triangulation from the case $r=1$ to the current case. 

Luckily, it turns out there is a 
surprisingly clean way to go 
about proving a statement along
these lines that only uses the 
base case $r=1$ in a black box manner. Indeed, letting 
$i = \max_{s_1\in S_1} s_1$, 
$j = \min_{s_2\in S_2} s_2$, 
$j' = \max_{s_2\in S_2} s_2$ 
and $\ell = \min_{s_3\in S_3} s_3$, we show, using an argument similar to the inductive step explained below, that
\begin{equation}\label{eq:ind_intro_break}
C(T(S_1,S_2,S_3;\mu))
\leq 
C(T(\{i\},\{j\},\{\ell\};\mu)
\cdot
\max(C_1,C_2,C_3),
\end{equation}
where 
\begin{align*}
&C_1 = \max_{a}C(T(S_1\setminus\{i\},S_2,S_3; \mu~|~X_{i} = a)),
\qquad
C_2 = \max_{b}C(T(S_1,S_2\setminus\{j\},S_3; \mu~|~X_{j} = b)),\\
&\qquad\qquad\qquad\qquad\qquad\qquad C_3 = \max_{c}C(T(S_1,S_2,S_3\setminus\{\ell\}; \mu~|~X_{\ell} = c)).
\end{align*}
The first term on the right hand
side of~\eqref{eq:ind_intro_break} 
can be bounded by the base 
case $r=1$ above. The second
term, namely 
$\max(C_1,C_2,C_3)$, is more interesting, and we focus on $C_1$ for concreteness. 
Thinking of $\mu$ as the uniform
distribution over top faces of
the type C spherical building, 
one observes that once we 
condition on the vertex of dimension $i$ in a face, the 
vertices of dimension strictly 
less than $i$ and the vertices of dimension more than $i$ form a product structure. In particular, since all elements of $S_1\setminus \{i\}$ are smaller than $i$ and 
all elements of $S_2,S_3$ are bigger than it, we have that 
for $\nu = \mu~|~X_i = a$ it holds that
\[
\nu^{(S_1\setminus\{i\})\cup S_2}
=\nu^{S_1\setminus\{i\}}\times \nu^{S_2},
\qquad
\nu^{(S_1\setminus\{i\})\cup S_3}
=\nu^{S_1\setminus\{i\}}\times \nu^{S_3}.
\]
Thus, the tripartite graph 
$T(S_1\setminus\{i\}, S_2, S_3; \nu)$ is composed of the bipartite graph between $S_2$ and $S_3$, and the graph between $S_1\setminus\{i\}$ 
and them is complete. This 
trivializes the task of constructing triangulations, 
and indeed we show that the coboundary constant $C_1$ is
dominated by the diameter 
of the bipartite graph between $S_2$ and $S_3$, 
which is easily seen to be $O(r^2)$ 
using the separatedness. \dor{The ``productization'' 
phenomenon, namely that the fact that a complex/distribution 
breaks into a tensor/product of several complexes/ distributions after conditioning, has also appeared in~\cite[Section 8.6.1]{dikstein2023swap}.}

Overall, the extended base 
case allows us to argue
that $C(S_1,S_2,S_3; \mu)\leq r^{O(1)}$ even when the sets
$S_1$, $S_2$ and $S_3$ are large, so long as they are 
``well ordered''. The ability to find well ordered subsets is precisely the reason we require $R_1,R_2,R_3$ to be well spread.

\begin{remark}
We remark that the additional 
feature of $\mu$ we use here 
is that restricting a coordinate 
of it breaks $\mu$ into a 
product of $2$ 
distributions. While this property is easy to see directly  for type A and C spherical buildings discussed herein, it
is in fact more general for 
any spherical building. 
This is best seen by looking 
at the Coxeter diagrams that
correspond to a given spherical building, and inspecting that
removing a vertex from them (which corresponds to the operation of restriction/ taking a link) either turns a type into an easier-to-handle type, or else it disconnects the diagram, 
in which case the product structure as discussed above emerges.
\end{remark}

\subsubsection{The Inductive Approach: the Inductive Step}
Armed with the base cases, we are now
ready to describe an inductive approach to bound $C_t(\mu) = C_{t,t,t}(\mu)$.

\paragraph{Restrictions and 
finding good local solutions:}
Suppose that we 
have $R_1'\subseteq R_1$, 
$R_2'\subseteq R_2$ 
and $R_3'\subseteq R_3$ 
of size $t$ during our induction,
and choose disjoint well ordered subsets 
$S_1, S_2, S_3$ of $R_1'\cup R_2'\cup R_3'$ that each contains $k$ elements from each one of $R_1,R_2$ and $R_3$. 
An easy argument shows that
for appropriate choice of parameters, so long as $t\geq r^{0.99}$ we are able to find 
such $S_1,S_2,S_3$, and we 
fix such sets henceforth. 
We later explain how to handle
the case $t<r^{0.99}$.

Choose restrictions 
$(a_1,a_2,a_3)\sim \mu^{S_1\cup S_2\cup S_3}$. We may consider $3$ 
distinct induced graphs on $T = T(R_1',R_2',R_3')$ corresponding to
these restrictions, which are 
$T_1 = T(R_1'\setminus S_1,R_2'\setminus S_1,R_3'\setminus S_1~|~\mu~|~x_{S_1} = a_1)$, $T_2 = T(R_1'\setminus S_2,R_2'\setminus S_2,R_3'\setminus S_2~|~\mu~|~x_{S_2} = a_2)$ and 
$T_3 = T(R_1'\setminus S_3,R_2'\setminus S_3,R_3'\setminus S_3~|~\mu~|~x_{S_3} = a_3)$. Each one of these graphs 
refers to the induced graph on vertices that agree with the restriction $a_1$, $a_2$ and $a_3$
respectively, and we may consider the induced affine Unique-Games instance $\Psi_{a_1}'$ $\Psi_{a_2}'$ and $\Psi_{a_3}'$ on them. 
Let $\eps_{a_1}$, $\eps_{a_2}$ and $\eps_{a_3}$ denote the
fraction of inconsistent triangles in $\Psi_{a_1}'$, $\Psi_{a_2}'$ and $\Psi_{a_3}'$ respectively, and choose
$X_{a_1}$ $X_{a_2}$ and $X_{a_3}$
to assignments that satisfy the maximum fraction of constraints. Thus, by definition
we get that
\[
\viol(X_{a_1}; T_1)
\leq C_{t-k} \eps_{a_1},
\qquad
\viol(X_{a_2}; T_2)
\leq C_{t-k} \eps_{a_2},
\qquad 
\viol(X_{a_3}; T_3)
\leq C_{t-k} \eps_{a_3}.
\]

Since we are working with affine 
instances of Unique-Games, once we
have one good solution we may 
apply affine shifts to it to get
a collection of good solutions; 
in our case, for each $\pi\in \S_m$
we may define $L_{a_1}[\pi] = X_{a_1}\pi$, and have that $L_{a_1}$
consists of solutions to $\Psi_{a_1}$
each satisfying all but $C_{t-k}\eps_{a_1}$ of the 
constraints. Similarly, we may define $L_{a_2}$ and $L_{a_3}$.

\paragraph{Relating good local solutions:} consider now the instance induced by $\Psi$ 
on the graph 
\[
G_{a_1,a_2} = T(R_1',R_2',R_3'~|~x_{S_1\cup S_2} = (a_1,a_2)).
\]
Denote by $\viol(X_{a_1}; G_{a_1,a_2}), \viol(X_{a_2}; G_{a_1,a_2})$ the fraction of
constraints violated by $X_{a_1}, X_{a_2}$ in 
$G_{a_1,a_2}$ respectively.
From the point 
of view of the restriction $S_1, a_1$, this is a randomly chosen
induced sub-instance of $T_1$, 
and hence we expect that 
$\viol(X_{a_1}; G_{a_1,a_2})\lll \viol(X_{a_1})$, and analogously $\viol(X_{a_2}; G_{a_1,a_2})\lll \viol(X_{a_2})$. Using the $\eps$-productness of $\mu$, the graph $G_{a_1,a_2}$ has second
singular value bounded away from 
$1$, and therefore any two good
solutions to an Affine Unique-Games
over it must be the same up to an 
affine shift (we remark that 
this is an idea whose origin is 
algorithms for solving 
affine Unique-Games over some special classes of graphs~\cite{BBKSS,BafnaMinzerSolve}). Thus, we may find a shift $\pi_{a_1,a_2}\in \S_m$ that nearly
forms a matching between the lists 
$L_{a_1}, L_{a_2}$. More precisely, 
we are able to show that
\[
\E_{a_1, a_2}
\left[\Pr_{u\in G_{a_1,a_2}}\left[X_{a_1}(u)\neq \pi_{a_1,a_2}X_{a_2}(u)\right]\right]
\lll
\E_{a_1, a_2}\left[
\viol(X_{a_1}; G_{a_1,a_2}) + 
\viol(X_{a_2}; G_{a_1,a_2})\right]
\lll C_{t-k}\delta.
\]

Using our observations so far, we 
may consider the tripartite graph
over restrictions, whose triangles
correspond to a triplet of restrictions $a_1,a_2,a_3$ that 
are valid under $\mu$. We consider an
Affine Unique-Games instance $\Psi_{{\sf restrict}}$ on it,
that has the constraint $\pi_{a_1,a_2}^{-1}$ on the edge $(a_1,a_2)$, and the goal is to 
choose labels from $\S_m$ to each
vertex so as to satisfy as many of
the constraints as possible.  Using our
arguments so far, we can 
argue that the fraction of 
inconsistent triangles is 
at most $\eta = O\left(C_{t-k} \delta\right)$, 
and as the underlying graph 
of $\Psi_{{\sf restrict}}$ is 
$T(S_1,S_2,S_3)$ we are able
to conclude that we may find a
solution to $\Psi_{{\sf restrict}}$ that satisfies 
all but $O(C_kC_{t-k} \delta)$
of constraints. Going in this route, one can conclude the recursion $C_t\lll C_k C_{t-k}$, 
which ultimately gives $C_t\leq 2^{O(t\log r)}$. This is, of 
course, insufficient, and the reason for why we require the extended base case. Using 
it, we have that $C(T(S_1,S_2,S_3))\leq r^{O(1)}$ 
giving us the recursion 
$C_t\leq r^{O(1)} C_{t-k}$ 
so long as $t\geq r^{0.99}$.
Iterating, this gives
\[
C_r\leq r^{r/k} C_{r^{0.99}}
\leq r^{r/k} 2^{O(r^{0.99}\log r)}\leq 2^{O(r^{1-c})}
\]
for some absolute constant $c>0$.

To complete the overall picture, we now explain how we lift good solutions to $\Psi_{{\sf restrict}}$ to a good solution 
of $\Psi'$. Suppose that $A$ is 
an assignment to $\Psi_{{\sf restrict}}$ 
satisfying $1-\eta$ fraction 
of constraints, and let $U$ be 
a vertex in $\Psi'$, say $U\in \supp(\mu^{R_1'})$ without loss of generality. The idea is to 
ask the opinion of a random vertex in $S_1$ that is consistent with $U$, and assign $U$ accordingly. More precisely, 
we sample $a_1\sim \mu~|~x_{R_1'} = U$, and 
assign $B[U] = A[a_1](U)$. 
Indeed, using standard spectral 
arguments, we show that for typical $U$, the value of 
$A[a_1](U)$ for $a_1$ chosen in
this way is almost fixed, and 
furthermore that the assignment
$B$ satisfies all but $\eta$ 
fraction of constraints in $\Psi'$.
\section{Preliminaries}\label{sec:prelim}
\paragraph{Notations:} 
We use standard big-$O$ notations: we denote $A = O(B)$ 
or $A\lll B$ if $A\leq c\cdot B$ for some absolute constant
$c>0$. Similarly, we denote $A = \Omega(B)$ or $A\ggg B$ 
if $A\geq c B$ for some absolute constant $c>0$. We also 
denote $k\ll d$ to denote the fact that $d$ is taken to 
be sufficiently large compared to any function of $k$. 
For a distribution $\mu$ over $X_1 \times \ldots \times X_r$ and 
a subset $S\subseteq [r]$, we denote
by $\mu^{S}$ the marginal distribution of $\mu$ on the coordinates of $S$. We denote 
by ${\sf supp}(\mu)$ the support
of $\mu$, and for a subset 
$S\subseteq [r]$ and an assignment 
$V\in \prod\limits_{i\in S}X_i$ 
we denote by $\mu~|~X_S=V$ the 
distribution of $X\sim \mu$ conditioned on $X_S = V$. 
If $A$ is a finite set and 
$i\leq \card{A}$, the notation
$B\subseteq_{i} A$ means that 
we sample a subset of size $i$
of $A$ uniformly.

\subsection{Graphs Associated with Distributions and $\eps$-product Distributions}
Let $\mu$ over $X_1 \times \ldots \times X_r$. The following 
definition describes bipartite 
graphs that can be associated with $\mu$:
\begin{definition}\label{def:bip-mu}
Let $\mu$ be a distribution on $X_1 \times \ldots \times X_r$. Let $L,R \subseteq [r]$ be two disjoint non-empty sets. Let $A(L,R;\mu)$ be the bipartite graph produced as follows: the vertices are $\supp(\mu^{L})$ and $\supp(\mu^{R})$ and to sample an edge we choose a sample $X$ from $\mu$, and output $(X_{L},X_{R})$. We will let $A_{L,R}$ denote the corresponding operator $A_{L,R}: L^2(X_L,\mu^L) \rightarrow L^2(X_R,\mu^R)$, with $A_{L,R}f(v) = \E_{w \sim \mu^L|X_R = v}[f(w)]$.
\end{definition}
\dor{We remark that the bipartite graphs from Definition~\ref{def:bip-mu} 
have also been considered in~\cite{DiksteinD19}, wherein they go under the name the ``colored walk'' (see Definition 4.10 therein).}

Similarly, one can create tripartite graphs from $\mu$:
\begin{definition}
Let $\mu$ be a distribution on $X_1 \times \ldots \times X_r$. Let $S_1, S_2, S_3 \subseteq [r]$ be three disjoint non-empty sets. Let $T(S_1,S_2,S_3;\mu)$ be the tripartite graph produced as follows: the vertices are $\supp(\mu^{S_1}), \supp(\mu^{S_2})$ and $\supp(\mu^{S_3})$, and to sample an edge we choose a sample $X$ from $\mu$, and with probability $1/3$ each output $(X_{S_i},X_{S_j})$ for $i \neq j, i,j \in [3]$. 

Note that when one of the $S_i$'s is $\emptyset$, say $S_1$, the graph $T(\emptyset,S_2,S_3;\mu)$ denotes the tripartite graph with one vertex $\emptyset$ in its first part, $\supp(\mu^{S_2}|a),\supp(\mu^{S_3}|a)$ in its second and third parts, and we sample an edge by sampling $X \sim \mu$ and and outputting the pairs $(\emptyset,X_{S_2})$, $(\emptyset,X_{S_3})$ or $(X_{S_2},X_{S_3})$ with equal probability. When more of the $S_i$'s are $\emptyset$ we can similarly create these tripartite graphs.
\end{definition}

We now discuss the definition of $\eps$-product distributions 
from~\cite{GurLL22}. We begin by defining $\eps$-pseudorandom distributions.
\begin{definition}
We say that a distribution $\cD$ over $X_1 \times X_2$ is $\eps$-pseudorandom if the second largest singular value of $A_{\{1\},\{2\}}$  is at most $\eps$.
\end{definition}

Next, we define the notion of having 
$\eps$-pseudorandom skeletons.
\begin{definition}
We say that a distribution $\cD$ over $Y_1 \times \ldots \times Y_t$ has $\eps$-pseudorandom skeletons if for all $i \neq j \in [t]$, the marginal distribution $\cD^{\{i,j\}}$ is $\eps$-pseudorandom.
\end{definition}
We are now ready to define the notion
of $\eps$-product distributions.
\begin{definition}
    We say that $\mu$ is an $\eps$-product distribution over $X_1 \times \ldots \times X_r$ if for all $S \subseteq [r]$ of size at most $r-2$ and all $V \in \supp(\mu^S)$, the conditional distribution $\mu|X_S = V$ has $\eps$-pseudorandom skeletons.
\end{definition}

\mitali{In \cite{DiksteinD19} it is shown that a $d$-partite high-dimensional expander is equivalent to the notion of $\eps$-product distributions (see Lemma 7.5 therein).} Many properties of $\eps$-product distributions were established in~\cite{GurLL22} and~\cite{DiksteinD19}, and we will require a few of them. \mitali{In particular, we will need~\cite[Lemma 3.3]{GurLL22} or~\cite[Claim 4.11]{DiksteinD19}, which asserts that if $L,R\subseteq [r]$ are disjoint and $\mu$ is an $\eps$-product distribution, then the second singular values of the bipartite graphs $A(L,R;\mu)$ are small.} 

\begin{lemma}\label{claim:gll-sing-val}
Let $\mu$ be an $\eps$-product distribution over $X_1 \times \ldots \times X_r$ and let $L,R \subseteq [r]$ be two disjoint sets. Then the second largest singular value of $A_{L,R}$ and $A_{R,L}$ is at most $\poly(r)\eps$.
\end{lemma}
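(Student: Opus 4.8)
The plan is to reduce the general statement about arbitrary disjoint sets $L,R$ to the case of a single-pair skeleton, which is handled by the definition of $\eps$-product distributions, at the cost of a $\poly(r)$ factor coming from a path/telescoping argument. First I would recall that the operator $A_{L,R}$ is, up to the diagonal reweighting by $\mu^L$ and $\mu^R$, a bipartite random walk: from $v\in\supp(\mu^R)$ one samples $X\sim\mu\mid X_R=v$ and outputs $X_L$. The bound on its second singular value is equivalent to a mixing bound on the two-step walk $A_{L,R}^{*}A_{L,R}$ on $\supp(\mu^R)$, so it suffices to show this walk has spectral gap $1-\poly(r)\eps$ on the space orthogonal to constants.

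The key steps, in order. (1) Enumerate $L=\{\ell_1,\dots,\ell_a\}$ and $R=\{r_1,\dots,r_b\}$ and interpolate between them through a chain of singletons: writing each ``colored walk'' $A_{S,S'}$ between disjoint sets as a composition along a path that adds one coordinate at a time, one expresses $A_{L,R}$ as a bounded-length product of operators of the form $A_{S\cup\{i\}, S}$ and $A_{S, S\cup\{i\}}$, each of which is a conditional-expectation (averaging) operator. (2) Observe that adding or deleting coordinates — operators $A_{S,S\cup\{i\}}$ and $A_{S\cup\{i\},S}$ — are, respectively, an isometric embedding and its adjoint (a conditional expectation that is a contraction), so the only ``lossy'' steps, i.e.\ the steps that actually contract the non-constant part, are the ones where we change a single live coordinate $i$ to a single live coordinate $j$ conditioned on everything else; by the $\eps$-pseudorandom skeletons hypothesis applied to $\mu$ conditioned on the frozen coordinates, each such step has second singular value at most $\eps$. (3) Combine: the composition has second singular value at most $\eps$, but because we must also bound the operator norms of the non-contracting ``add/delete'' steps in the weighted $L^2$ spaces (the reweighting by differing marginals $\mu^S$ can distort norms), one picks up a multiplicative distortion bounded by $\poly(r)$ — concretely, the ratios of the relevant marginal probabilities, or equivalently the degrees in the underlying HDX, are controlled by a polynomial in $r$ for an $\eps$-product distribution. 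This is exactly the argument of \cite[Lemma 3.3]{GurLL22} / \cite[Claim 4.11]{DiksteinD19}, which I would cite and then reproduce at this level of detail. (4) The bound for $A_{R,L}$ follows since $A_{R,L}=A_{L,R}^{*}$ and adjoints have the same singular values.

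The main obstacle I anticipate is step (3): keeping the $\poly(r)$ distortion factor genuinely polynomial rather than exponential. Naively telescoping through $|L|+|R|\le r$ singleton-swap steps, each contributing a constant-factor norm distortion from the marginal reweighting, would give an exponential-in-$r$ constant. The fix is to argue globally rather than step-by-step: one shows directly that $A_{L,R}^{*}A_{L,R}$, after restricting to the orthogonal complement of constants, is dominated by the corresponding single-coordinate swap walk up to a polynomial factor, using the fact that $\eps$-productness is preserved under conditioning (so every intermediate conditional distribution is still $\eps$-product with $\eps$-pseudorandom skeletons) together with a careful bookkeeping of the induced weights. This is where I would lean most heavily on the structural lemmas of \cite{GurLL22} and \cite{DiksteinD19} rather than redo the analysis from scratch.
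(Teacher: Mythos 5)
The paper does not give a proof of this lemma; it is stated immediately after being attributed to~\cite[Lemma~3.3]{GurLL22} and~\cite[Claim~4.11]{DiksteinD19}, so your plan to cite those works matches what the paper actually does. The proof sketch you give, however, has a gap in steps (1)--(2). If the chain of singleton operators is monotone --- first lift $f$ up to $X_{L\cup R}$, adding one $R$-coordinate at a time, then condition down to $X_R$, removing one $L$-coordinate at a time --- then the composite really is $A_{L,R}$; but that chain consists entirely of exact isometries followed by exact contractions, contains no ``swap a live coordinate $i$ for a live coordinate $j$'' factor anywhere, and therefore yields only the trivial operator-norm bound of $1$. If instead you interleave add and remove steps (which is the only way a ``swap $i\to j$'' appears as an adjacent pair), the composite is no longer $A_{L,R}$: along $\{\ell_1\}\to\{\ell_1,r_1\}\to\{r_1\}\to\{r_1,r_2\}$ the composite sends $f$ to the map $(x_{r_1},x_{r_2})\mapsto\E[f(X_{\ell_1})\mid X_{r_1}=x_{r_1}]$, which is constant in $x_{r_2}$ and equals $A_{\{\ell_1\},\{r_1,r_2\}}f$ only when $X_{\ell_1}$ and $X_{r_2}$ are conditionally independent given $X_{r_1}$. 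So ``$A_{L,R}$ is a product of skeleton operators'' is not literally true, and the $\eps$ must enter by a different route.

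The way it does enter --- and this is the content you would actually need to pull from~\cite{GurLL22,DiksteinD19} --- is a martingale (Efron--Stein) decomposition: one breaks the mean-zero $f$ on $X_L$ (or, dually, $A_{L,R}f$ on $X_R$) into $O(|L|+|R|)\leq O(r)$ orthogonal increments, each mean zero conditioned on the coordinates already revealed, and then contracts each increment by $\eps$ using the pseudorandom-skeleton hypothesis for the conditioned distribution (which is still $\eps$-product); summing the $O(r)$ pieces gives the $\poly(r)\eps$ factor. In particular, the obstacle you single out in step (3) --- exponential blow-up from ``marginal reweighting'' across the chain --- does not exist: in the spaces $L^2(\supp(\mu^S),\mu^S)$ the single-coordinate lift is an \emph{exact} isometry and the single-coordinate conditional expectation is an \emph{exact} contraction, with no dependence on degrees or ratios of marginals at all, so there is no per-step distortion to accumulate. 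The $\poly(r)$ in the final bound is just the number of telescoping terms, not a product of distortion factors.
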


\subsection{Properties of Expanders}
We need the following well known version of the expander
mixing lemma for bipartite graphs.
\begin{lemma}\label{lem:bip-eml}
Let $G = (U,V,E)$ be a bipartite graph in 
which the second singular value of the normalized adjacency matrix is at most $\lambda$. Then for all $A \subset U$ and $B \subset V$ we have that
\[
\left|\Pr_{(u, v) \in E}[u \in A, v \in B] -\mu(A)\mu(B)\right| \leq \lambda\sqrt{\mu(A)(1-\mu(A))\mu(B)(1-\mu(B))}.\]    
\end{lemma}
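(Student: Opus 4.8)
The plan is to run the standard Rayleigh-quotient argument behind expander mixing lemmas, adapted to the bipartite setting, being careful that the two sides of $G$ carry (possibly different) marginal measures. Let $\mu_U$ and $\mu_V$ denote the marginals of the edge distribution of $G$ on $U$ and on $V$, and let $M$ be the normalized bipartite adjacency operator, i.e. $M\colon L^2(V,\mu_V)\to L^2(U,\mu_U)$ with $(Mg)(u)=\E_{v}[g(v)\mid u]$; its adjoint $M^{\ast}$ is the analogous averaging operator in the other direction. The basic identity is that for any $f\in L^2(U,\mu_U)$ and $g\in L^2(V,\mu_V)$ one has $\E_{(u,v)\in E}[f(u)g(v)]=\inner{f}{Mg}_{U}$, where $\inner{\cdot}{\cdot}_U$ is the inner product of $L^2(U,\mu_U)$; applying this with $f=\one_A$ and $g=\one_B$ gives $\Pr_{(u,v)\in E}[u\in A,v\in B]=\inner{\one_A}{M\one_B}_U$.

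Next I would split off the means: write $\one_A=\mu_U(A)\cdot\one+f$ and $\one_B=\mu_V(B)\cdot\one+h$, where $f\perp\one$ in $L^2(U,\mu_U)$ and $h\perp\one$ in $L^2(V,\mu_V)$. Because $G$ is a weighted bipartite graph with stationary marginals $\mu_U,\mu_V$, the operator $M$ fixes constants and so does $M^{\ast}$, that is $M\one=\one$ and $M^{\ast}\one=\one$. Expanding the bilinear form and using these facts to kill the two cross terms yields
\[
\inner{\one_A}{M\one_B}_U=\mu_U(A)\,\mu_V(B)+\inner{f}{Mh}_U.
\]
It remains to bound the error term $\inner{f}{Mh}_U$. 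Since $h\perp\one$, the singular-value hypothesis gives $\norm{Mh}_U\le\lambda\norm{h}_V$, so by Cauchy--Schwarz $\abs{\inner{f}{Mh}_U}\le\norm{f}_U\norm{Mh}_U\le\lambda\norm{f}_U\norm{h}_V$. Finally, the elementary variance computations $\norm{f}_U^2=\E_{\mu_U}[\one_A]-\mu_U(A)^2=\mu_U(A)(1-\mu_U(A))$ and $\norm{h}_V^2=\mu_V(B)(1-\mu_V(B))$ give exactly the stated bound.

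This argument has no real obstacle; it is entirely routine linear algebra. The only points deserving attention are (i) interpreting ``$\mu(A)$'' and ``$\mu(B)$'' in the statement as the marginals $\mu_U(A)$ and $\mu_V(B)$, which can differ when $U$ and $V$ have different degree profiles, and (ii) recording precisely what ``second singular value at most $\lambda$'' means, namely that $M$ contracts the subspace orthogonal to the constants by a factor $\lambda$; both are handled in the first two steps above.
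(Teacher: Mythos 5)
The paper states this lemma without proof, describing it as a ``well known version of the expander mixing lemma for bipartite graphs,'' so there is no internal proof to compare against. Your argument is correct and is precisely the standard one: pairing $\inner{\one_A}{M\one_B}$, splitting off the constant components, using $M\one=\one$ and $M^{\ast}\one=\one$ to kill the cross terms, and bounding the remaining term via the singular-value bound and Cauchy--Schwarz. Your two clarifying remarks (that $\mu(A)$ and $\mu(B)$ are the marginals on the two sides, and that the singular-value hypothesis means $M$ contracts $\one^{\perp}$ by $\lambda$) are exactly the right things to make explicit in a weighted bipartite setting where the two sides need not carry the same measure.
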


We also use the following standard sampling 
property of bipartite expanders.
\begin{lemma}\label{lem:sampling}
Let $G = (U,V,E)$ be a weighted bipartite graph with second singular value at most $\lambda$. Let $B \subset U$ and set 
$T = \left\{v \in V \mid \Pr_{u\sim v}[u \in B] > \eps+\Pr[B] \right\}$. Then $\Pr[T]\leq \lambda^2\delta/\eps^2$.
\end{lemma}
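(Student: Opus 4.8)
\textbf{Proof plan for Lemma~\ref{lem:sampling}.} The plan is to run the textbook ``expander is a good sampler'' argument, using the bipartite expander mixing lemma (Lemma~\ref{lem:bip-eml}) as the single nontrivial ingredient. Write $\beta = \Pr[B] = \mu(B)$ and let $T = \{v \in V \mid \Pr_{u \sim v}[u \in B] > \eps + \beta\}$ as in the statement; the goal is to bound $\mu(T)$ from above by $\lambda^2 \beta/\eps^2$ (this is the intended right-hand side; the ``$\delta$'' in the displayed bound should read $\Pr[B]$).

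First I would lower bound the joint edge probability $\Pr_{(u,v)\in E}[u \in B,\, v \in T]$. Conditioning on the $V$-endpoint, this equals $\mu(T)\cdot \Pr_{v \sim \mu|_T}\Pr_{u \sim v}[u \in B]$, and by the very definition of $T$ every $v \in T$ satisfies $\Pr_{u\sim v}[u \in B] > \eps + \beta$, so
\[
\Pr_{(u,v)\in E}[u \in B,\, v \in T] \;\geq\; \mu(T)\,(\eps + \beta).
\]
Next I would apply Lemma~\ref{lem:bip-eml} with the set $B \subseteq U$ and the set $T \subseteq V$, which gives
\[
\Pr_{(u,v)\in E}[u \in B,\, v \in T] \;\leq\; \mu(B)\mu(T) + \lambda\sqrt{\mu(B)(1-\mu(B))\,\mu(T)(1-\mu(T))} \;\leq\; \beta\,\mu(T) + \lambda\sqrt{\beta\,\mu(T)}.
\]
Combining the two displays, the $\beta\,\mu(T)$ terms cancel and we are left with $\eps\,\mu(T) \leq \lambda\sqrt{\beta\,\mu(T)}$; squaring and dividing by $\mu(T)$ (the bound is trivial when $\mu(T)=0$) yields $\mu(T) \leq \lambda^2 \beta / \eps^2$, as desired.

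There is essentially no real obstacle here: the argument is a routine two-line application of the mixing lemma plus one rearrangement, and the only points requiring a word of care are (i) interpreting $\Pr_{u\sim v}$ as the weighted conditional distribution along edges so that the conditioning identity $\Pr[u\in B,\, v\in T] = \mu(T)\,\E_{v\in T}\Pr_{u\sim v}[u\in B]$ is exactly valid for weighted graphs, and (ii) the harmless step of dropping the $(1-\mu(B))$ and $(1-\mu(T))$ factors to simplify the square root. If one prefers to keep those factors, the same computation gives the slightly stronger bound $\mu(T) \leq \lambda^2\,\beta(1-\beta)/\eps^2$, which of course also implies the stated inequality.
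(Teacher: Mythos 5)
Your proof is correct and is the standard sampler argument that the paper implicitly invokes (it states the lemma without proof as a ``standard sampling property''). You correctly identify that the undefined $\delta$ in the stated bound is a typo for $\Pr[B]$, and the two-sided comparison — lower bound on $\Pr_{(u,v)\in E}[u\in B,\ v\in T]$ from the definition of $T$, upper bound from Lemma~\ref{lem:bip-eml}, then cancel and rearrange — is exactly right.
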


\subsection{Properties of Local Spectral Expanders}
Recall that we associated with each $d$-dimensional simplicial 
complex $X$ a sequence of measures $\{\mu_k\}_{1\leq k\leq d}$, where 
$\mu_k$ is a probability measure over $X(k)$. Note that for all $0 \leq t \leq r \leq d$, a sample according to $\mu_t$ can be drawn by first sampling $R \sim \mu_r$, and then sampling $T\subseteq_{t} R$ uniformly. The converse is also true: a sample from $\mu_r$ can be drawn by first sampling $T \sim \mu_t$, and then sampling $R$ from $\mu_r$ conditioned on containing $T$. These observations 
give rise to the standard ``up'' and ``down'' operators, which we present next. We only mention a few of their 
properties that are necessary for our arguments, and refer the reader to~\cite{dikstein2018boolean} for a more
comprehensive exposition.

\begin{definition}
The operator $U_i^{i+1}$ is a map from $L_2(X(i); \mu_i)$ to $L_2(X(i+1); \mu_{i+1})$ defined as
\[
U_i^{i+1}f(u) 
= 
\E_{v \subset_i u}\big[f(v)\big]
\]
for all $u\in X(i+1)$. For $j\geq k+1$, we define $U_k^j$ via composition of up operators: $U_k^j = U_{j-1}^j \circ \ldots \circ U_k^{k+1}$.
\end{definition}

\begin{definition}
The operator $D_i^{i+1}$ is a map from $L_2(X(i+1); \mu_{i+1})$ to $L_2(X(i); \mu_i)$ defined as
\[
D_i^{i+1}f(u) = \E_{v \supseteq_{i+1} u}\big[f(v)\big]
\]
for all $u \in X(i)$.
For $j\geq k+1$, we define $D_k^j$ via composition of down operators: $D_k^j = D_{k}^{k+1} \circ \ldots \circ D^j_{j-1}$.
\end{definition}

Abusing notations, we use the notations $U^j_k, D^j_k$ 
to denote the operators, as well as the real valued 
matrices associated with them. A key property of
the down and up operators is that they are adjoint:

\begin{claim}
For all $k \leq j \leq d$, $U_k^{j}$ and $D^{j}_k$ are adjoint operators: for all functions $f\colon X(k)\to\mathbb{R}$ and $g\colon X(j)\to\mathbb{R}$ it holds that $\ip{U_k^{j}f,g} = \ip{f,D^{j}_kg}$. \end{claim}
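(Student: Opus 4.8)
The plan is to reduce to the single-step case $j=k+1$ and then verify adjointness there by unwinding the two inner products, which are taken with respect to the measures $\mu_k$ and $\mu_j$ on $X(k)$ and $X(j)$ respectively.

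\textbf{Reduction to one step.} Since $U_k^j = U_{j-1}^j \circ \cdots \circ U_k^{k+1}$ and $D_k^j = D_k^{k+1} \circ \cdots \circ D_{j-1}^j$ by definition, and the adjoint of a composition of operators is the composition of the adjoints in the reverse order, it suffices to prove the claim for $j=k+1$, namely that $\langle U_i^{i+1}f, g\rangle = \langle f, D_i^{i+1}g\rangle$ for all $f\colon X(i)\to\mathbb{R}$ and $g\colon X(i+1)\to\mathbb{R}$. Granting this for every $i$, one gets $(U_k^j)^\ast = (U_k^{k+1})^\ast \circ \cdots \circ (U_{j-1}^j)^\ast = D_k^{k+1}\circ\cdots\circ D_{j-1}^j = D_k^j$, which is the assertion.

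\textbf{The one-step identity.} Expanding the inner product on $L_2(X(i+1);\mu_{i+1})$ and the definition of $U_i^{i+1}$,
\[
\langle U_i^{i+1}f, g\rangle = \E_{u\sim \mu_{i+1}}\big[U_i^{i+1}f(u)\,g(u)\big] = \E_{u\sim \mu_{i+1}}\,\E_{v\subset_i u}\big[f(v)\,g(u)\big],
\]
so that $\langle U_i^{i+1}f,g\rangle$ is the expectation of $f(v)g(u)$ under the process that draws $u\sim\mu_{i+1}$ and then a uniformly random $i$-subset $v\subset u$. The crucial point is exactly the consistency between $\mu_i$ and $\mu_{i+1}$ recalled in the preliminaries: this process yields the same joint law on $(v,u)$ as the process that first draws $v\sim\mu_i$ and then $u\sim\mu_{i+1}$ conditioned on $u\supseteq v$. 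Reweighting accordingly,
\[
\langle U_i^{i+1}f, g\rangle = \E_{v\sim\mu_i}\,\E_{u\supseteq_{i+1}v}\big[f(v)\,g(u)\big] = \E_{v\sim\mu_i}\Big[f(v)\,\E_{u\supseteq_{i+1}v}[g(u)]\Big] = \E_{v\sim\mu_i}\big[f(v)\,D_i^{i+1}g(v)\big] = \langle f, D_i^{i+1}g\rangle,
\]
using the definition of $D_i^{i+1}$ in the penultimate step.

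\textbf{Main obstacle.} There is no substantive obstacle here; the only thing to keep straight is that the two inner products live on different weighted $L_2$ spaces, and that the correct bidirectional sampling identity between $\mu_i$ and $\mu_{i+1}$ is invoked. Both facts are immediate from the definition of the family $\{\mu_k\}$, so the argument above is essentially the whole proof.
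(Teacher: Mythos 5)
Your proof is correct. The paper states this claim without proof, treating it as a standard fact about up/down operators, and your argument is precisely the canonical one: reduce to the single-step case $j=k+1$ via adjointness of compositions, then unwind both inner products and invoke the bidirectional sampling identity between $\mu_i$ and $\mu_{i+1}$ (which the paper explicitly records in the preliminaries just before introducing $U$ and $D$).
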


We need the following result due to~\cite{Oppenheim18} known as the trickling-down theorem that uses the eigenvalues of links at $X(d-2)$ to show that $X$ is a one-sided local spectral expander.
\begin{theorem}\label{thm:trickling-down}
Let $X$ be a $d$-dimensional simplicial complex such that the 1-skeleton of every link (including the empty one) is connected and for all $I \in X(d-2)$, the 1-skeleton of $I$ has second eigenvalue at most $\lambda$. Then $X$ is a $\frac{\lambda}{1-(d-1)\lambda}$-one-sided local spectral expander.  
\end{theorem}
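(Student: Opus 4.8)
The plan is to isolate a single \emph{one-step} localization statement and then iterate it; this is essentially the argument of~\cite{Oppenheim18}. The one-step statement reads: if $Y$ is a pure simplicial complex of dimension at least $3$ (equipped with its measures $\mu_i$ as above) whose underlying graph $G(Y) = (Y(1), Y(2))$ is connected, and if for every vertex $v \in Y(1)$ the underlying graph of the link $Y_v$ has second eigenvalue at most $\mu < 1$, then $G(Y)$ has second eigenvalue at most $\frac{\mu}{1-\mu}$. (This bound is informative only for $\mu < 1/2$, but it is formally valid whenever $\mu < 1$, since connectivity of $G(Y)$ forces its second eigenvalue to lie strictly below $1$.) Granting this, I would prove the theorem by downward induction on $\card{I}$: for $I \in X$ with $\dim X_I = j$, that is $\card{I} = d - j$, I claim that $G(X_I)$ has second eigenvalue at most $\gamma_j$, where $\gamma_2 := \lambda$ and $\gamma_j := \frac{\gamma_{j-1}}{1-\gamma_{j-1}}$, so that $\gamma_j = \frac{\lambda}{1-(j-2)\lambda}$. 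The base case $j = 2$ is exactly the hypothesis on links of $(d-2)$-faces; for $j \geq 3$ the inductive step applies the one-step statement to $Y = X_I$, whose $1$-skeleton is connected by assumption and whose vertex links are the complexes $X_{I \cup v}$, covered by the inductive hypothesis. Since $\gamma_j = \frac{\lambda}{1-(j-2)\lambda} \leq \frac{\lambda}{1-(d-1)\lambda}$ for every $j \leq d$, this shows $G(X_I)$ has second eigenvalue at most $\frac{\lambda}{1-(d-1)\lambda}$ for every $I$ of size at most $d-2$, i.e. $X$ is a $\frac{\lambda}{1-(d-1)\lambda}$ one-sided local spectral expander.

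To prove the one-step statement I would use Garland's method. Let $A$ be the reversible (hence self-adjoint) random-walk operator of $G(Y)$ on $L^2(Y(1), \mu_1)$, and for $v \in Y(1)$ let $A_v$ be the analogous operator of $G(Y_v)$ on $L^2(Y_v(1), \mu_1^{Y_v})$, where $\mu_1^{Y_v}$ is the vertex measure of the link $Y_v$. The first ingredient is the standard localization identity
\[
\inner{Af}{f}_{\mu_1} = \E_{v \sim \mu_1}\big[\inner{A_v f_v}{f_v}_{\mu_1^{Y_v}}\big],
\]
valid for every $f \colon Y(1) \to \R$, where $f_v$ is the restriction of $f$ to $Y_v(1)$; it follows by unwinding the definitions of $\mu_2$ and $\mu_3$, since sampling $v \sim \mu_1$ and then an edge of $Y_v$ is the same as sampling a $\mu_3$-triangle together with a uniformly designated vertex in it, and all the two-coordinate marginals that appear are exactly $\mu_1$ and $\mu_2$. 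The second ingredient is purely local: fix $v$ and write $f_v = c_v + f_v^{\perp}$ where $c_v = \E_{\mu_1^{Y_v}}[f_v] = (Af)(v)$ is a constant and $f_v^{\perp}$ has mean zero; then, using that $A_v$ is self-adjoint, fixes constants, and has second eigenvalue at most $\mu$,
\[
\inner{A_v f_v}{f_v} = c_v^2 + \inner{A_v f_v^{\perp}}{f_v^{\perp}} \leq c_v^2 + \mu(\norm{f_v}^2 - c_v^2) = (1-\mu)(Af)(v)^2 + \mu\, \E_{u \sim \mu_1^{Y_v}}[f(u)^2].
\]
Averaging over $v \sim \mu_1$, using $\E_v[(Af)(v)^2] = \norm{Af}^2$ and $\E_v \E_{u \sim \mu_1^{Y_v}}[f(u)^2] = \norm{f}^2$ (again because a $\mu_2$-edge has $\mu_1$-marginals), and substituting into the localization identity yields
\[
\inner{Af}{f} \leq (1-\mu)\norm{Af}^2 + \mu \norm{f}^2.
\]

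To finish, I apply this inequality to a real eigenfunction $f$ of $A$ that is orthogonal to the constant functions and has eigenvalue $\theta$ equal to the second eigenvalue of $G(Y)$. Then $\theta = \inner{Af}{f} / \norm{f}^2 \leq (1-\mu)\theta^2 + \mu$, i.e. $(1-\mu)\theta^2 - \theta + \mu \geq 0$, which for $\mu < 1$ factors as $(1-\mu)(\theta - 1)(\theta - \tfrac{\mu}{1-\mu}) \geq 0$. Connectivity of $G(Y)$ gives $\theta < 1$, so the factor $\theta - 1$ is negative, forcing $\theta \leq \frac{\mu}{1-\mu}$, as claimed.

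The only step that is not routine bookkeeping is the localization identity: one must check that the measures $\mu_1, \mu_2, \mu_3$ of $Y$ and the induced link measures $\mu_1^{Y_v}$ compose exactly as stated — this is where it matters that the $\mu_i$ are the down-sampled measures of a pure complex — and that $A$ and every $A_v$ are genuinely self-adjoint with respect to these measures, so that the eigenvalue hypothesis can be invoked. The concluding quadratic manipulation is short, but it is essential that connectivity of $G(Y)$ is available in order to select the correct root $\frac{\mu}{1-\mu}$ rather than the spurious root $1$; this is precisely why the theorem asks for all links, including the empty one, to have connected $1$-skeleton.
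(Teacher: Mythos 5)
The paper does not prove this theorem; it cites it directly as a known result of Oppenheim~\cite{Oppenheim18}, so there is no ``paper's proof'' to compare against. Your reconstruction is the standard Garland/Oppenheim ``trickling down'' argument, and it is correct: the localization identity
$\inner{Af}{f}_{\mu_1} = \E_{v\sim\mu_1}\inner{A_v f_v}{f_v}_{\mu_1^{Y_v}}$
holds exactly because the paper defines $\mu_i$ by down-sampling from the top-dimensional faces, so the joint law of ``pick $v\sim\mu_1$, then an edge of $Y_v$'' agrees with ``pick a $\mu_3$-triangle and a distinguished vertex''; the pointwise Rayleigh-quotient estimate $\inner{A_vf_v}{f_v}\leq(1-\mu)(Af)(v)^2+\mu\,\E_{u\sim\mu_1^{Y_v}}[f(u)^2]$ is correct since $A_v$ is self-adjoint for $\mu_1^{Y_v}$, fixes constants, and $c_v=(Af)(v)$; and the quadratic-root argument correctly uses connectivity to exclude the root $1$. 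The one-step bound $\theta\leq\mu/(1-\mu)$ iterated from $\gamma_2=\lambda$ gives $\gamma_j=\lambda/(1-(j-2)\lambda)$, and you note (correctly) that this is uniformly bounded by $\lambda/(1-(d-1)\lambda)$; in fact your argument yields the slightly tighter constant $\lambda/(1-(d-2)\lambda)$ at the level of the empty face, so it proves a bit more than the statement asks. The only implicit hypothesis you should spell out is that the recursion is only meaningful while $\gamma_{j-1}<1$, which is automatic exactly when $\lambda<1/(d-1)$ --- i.e.\ exactly when the theorem's conclusion is non-vacuous --- so this is not a gap, but it is worth stating so that the use of ``$\mu<1$'' in the one-step lemma is visibly justified at every stage.
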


We need the following lemma regarding the second eigenvalue of the down-up walks $U^j_{k}D^j_{k}$ on $X(j)$ ($j \geq k$), that can be found in~\cite{AlevL20}.

\begin{lemma}\label{lem:spectral_gap_of_graphs_from_HDX}
Let $(X, \mu)$ be a $d$-dimensional $\gamma$ one-sided local spectral expander. For all $i \leq d$ and $\alpha \in (1/i, 1)$, the largest singular value of $U^i_{\alpha i}$ and $D^i_{\alpha i}$ is at most $\sqrt{\alpha}+\poly(i)\gamma$. Thus the down-up random walk $U^i_{\alpha i}D^i_{\alpha i}$ on $X(i)$ has second largest singular value at most $\alpha + \poly(i)\gamma$. 
\end{lemma}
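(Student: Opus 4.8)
The plan is to reduce the two singular-value bounds to a single statement about the second eigenvalue of a higher-order random walk on $X(i)$, and then obtain that statement from the local-to-global analysis of higher-order walks of Alev and Lau~\cite{AlevL20}.

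\emph{Reduction to a walk bound.} By the adjointness of the up and down operators (the Claim above), $U^i_{\alpha i}$ and $D^i_{\alpha i}=(U^i_{\alpha i})^{*}$ have exactly the same singular values; the top one is $1$, attained on constant functions, so ``largest singular value'' is to be read on the space of mean-zero functions. The square of the largest nontrivial singular value of $U^i_{\alpha i}$ equals $\lambda_2(U^i_{\alpha i}D^i_{\alpha i})$, the second eigenvalue of the down-up walk on $X(i)$, which coincides with $\lambda_2(D^i_{\alpha i}U^i_{\alpha i})$ since the up-down walk on $X(\alpha i)$ has the same nonzero spectrum. It therefore suffices to prove $\lambda_2(U^i_{\alpha i}D^i_{\alpha i})\le \alpha+\poly(i)\gamma$: the final sentence of the lemma is exactly this, and, using the hypothesis $\alpha>1/i$ (so $1/\sqrt{\alpha}\le\sqrt{i}$), we get $\sigma_{\max}(U^i_{\alpha i})=\sigma_{\max}(D^i_{\alpha i})=\sqrt{\lambda_2}\le\sqrt{\alpha+\poly(i)\gamma}\le\sqrt{\alpha}+\poly(i)\gamma$.

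\emph{Bounding the walk.} First invoke Oppenheim's trickling-down theorem (Theorem~\ref{thm:trickling-down}) so that the one-skeletons of all links at all levels $\alpha i\le k\le i$ are $O(\gamma)$-spectral expanders, not just the top-level ones. Then factor the walk into single steps, $D^i_{\alpha i}=D^{\alpha i+1}_{\alpha i}\circ\cdots\circ D^{i}_{i-1}$ and $U^i_{\alpha i}=U^{i}_{i-1}\circ\cdots\circ U^{\alpha i+1}_{\alpha i}$, peel off the outermost pair using $D^{k+1}_{k}U^{k+1}_{k}=\tfrac{1}{k+1}I+\tfrac{k}{k+1}M_{k}$ (where $M_k$ is the non-lazy ``swap'' walk on $X(k)$), and set up the resulting recursion in the level $i$ and the number of steps $i-\alpha i$. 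Feeding the $O(\gamma)$ local expansion of $M_k$ into this recursion gives the Garland-type estimate that, on the canonical level decomposition $L_2(X(i))=\bigoplus_{j\ge 0}V_j$, the operator $U^i_{\alpha i}D^i_{\alpha i}$ acts on $V_j$ as the scalar $\binom{i-j}{\alpha i-j}\big/\binom{i}{\alpha i}$ up to an additive error $\poly(i)\gamma$; this ratio is decreasing in $j$, equals $\alpha$ at $j=1$, so over $j\ge 1$ its maximum is $\alpha$. Summing the $\poly(i)\gamma$ errors over the $O(i)$ levels/steps yields $\lambda_2(U^i_{\alpha i}D^i_{\alpha i})\le\alpha+\poly(i)\gamma$.

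\emph{Main difficulty.} The crux is that in a general $\gamma$-HDX the spaces $V_j$ are only \emph{approximate} eigenspaces of the single-step operators (they are exact eigenspaces only for the complete complex), so one must push the $O(\gamma)$ perturbations through a composition of $\Theta(i)$ single steps and argue that they accumulate additively to $\poly(i)\gamma$ rather than multiplicatively to something like $(1+O(\gamma))^{i}$. This controlled error propagation is precisely the higher-order random walk analysis carried out in~\cite{AlevL20} (building on~\cite{Oppenheim18}), from which this lemma is quoted; accordingly I would present the short adjointness reduction above and then invoke~\cite{AlevL20} as a black box for the walk bound itself.
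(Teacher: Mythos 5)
The paper does not prove this lemma---it simply states that it ``can be found in~\cite{AlevL20}''---and your proposal likewise ends by invoking~\cite{AlevL20} as a black box, so you arrive at essentially the same place; your adjointness reduction between the singular values of $U^i_{\alpha i}$, $D^i_{\alpha i}$ and the second eigenvalue of the down-up walk is correct. One small inaccuracy: the invocation of Oppenheim's trickling-down theorem in your ``Bounding the walk'' paragraph is superfluous, since the hypothesis that $X$ is a $\gamma$ one-sided local spectral expander already requires (by the paper's definition) that the one-skeleton of \emph{every} link, at every level, have second eigenvalue at most $\gamma$; trickling-down is needed only when one is given link expansion at the top codimension alone.
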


\subsection{Spherical Buildings of Type A}
Analyzing the Chapman-Lubotzky
complex will require us to study 
its links. In this section and 
in the next one, we present the 
spherical buildings of type A and 
of type C, which are morally the
graphs we will end up needing to study.
\begin{definition}
The spherical building of type A over $\F_q^{d+1}$ is a $d$-dimensional complex denoted by $SB^A_d(\F_q)$ with the set of maximal faces: 
\[\{(V_1,\ldots,V_d): V_1 \subset \ldots \subset V_{d}, V_i \subset_{i} \F_q^{d+1}\}.\] 
The $d$-faces are equipped with the uniform distribution which we also denote by $SB^A_d(\F_q)$. 
\end{definition}

The following is a well-known fact, but we give the proof here for completeness.
\begin{lemma}\label{lem:eps-product-A}
The distribution $SB^A_d(\F_q)$ is a $O(1/\sqrt{q})$-product distribution.   
\end{lemma}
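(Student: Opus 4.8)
\textbf{Proof plan for Lemma~\ref{lem:eps-product-A}.}
The plan is to verify the defining property of an $\eps$-product distribution directly: I need to show that for every $S\subseteq[d]$ of size at most $d-2$ and every $V\in\supp(SB^A_d(\F_q)^S)$, the conditional distribution $\mu\mid X_S=V$ has $O(1/\sqrt q)$-pseudorandom skeletons, i.e.\ for every two coordinates $i\neq j$ not determined by the conditioning, the second singular value of the bipartite inclusion operator $A_{\{i\},\{j\}}$ is $O(1/\sqrt q)$. The key observation is that after conditioning on a flag indexed by $S$, the two coordinates $i<j$ ``between'' (or outside) the fixed subspaces still range over a flag of two subspaces $W_i\subset W_j$ in an ambient space, with the containments with the fixed $V$-subspaces imposed; concretely, if $a<i<j<b$ are such that $V$ fixes subspaces $V_a\subset V_b$ with $\dim V_a=a$, $\dim V_b=b$ (taking $V_0=0$, $V_{d+1}=\F_q^{d+1}$ at the ends), then sampling $(W_i,W_j)$ amounts to choosing a flag of dimensions $i,j$ inside the quotient $V_b/V_a$, which is just $SB^A$ data for $\F_q^{b-a}$. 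So it suffices to bound the second singular value of the single inclusion bipartite graph whose left vertices are $i$-dimensional subspaces of $\F_q^N$, right vertices are $j$-dimensional subspaces, and $W_i\sim W_j$ iff $W_i\subset W_j$ (with the natural weights), for $N=b-a$, and to show this is $O(1/\sqrt q)$ uniformly in $N,i,j$.

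First I would reduce to this single two-step inclusion graph as above, carefully handling the boundary cases where $i$ or $j$ is at the top or bottom of the flag (so one ambient subspace is $0$ or the whole space). Next, to bound the second singular value I would use the expander mixing / second-moment method: equivalently, bound the second eigenvalue of the down-up walk on $i$-dimensional subspaces that goes up to a random $j$-dimensional superspace and back down, or directly estimate $\E[\,|\,A_{\{i\},\{j\}}^* A_{\{i\},\{j\}}\,|\,]$ type quantities. The cleanest route is a direct computation of collision probabilities: for two independent samples one computes, for a fixed $W_i$, the distribution of a random $j$-space $W_j\supset W_i$ and then a random $i'$-space $W_{i'}\subset W_j$, and shows that the probability $W_{i'}=W_i$ exceeds the stationary probability by a factor $1+O(1/q)$, and more generally that $\dim(W_i\cap W_{i'})$ concentrates. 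This is the standard Gaussian-binomial estimate: the number of $i$-subspaces of a fixed $j$-space is $\binom{j}{i}_q$, and ratios of Gaussian binomials are $q^{\pm O(1)}$-controlled, yielding that the nontrivial part of the operator has norm $O(1/\sqrt q)$. I would cite or quickly reprove that the Grassmann inclusion graph (the ``bipartite Grassmann graph'') has second singular value $O(1/\sqrt q)$ — this is classical (it is essentially the statement that the Grassmann scheme is a good expander), and can be derived from the eigenvalues of the Grassmann association scheme.

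The main obstacle is the bookkeeping in the reduction step: one must check that conditioning a type-$A$ spherical building flag on an arbitrary subset $S$ of coordinates really does leave, for each pair $i\neq j$ of surviving coordinates, a marginal that is exactly a two-element flag inside a quotient $V_b/V_a$ where $a=\max\{s\in S: s<i\}$ (or $0$) and $b=\min\{s\in S: s>j\}$ (or $d+1$) — in particular that the coordinates $i,j$ being possibly non-adjacent in $[d]$ and the conditioning being on many coordinates does not introduce extra dependence. This follows because a uniformly random complete flag refining a fixed partial flag is, on any interval of dimensions between two fixed subspaces, a uniformly random flag in the corresponding quotient, independently across intervals — but stating and using this cleanly is where the care is needed. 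Once that is in place, the singular value bound is the routine Grassmann estimate and the $O(1/\sqrt q)$ constant, together with the $\poly(d)$ slack already allowed by Lemma~\ref{claim:gll-sing-val} downstream, closes the argument. I would also note that the very same computation will be reused (with the symplectic form imposing that subspaces be isotropic) in the analogous statement for type $C$, so I would phrase the Grassmann estimate as a standalone sublemma.
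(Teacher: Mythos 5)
Your plan matches the paper's proof: both reduce, after fixing the partial flag indexed by $S$, to the weighted bipartite Grassmann inclusion graph between $(i-a)$- and $(j-a)$-dimensional subspaces of $\F_q^{b-a}$ (with $a,b$ the nearest fixed dimensions) and then invoke the classical $O(1/\sqrt q)$ spectral bound for that graph. One small case you glossed over but the paper states explicitly: when some $k\in S$ lies strictly between $i$ and $j$, the conditional law of $(W_i,W_j)$ is a product and the singular value is exactly $0$, so the quotient-flag reduction is only needed when $i,j$ fall in the same interval of the fixed flag.
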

\begin{proof}
Let $\mu=SB^A_d(\F_q)$. Take any set $S \subseteq [d], |S| \leq r-2$ and a valid restriction of it, say $V$. We need to show that $\mu | X_S = V$ has $1/q$-pseudorandom skeletons. Consider two coordinates $i \neq j \in [d]\setminus S$ and let $A_{i,j}$ be the bipartite graph/normalized adjacency operator corresponding to $\mu^{\{i,j\}}|X_S = V$. If there exists $k \in S$ such that $i < k < j$, then $A_{i,j}$ has second largest singular value $0$. So let us consider the case where there is no such coordinate $k$ between $i$ and $j$. Let $i'$ be the largest coordinate in $S$ that is less than $i$ and $j'$ be the smallest coordinate in $S$ that is greater than $j$ ($i' < i < j < j'$). Then the bipartite graph $A_{i,j}$ is isomorphic to the weighted inclusion graph between $i-i'$ and $j-i'$-dimensional subspaces of $\F_q^{j'-i'}$. It is well-known (see for example~\cite{brouwer2012distance,godsil2016erdos}) that this graph has second largest singular value $\lll 1/\sqrt{q^{j-i}} \leq 1/\sqrt{q}$. This shows that $\mu | X_S = V$ has $1/\sqrt{q}$-pseudorandom skeletons for all $S$ and $V$, thus showing that $\mu$ is a $1/\sqrt{q}$-product distribution.
\end{proof}

\subsection{Spherical Buildings of Type C}
Next, we present the spherical buildings of type $C$. Like the spherical buildings of type $A$, 
type $C$ spherical buildings 
are too defined using subspaces. 
However, we only consider subspaces
that are isotropic with respect
to a \emph{symplectic form}.
\begin{definition}\label{def:symp-form}
A symplectic bilinear form is a mapping $\omega: \F^{2n} \times \F^{2n}\rightarrow \F$ is a map which bi-linear, anti-symmetric -- $\omega(v,w) = -\omega(w,v), \forall v,w \in \F^{2n}$ and non-degenerate -- $\omega(u,v) = 0$ for all $v$ implies $u = 0$. 
\end{definition}
In this paper we fix the symplectic form:
\[\omega = \begin{pmatrix} 0 & I_n \\ -I_n & 0 \end{pmatrix},\]
which gives the bi-linear form 
$\omega(v,w) = \sum_{i = 1}^n v_i w_{n+i} - w_i v_{n+i}$.
One can check that $\omega(\cdot,\cdot)$ is a valid symplectic bilinear form.

\begin{definition}
A subspace $V \subseteq \F_q^{2d}$ is called isotropic if $\omega(v,w) = 0$ for all $v,w \in V$.   
\end{definition}

\begin{definition}
The symplectic group \(\text{Sp}(2d, F)\) is defined as the set of \(2d \times 2d\) matrices \(M\) over a field \(F\) that preserve the symplectic form \(\omega \). This group is characterized by matrices in $\GL_{2d}(\F)$ that satisfy \( M^T \omega M = \omega \), where \( \omega \) is the matrix representation of the symplectic form.
\end{definition}

\paragraph{Properties of $\sp_{2d}(\F)$:} For an invertible matrix $M$ and subspace $V \subseteq \F^{2d}$, let $M \circ V$ denote the subspace $\spn(Mv)_{v \in V}$. Then the group $\sp_{2d}(\F)$ has the following properties:
\begin{enumerate}
    \item If $V$ is a $t$-dimensional isotropic subspace of $\F_q^{2d}$, then $M\circ V$ is also a $t$-dimensional isotropic subspace of $\F_q^{2d}$.
    \item Furthermore, $\sp_{2d}(\F)$ acts transitively on the set of $t$-dimensional isotropic subspaces for all $t \leq d$.
\end{enumerate}

\begin{definition}
The spherical buildings of type C over $\F_q^{2d}$ is a $d$-dimensional complex denoted by $SB^C_d(\F_q)$ with the set of maximal faces:
\[
\{(V_1,\ldots,V_d): V_1 \subset \ldots \subset V_d, V_i \text{ is an isotropic subspace of dimension } i\}.
\]
The $d$-faces are equipped with the uniform distribution which we also denote by $SB^C_d(\F_q)$.
\end{definition}

The following lemma asserts that the
natural distribution associated with the spherical building of type $C$ is $\eps$-product for a small $\eps$.
\begin{lemma}\label{lem:eps-product-C}
Let $d,q \in \N$ with $q \geq \poly(d)$. Then $\mu=SB^C_d(\F_q)$ is a $O(1/\sqrt{q})$-product distribution.   
\end{lemma}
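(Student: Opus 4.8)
\textbf{Proof plan for Lemma~\ref{lem:eps-product-C}.}
The plan is to mirror the proof of Lemma~\ref{lem:eps-product-A} for type $A$, reducing everything to a second-singular-value bound on a weighted inclusion graph between isotropic subspaces. Fix $\mu = SB^C_d(\F_q)$, take a set $S\subseteq[d]$ with $|S|\le d-2$ and a valid restriction $V=(V_k)_{k\in S}$, and pick two coordinates $i\ne j$ in $[d]\setminus S$. I want to show the bipartite operator $A_{i,j}$ associated with $\mu^{\{i,j\}}\mid X_S = V$ has second largest singular value $O(1/\sqrt q)$. As in the type $A$ case, if some $k\in S$ satisfies $i<k<j$ the operator factors through a single vertex and has second singular value $0$, so assume no coordinate of $S$ lies strictly between $i$ and $j$; let $i'$ be the largest element of $S$ below $i$ (or $i'=0$ if none) and $j'$ the smallest element of $S$ above $j$ (or $j'=d$, with the convention $V_{d}$ ranges over maximal isotropic subspaces — actually one should take $j'$ to index the ``everything above'' part, handled by the symplectic structure of $V_{i'}^{\perp}/V_{i'}$). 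After conditioning on $V_{i'}$ and $V_{j'}$, the relevant geometry lives inside the quotient space $W := V_{i'}^{\perp_\omega}/V_{i'}$, which carries an induced nondegenerate symplectic form; the chain $V_{i'}\subset V_i\subset V_j\subset V_{j'}$ corresponds to an isotropic flag of dimensions $i-i'<j-i'$ inside $W$ (with an additional isotropy-relative-to-$V_{j'}$ constraint from above), so $A_{i,j}$ is isomorphic to the weighted inclusion graph between $(i-i')$- and $(j-i')$-dimensional isotropic subspaces in a symplectic space of dimension $\mathrm{poly}$ related to $j'-i'$.

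The one genuinely new ingredient compared with type $A$ is the spectral bound on this isotropic inclusion graph. I would cite the representation-theoretic / association-scheme literature — the relevant object is the \emph{dual polar graph} of type $C_n$ and its inclusion (or ``$q$-analogue of the Johnson'') relatives — where the eigenvalues are known explicitly; see e.g.~\cite{brouwer2012distance}. The key quantitative fact I need is that the normalized second singular value of the weighted inclusion graph between isotropic subspaces of dimensions $a<b$ in $\F_q^{2m}$ is $O(q^{-(b-a)/2}) \le O(1/\sqrt q)$, exactly the same shape as the type $A$ bound because the counting of isotropic subspaces containing/contained-in a given one is again governed by Gaussian binomials up to $(1+O(1/q))$ factors. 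Given $q\ge \mathrm{poly}(d)$ this is $O(1/\sqrt q)$ uniformly over all the choices of $i,j,S,V$, which is precisely the statement that $\mu\mid X_S=V$ has $O(1/\sqrt q)$-pseudorandom skeletons, and hence that $\mu$ is an $O(1/\sqrt q)$-product distribution.

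The main obstacle is getting the quotient reduction right: unlike the type $A$ case, conditioning on the flag elements in $S$ does \emph{not} simply restrict to subspaces of a fixed ambient vector space — one must pass to the symplectic quotient $V_{i'}^{\perp}/V_{i'}$ to see a clean symplectic space again, and one must check that the ``upper'' constraint coming from $V_{j'}$ (namely that $V_j$ be isotropic and contained in $V_{j'}$) is compatible with this picture and still yields an inclusion graph between isotropic subspaces in a nondegenerate symplectic space (or, in the boundary case $j' = d$, the full maximal-isotropic situation). The cleanest way to organize this is the Coxeter-diagram viewpoint mentioned in the remark: deleting the vertices of $S$ from the $C_d$ diagram leaves a (possibly disconnected) diagram each of whose components is of type $A$ or type $C$, and the link/conditioned distribution factors as a product over these components. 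So the proof reduces to: (i) the product-over-components structure of links of type $C$ buildings, (ii) the already-known type $A$ bound of Lemma~\ref{lem:eps-product-A}, and (iii) the dual-polar-graph eigenvalue bound for the single type-$C$ component. I expect (iii) to be a one-line citation and (i) to be the part requiring the most care to state precisely.
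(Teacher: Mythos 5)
Your plan is correct in outline, but it takes a genuinely different route from the paper's proof. The paper does \emph{not} directly verify the pseudorandom-skeleton condition for arbitrary $S$. Instead it invokes the equivalence (from~\cite{DiksteinD19}, Lemma~7.5) between being an $\eps$-product distribution and being an $\eps$-one-sided local spectral expander of the associated $d$-partite complex, and then applies Oppenheim's trickling-down theorem (Theorem~\ref{thm:trickling-down}) to reduce the latter to bounding only the links of size $d-2$. This collapses the case analysis to exactly three small situations: non-consecutive $(i,j)$ (singular value $0$ by product structure), consecutive $(i,j)$ with $i+2\le d$ fixed in $S$ (a Grassmann inclusion graph between $1$- and $2$-dimensional subspaces of $\F_q^3$, handled by Lemma~\ref{lem:eps-product-A}), and $(i,j)=(d-1,d)$ (inclusion graph between $1$- and $2$-dimensional isotropic subspaces of $\F_q^4$, handled by a single citation to~\cite[Thm.~9.4.3]{brouwer2012distance}). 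The trickling-down step needs $q\geq\poly(d)$ precisely so that the factor $1/(1-d\eps)$ does not blow up, which is where that hypothesis enters.

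Your approach avoids the reduction machinery but at the cost of needing a substantially more general spectral fact: the second singular value of the weighted inclusion graph between isotropic $a$- and $b$-dimensional subspaces of a symplectic space $\F_q^{2m}$ must be bounded by $O(1/\sqrt q)$ for \emph{all} $a<b\le m$ and all $m\le d$. You describe this as ``a one-line citation,'' which is optimistic; the paper's deliberate use of trickling-down is specifically to avoid having to establish or locate that general statement, and instead get away with the single $1$-into-$2$ case in $\F_q^4$. Your quotient reduction $W=V_{i'}^{\perp_\omega}/V_{i'}$ and the Coxeter-diagram / product-over-components decomposition are both correct and are used implicitly in the paper (and explicitly in Lemma~\ref{lem:vertex-links-CL}), and you correctly identify that the only genuinely symplectic case is when $S$ contains no index above $j$ (any $V_{j'}$ with $j'\in S$, $j'>j$, is itself isotropic, so the constraint reduces to a type-$A$ inclusion graph). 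If you are willing to supply the general dual-polar eigenvalue bound, your route goes through; the paper's route is cheaper because trickling-down trades that general bound for a worst-case factor $1/(1-d\eps)$, which is harmless once $q\geq\poly(d)$.
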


\begin{proof}
Fix $\eps = O(1/\sqrt{q})$. One can check that proving $\mu$ is a $\lambda$-product distribution is equivalent to showing that $SB^C_d(\F_q)$ denoted by $X$ is a $\lambda$-one-sided local spectral expander, and we focus on the latter task. To do so, it suffices to show that the 1-skeleton of all links $I \in X(d-2)$ have second eigenvalue at most $\eps$. Once we show that, the trickling-down theorem, Theorem~\ref{thm:trickling-down}, implies that $X$ is a $\frac{\eps}{1-d\eps} \lll \eps$ one-sided local spectral expander.

Fix a $(d-2)$-sized link $I$, and say that it fixes the set of coordinates $S \subset [d]$ of size $d-2$ to the isotropic subspaces $V_a, a\in S$, that form a valid inclusion chain. Let $i<j \in [d]$ be the two unfixed coordinates and let $\cD$ be the resulting conditional distribution on $X_{\{i,j\}}$. The second largest eigenvalue of the 1-skeleton of $I$ is equal to the second largest singular value of $A(\{i\},\{j\};\cD)$, since the 1-skeleton is a bipartite graph. We bound the latter using case analysis. 

\paragraph{Coordinates $i,j$ are not consecutive:}
In this case there exists $k \in S$ such that $i < k < j$. So we know that $\cD$ is a product distribution and $A(\{i\},\{j\};\cD)$ is the complete bipartite graph, hence the corresponding second largest singular value is $0$. 

\paragraph{Coordinates $i,j$ are consecutive but not equal to $(d-1,d)$:} In this case, $j=i+1$ and the coordinates $i+2$ and $i-1$ belong to $S$. Then the resulting bipartite graph is over the set of isotropic subspaces contained within $V_{i+1}$ and containing $V_{i-1}$. Since $V_{i+1}$ is an isotropic subspace
and every subspace within an isotropic subspace is isotropic, we get that $A(\{i\},\{j\};\cD)$ is isomorphic to the bipartite weighted inclusion graph between $1$ and $2$-dimensional subspaces of $\F_q^3$. This has largest singular value $\leq \eps$ as we saw in Lemma~\ref{lem:eps-product-A}.

\paragraph{Coordinates $(i,j)= (d-1,d)$:} In this case the coordinate $(d-2)$ belongs to $S$. The set of 
$d-1$ and $d$-dimensional isotropic subspaces containing $V_{d-2}$ is in one-to-one correspondence with the set of $1$ and $2$-dimensional subspaces respectively, that are symplectically orthogonal to $V_{d-2}$ and are not contained in $V_{d-2}$. Taking a quotient by $V_{d-2}$ we get that this is the set of $1$ and $2$-dimensional isotropic subspaces of $\F_q^{4}$. Therefore $A(\{i\},\{j\};\cD)$ is isomorphic to the bipartite inclusion graph between $1$ and $2$-dimensional isotropic subspaces of $\F_q^{4}$. Using~\cite[Theorem 9.4.3]{brouwer2012distance}, we get that this graph has second largest singular value $\lll \eps$ as required. 
\end{proof}

\section{A Local to Global Theorem for Coboundary Expansion}
The primary goal of this section
is to present an inductive approach
to prove upper bounds on the 
coboundary constants corresponding 
to the level $r$ faces of a simplicial complex $X$. Roughly 
speaking, starting with an 
initial assumption regarding 
the coboundary constant of $X$
on constant levels, we show how 
to lift it to a reasonable bound
on higher levels. 
\subsection{Tools}
\subsubsection{Basic Notions and Properties of the Tripartite Graph 
$T(R_1,R_2,R_3)$}
In this section, we present
some of the tools and notions 
that are necessary for our proof.
Throughout this section, we fix a set of indices $I = \{i_1,\ldots,i_{3r}\}$ and an $\eps$-product distribution $\mu$ over $\prod_{i \in I} X_{i}$. 
Here and throughout, $\eps$ should
be thought of as very small compared to all other parameters, 
and we encourage the reader to 
think of $\eps = 0$ at first reading.

We begin by formally defining the 
notion of coboundary expansion
(with additive error) for measures.
\begin{definition}\label{def:cobundary_expand_measure}
Let $\mu$ be a measure over $\prod\limits_{i\in I} X_i$ and 
let $0 \leq r_1,r_2,r_3 \leq r\in\mathbb{N}$ be integers. 
We say that $\mu$ is a $(C_{r_1,r_2,r_3},\beta_{r_1,r_2,r_3})$-coboundary expander over $\S_m$ if for all sets $S$ of size at most $3r - (r_1+r_2+r_3)$, for all restrictions $a \in \supp(\mu^S)$, and for all disjoint $R_1,R_2,R_3 \subseteq I \setminus S$ of sizes $r_1,r_2,r_3$ respectively, the tripartite graph $T(R_1,R_2,R_3;\mu|X_S = a)$ with respect to the distribution $\mu^{\cup R_i}|(X_S = a)$ over triangles, is a $(C_{r_1,r_2,r_3},\beta_{r_1,r_2,r_3})$-coboundary expander over $\S_m$  as per Definition~\ref{def:coboundary-constant}.  

When $\beta_{r_1,r_2,r_3} = 0$, we simply say that $\mu$ is a $C_{r_1,r_2,r_3}$-coboundary expander over $\S_m$.
\end{definition}

Definition~\ref{def:cobundary_expand_measure} will be of central interest to us, and to use it 
we must develop some tools to investigate coboundary 
expansion in the tripartite graphs $T(R_1,R_2,R_3;\mu)$. 
We begin with the following claim, 
asserting that this graph always
 has second singular value bounded away from $1$.
\begin{claim}\label{lem:eps-product-eigval}
Let $\mu$ be an $\eps$-product distribution over $X_1 \times \ldots \times X_r$, and let $S_1,S_2,S_3 \subseteq [r]$ be three disjoint non-empty sets. Let $T$ be the normalized adjacency operator of the graph $T(S_1,S_2,S_3;\mu)$. Then the second largest singular value of $T$is at most $1/2+\poly(r)\eps$.
\end{claim}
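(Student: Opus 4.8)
The tripartite graph $T(S_1,S_2,S_3;\mu)$ is obtained by taking, with probability $1/3$ each, one of the three bipartite graphs $A(S_i,S_j;\mu)$ on the appropriate pair of parts. The plan is to relate the spectrum of the normalized adjacency operator $T$ of this tripartite graph to the spectra of the three bipartite operators $A_{S_i,S_j}$, each of which has second singular value at most $\poly(r)\eps$ by Lemma~\ref{claim:gll-sing-val}. Concretely, write $L^2$ of the vertex set of $T$ (with its stationary measure, which puts mass $1/3$ on each part) as the orthogonal sum of the constant function and its complement, and decompose the complement further according to the three parts. On this complement, $T$ acts as a ``block'' operator whose off-diagonal blocks are (scalings of) the $A_{S_i,S_j}$ and whose diagonal blocks are zero.

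\textbf{Key steps.} First I would set up coordinates: a function on the vertex set of $T$ is a triple $(f_1,f_2,f_3)$ with $f_i\in L^2(X_{S_i},\mu^{S_i})$, and the inner product weights each part by $1/3$. Second, I would write out $T(f_1,f_2,f_3)$ explicitly: on part $1$ the output is $\tfrac12\big(A_{S_2,S_1}f_2 + A_{S_3,S_1}f_3\big)$, and symmetrically for the other parts — the factor $\tfrac12$ coming from the fact that conditioned on landing in part $1$, the other endpoint is equally likely to be in part $2$ or part $3$. Third, I would restrict to the subspace where each $f_i$ is orthogonal to constants (the constant function is the top singular vector, contributing singular value $1$; the mean-zero subspace is $T$-invariant up to the bipartite structure, and handling the remaining constant directions spanned by indicator-of-part functions is a finite-dimensional computation that one checks directly gives singular values at most $1/2+o(1)$). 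On the mean-zero-per-part subspace, use Lemma~\ref{claim:gll-sing-val} to bound $\norm{A_{S_i,S_j}f_j}\leq \poly(r)\eps\norm{f_j}$, and then a Cauchy--Schwarz / triangle-inequality estimate on the $3\times 3$ block structure: $\norm{T(f_1,f_2,f_3)}^2 \leq \tfrac14\cdot(\text{sum over parts of }\norm{\text{two contributions}}^2)$, which after expanding and using the singular value bounds gives $\norm{T(f_1,f_2,f_3)}^2 \leq (\tfrac12+\poly(r)\eps)^2\norm{(f_1,f_2,f_3)}^2$ or so. Fourth, I would combine the mean-zero-per-part bound with the bound on the three-dimensional space of functions constant on each part: on that space $T$ acts as the $3\times 3$ matrix with $0$ on the diagonal and $\tfrac12$ off-diagonal (in the correct weighting), whose eigenvalues are $1$ (constants) and $-\tfrac12$ (twice), so the second singular value there is exactly $\tfrac12$. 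Taking the max over the two pieces and absorbing cross terms into the $\poly(r)\eps$ error yields the claimed bound $1/2 + \poly(r)\eps$.

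\textbf{Main obstacle.} The one subtlety is that the mean-zero-per-part subspace and the ``constant on each part'' subspace are not individually $T$-invariant — $A_{S_i,S_j}$ maps the constant function on $X_{S_j}$ to the constant function on $X_{S_i}$ but also maps mean-zero functions to (nearly) mean-zero functions only up to $\poly(r)\eps$ errors. So the decomposition is an approximate block-diagonalization rather than an exact one, and one has to track the off-diagonal coupling between the two invariant-up-to-$\eps$ pieces. I expect this bookkeeping — showing that the coupling terms contribute only $\poly(r)\eps$ to the operator norm on the orthogonal complement of the global constant — to be the main (though routine) technical point; it follows from the expander mixing lemma (Lemma~\ref{lem:bip-eml}) applied to each $A(S_i,S_j;\mu)$, or equivalently from the fact that $\norm{A_{S_i,S_j}f_j - \mu^{S_i}(\cdot)\E[f_j]}\leq \poly(r)\eps\norm{f_j}$. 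Everything else is a finite-dimensional eigenvalue computation and elementary inequalities, so I would not belabor it.
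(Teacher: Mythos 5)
Your approach is correct and yields the claimed bound; it is a different (and arguably cleaner) presentation of essentially the same arithmetic as the paper. The paper fixes a single eigenvector $f$ orthogonal to constants, writes $a_i = \E[f_i]$, and directly bounds $\|(Tf)_i\|_2 \leq \tfrac12(a_j + a_k) + \poly(r)\eps$ using $\|A_{S_j,S_i}(f_j - a_j)\|_2 \leq \poly(r)\eps$; then the linear constraint $a_1 + a_2 + a_3 = 3\E[f] = 0$ gives $a_j + a_k = -a_i$, and summing squares with Jensen produces $\|Tf\|_2^2 \leq \tfrac14\|f\|_2^2 + \poly(r)\eps$. Your decomposition into (constants-per-part) $\oplus$ (mean-zero-per-part) packages the same information: the $a_i$'s are exactly the component of $f$ in the $3$-dimensional subspace, and your $3\times 3$ matrix $\tfrac12(J - I)$ with eigenvalues $1, -\tfrac12, -\tfrac12$ encodes the $a_1+a_2+a_3 = 0$ constraint structurally. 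Either route gets there; your version makes the source of the $\tfrac12$ more transparent.

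One point where you are being overly cautious: the decomposition is \emph{exactly} $T$-invariant, not merely invariant up to $\poly(r)\eps$. Each $A_{S_i,S_j}$ is a Markov (doubly-stochastic) operator between the two marginal measures, so $A_{S_i,S_j}\mathbf{1} = \mathbf{1}$ and, dually, $\E_{\mu^{S_i}}[A_{S_i,S_j}g] = \E_{\mu^{S_j}}[g]$; hence $A_{S_i,S_j}$ sends constants to constants and mean-zero functions to mean-zero functions with no error. Therefore the block structure you describe is an exact orthogonal direct sum of $T$-invariant subspaces, the coupling you were planning to bound is identically zero, and no appeal to the mixing lemma is needed for that step. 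The $\poly(r)\eps$ enters only as the operator norm of each $A_{S_i,S_j}$ restricted to the mean-zero subspace (which is precisely what Lemma~\ref{claim:gll-sing-val} gives), not as a leakage term between blocks. Dropping the bookkeeping for the (nonexistent) off-diagonal coupling makes your proof shorter and cleaner than you anticipate.
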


\begin{proof}
Let $T$ be the normalized adjacency operator of $T(S_1,S_2,S_3;\mu)$ and let $f$ be the second eigenvector of $T$ with $\E[f] = 0$ and $\|f\|_2 = 1$. We will now bound $\|Tf\|_2$. 

For any $i=1,2,3$, let $f_i$ be the function $f$ restricted to $X_{S_i}$ thought of as an element in $L_2(X_{S_i};\mu^{S_i})$; 
we also denote $a_i = \E[f_i] = |\E_{x \sim \mu^{S_i}}[f(x)]|$. We have that for all $x \in S_1$, 
\[Tf(x) = \frac{1}{2}A_{S_2,S_1}f(x) + \frac{1}{2}A_{S_3,S_1}f(x),\]
and similarly for $x \in S_2$ or $S_3$. Similarly, we denote by $(T f)_i$ the restriction of $Tf$ to $X_{S_i}$. With these notations, we have that
\begin{align*}
\|(Tf)_1\|_2 \leq \frac{1}{2}\|A_{S_2,S_1}f_2\|_2 + \frac{1}{2}\|A_{S_3,S_1}f_3\|_2 \leq \frac{1}{2}(a_2+\poly(r)\eps) + \frac{1}{2}(a_3+\poly(r)\eps),
\end{align*}
where in the last transition we used Claim~\ref{claim:gll-sing-val} to bound the second singular value of $A_{S_2,S_1}$ and $A_{S_3,S_1}$ to bound 
$\|A_{S_2,S_1}f_2 - a_2\|_2 
= \|A_{S_2,S_1}(f_2 - a_2)\|_2\leq {\sf poly}(r)\eps$. Squaring and simplifying gives us that
\begin{align*}
\|(Tf)_1\|^2_2 &\leq \frac{1}{4}(a_2+a_3)^2+\poly(r)\eps.
\end{align*}
As $a_1+a_2+a_3 = 3\E[f] = 0$, 
we conclude that $\norm{(Tf)_1}_2^2\leq \frac{1}{4}a_1^2 + {\sf poly}(r)\eps$, 
and similarly $\norm{(Tf)_2}_2^2
\leq \frac{1}{4}a_2^2 + {\sf poly}(r)\eps$ and 
$\norm{(Tf)_3}_2^2
\leq \frac{1}{4}a_3^2 + {\sf poly}(r)\eps$. Multiplying by $1/3$ and summing up we get
\begin{align*}
\|Tf\|^2_2 = \E_{i\in [3]}[\|(Tf)_i\|_2^2] 
&\leq \frac{1}{12}(a_1^2 + a_2^2 + a_3^2)+\poly(r)\eps\\
&= \frac{1}{12}(\E[f_1]^2 + \E[f_2]^2 + \E[f_3]^2)+\poly(r)\eps\\
&\leq \frac{1}{12}(\E[f_1^2] + \E[f_2^2]^2 + \E[f_3^2])+\poly(r)\eps\\
&= \frac{1}{4}\|f\|_2^2+\poly(r)\eps,
\end{align*}
which is at most $\frac{1}{4} + {\sf poly}(r)\eps$ as $\norm{f}_2=1$. 
Taking a square root gives that $\|Tf\|_2 \leq 1/2+\poly(r)\eps$.
\end{proof}

\subsubsection{Almost Uniqueness of Good Solutions to Affine UG Instances}
We begin by defining shifts of assignments to Affine Unique-Games instances $G$.
\begin{definition}
Given an affine Unique-Games instance $\Phi$ with alphabet $S_m$, an assignment $A$ to it and $\pi\in S_m$, we denote by $A \circ \pi$ denote the assignment that $(A\circ \pi)(v) = X(v)\pi$. 
\end{definition}
Suppose $\Phi$ is an instance of affine Unique-Games as in Definition~\ref{def:unique-games}, 
and suppose that $A$ is an assignment to it. It can easily 
be seen that $\val(A) = \val(A\circ \pi)$ for every $\pi\in \S_m$. Thus, if $A$ satisfies many of the constraints 
of $\Psi$, then so does $A\circ \pi$. The following claim says that if the underlying constraint 
graph is an expander, then all good assumptions are essentially shifts of one good assignment $X$.
This idea appeared in~\cite{BBKSS} 
in the context of the Abelian 
version of affine Unique-Games, 
but the proof in the non-Abelian case is essentially the same and
we give it for completeness.
\begin{claim}\label{claim:shift-partition}
Let $\Phi = (G,\Pi,\S_m)$ be a UG instance on $G$ which is an expander graph with second eigenvalue $\lambda$, and let $X,X' \in \S_m^{V(G)}$ be two solutions to $\Phi$. Then there exists a permutation $\pi \in \S_m$ such that
\[\Pr_{v \in G}[X(v) \neq X'(v)\pi] \leq \frac{\viol(X)+\viol(X')}{1-\lambda}.\]
\end{claim}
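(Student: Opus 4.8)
The plan is to argue via a union/averaging over the edges of $G$ and use expansion to show that the "local discrepancy" permutation is essentially constant. First I would set up the following: for each vertex $v$, consider the permutation $\sigma(v) = X(v)^{-1}X'(v) \in \S_m$ (or $X'(v)^{-1}X(v)$, depending on the side we want the shift; I will track the convention carefully so that the final statement reads $X(v) = X'(v)\pi$, i.e.\ we want $\pi = X'(v)^{-1}X(v)$ to be roughly constant). The key observation is that if $(u,v)$ is an edge satisfied by both $X$ and $X'$, then $X(u) = \pi_{u,v}X(v)$ and $X'(u) = \pi_{u,v}X'(v)$, hence $X'(u)^{-1}X(u) = X'(v)^{-1}\pi_{u,v}^{-1}\pi_{u,v}X(v) = X'(v)^{-1}X(v)$, so $\sigma(u) = \sigma(v)$. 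Thus the constraint $\sigma(u) = \sigma(v)$ holds on every edge that is simultaneously satisfied by $X$ and $X'$, which by a union bound is at least a $1 - \viol(X) - \viol(X')$ fraction of edges.

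Next I would define, for each $\pi \in \S_m$, the set $V_\pi = \{v \in V(G) : \sigma(v) = \pi\}$, so that $\{V_\pi\}_{\pi \in \S_m}$ is a partition of $V(G)$ (under the measure on vertices induced by $G$). Let $\pi^\star = \argmax_\pi \mu(V_\pi)$ be the most popular value, and write $\alpha = \mu(V_{\pi^\star})$. Every edge inside some $V_\pi$ — i.e.\ an edge $(u,v)$ with $\sigma(u) = \sigma(v)$ — that is \emph{not} inside $V_{\pi^\star}$ must cross between $V_{\pi^\star}^c$ and itself, so it lies within the vertex set $S = V(G) \setminus V_{\pi^\star}$ of measure $1-\alpha$. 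By the expander mixing lemma, the fraction of edges lying entirely within $S$ is at most $\mu(S)^2 + \lambda \mu(S)(1-\mu(S)) \leq \mu(S)(\mu(S) + \lambda) $; a cleaner route is to just say that the number of edges with both endpoints in $S$ is at most $(\mu(S) + \lambda)\mu(S) \le \mu(S)$ when... actually the slick inequality I want is: the fraction of edges with $\sigma(u)=\sigma(v)$ but not both endpoints in $V_{\pi^\star}$ is at most $\mu(S)$, which combined with the fact that at least $1 - \viol(X) - \viol(X')$ of edges have $\sigma(u) = \sigma(v)$ and at most $2\mu(S)\lambda/\cdots$... Let me instead use the standard expansion fact directly: if $\mu(S) = 1 - \alpha$ then the fraction of edges staying within $S$ is at most $(1-\alpha)^2 + \lambda\alpha(1-\alpha)$.

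So I would combine: fraction of "monochromatic" edges (those with $\sigma(u)=\sigma(v)$) is $\geq 1 - \viol(X) - \viol(X')$; of these, the ones not inside $V_{\pi^\star}$ contribute at most the edge-fraction within $S = V\setminus V_{\pi^\star}$, and the edge-fraction \emph{within} $V_{\pi^\star}$ is at most $\alpha$ trivially (it's at most the probability both endpoints are in $V_{\pi^\star}$, which is... hmm, I need the complementary bound). The cleanest finish: the fraction of edges with at least one endpoint in $S$ is at least $1 - [\mu(V_{\pi^\star})^2 + \lambda \mu(V_{\pi^\star})(1-\mu(V_{\pi^\star}))]$, but more useful is that edges leaving $V_{\pi^\star}$ to $S$ form at least $(1-\lambda)\alpha(1-\alpha)$... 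Actually the intended argument is surely the following standard one: any monochromatic edge either has both endpoints in $V_{\pi^\star}$ or both endpoints in $S$; hence the fraction of non-monochromatic edges is at least the fraction of edges crossing the cut $(V_{\pi^\star}, S)$, which by expander mixing is at least $\mu(V_{\pi^\star})\mu(S)(1-\lambda) \cdot 2$... I'll write it as: $\viol(X) + \viol(X') \geq \Pr[\text{edge crosses } (V_{\pi^\star}, S)] \geq 2(1-\lambda)\alpha(1-\alpha) \geq (1-\lambda)(1-\alpha)$ using $\alpha \geq 1/2$ (which holds if $\viol(X)+\viol(X') < 1/2$; the boundary cases are trivial). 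Rearranging gives $1 - \alpha = \Pr_v[\sigma(v) \neq \pi^\star] \leq \frac{\viol(X)+\viol(X')}{1-\lambda}$, and taking $\pi = \pi^\star$ (with the right inversion convention so that $\sigma(v) = \pi$ means $X(v) = X'(v)\pi$) finishes the proof.

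The main obstacle — really the only delicate point — is getting the constant and the cut-expansion inequality exactly right so that the denominator comes out as $1-\lambda$ rather than, say, $(1-\lambda)/2$ or $1-2\lambda$; I expect one has to be slightly careful about whether the "second eigenvalue $\lambda$" refers to the bipartite-double-cover singular value or to the spectral gap of the graph itself, and to handle the trivial regime $\viol(X)+\viol(X') \geq 1-\lambda$ (or $\geq 1/2$) separately where the bound holds vacuously since the right-hand side exceeds $1$. Everything else (the algebraic identity $\sigma(u)=\sigma(v)$ on jointly-satisfied edges, the union bound, the partition into level sets) is routine.
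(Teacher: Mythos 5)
Your overall strategy — partition the vertices into the level sets $C_\pi = \{v : X(v) = X'(v)\pi\}$, observe that every edge simultaneously satisfied by $X$ and $X'$ is monochromatic (both endpoints in the same part), and then invoke expansion to argue the partition is nearly trivial — is exactly the paper's strategy. However, there is a genuine gap in how you close the argument.

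You apply Cheeger/expander-mixing to the \emph{single} cut $(V_{\pi^\star}, S)$ where $V_{\pi^\star}$ is the most popular part, getting
\[
\viol(X)+\viol(X') \;\geq\; \Pr[\text{edge crosses } (V_{\pi^\star},S)] \;\geq\; 2(1-\lambda)\alpha(1-\alpha),
\]
and then you want to simplify $2\alpha(1-\alpha)\geq 1-\alpha$, i.e.\ $\alpha\geq 1/2$. You assert this holds when $\viol(X)+\viol(X') < 1/2$, but you never prove it, and the single-cut bound alone cannot deliver it: if the partition were roughly balanced into $m$ equal parts (so $\alpha \approx 1/m$), the single cut $(V_{\pi^\star},S)$ is crossed by only $O(\alpha)$ fraction of edges, so $\viol(X)+\viol(X')$ could be tiny while $\alpha$ is also tiny — the single cut simply doesn't see the many edges crossing between pairs of small parts. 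So the step ``$\alpha \geq 1/2$'' is precisely the content you need to prove, and you cannot get it from looking at one cut.

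The paper resolves this by applying the (easy direction of) Cheeger bound to \emph{every} level set $C_\pi$,
\[
\tfrac12\,\mu\!\left(E(C_\pi,\overline{C_\pi})\right) \geq (1-\lambda)\,\mu(C_\pi)\big(1-\mu(C_\pi)\big),
\]
and summing over all $\pi\in\S_m$. On the left, every cut edge is counted exactly twice (once for each of the two distinct parts it touches), and every such edge is violated by $X$ or by $X'$, giving an upper bound of $\viol(X)+\viol(X')$. On the right, $\sum_\pi \mu(C_\pi)(1-\mu(C_\pi)) = 1-\sum_\pi\mu(C_\pi)^2$. Rearranging yields $\sum_\pi \mu(C_\pi)^2 \geq 1-\frac{\viol(X)+\viol(X')}{1-\lambda}$, and since $\sum_\pi\mu(C_\pi)=1$ implies $\max_\pi \mu(C_\pi) \geq \sum_\pi\mu(C_\pi)^2$, the desired bound follows immediately. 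This summing step is the ingredient you are missing; it simultaneously handles the ``which $\alpha$-regime am I in'' case analysis and lands exactly on the $1-\lambda$ denominator, so I would replace the single-cut finish with it.
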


\begin{proof}
Fix $X$ and $X'$ as in the statement of the claim, and partition $V(G) = \cup_{\pi\in \S_m} C_\pi$ where for each $\pi \in \S_m$ we define
$C_\pi = \{u \in G: X(v) = X'(v)\pi\}$. Note that if an edge $(u,v) \in G$ is satisfied by both $X$ and $X'$, then its endpoints lie in the same part $C_{\pi}$. 
Indeed, suppose that $X(v) = X'(v)\pi$, then
\[
X(u) 
= \pi_{u,v} X(v)
= \pi_{u,v} X'(v) \pi
= X'(u) \pi.
\]
Thus, if $(u,v)$ goes across distinct parts of the partition $\{C_{\pi}\}_{\pi\in S_m}$, then it
must be violated either by $X$ 
or by $X'$.

For a set of vertices $S$ let $\mu(S)$ denote the measure of $S$ in $V(G)$ and for a set of edges $T$ let $\mu(T)$ denote the measure of $T$ in $E(G)$. Let $E(S,\overline{S})$ denote the set of edges that cross between $S$ and its complement. By (the easy direction of) Cheeger's inequality, for each $\pi\in \S_m$ we have that
\[\frac{1}{2}\mu(E(C_\pi,\overline{C_\pi})) \geq (1-\lambda) \mu(C_\pi)(1 - \mu(C_\pi)).\]
Summing this over $\pi \in \S_m$ we get,
\[
\frac{1}{2}\sum_{\pi\in \S_m}\mu(E(C_\pi,\overline{C_\pi})) \geq (1-\lambda) (1 - \sum_\pi \mu(C_\pi)^2).
\]
Note that each edge that crosses 
between distinct parts of $\{C_{\pi}\}_{\pi\in\S_m}$ is 
counted twice on the left side, 
and by our earlier observation 
it must be violated either by $X$ 
or by $X'$, and so 
$\frac{1}{2}\mu(E(C_\pi,\overline{C_\pi}))\leq \viol(X) + \viol(X')$. Plugging this and simplifying gives that
\[\sum_\pi \mu(C_\pi)^2 \geq 1 - \frac{\viol(X)+\viol(X')}{1-\lambda},\]
and as $\sum\limits_{\pi\in \S_m} \mu(C_{\pi})) = 1$, it follows that there exists $\pi\in \S_m$ 
such that $\mu(C_{\pi})\geq 1-\frac{\viol(X)+\viol(X')}{1-\lambda}$.
\end{proof}

\subsection{Exponential Bound on the Coboundary Constant via Lopsided Induction}
Our bounds on the coboundary constants $C_{r_1,r_2,r_3}$ will always
follow an inductive strategy. 
All of our inductive arguments 
will be of similar spirit, 
though they differ in some 
technical aspects that tailor
them for different uses.
This section is devoted for
the most simplistic of these 
inductive approaches, for which we have two utilities. 
First, it will be useful for us to handle  $r_1,r_2,r_3$ that are relatively small (e.g. $r^{0.99}$). Secondly, it will 
be useful for us when we extend
the base case into the extended base case as described the introduction.
In this case, we have the following result:
\begin{lemma}\label{lem:easy-recursion} 
There exists an absolute constant $K>0$ such that the following holds. Suppose that $\mu$ is an $\eps$-product measure with $\eps\leq r^{-K}$ which is a $(C_{1,1,1}(\mu),\beta_{1,1,1}(\mu))$-coboundary expander over $\S_m$. Then for all $k_1,k_2,k_3 \in \N$ such that $k_1+k_2+k_3 \leq r$, $\mu$ is a $(C_{k_1,k_2,k_3}(\mu),\beta_{k_1,k_2,k_3}(\mu))$-coboundary expander over $\S_m$ with 
\[
C_{k_1,k_2,k_3}(\mu) \leq O(C_{1,1,1}(\mu))^{k_1+k_2+k_3},~~~ \beta_{k_1,k_2,k_3}(\mu) \leq O(C_{1,1,1}(\mu))^{k_1+k_2+k_3}\beta_{1,1,1}(\mu).
\]
\end{lemma}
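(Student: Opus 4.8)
The statement is an inductive ``lopsided'' boosting of the coboundary constant from triples of singletons $(1,1,1)$ to arbitrary triples $(k_1,k_2,k_3)$. The plan is to induct on $k_1+k_2+k_3$, peeling off one coordinate at a time from whichever of the three parts is largest (this is the ``lopsided'' feature; peeling from the largest part keeps the recursion depth linear in $k_1+k_2+k_3$). So suppose without loss of generality that $k_1 \geq 1$ and $k_1 = \max(k_1,k_2,k_3)$, and pick an element $i \in R_1$ (the first part of size $k_1$). The key identity I would establish is a one-step recursion of the same shape as~\eqref{eq:ind_intro_break} from the introduction: for any valid restriction, writing $R_1' = R_1 \setminus \{i\}$,
\[
C_{k_1,k_2,k_3}(\mu) \leq O\big(C_{1,1,1}(\mu)\big) \cdot C_{k_1-1,k_2,k_3}(\mu),
\]
and similarly for the additive error, $\beta_{k_1,k_2,k_3}(\mu) \leq O(C_{1,1,1}(\mu)) \cdot \beta_{k_1-1,k_2,k_3}(\mu) + O(C_{1,1,1}(\mu)) \cdot \beta_{1,1,1}(\mu) \cdot (\text{something bounded})$; iterating these two recursions $k_1+k_2+k_3$ times gives the claimed bounds. (Strictly, one needs the recursion to hold for all restrictions $\mu|X_S = a$ simultaneously, but since $\mu|X_S=a$ is again $\eps$-product with the same $\eps$ and has coboundary constants bounded by the corresponding $C_{\bullet}(\mu)$, this is automatic from Definition~\ref{def:cobundary_expand_measure}.)

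\textbf{The one-step recursion.} To prove the recursion, fix a UG instance $\Psi$ on the tripartite graph $T(R_1,R_2,R_3;\mu)$ with $\incons(\Psi) \leq \delta$; the goal is an assignment violating at most $O(C_{1,1,1}(\mu)) C_{k_1-1,k_2,k_3}(\mu)\,\delta + (\text{additive error})$ constraints. Sample a restriction $a \sim \mu^{\{i\}}$ of the single coordinate $i$. Conditioning on $X_i = a$ yields the graph $T_a = T(R_1', R_2, R_3; \mu|X_i=a)$ on a $\delta_a$-fraction of inconsistent triangles, where $\E_a[\delta_a] = O(\delta)$ by a standard averaging argument (each triangle of $T$ that uses a vertex touching $\{i\}$ appears in exactly one $T_a$; triangles not touching $\{i\}$ are irrelevant here because in $T(R_1,R_2,R_3;\mu)$ every triangle has one vertex in each $\supp(\mu^{R_j})$ and $i \in R_1$, so every triangle's $R_1$-vertex carries a value of coordinate $i$). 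By the inductive hypothesis, for each $a$ there is an assignment $X_a$ to $T_a$ with $\viol(X_a; T_a) \leq C_{k_1-1,k_2,k_3}(\mu)\,\delta_a + \beta_{k_1-1,k_2,k_3}(\mu)$. Each $X_a$ labels all vertices of $T$ whose $R_1$-coordinate has $X_i = a$ on coordinate $i$ — that is, all of $\supp(\mu^{R_2})$, all of $\supp(\mu^{R_3})$, and the ``slice'' of $\supp(\mu^{R_1})$ consistent with $a$. Now build a UG instance $\Psi_{\mathsf{restrict}}$ on the graph $T(\{i\}, R_2\text{-proxy}, R_3\text{-proxy})$ — more precisely on the singleton part $X_i$ together with the $R_2$- and $R_3$-parts — whose constraint between slices $a, a'$ is the shift $\pi_{a,a'} \in \S_m$ produced by Claim~\ref{claim:shift-partition} comparing $X_a$ and $X_{a'}$ on the overlap $\supp(\mu^{R_2}) \cup \supp(\mu^{R_3})$ (which, by Claim~\ref{lem:eps-product-eigval} together with Lemma~\ref{claim:gll-sing-val}, is an expander with second eigenvalue $\leq 1/2 + \poly(r)\eps$, so the matching is nearly perfect). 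Triangle-consistency of $\Psi_{\mathsf{restrict}}$ is controlled by $\E_{a,a',a''}[\delta_a + \delta_{a'} + \delta_{a''} + \text{mismatch errors}] = O(C_{k_1-1,k_2,k_3}(\mu)\delta + \beta_{k_1-1,k_2,k_3}(\mu))$; since the underlying graph of $\Psi_{\mathsf{restrict}}$ is a tripartite graph of ``type $(1,\star,\star)$'' which is covered by the $C_{1,1,1}(\mu)$-coboundary hypothesis (here I'd fold the $R_2, R_3$ parts down to singletons by a further, trivial application of the base hypothesis, or just invoke that $C_{1,k_2,k_3} \leq O(C_{1,1,1})^{k_2+k_3}$ which in the lopsided regime is absorbed — cleaner: choose to always peel from the part of size $\geq 2$ so the remaining instance stays of type $(1,1,1)$ when all three reach $1$), we get an assignment $A$ to $\Psi_{\mathsf{restrict}}$ violating $O(C_{1,1,1}(\mu)) \cdot \big(C_{k_1-1,k_2,k_3}(\mu)\delta + \beta_{k_1-1,k_2,k_3}(\mu)\big) + \beta_{1,1,1}(\mu)$ of its constraints. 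Finally, lift $A$ back: define $B[U]$ for $U \in \supp(\mu^{R_1})$ by reading off $X_{a}(U)$ shifted by $A[a]$ where $a = U_i$ is $U$'s value on coordinate $i$ (no randomness needed since $i \in R_1$, so $U_i$ is determined), and $B[V] = X_{a}(V) \cdot A[a]$ for $V$ in the $R_2, R_3$ parts for a random consistent $a$ — a standard spectral/sampling argument (Lemma~\ref{lem:sampling}) shows this is well-defined up to negligible error and that $B$ violates at most the claimed fraction of $\Psi$'s constraints.

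\textbf{Main obstacle.} The technically delicate point — and the one I'd spend the most care on — is the bookkeeping of \emph{which} graphs the $C_{1,1,1}(\mu)$ and $C_{k_1-1,k_2,k_3}(\mu)$ hypotheses get applied to, and ensuring the additive errors $\beta$ don't blow up super-linearly. Specifically: (i) the graph $T_a$ for a random $a$ must genuinely be one of the tripartite conditional graphs covered by Definition~\ref{def:cobundary_expand_measure} applied to $\mu|X_i=a$ — this is fine since restricting one more coordinate is still within the ``$S$ of size $\leq 3r - (k_1+k_2+k_3)$'' budget; (ii) the instance $\Psi_{\mathsf{restrict}}$ lives on $T(\{i\}, \text{(an } R_2\text{-part)}, \text{(an }R_3\text{-part)})$, not on $T(\{i\},\{j\},\{\ell\})$, so I either need the base hypothesis in the slightly more general form or I commit to always peeling from a part of size $\geq 2$ until forced otherwise, which keeps the ``small'' instance at exactly $(1,1,1)$ — I'd go with the latter for cleanliness. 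The other genuinely load-bearing quantitative input is that the overlap graph on which Claim~\ref{claim:shift-partition} is invoked has spectral gap bounded away from $0$ uniformly, which is exactly Claim~\ref{lem:eps-product-eigval} plus the $\eps \leq r^{-K}$ hypothesis (this is where $K$ is chosen: large enough that $\poly(r)\eps < 1/4$, say, so $1-\lambda \geq 1/4$). Everything else is routine averaging and spectral sampling.
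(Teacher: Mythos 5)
Your plan tries to peel \emph{one} coordinate at a time, always from a single part, and then claims to control the cost of reconciling the local solutions $\{X_a\}$ by paying a $C_{1,1,1}$ factor on a ``restriction graph''. This is where the argument breaks. In the paper's proof of Lemma~\ref{lem:easy-recursion}, the restriction graph $H = T(\{j_1\},\{j_2\},\{j_3\};\cD)$ is built by picking one index $j_1\in R_1$, $j_2\in R_2$, $j_3\in R_3$ --- one from \emph{each} part --- so the restriction vertices live in three genuinely distinct parts, edges are pairs $(a_1,a_2)\in\supp(\cD^{\{j_1,j_2\}})$, and the constraints $\pi_{a_1,a_2}^{-1}$ come from comparing the two optimal solutions $X_{a_1},X_{a_2}$ on the overlap $G_{a_1,a_2}=T(R_1,R_2,R_3;\cD\,|\,X_{j_1}=a_1,X_{j_2}=a_2)$, which exists precisely because $j_1\ne j_2$ so the two conditionings are compatible. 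This is exactly the $(1,1,1)$-shaped instance the hypothesis $C_{1,1,1}(\mu)$ covers. In your setup all the restriction labels $a,a'$ are values of the \emph{same} coordinate $i\in R_1$: they all live in one part, so there is no tripartite graph whose edges are pairs $(a,a')$, there are no ``triangles $(a,a',a'')$'' over which to define $\incons$, and there is no sample-wise consistent overlap $G_{a,a'}$ on which Claim~\ref{claim:shift-partition} could be applied (a single $X\sim\mu$ cannot have $X_i=a$ and $X_i=a'$ simultaneously). Your proposed fixes do not address this mismatch: ``always peel from the part of size $\geq 2$'' is vacuously satisfied since $k_1=\max$ is $\geq 2$ until the end, but it still leaves you peeling from a single part; ``fold $R_2,R_3$ down to singletons'' cannot be done without committing to specific indices in $R_2,R_3$, which is exactly what the paper does and what your scheme avoids; and invoking $C_{1,k_2,k_3}\leq O(C_{1,1,1})^{k_2+k_3}$ is circular, as that is (a special case of) the statement being proved.

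The repair is precisely the paper's approach: pick $j_1\in R_1$, $j_2\in R_2$, $j_3\in R_3$, define $G_{a_i}$ by removing $j_i$ from $R_i$ and conditioning on $X_{j_i}=a_i$, solve each $G_{a_i}$ using the inductive hypothesis at level $(r_1-1,r_2,r_3)$, $(r_1,r_2-1,r_3)$ or $(r_1,r_2,r_3-1)$, and obtain shifts $\pi_{a_i,a_k}$ between solutions at different $j_i$'s via Claim~\ref{claim:shift-partition} on the well-defined overlap graphs $G_{a_i,a_k}$. This yields the one-step recursion $C_{r_1,r_2,r_3}\lll C_{1,1,1}\cdot\max(C_{r_1-1,r_2,r_3},C_{r_1,r_2-1,r_3},C_{r_1,r_2,r_3-1})$, which decreases $\sum r_i$ by one each step and gives the claimed $O(C_{1,1,1})^{\sum r_i}$ bound. (One also has to treat the base case where some $r_i$ hits $0$, which the paper handles separately with a propagation argument showing $C\leq 1$, $\beta=0$; your plan does not mention this.) Your averaging step relating $\E_a[\delta_a]$ to $\delta$ and the spectral ingredients (Claim~\ref{lem:eps-product-eigval}, the choice of $K$ so $\poly(r)\eps$ stays below $1/4$) are fine, but the heart of the lemma --- constructing and bounding the consistency of the restriction instance --- is missing.
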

\dor{We remark that the statement of Lemma~\ref{lem:easy-recursion} is a variant of~\cite[Theorem 1.3]{dikstein2023swap}. 
To the best of our knowledge the proof we present here is 
different, and the ideas below will be important in our 
subsequent arguments. We now present the proof of Lemma~\ref{lem:easy-recursion}.} 
\begin{proof}
Fix any three pairwise disjoint sets $R_1,R_2,R_3$ with sizes $0 < r_1,r_2,r_3 \in \N$ such that $\sum r_i \leq r$, indices $j_1 \in R_1,j_2 \in R_2$ and $j_3 \in R_3$, a set $S \subset [d] \setminus \bigcup R_i$, and a restriction $A_0 \in \supp(\mu^S)$. Let $\cD$ denote $\mu~|~X_S = A_0$. For any restriction $a_1 \in {\sf supp}(\cD^{j_1})$, let $G_{a_1}$ denote the induced subgraph $T(R_1\setminus j_1, R_2,R_3;\cD|X_{j_1} = a_1)$ and similarly define the graphs $G_{a_2}, G_{a_3}$ for every $a_2 \in \supp(\cD^{j_2}), a_3 \in \supp(\cD^{j_3})$. Then we will show that,
\begin{align}
&C(T(R_1,R_2,R_3;\cD)) \leq C(T(\{j_1\},\{j_2\},\{j_3\};\cD))\cdot \max_{\substack{i \in [3] \\a_i \in \supp(\cD^{j_i})}}(C(G_{a_i})). \label{eq:easy-recursion-main-C}\\
&\beta(T(R_1,R_2,R_3;\cD)) \leq C(T(\{j_1\},\{j_2\},\{j_3\};\cD))\cdot\max_{\substack{i \in [3] \\a_i \in \supp(\cD^{j_i})}}(\beta(G_{a_i}))+\beta(T(\{j_1\},\{j_2\},\{j_3\};\cD)). \label{eq:easy-recursion-main-beta} 
\end{align}

Additionally, when either of the $r_i$'s are $0$ we show that $C(T(R_1,R_2,R_3;\cD)) \leq 1$, as well as that $\beta(T(R_1,R_2,R_3;\cD))=0$. This along with the equations above is enough to conclude the lemma as we show in the end of the proof.

To prove \eqref{eq:easy-recursion-main-C} and \eqref{eq:easy-recursion-main-beta} fix any affine UG instance $\Phi$ on $T(R_1,R_2,R_3;\cD)$ which has $\delta$-fraction of inconsistent triangles. 

\paragraph{Setting up lists on restrictions:}
For every $a_i \in \supp(\cD^{j_i})$ let $X_{a_i} \in \S_m^{V(G_{a_i})}$ be an assignment on $V(G_{a_i})$ with maximum value, and let $L_{j_i\rightarrow a_i}$ be an ordered list of solutions indexed by permutations in $\S_m$, where $L_{S_i\rightarrow a_i}[\pi] = X_{a_i}\circ \pi$. 

\paragraph{Setting up permutations between restrictions:}
For every two restrictions of different parts, say $a$ of $j_1$ and $b$ of $j_2$ where $(a,b)$ is a valid restriction (that is, where $(a,b)\in {\sf supp}(\cD^{\{j_1,j_2\}})$), let $G_{a,b}$ be the induced subgraph $T(R_1, R_2, R_3;\cD | (X_{j_1} = a, X_{j_2} = b))$. Then by Claim~\ref{claim:shift-partition} there exists a permutation $\pi_{a,b}$ that satisfies the following:
\begin{equation}\label{eq:bad-ij_basic}
\Pr_{v \sim G_{a,b}}[X_a(v) \neq X_b(v)\pi_{a,b}] \leq \frac{\viol(X_a;G_{a,b})+\viol(X_b;G_{a,b})}{1-\lambda(G_{a,b})} 
\lll \viol(X_a;G_{a,b})+\viol(X_b;G_{a,b}),  
\end{equation}
where we used Claim~\ref{lem:eps-product-eigval} to bound $\lambda(G_{a,b})\leq 1/2+\poly(r)\eps\leq 0.51$. Let 
\[
\bad(a,b) = \{v\in G_{a,b}~|~X_a(v)\neq X_b(v)\pi_{a,b}\}.
\]
For any restriction $b$ as above, let $L_b\circ \pi$ denote the list where $L_b\circ \pi[\pi'] = L_b[\pi\pi'] = X_b\circ\pi\pi'$. 
With these notations and the definition of $\bad(a,b)$, we get that $L_a|_{G_{a,b} \cap \overline{\bad(a,b)}} = L_b\circ\pi_{a,b}|_{G_{a,b} \cap \overline{\bad(a,b)}}$. 

\paragraph{Counting bad triangles:}
For any restriction $(a_1,a_2,a_3) \in \supp(\cD^{\{j_1,j_2,j_3\}})$ let $G_{a_1,a_2,a_3}$ denote the graph $T(R_1\setminus j_1, R_2\setminus j_2, R_3 \setminus j_3;\cD|(a_1,a_2,a_3))$. Clearly, we have that,
\[\E_{\substack{a_3 \sim \cD^{j_3}|(a_1,a_2)}}[\mu_{G_{a_1,a_2,a_3}}(\bad(a_1,a_2))] = \mu_{G_{a_1,a_2}}(\bad(a_1,a_2)).\]
Fix $a_1,a_2$. Using Markov's inequality and~\eqref{eq:bad-ij_basic} 
it follows that
\begin{equation}\label{eq:avg-viol_basic}
\Pr_{\substack{a_3 \sim \cD^{j_3}|(a_1,a_2)}}\left[\mu_{G_{a_1,a_2,a_3}}(\bad(a_1,a_2)) > \frac{1}{100}\right]
\lll 
\viol(X_{a_1};G_{a_1,a_2})+\viol(X_{a_2};G_{a_1,a_2}).
\end{equation}
We say a restriction $(a_1,a_2,a_3)$ is bad if $\mu_{G_{a_1,a_2,a_3}}(\bad(a_i,a_k)) > 1/100$ for some $i \neq k$. By \eqref{eq:avg-viol_basic} we get
\begin{align}\label{eq1_basic}
&\E_{(a_1,a_2,a_3) \sim \cD^{\cup j_i}}[\Ind((a_1,a_2,a_3) \text{ is bad})]\notag\\
&\leq \sum_{i \in [3]} \E_{\substack{(a_\ell,a_k) \sim \cD^{j_\ell \cup j_k}\\ \ell,k\neq i}}\E_{a_i \sim \cD^{j_i}|(a_\ell,a_k)}[\Ind(\mu_{G_{a_1,a_2,a_3}}(\bad(a_\ell,a_k)) > \frac{1}{100})]\notag\\
&\lll \sum_{i \in [3]} \E_{\substack{(a_\ell,a_k) \sim \cD^{j_\ell \cup j_k}\\ \ell,k\neq i}}[\viol(X_{a_\ell};G_{a_\ell,a_k})+\viol(X_{a_k};G_{a_\ell,a_k})]\notag\\
&\lll\sum_{i \in [3]} \E_{\substack{a_i \sim \cD^{j_i}}}[\viol(X_{a_i};G_{a_i})],
\end{align}
where in the first inequality we used the union bound, the second inequality we used \eqref{eq:avg-viol_basic}. 
By definition we have that $\viol(X_{a_1};G_{a_1}) \leq C(G_{a_1})\eps_{a_1}+\beta(G_{a_1})$, where $\eps_{a_1}$ is the fraction of violated triangles in $G_{a_1}$. 
Therefore,
\begin{align*}
\eqref{eq1_basic}
\lll
\sum_{i \in [3]} \E_{\substack{a_i \sim \cD^{j_i}}}[C(G_{a_i})\eps_{a_i}+\beta(G_{a_i})] 
\lll  \max_{\substack{i \in [3] \\a_i \in \supp(\cD^{j_i})}}(C(G_{a_i}))\delta+\max_{\substack{i \in [3] \\a_i \in \supp(\cD^{j_i})}}(\beta(G_{a_i})) : =\delta'.    
\end{align*}


\paragraph{Creating a UG instance on graph over restrictions:}
Let $\Psi$ be the following UG instance on $H = T(\{j_1\},\{j_2\},\{j_3\};\cD)$: each edge $(a_i,a_k)$ has the constraint $\pi_{a_i,a_k}^{-1}$. That is, we want to find a solution $A$ that maximizes the fraction of edges satisfying $A(a_i) = \pi_{a_i,a_k}^{-1}A(a_k)$. Note that a triangle $(a_1,a_2,a_3)$ is consistent if $\pi_{a_3,a_1}^{-1}\pi_{a_1,a_2}^{-1}\pi_{a_2,a_3}^{-1} = \text{id}$ or equivalently if $\pi_{a_2,a_3}\pi_{a_1,a_2}\pi_{a_3,a_1} = \text{id}$.

First note that if a triangle $(a_1,a_2,a_3)$ is not bad, then it is consistent. To see this, note that in this case,  by the union bound we have $\mu_{G_{a_1,a_2,a_3}}(\bad(a_1,a_2) \cup \bad(a_2,a_3) \cup \bad(a_1,a_3)) \leq 0.03$, which gives that $\mu_{G_{a_1,a_2,a_3}}(\good(a_1,a_2,a_3)) \geq 0.97 > 0$, where the latter set is defined as those vertices in $G_{a_1,a_2,a_3}$ that don't belong to any of $\bad(a_i,a_k)$. We have that: 
\begin{align*}
L_{a_1}|_{\good(a_1,a_2,a_3)} = L_{a_2}\circ \pi_{a_1,a_2}|_{\good(a_1,a_2,a_3)}\\
L_{a_2}|_{\good(a_1,a_2,a_3)} = L_{a_3}\circ \pi_{a_2,a_3}|_{\good(a_1,a_2,a_3)}\\
L_{a_3}|_{\good(a_1,a_2,a_3)} = L_{a_1}\circ \pi_{a_3,a_1}|_{\good(a_1,a_2,a_3)},
\end{align*}
which implies that: 
\[L_{a_3}|_{\good(a_1,a_2,a_3)} = L_{a_3}\circ \pi_{a_2,a_3}\circ \pi_{a_1,a_2}\circ \pi_{a_3,a_1}|_{\good(a_1,a_2,a_3)} = L_{a_3}\circ\pi_{a_2,a_3}\pi_{a_1,a_2}\pi_{a_3,a_1}|_{\good(a_1,a_2,a_3)}.\] 
Taking any $v \in {\good(a_1,a_2,a_3)}$ we get that, $L_{a_3}[\text{id}][v] = L_{a_3}\circ \pi_{a_2,a_3}\pi_{a_1,a_2}\pi_{a_3,a_1}[\text{id}][v]$, where the former is $X_{a_3}(v)$ and the latter is $X_{a_3}(v)\pi_{a_2,a_3}\pi_{a_1,a_2}\pi_{a_3,a_1}$, which implies $\pi_{a_2,a_3}\pi_{a_1,a_2}\pi_{a_3,a_1} = \text{id}$. 

Using the coboundary expansion of the graph $T(\{j_1\},\{j_2\},\{j_3\};\cD)$ we get that there exists a UG solution to the vertices violating at most $(C(H)\delta'+\beta(H))$-fraction of the edges. Call this solution $A$.

\paragraph{Lifting the solution:} We will now use $A$ to create a highly satisfying solution $B$ to $G = T(R_1,R_2,R_3;\cD)$. To each vertex $u \in R_i$, define  $B(u) = L_{u|_{j_i}}[A(u|_{j_i})][u]$. We will now upper bound the fraction of edges that $B$ violates for $\Phi$. 

Consider an edge $(u_i,u_k) \in G$ between parts $R_i,R_k$, and let $(a_i,a_k)$ denote the edge $(u_i|_{j_i},u_k|_{j_k})$ in the restriction graph $H$.
Let $\good(G)$ be the set of edges where the following three events hold: (1) $A$ satisfies the edge $(a_i,a_k)$, i.e. $A(a_i) = \pi_{a_i,a_k}^{-1} A(a_k)$, (2) $X_{a_i}$ satisfies the edge $(u_i,u_k)$, i.e. $X_{a_i}(u_i) = \pi_{u_i,u_k}X_{a_i}(u_k)$  and (3) $u_k$ does not belong to $\bad(a_i,a_k)$, i.e.  $X_{a_i}(u_k) = X_{a_k}(u_k)\pi_{a_i,a_k}$. It is easy to see that $B$ satisfies every good edge, that is, $B(u_i) = \pi_{u_i,u_k} B(u_k)$. Indeed, using the events (1), (2) and (3), we have that,
\begin{align*}
B(u_i) &= L_{a_i}[A(a_i)][u_i] = X_{a_i}(u_i)A(a_i) \\
&= X_{a_i}(u_i)\pi_{a_i,a_k}^{-1} A(a_k) &\text{by (1)}\\
&= \pi_{u_i,u_k}X_{a_i}(u_k)\pi_{a_i,a_k}^{-1} A(a_k) &\text{by (2)}\\
&= \pi_{u_i,u_k}X_{a_k}(u_k)\pi_{a_i,a_k}\pi_{a_i,a_k}^{-1} A(a_k) &\text{by (3)}\\
&=  \pi_{u_i,u_k}B(u_k).
\end{align*}

So it suffices to upper bound the fraction of edges that are not in $\good(G)$, which we denote by $\bad(G)$. Sampling an edge, we can upper bound the probability it doesn't satisfy at least one of the events. For (1), 
\[\Pr_{(u_i,u_k) \in G}[A(a_i) \neq \pi_{a_i,a_k}^{-1} A(a_k)] = \viol(A) \lll C(H)\delta'+\beta(H).\]
For (2), 
\begin{align*}
\Pr_{(u_i,u_k) \in G}[X_{a_i}(u_i) \neq \pi_{u_i,u_k}X_{a_i}(u_k)]
&= \E_{\substack{i \sim [3]\\a_i \sim \cD^{j_i}}}\E_{\substack{(u_1,u_2,u_3)\sim \cD|(X_{j_i} = a_i)\\k \sim [3]\setminus i}}\left[\Ind(X_{a_i}(u_i) \neq \pi_{u_i,u_k}X_{a_i}(u_k))\right]\\
&\lll \E_{\substack{i \sim [3]\\a_i \sim \cD^{j_i}}}\left[\viol(X_{a_i};G_{a_i})\right] \lll \delta'.
\end{align*}
For (3), using~\eqref{eq:bad-ij_basic} we have
\begin{align*}
\Pr_{(u_i,u_k) \in G}[u_k \not\in \bad(a_i,a_k)]
&\lll\E_{\substack{(a_i,a_k)\sim \cD^{j_i \cup j_k}}}\E_{\substack{(u_i,u_k,u_\ell)\sim \cD|(a_i,a_k)\\r \sim \{i,k\}}}\left[\Ind(u_r \not\in \bad(a_i,a_k))\right]\\  &\lll\E_{\substack{(a_i,a_k)\sim \cD^{j_i \cup j_k}}}\left[\viol(X_{a_i};G_{a_i,a_k})+\viol(X_{a_k};G_{a_i,a_k})\right] \\
&\lll \E_{\substack{a_i\sim \cD^{j_i}}}\left[\viol(X_{a_i};G_{a_i})\right]\\
&\lll \delta'.
\end{align*}
Adding up these probabilities we get,
\begin{align*}
&\Pr_{(u_i,u_k) \in G}[(u_i,u_k) \in \bad(G)] 
\lll \delta' + C(H)\delta'+\beta(H) \\
&\qquad\qquad\qquad\qquad\qquad\lll \left(C(H)\max_{\substack{i \in [3] \\a_i \in \supp(\cD^{j_i})}}(C(G_{a_i}))\right)\delta+\left(C(H)\max_{\substack{i \in [3] \\a_i \in \supp(\cD^{j_i})}}(\beta(G_{a_i}))+\beta(H)\right),    
\end{align*}
which completes the proof of \eqref{eq:easy-recursion-main-C} and~\eqref{eq:easy-recursion-main-beta}.

\paragraph{Proving base cases when $r_i$'s are 0:} Without loss of generality assume that $|R_1| = 0$.
Recall that the graph $G = T(R_1,R_2,R_3;\cD)$ has one vertex $\emptyset$ in its first part that is connected to all the vertices $\supp(\mu^{R_2})$ and $\supp(\mu^{R_3})$ 
in the second and third parts named $P_2,P_3$. We can show that $C(G) \leq 1$ via the following algorithm to get a satisfying solution to $\Phi$. 
We assign the identity permutation to $\emptyset$, and then propagate this solution to all the vertices in parts $P_2$ and $P_3$. Now note that by definition all the edges between $\emptyset$ and $P_2$ or $P_3$ are satisfied. 
For an edge $(u,v)$ between $P_2,P_3$, it is violated only if the triangle $(\emptyset,u,v)$ is violated which happens only with probability $\delta$. Therefore we get a solution violating at most $\delta/3$-fraction of the edges of $G$, which shows that $C(G) \leq 1$. 
The same argument holds when more of the $R_i$'s have size $0$.

\paragraph{Concluding the induction:} Firstly note that \eqref{eq:easy-recursion-main-C},~\eqref{eq:easy-recursion-main-beta} imply that:
\begin{equation}\label{eq:recursion}
C_{r_1,r_2,r_3}(\mu) \lll C_{1,1,1}(\mu)C',~~~\beta_{r_1,r_2,r_3}(\mu) \lll C_{1,1,1}(\mu)\beta'+\beta_{1,1,1}(\mu)    
\end{equation}
where we take $C' = \max(C_{r_1-1,r_2,r_3}(\mu),C_{r_1,r_2-1,r_3}(\mu), C_{r_1,r_2,r_3-1}(\mu))$ and analogously pick $\beta'$ to be $\beta' = \max(\beta_{r_1-1,r_2,r_3}(\mu),\beta_{r_1,r_2-1,r_3}(\mu), \beta_{r_1,r_2,r_3-1}(\mu))$.

To prove the final statement we can then use induction on $\sum_{i=1}^3 r_i$. The base case of induction is when either $\sum r_i = 3$ and each $r_i=1$, or one of the $r_i$'s is $0$. In the first case $C_{r_1,r_2,r_3}(\mu) \leq C_{1,1,1}(\mu) \leq O(C_{1,1,1}(\mu))^3$, $\beta_{r_1,r_2,r_3}(\mu) \leq \beta_{1,1,1}(\mu) \leq O(C_{1,1,1}(\mu))^3\beta_{1,1,1}(\mu)$ and in the second case $C_{r_1,r_2,r_3} \leq 1 \leq O(C_{1,1,1}(\mu))^{\sum r_i},\beta_{r_1,r_2,r_3} = 0 \leq C_{1,1,1}(\mu)^{\sum r_i}\beta_{1,1,1}(\mu)$ too. Therefore now let us assume the lemma statement holds for all $(r_1,r_2,r_3)$ with $\sum r_i \leq t$ and let us prove it for $(r'_1,r'_2,r'_3)$ with $\sum r'_i = t+1$ and each $r'_i > 0$. Applying \eqref{eq:recursion} on $C_{r'_1,r'_2,r'_3}(\mu)$ we get that,
\begin{align*}
C_{r'_1,r'_2,r'_3}(\mu) \leq O(C_{1,1,1}(\mu))C' 
\leq O(C_{1,1,1}(\mu)) O(C_{1,1,1}(\mu))^{r'_1+r'_2+r'_3-1}
= O(C_{1,1,1}(\mu))^{r'_1+r'_2+r'_3},
\end{align*}
where in the second inequality we used the inductive bound on $C_{r'_1-1,r'_2,r'_3}(\mu), C_{r'_1,r'_2-1,r'_3}(\mu),C_{r'_1,r'_2,r'_3-1}(\mu)$. A similar argument for $\beta_{r'_1,r'_2,r'_3}(\mu)$ completes the induction thus proving the lemma.
\end{proof}

\subsection{Subexponential Bounds via Non-Lopsided Induction}\label{sec:exp-bounds}

Throughout this section we fix $k = r^{0.01}$, three $r$-sized pairwise disjoint sets $R_1,R_2,R_3 \subset [d]$, $I=\bigcup_{i \in [3]} R_i$ and $\mu$ over $\prod_{i \in I} X_i$ that satisfy the following assumptions:
\begin{assumption}\label{assumption1}
The following conditions hold:
\begin{enumerate}
\item The measure $\mu$ is an $\eps$-product distribution with $\eps \leq 2^{-r^{12}}$.
\item For all $k \leq r$, $S \subseteq I$ of size at most $3r - 3k$ and restrictions $a_0 \in \supp(\mu^S)$, for all $k$-sized pairwise disjoint sets $A,B,C$ of $I \setminus S$, such that 
\[
\max_{a\in A} a < \min_{b \in B} b\leq \max_{b \in B} b < \min_{c \in C} c,
\]
the graph $T(A,B,C;\mu|X_S = a_0)$ is a $(\poly(r),2^{-r^{12}})$-coboundary expander over $\S_m$.
\item For each interval $I_j = \left\{\frac{jdk^{10}}{r},\ldots,\frac{(j+1)dk^{10}}{r}\right\}$ and 
for each $i \in [3]$ we have that 
\[
k^{10}-k^6
\leq 
|R_i \cap I_j| 
\leq k^{10}+k^6.
\]
In words, the number of 
elements in $R_i$ in the 
interval $I_j$ is roughly
$k^{10}$ (which is the number of points 
a typical interval of that length has).
\end{enumerate}
\end{assumption}

The main result of this section is the following 
lemma, asserting that if $\mu$ satisfies Assumption~\ref{assumption1}, then the corresponding
tripartite graph is a coboundary expander with good parameters. More precisely:
\begin{lemma}\label{lem:exp-bound-mu}
Let $R_1,R_2,R_3\subseteq$ and let $\mu$ be a probability measure over $\prod_{i \in I} X_i$ satisfying Assumption~\ref{assumption1}. Then the graph $T(R_1,R_2,R_3;\mu)$ is a $(\poly(r)^{r/k},2^{-\Omega(r^{12})})$-coboundary expander over $\S_m$.
\end{lemma}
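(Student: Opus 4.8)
The plan is to run a ``non-lopsided'' induction on the three sizes $(r_1,r_2,r_3)$ of the parts, reducing each step not by $1$ (as in Lemma~\ref{lem:easy-recursion}) but by a full block of $k$ coordinates at once, so that the recursion loses only a $\poly(r)$ factor per step and there are only $r/k$ steps. Concretely, define $C_{t,t,t}(\mu)$ as in the introduction — the worst-case coboundary constant over all size-$t$ pieces $R_1'\subseteq R_1$, $R_2'\subseteq R_2$, $R_3'\subseteq R_3$ and all restrictions of the remaining coordinates — and similarly $\beta_{t,t,t}(\mu)$. The goal is the recursion
\[
C_{t,t,t}(\mu)\leq \poly(r)\cdot C_{t-k,t-k,t-k}(\mu),\qquad t\geq r^{0.99},
\]
together with the base case $C_{r^{0.99},\ldots}(\mu)\leq 2^{O(r^{0.99}\log r)}$ coming from Lemma~\ref{lem:easy-recursion} (applied to parts of total size $O(r^{0.99})$, where $C_{1,1,1}(\mu)\leq\poly(r)$ holds by the cones-method base case), and the base case $t< r^{0.99}$ handled directly by Lemma~\ref{lem:easy-recursion}. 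Iterating the recursion from $t=r$ down to $t=r^{0.99}$ gives $C_{r,r,r}(\mu)\leq \poly(r)^{r/k}\cdot 2^{O(r^{0.99}\log r)}=\poly(r)^{r/k}$, and the additive errors accumulate to $2^{-\Omega(r^{12})}$ since $\eps\leq 2^{-r^{12}}$ dominates and each step multiplies $\beta$ by only $\poly(r)^{r/k}=2^{o(r^{12})}$.

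The inductive step itself mirrors the proof of Lemma~\ref{lem:easy-recursion}, but instead of conditioning on one coordinate $j_i$ in each part we condition on a whole block. Given $R_1',R_2',R_3'$ of size $t\geq r^{0.99}$, use the well-spreadness (Assumption~\ref{assumption1}(3)) to find pairwise disjoint \emph{well-ordered} sets $S_1,S_2,S_3\subseteq R_1'\cup R_2'\cup R_3'$, each containing exactly $k$ elements from each of $R_1,R_2,R_3$, with all of $S_1$ below all of $S_2$ below all of $S_3$. Sample a restriction $(a_1,a_2,a_3)\sim\mu^{S_1\cup S_2\cup S_3}$, form the three conditioned graphs $T_i = T(R_1'\setminus S_i,R_2'\setminus S_i,R_3'\setminus S_i;\mu\mid X_{S_i}=a_i)$, take near-optimal solutions $X_{a_i}$ on each (violating at most $C_{t-k,t-k,t-k}(\mu)\eps_{a_i}+\beta_{t-k,\ldots}$ of constraints), and form the affine-shift lists $L_{a_i}[\pi]=X_{a_i}\circ\pi$. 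For each valid pair of restrictions $(a_i,a_j)$, Claim~\ref{claim:shift-partition} applied to the graph $G_{a_i,a_j}=T(R_1',R_2',R_3';\mu\mid X_{S_i\cup S_j}=(a_i,a_j))$ — whose second singular value is $\leq 1/2+\poly(r)\eps$ by Claim~\ref{lem:eps-product-eigval} — produces a shift $\pi_{a_i,a_j}$ matching $L_{a_i}$ with $L_{a_j}$ outside a bad set of density $\lll \viol(X_{a_i};G_{a_i,a_j})+\viol(X_{a_j};G_{a_i,a_j})$. A Markov / union-bound argument, identical in structure to the ``counting bad triangles'' step of Lemma~\ref{lem:easy-recursion}, shows the instance $\Psi_{\mathsf{restrict}}$ on $T(S_1,S_2,S_3;\mu)$ with constraints $\pi_{a_i,a_j}^{-1}$ has at most $\delta'=O\!\big(C_{t-k,t-k,t-k}(\mu)\delta+\beta_{t-k,\ldots}\big)$ inconsistent triangles. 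Now invoke Assumption~\ref{assumption1}(2): since $S_1,S_2,S_3$ are well-ordered, $T(S_1,S_2,S_3;\mu)$ is a $(\poly(r),2^{-r^{12}})$-coboundary expander, so $\Psi_{\mathsf{restrict}}$ has a solution $A$ violating at most $\poly(r)\delta'+2^{-r^{12}}$ of its constraints. Finally lift $A$ to an assignment $B$ on $T(R_1',R_2',R_3';\mu)$ by $B(u)=L_{u|_{S_i}}[A(u|_{S_i})](u)$ for $u$ in part $i$, and bound $\viol(B)$ by the three bad events (``$A$ fails the restricted edge'', ``$X_{a_i}$ fails the original edge'', ``endpoint is in the bad set''), each contributing $\lll\delta'$ or $\lll\poly(r)\delta'$ — using a sampling/expander-mixing argument (Lemma~\ref{lem:sampling}, Lemma~\ref{lem:bip-eml}) to control that a typical edge of $G$ sits inside a typical conditioned subgraph, which is where the $\eps$-productness with $\eps\leq 2^{-r^{12}}$ is spent. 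This yields $\viol_\Phi(B)\leq \poly(r)\cdot C_{t-k,t-k,t-k}(\mu)\,\delta + \poly(r)\cdot\beta_{t-k,\ldots}+2^{-r^{12}}$, i.e.\ the desired recursion.

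The main obstacle is the block-conditioning version of the ``relating good local solutions'' step: when we condition on all of $S_i$ at once (rather than a single coordinate) we need that the conditioned tripartite graphs $T_i$ and the pair-conditioned graphs $G_{a_i,a_j}$ still have second singular value bounded away from $1$, and that a random conditioned subgraph faithfully samples the parent graph's edges and violations — both of which require the $\eps$-product property to survive conditioning on up to $3k$ coordinates with only a $\poly(r)$-type degradation. This is exactly what Claim~\ref{claim:gll-sing-val}, Claim~\ref{lem:eps-product-eigval} and the preservation of $\eps$-productness under restriction give us, but one must be careful that the accumulated $\poly(r)$ losses over $r/k$ levels do not blow up the singular-value bound past $1$; since each level only costs $\poly(r)\cdot\eps\leq \poly(r)2^{-r^{12}}$, this is comfortably absorbed. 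A secondary subtlety is verifying the ``not bad $\Rightarrow$ consistent'' implication for $\Psi_{\mathsf{restrict}}$ in the block setting: one checks, exactly as in Lemma~\ref{lem:easy-recursion}, that on the common good set the lists $L_{a_1},L_{a_2},L_{a_3}$ are cyclically related through $\pi_{a_1,a_2},\pi_{a_2,a_3},\pi_{a_3,a_1}$, forcing $\pi_{a_2,a_3}\pi_{a_1,a_2}\pi_{a_3,a_1}=\mathrm{id}$ because the good set is non-empty.
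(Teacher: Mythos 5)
Your strategy matches the paper's proof essentially line by line: the same induction on blocks of size $k=r^{0.01}$, the same use of well-spreadness to extract well-ordered $S_1,S_2,S_3\subseteq R_1'\cup R_2'\cup R_3'$ with $k$ elements from each $R_j'$ in each $S_i$, the same ``lists / shifts / counting bad triangles / UG instance $\Psi_{\mathsf{restrict}}$ on $T(S_1,S_2,S_3;\mu)$'' structure, the same invocation of Assumption~\ref{assumption1}(2) to make $T(S_1,S_2,S_3;\mu)$ a $(\poly(r),2^{-r^{12}})$-coboundary expander, and the same base cases from Lemma~\ref{lem:easy-recursion}. The recursion $C_{t,t,t}\leq\poly(r)\,C_{t-k,t-k,t-k}$, iterated $r/k$ times, with the $2^{O(r^{0.99}\log r)}$ and $\lll r/k$ base cases, is precisely Equation~\eqref{eq:intermediate-exp-C} and the ``Concluding the induction'' paragraph.

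The one place where the write-up as stated would break is the lifting formula $B(u)=L_{u|_{S_i}}[A(u|_{S_i})](u)$. This is copied verbatim from the lopsided case (Lemma~\ref{lem:easy-recursion}), but there $j_i$ was a single coordinate \emph{inside} $R_i$, so $u|_{j_i}$ was a full restriction of the conditioning set. Here $S_i$ spreads across all three parts, and a vertex $u\in\supp(\mu^{R_i'})$ only carries the $k$ coordinates of $S_i\cap R_i'$, not all $3k$ coordinates of $S_i$; so $u|_{S_i}$ is not a vertex of $H=T(S_1,S_2,S_3;\cD)$ and $L_{u|_{S_i}}$, $A(u|_{S_i})$ are undefined. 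The fix, which is what the paper does, is a \emph{randomized} lift: sample $s\sim \cD^{S_1}\mid (X_{R_i'}=u)$ and set $B(u)=g_s(u)=L_s[A(s)][u]$, then bound the expected fraction of violated edges by showing that for a typical edge $(u_i,u_j)$ there exists a common completion $s'$ satisfying all three good events, using the singular-value bound on $A(S_1,S_2;\cD|u)$ (Claim~\ref{claim:gll-sing-val}) and the easy direction of Cheeger to transfer from $\cD^{S_1\cup S_2}|u$ to $\cD^{S_1}|u \times \cD^{S_1}|u$. Your later sentence about a ``sampling/expander-mixing argument (Lemma~\ref{lem:sampling}, Lemma~\ref{lem:bip-eml})'' shows you are aware this step needs a probabilistic averaging argument rather than a direct propagation, but the explicit formula would need to be replaced accordingly for the proof to go through.
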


\begin{proof}
Let $I_j \subset [d]$ denote the interval $\left\{\frac{jdk^{10}}{r},\ldots,\frac{(j+1)dk^{10}}{r}\right\}$, and write $|R_i \cap I_j| = k^{10}+c_{i,j}$ where $c_{i,j} \in [-k^6,k^6]$; note that $\sum_j c_{i,j}=0$. We will use induction to prove the lemma. Throughout the induction, we will have three sets $R'_1,R'_2,R'_3$, satisfying:
\begin{enumerate}
    \item $R'_i \subseteq R_i$.
    \item For all $i,j$, $|R'_i \cap I_j| = r_{j}+c_{i,j}$ for some $r_j \leq k^{10}$.
    \item There exist at least three intervals with $r_j \geq k^9$, say $I_{j_1},I_{j_2},I_{j_3}$, with $j_1 < j_2 < j_3$.
\end{enumerate}
Initially, we will take 
$R_1' = R_1$, $R_2' = R_2$, 
$R_3' = R_3$, and later 
steps will take subsets of 
these. Note that in particular (2) above implies that all $R'_i$ are equal in size. Henceforth, we fix such $R_1', R_2', R_3'$. 
Note that by the second and third items, we may find  three $3k$-sized disjoint sets $S_1,S_2,S_3$ satisfying that $S_i \subset I_{j_i}$ for $i=1,2,3$ and furthermore for all $j \in [3]$, $|S_i \cap (R'_j \cap I_{j_i})| = k$, and we fix 
such $S_1, S_2, S_3$ henceforth. 

Consider a distribution $\cD = \mu | X_B=B_0$, where $B \subseteq I \setminus \bigcup_{i \in [3]} R'_i$ and $B_0 \in \supp(\mu^B)$.
The bulk of the argument will
be devoted to proving that
\begin{equation}\label{eq:intermediate-exp-C}
C(T(R'_1,R'_2,R'_3;\cD)) \leq \poly(r)\cdot \max_{\substack{i\in [3],\\a_i \in \supp(\cD^{S_i})}}(C(T(R'_1\setminus S_i,R'_2 \setminus S_i,R'_3\setminus S_i;\cD|X_{S_i}=a_i))),
\end{equation}
\begin{equation}\label{eq:intermediate-exp-beta}
\beta(T(R'_1,R'_2,R'_3;\cD)) \leq \poly(r)\cdot\max_{\substack{i\in [3],\\a_i \in \supp(\cD^{S_i})}}(\beta(T(R'_1\setminus S_i,R'_2 \setminus S_i,R'_3\setminus S_i;\cD|X_{S_i}=a_i))).    
\end{equation}
Once we establish these two inequalities, it is straightforward to conclude the lemma, and we do so in the end of the proof. 
Towards proving \eqref{eq:intermediate-exp-C} and \eqref{eq:intermediate-exp-beta}, fix any UG instance $\Phi$ on $G = T(R'_1,R'_2,R'_3;\cD)$ with $\delta$-fraction of inconsistent triangles. 

\paragraph{Setting up lists on restrictions:}
For every $S_i$ and every restriction $a \in \supp(\cD^{S_i})$, let $G_a$ be the subgraph $T(R'_1\setminus S_i,R'_2 \setminus S_i,R'_3\setminus S_i;\cD|X_{S_i} = a)$. Let $X_a \in \S_m^{V(G_a)}$ be an assignment on $V(G_a)$ with maximum value, and let $L_{a}$ be an ordered list of solutions indexed by permutations in $\S_m$, where $L_{a}[\pi] = X_a\circ \pi$. 

\paragraph{Setting up permutations between restrictions:}
For every $i \neq j$, and every restriction $a$ of $S_i$ and $b$ of $S_j$ where $(a,b) \in \supp(\cD^{S_i \cup S_j})$, let $G_{a,b}$ be the induced subgraph $T(R'_1, R'_2, R'_3;\cD | (X_{S_i} = a,X_{S_j} = b))$. Let $\viol(X_a;G_{a,b})$ denote the fraction of edges in $G_{a,b}$ that are violated by the assignment $X_a$.
Then by Claim~\ref{claim:shift-partition} there exists a permutation $\pi_{a,b}$ satisfying that
\begin{equation}\label{eq:bad-ij}
\Pr_{v \sim G_{a,b}}[X_a(v) \neq X_b(v)\pi_{a,b}] \leq \frac{\viol(X_a;G_{a,b})+\viol(X_b;G_{a,b})}{1-\lambda(G_{a,b})} \lll \viol(X_a;G_{a,b})+\viol(X_b;G_{a,b}),  
\end{equation}
where we used Claim~\ref{lem:eps-product-eigval} to bound $\lambda(G_{a,b})$ by $1/2+\poly(r)\eps < 0.51$. Let 
\[
\bad(a,b) = \{v\in G_{a,b}~|~X_a(v)\neq X_b(v)\pi_{a,b}\}.
\]
For any restriction $b$ as above, let $L_b\circ \pi$ denote the list where $L_b\circ \pi[\pi'] = L_b[\pi\pi'] = X_b\circ\pi\pi'$. With these notations and the definition of $\bad(a,b)$, we get that $L_a|_{G_{a,b} \cap \overline{\bad(a,b)}} = L_b\circ\pi_{a,b}|_{G_{a,b} \cap \overline{\bad(a,b)}}$. 

\paragraph{Counting bad triangles:}
Let $S = \bigcup_i S_i$. For any restrictions $a_1,a_2,a_3$ of $S_1,S_2,S_3$ let $G_{a_1,a_2,a_3}$ denote the graph $T(R'_1\setminus S, R'_2\setminus S, R'_3 \setminus S;\cD|(a_1,a_2,a_3))$ when $(a_1,a_2,a_3)$ is a valid restriction. Clearly, we have thats
\[\E_{\substack{a_3 \sim \cD^{S_3}|(a_1,a_2)}}[\mu_{G_{a_1,a_2,a_3}}(\bad(a_1,a_2))] = \mu_{G_{a_1,a_2}}(\bad(a_1,a_2)).\]
Fix $a_1,a_2$. Using Markov's inequality and~\eqref{eq:bad-ij} we get
\begin{equation}\label{eq:avg-viol}
\Pr_{\substack{a_3 \sim \cD^{S_3}|(a_1,a_2)}}\left[\mu_{G_{a_1,a_2,a_3}}(\bad(a_1,a_2)) > \frac{1}{100}\right]
\lll \viol(X_{a_1};G_{a_1,a_2})+\viol(X_{a_2};G_{a_1,a_2}).
\end{equation}

We say a restriction $(a_1,a_2,a_3)$ is bad if $\mu_{G_{a_1,a_2,a_3}}(\bad(a_i,a_j)) > 1/100$ for some $i \neq j$. By \eqref{eq:avg-viol} we get that,
\begin{align}\label{eq1}
&\E_{(a_1,a_2,a_3) \sim \cD^{\cup S_i}}[\Ind((a_1,a_2,a_3) \text{ is bad})]\notag\\
&\leq \sum_{i \in [3]} \E_{\substack{(a_j,a_k) \sim \cD^{S_j \cup S_k}\\ j,k\neq i}}\E_{a_i \sim \cD^{S_i}|(a_j,a_k)}[\Ind(\mu_{G_{a_1,a_2,a_3}}(\bad(a_j,a_k)) > \frac{1}{100})]\notag\\
&\lll \sum_{i \in [3]} \E_{\substack{(a_j,a_k) \sim \mu^{S_j \cup S_k}\\ j,k\neq i}}[\viol(X_{a_j};G_{a_j,a_k})+\viol(X_{a_k};G_{a_j,a_k})]\notag\\
&= \sum_{i \in [3]} \E_{\substack{a_i \sim \cD^{S_i}}}[\viol(X_{a_i};G_{a_i})],
\end{align}
where in the first inequality we used the union bound and in the second inequality we used \eqref{eq:avg-viol}. 
By definition we have that $\viol(X_{a_i};G_{a_i}) \leq C(G_{a_i})\eps_{a_i}+\beta(G_{a_i})$, where $\eps_{a_i}$ is the fraction of violated triangles in $G_{a_i}$. 
Therefore, 
\[
\eqref{eq1}
\leq
\sum_{i \in [3]} \E_{\substack{a_i \sim \cD^{S_i}}}[C(G_{a_i})\eps_{a_i}+\beta(G_{a_i})]
\lll \max_{\substack{i\in [3]\\a_i \in \supp(\cD^{S_i})}}(C(G_{a_i})\delta+\beta(G_{a_i})):=\delta''.
\]

\paragraph{Creating a UG instance on graph over restrictions:}
Let $\Psi$ be the following UG instance on $H = T(S_1,S_2,S_3;\cD)$: each edge $(a_i,a_j)$ has the constraint $\pi_{a_i,a_j}^{-1}$. That is, we want to find a solution $A$ that maximizes the fraction of edges satisfying $A(a_i) = \pi_{a_i,a_j}^{-1}A(a_j)$. Note that a triangle $(a_1,a_2,a_3)$ is consistent if $\pi_{a_3,a_1}^{-1}\pi_{a_1,a_2}^{-1}\pi_{a_2,a_3}^{-1} = \text{id}$ or equivalently if $\pi_{a_2,a_3}\pi_{a_1,a_2}\pi_{a_3,a_1} = \text{id}$.

First note that if a triangle $(a_1,a_2,a_3)$ is not bad, then it is consistent. The proof is the same as that in Lemma~\ref{lem:easy-recursion} hence we omit it here. Using the coboundary expansion of $H$, we get that there exists a UG solution $A$ to the vertices violating at most $C(H)\delta''+\beta(H)$-fraction of the edges. 

Since $S_i \subset I_{j_i}$ 
for all $i$, we get that $\max_{a \in S_1} a < 
\min_{b \in S_2} b$ and $\max_{b \in S_2} b < \min_{c \in S_3} c$. Therefore by Assumption~\ref{assumption1}, $C(H) \leq \poly(r)$ and $\beta(H) \leq 2^{-r^{12}}$. Therefore $\viol(A) \leq \poly(r)\delta''+2^{-r^{12}}:=\delta'$. 

\paragraph{Lifting the solution:} 
We will now use $A$ to create a highly satisfying solution $B$ to $G = T(R'_1,R'_2,R'_3;\cD)$. For every vertex $v\in \supp(\mu^{R'_i})$ and restriction $a \in \supp(\cD^{S_1}|(X_{R'_i} = v))$, let $g_a(v)$ denote the permutation $L_{a}[A(a)][v]$. We will choose a randomized assignment as follows: to every vertex $u \in \supp(\mu^{R'_i})$, choose a random $s \sim \cD^{S_1}|(X_{R'_i} = u)$ and assign $B(u)= g_{s}(u)$. We will now upper bound the expected fraction of edges that $B$ violates for $\Phi$. 

Consider an edge $(u_i,u_j) \in G$ between parts $R'_i,R'_j$. This edge is satisfied if there exists $s' \in \supp(\cD^{S_1}|(X_{R'_i} = u_i, X_{R'_j} = u_j))$ such that, (1) $B(u_i) = g_{s'}(u_i)$, (2) $B(u_j) = g_{s'}(u_j)$ and (3) The assignment $(g_{s'}(u_i),g_{s'}(u_j))$ satisfies the edge $(u_i,u_j)$ or equivalently $(u_i,u_j)$ is satisfied by the assignment $X_{s'}$. 
To evaluate the probability there is such $s'$, we 
sample $s'\sim \cD^{S_1}|(X_{R'_i} = u_i, X_{R'_j} = u_j)$ and consider each one of the events, starting with event (1).
For it, the probability it doesn't hold is at most
\[\E_B\E_{(u_i,u_j) \sim G}\E_{s' \sim \cD^{S_1}|(X_{R'_i} = u_i, X_{R'_j} = u_j)}[\Ind(B(u_i) \neq g_{s'}(u_i))]
=\E_{u \in G}\E_{s,s' \sim \cD^{S_1}|u}[\Ind(g_{s'}(u) \neq g_s(u))]. 
\]
To calculate this, let us first calculate a bound on:
\[\E_{u \in G}\E_{\substack{i\neq j \in [3]\\(s,s') \sim \cD^{S_i \cup S_j}|u}}[\Ind(g_{s'}(u_i) \neq g_s(u_i))].\]
Fix a vertex $u \in R'_i$ for some $i$. It is easy to check that an if an edge $(s,s') \in T(S_1,S_2,S_3;\cD|u)$ 
is satisfied by $A$ and $u\not\in \bad(s,s')$, then $g_s(u) = g_{s'}(u)$. Thus,
\begin{align*}
&\E_{u \in G}\E_{\substack{i\neq j \in [3]\\(s,s') \sim \cD^{S_i \cup S_j}|u}}[\Ind(g_{s'}(u) \neq g_s(u))]\\
&\leq \E_{u \in G}\E_{\substack{i\neq j \in [3]\\(s,s') \sim \cD^{S_i \cup S_j}|u}}[\Ind((s,s') \text{ not satisfied by $A$})]+\E_{u \in G}\E_{\substack{i\neq j \in [3]\\(s,s') \sim \cD^{S_i \cup S_j}|u}}[\Ind(u \in \bad(s,s'))]\\
&= \viol(A) + \E_{\substack{(s,s') \sim E(T(S_1,S_2,S_3;\cD))\\ u \in G_{s,s'}}}[\Ind(u \in \bad(s,s'))]\\
&=\delta' + 2\E_{u \in T(S_1,S_2,S_3;\cD)}[\viol(X_u)]\\
&\lll \delta'.
\end{align*}

We are ready to bound the probability that event (1) does not happen. Towards this end, for each vertex $u$ define $p_u := \Pr_{\substack{(s,s') \sim \cD^{S_1 \cup S_2}|u}}[g_{s'}(u) \neq g_s(u)]$, so that the above inequality translates to $\E_{u \in G}[p_u] \lll \delta'$. Let $p'_u = \Pr_{\substack{s,s' \sim \cD^{S_1}|u}}[g_{s'}(u) \neq g_s(u)]$. By Lemma~\ref{claim:gll-sing-val}, the second largest singular value of the bipartite graph $A(S_1,S_2;\cD|u)$ is at most $\eps\cdot\poly(r) \leq 0.01$ for all $u$, so by the easy direction of Cheeger's inequality we get that, $p'_u \leq O(p_u)$, which gives us that,
\[\E_{u \in G}\E_{s,s' \sim \cD^{S_1}|u}[\Ind(g_{s'}(u) \neq g_s(u))] \lll \delta',\]
thus bounding the probability of event (1). One can check that the probability of event (2) is the same as event (1), hence let us proceed to event (3). For that we get,
\[\E_{(u_i,u_j) \sim G}\E_{s \sim \cD^{S_1}|(X_{R'_i} = u_i, X_{R'_j} = u_j)}[\Ind((u_i,u_j) \in \viol(X_s))] \lll 
\delta'.
\]
We see that sampling an edge $(u_i,u_j)$, $s$ and $s'$ fails to satisfy at least one of the events (1), (2) and (3) with probability at most $O(\delta')$. Thus, 
with probability at most $O(\delta')$ over the choice of $(u_i,u_j)$ and $s$, there is no $s'$ like that and otherwise we get that 
$s'$ satisfies all of (1), (2) and (3).
This shows that in expectation the assignment $B$ that we get violates at most $O(\delta')
\leq \poly(r)\max_{i\in [3]\\a_i \in \supp(\cD^{S_i})}(C(G_{a_i})\delta+\beta(G_{a_i}))$ fraction of the edges of $G$, which completes the proof of \eqref{eq:intermediate-exp-C},~\eqref{eq:intermediate-exp-beta}.

\paragraph{Concluding the induction:} Let us now use \eqref{eq:intermediate-exp-C},~\eqref{eq:intermediate-exp-beta} to prove the lemma. For any sets $R'_1,R'_2,R'_3$ satisfying the first two items in our assumption about $R'_i$'s
above with $|R'_i| = \ell k$ and distribution $\cD$ as above we will show by induction that $C(T(R'_1,R'_2,R'_3;\cD)) \leq \poly(r)^{\ell}\cdot \poly(r)^{r/k}$ and $\beta(T(R'_1,R'_2,R'_3;\cD)) \leq 2^{-\Omega(r^{12})}$. Let us prove the base case first.  When $\ell=1$, we can use Lemma~\ref{lem:easy-recursion} to bound $C(T(R'_1,R'_2,R'_3;\cD))$ by $O(C_{1,1,1}(\mu))^{3k} \leq \poly(r)^k \leq \poly(r)^{r/k}$, since $C_{1,1,1}(\mu) \leq \poly(r)$ by Assumption~\ref{assumption1}. Our next base case is when the sets $R'_i$'s do not satisfy the third item in the assumptions about $R'_i$'s. That is, for all but at most two intervals $I_j$, $|R'_i \cap I_j| \leq k^9+c_{i,j} \leq k^9+k^6$. This implies that $|R'_i| \leq (r/k^{10})\cdot(k^9+k^6)+2\cdot (k^{10}+k^6) \lll r/k$. In this case, again using Lemma~\ref{lem:easy-recursion} we get that $C(T(R'_1,R'_2,R'_3;\cD)) \leq \poly(r)^{r/k}$ as required.

Let us now show the inductive step. Assume that we have proved the statement for $\ell$ and let us prove it for sets $R'_i$'s satisfying assumptions (1), (2) and (3) with $|R'_i| =(\ell+1)k$. Applying \eqref{eq:intermediate-exp-C} we get that there are $3k$-sized $S_i$'s so that,
\begin{equation*}
C(T(R'_1,R'_2,R'_3;\cD)) \leq \poly(r)\cdot \max_{\substack{i\in [3],\\a_i \in \supp(\cD^{S_i})}}(C(T(R'_1\setminus S_i,R'_2 \setminus S_i,R'_3\setminus S_i;\cD|X_{S_i}=a_i))).
\end{equation*}
One can now check that the sets $(R'_1 \setminus S_1, R'_2 \setminus S_1,R'_3 \setminus S_1)$ satisfy assumptions (1) and (2), therefore $C(T(R'_1 \setminus S_1, R'_2 \setminus S_1,R'_3 \setminus S_1;\cD|a_1)) \leq \poly(r)^\ell \cdot \poly(r)^{r/k}$ by the induction hypothesis. The same holds for $S_2$ and $S_3$, thus giving us that, $C(T(R'_1,R'_2,R'_3;\cD)) \leq \poly(r)^{\ell+1}\poly(r)^{r/k}$ as required. This implies that for $\ell=r/k$, we get that $C(T(R_1,R_2,R_3;\mu))\leq \poly(r)^{r/k}$.

Applying a similar argument on $\beta$ gives that $\beta(T(R_1,R_2,R_3;\mu)) \leq 2^{-\Omega(r^{12})}$.
\end{proof}

\section{Base Case and Extended Base Case for Spherical Buildings}
In this section we establish bounds on the 
coboundary constants that serve as the base 
case presented in this section. The discussion
will be specialized to spherical buildings of 
type A, type C (for technical reasons, in the end
we will also have to discuss tensors of two 
types, but this will be straightforward).

To establish the base case of our induction we will
use the cones \dor{method~\cite{gromov2010singularities,lubotzky2016expansion,KaufmanMass2,kozlov2019quantitative,kaufman2019coboundary,DiksteinD23,ChapmanL,dikstein2023swap}}. The cones method allows one to prove 
coboundary expansion properties for a graph 
given an efficient way \emph{triangulate} certain cycles in our graphs of interest.
\begin{definition}\label{def:triangulation}
A triangulation of a cycle $C$ in a graph $G$ is given by a set of vertices $S_C \subseteq V(G)$ such that the induced subgraph on $S_C \cup V(C)$ breaks into a union of triangles, where any two distinct triangles are either disjoint or share an edge.
\end{definition}

\paragraph{Notations:} throughout this section,
for subspaces $A,B$, we will often use the notation
$(A,B)$ to denote the subspace $A+B$. Similarly,
for a vector $v$ and a subspace $A$, we will denote
by $(v,A)$ the subspace $\spn(\{v\}) + A$.

\subsection{The Base Case for Spherical Buildings of Type A}\label{sec:grass-base-case}
Let $\Gr_d(k_1,k_2,k_3)$ denote the tripartite graph whose vertices are $k_1,k_2,k_3$-dimensional subspaces of $\F_q^d$ and an edge is sampled by sampling a random chain $V_1 \subset V_2 \subset V_3$ and choosing $(V_i,V_j), i \neq j \sim [3]$. We drop the subscript $d$ when clear from context.

\begin{lemma}\label{lem:base-grassmann}
For all $m \in \N$, $k_1<k_2<k_3$ integers, let $K = \max(\lceil\frac{k_2}{k_3-k_2}\rceil, \lceil\frac{k_2}{k_2-k_1}\rceil)$. Then $\Gr_d(k_1,k_2,k_3)$ is an $(O(K^2),\poly(K)/q)$-coboundary expander over $\S_m$.
\end{lemma}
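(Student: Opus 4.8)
My plan is to apply the cones method to the tripartite inclusion graph $\Gr_d(k_1,k_2,k_3)$, using the transitive action of $\GL_d(\F_q)$ on flags to reduce the verification of coboundary expansion to constructing a family of canonical paths between vertices together with a small triangulation of every cycle built from an edge and two such paths. Concretely, I would first pick, for every pair of vertices $U,W$ in the same part (or adjacent parts) of $\Gr$, a canonical path $\gamma(U,W)$; the key design constraint is that the path length and the triangulation size should be bounded in terms of $K = \max(\lceil k_2/(k_3-k_2)\rceil,\lceil k_2/(k_2-k_1)\rceil)$, since this quantity controls how many ``intermediate'' subspaces one must insert to move between two $k_i$-dimensional subspaces while staying inside the complex. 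The natural choice is to route through a common refinement/coarsening: to go from a $k_2$-subspace $V$ to a $k_2$-subspace $V'$, pass through $V\cap V'$ (or $V+V'$) in $O(K)$ steps, using the fact that any two subspaces of bounded codimension gap can be connected by a short chain of incidences in the Grassmann graph.

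The second and more delicate step is the triangulation itself. Given a triangle-edge $(V,W)$ and a reference vertex $U$, the cycle formed by $\gamma(U,V)$, the edge $(V,W)$, and $\gamma(W,U)$ must be filled in by $O(K^2)$ triangles whose extra vertices lie in $X$. Following the introduction's description of the base case $r=1$, I would break this cycle into a bounded number of short cycles (of length $\le 8$) plus some triangles, where the short cycles arise from ``local'' moves that change one subspace at a time; each $8$-cycle is then triangulated by an explicit constant-size gadget, and one checks via linear algebra over $\F_q$ that all the required subspaces (intersections, sums, and one-dimensional extensions) actually exist and form valid faces of $\Gr_d$. The total count is $O(K)\times O(K) = O(K^2)$ gadgets, each contributing $O(1)$ triangles, which yields the multiplicative coboundary constant $O(K^2)$.

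Third, I would account for the additive error $\poly(K)/q$. This error enters because the cones method in its clean form requires the canonical paths and triangulations to be defined on \emph{every} pair of vertices and to respect the edge-weights of the graph; but in $\Gr_d(k_1,k_2,k_3)$ a small (measure $O(\poly(K)/q)$) fraction of pairs may be ``degenerate'' — e.g. the relevant intersections have the wrong dimension, or two generic subspaces fail to be in general position — and for these we cannot route canonically. One handles this exactly as in the standard argument: absorb the degenerate pairs into the additive slack, using the expander-mixing lemma (Lemma~\ref{lem:bip-eml}) and the spectral gap of the inclusion graph (which, via Lemma~\ref{lem:eps-product-A} and Lemma~\ref{claim:gll-sing-val}, is $1-\Omega(1)$ up to $O(1/\sqrt q)$ terms) to argue that a random triangle is non-degenerate with probability $1-\poly(K)/q$; the cones estimate then gives $\viol(A)\le O(K^2)\delta + \poly(K)/q$ for the assignment $A$ it produces. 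The main obstacle, I expect, is the explicit case analysis in the second step: getting a triangulation whose size is genuinely $O(K^2)$ rather than, say, $2^{O(K)}$ requires choosing the canonical paths so that consecutive moves interact locally, and verifying that the $\F_q$-linear-algebraic objects one writes down (the inserted subspaces completing each gadget to triangles) are simultaneously valid faces — this is where all the real work lies, and where separatedness-type hypotheses encoded in $K$ are used to guarantee enough ``room'' inside $\F_q^d$.
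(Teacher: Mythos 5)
Your overall architecture — cones method, transitivity of $\GL_d(\F_q)$ on flags, $O(K)$-length canonical paths from a fixed reference vertex, decomposition of the resulting cycles into $O(K^2)$ $8$-cycles each triangulated by an $O(1)$-triangle gadget, and an additive $\poly(K)/q$ error absorbed from degenerate configurations — matches the paper's proof precisely, including the way the additive error is fed through the randomized propagation algorithm. So the high-level plan is sound.

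The gap is in the one concrete idea you offer for the canonical paths: routing through $V\cap V'$ or $V+V'$. For generic $k_2$-subspaces of $\F_q^d$ the intersection has dimension $\max(0,2k_2-d)$ (typically $0$) and the sum has dimension $2k_2$; neither is a vertex of $\Gr_d(k_1,k_2,k_3)$, so there is no path "through" them, and no obvious $O(K)$-length chain arises from them. What the paper actually does — and what you would need to invent — is a \emph{block decomposition}: fix a basis for a reference vertex $U$ partitioned into $t\leq K$ blocks of size $k=\min(k_3-k_2,k_2-k_1)$, and for each other vertex $V$ choose a compatible block decomposition $V=(V_{(1)},\dots,V_{(t)})$; the canonical path from $U$ to $V$ then replaces one block of $U$ by the corresponding block of $V$ per step, alternating between two consecutive dimensions and staying inside the complex as long as a "good vertex/good edge" genericity condition holds (which fails for only a $\poly(K)/q$ fraction, giving your additive term). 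This block-swapping is precisely what makes the $8$-cycles "local" — each records two adjacent swaps — and after a case split on whether $k_3-k_2$ or $k_2-k_1$ is smaller, each $8$-cycle is triangulated by an explicit constant-size gadget built from block sums. Without some replacement for the block decomposition, the second and hardest step of your plan does not get off the ground.
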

\mitali{We remark that Lemma~\ref{lem:base-grassmann} is a variant of~\cite[Lemma 8.10]{dikstein2023swap}. They bound the coboundary constant of a $4$-partite graph $\Gr_d(k_1,k_2,k_3,k_4)$ where analogous to the above, each part contains $k_i$-dimensional subspaces of $\F_q^d$. We incur an additive error of $\poly(K)/q$ whereas they don't have any additive error. On the other hand our multiplicative constant is polynomial in $K$ whereas theirs is quasi-polynomial in a related parameter. Thus their bound is better in the low error regime and ours is better in the high error regime, though this has little effect on the final result. As far as we know, the proof we give herein is different and provides a good starting point for the analogous case of spherical buildings of type C, which is more complicated.}

We break the proof into two cases according to which term larger, $k_2-k_1$ or $k_3-k_2$. Let us first discuss the case when $k_3-k_2 \leq k_2-k_1$ and then later we discuss how to modify the proof in the other case. For simplicity of notation we will assume that $k_1,k_2,k_3$ are all multiples of $k_3-k_2$, so set $k = k_3-k_2$, $t = k_3/k$ and $t' = k_1/k$. The same proof with some slight modifications works when this is not the case, hence we omit those details.

Let $G$ denote the graph $\Gr_d(k_1,k_2,k_3)$ and $G_i$ denote the vertices of dimension $k_i$. First fix an arbitrary vertex $U \in G_3$ and an arbitrary basis for it: $u_1,\ldots, u_{k_3}$. Let $U_{(1)}$ denote the set of first $k$ vectors, $\{u_1,\ldots, u_{k}\}$, $U_{(2)}$ the second set of $k$ vectors and so on upto $U_{(t)}$. We now fix a set of paths from $U$ to $V$ for most $V \in G$. 
Let $\cB$ be a set of ``block decompositions'' for each subspace $V \in G$, i.e. $\cB$ assigns $V \in G_i$ the blocks, $\cB(V) = (V_{(1)},\ldots,V_{(t)})$ with $V_{(i)}= U_{(i)}$ for all $i \leq t-k_i/k$ and $V = (V_{>t-k_i/k})$.

\subsubsection{Set of Good Vertices and Edges with respect to $\cB$ when $k_3-k_2\leq k_2-k_1$}
\begin{enumerate}
\item Let $\good_i(\cB) \subseteq G_i$ be the set of $V \in G_i$ that satisfy for all $i \in [t], \dim(V_{(1)},\ldots,V_{(i)},U_{>i}) = k_3$. In particular taking $i=t$ this implies that $\dim(U_{\leq t-k_i/k},V) = k_3$.
\item For all $b < a \in [3]$ let $\good_{ab}(\cB) \subseteq E(G_a,G_b)$ be the set of edges $(V,W)$ with $W \subset V$  that satisfy
\begin{enumerate}
\item Both $V,W \in \bigcup_{a' \in [3]} \good_{a'}(\cB)$.
\item For all $j < i \in [t], \dim(V_{\leq j},W_{(j+1)},\ldots,W_{(i)},U_{>i}) = k_3.$
\end{enumerate}
\end{enumerate}

\begin{claim}\label{claim:random-subspace-intersection}
Let $V_1,V_2\subseteq\F_q^d$ be $d_1,d_2$ dimensional subspaces respectively, and suppose that $d_3+d_1 \leq d_2$. Then
\[\Pr_{V_3 \subseteq_{d_3} V_2}[\dim(V_3+V_1) < d_3+d_1] \leq \frac{1}{q-1}.\]
\end{claim}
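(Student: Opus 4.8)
The statement to prove is Claim~\ref{claim:random-subspace-intersection}: if $V_1,V_2 \subseteq \F_q^d$ have dimensions $d_1,d_2$ with $d_3 + d_1 \leq d_2$, then a uniformly random $d_3$-dimensional subspace $V_3 \subseteq V_2$ satisfies $\dim(V_3 + V_1) = d_3 + d_1$ except with probability at most $\frac{1}{q-1}$. The natural approach is a union-bound-style counting argument building the basis of $V_3$ vector by vector inside $V_2$, but cleaner is to reduce to the intersection $V_1 \cap V_2$ and then pick $V_3$ one vector at a time. First I would observe that $\dim(V_3 + V_1) = d_3 + d_1 - \dim(V_3 \cap V_1)$, and since $V_3 \subseteq V_2$ we have $V_3 \cap V_1 = V_3 \cap (V_1 \cap V_2)$; write $W = V_1 \cap V_2$, a subspace of $V_2$ of dimension $w \leq d_1$. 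So the bad event is exactly that $V_3 \cap W \neq \{0\}$, and it suffices to bound $\Pr_{V_3 \subseteq_{d_3} V_2}[V_3 \cap W \neq \{0\}]$.

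To bound this, I would build $V_3$ by sampling a uniformly random ordered linearly independent $d_3$-tuple $(x_1,\ldots,x_{d_3})$ from $V_2$ (which spans a uniform $d_3$-dimensional subspace). The event $V_3 \cap W \neq \{0\}$ is contained in the event that for some $i$, the vector $x_i$ lies in $\mathrm{span}(x_1,\ldots,x_{i-1}) + W$. Conditioned on $x_1,\ldots,x_{i-1}$, the vector $x_i$ is uniform in $V_2 \setminus \mathrm{span}(x_1,\ldots,x_{i-1})$, a set of size $q^{d_2} - q^{i-1}$; the subspace $\mathrm{span}(x_1,\ldots,x_{i-1}) + W$ has dimension at most $(i-1) + w \leq (i-1) + d_1$, so it contains at most $q^{i-1+d_1}$ vectors, of which $q^{i-1}$ are already excluded. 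Hence the conditional probability that $x_i$ falls into this subspace is at most $\frac{q^{i-1+d_1} - q^{i-1}}{q^{d_2} - q^{i-1}}$. Summing over $i = 1,\ldots,d_3$ and using $i - 1 \leq d_3 - 1$ together with the hypothesis $d_3 + d_1 \leq d_2$ (so $i - 1 + d_1 \leq d_3 - 1 + d_1 \leq d_2 - 1$) should give a bound of the form $\sum_i \frac{q^{i-1+d_1}}{q^{d_2} - q^{i-1}} \leq \sum_i \frac{q^{i-1+d_1}}{q^{d_2} - q^{d_3-1}}$; bounding the geometric sum $\sum_{i=1}^{d_3} q^{i-1+d_1} = q^{d_1}\frac{q^{d_3}-1}{q-1} \leq \frac{q^{d_1+d_3}}{q-1}$ and comparing with the denominator $q^{d_2} - q^{d_3 - 1} \geq q^{d_2}(1 - q^{-1}) \geq q^{d_1+d_3}(1 - q^{-1})$ should yield the clean bound $\frac{1}{q-1}$, perhaps after a slightly more careful bookkeeping of the constant.

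\textbf{Main obstacle.} The only delicate point is getting the final constant to be exactly $\frac{1}{q-1}$ rather than something like $\frac{2}{q-1}$ or $\frac{q}{q-1}\cdot\frac{1}{q}$; this requires being somewhat careful about how loosely one bounds the denominator and whether to sum over all $i$ or to treat the $i=1$ term (where $x_1$ simply must avoid $W \setminus \{0\}$, with probability $\frac{q^w - 1}{q^{d_2} - 1}$) separately. An alternative, possibly cleaner route that avoids the vector-by-vector bookkeeping is to directly count $d_3$-dimensional subspaces of $V_2$ meeting $W$ nontrivially: the number of such subspaces is at most (number of nonzero vectors in $W$) times (number of $(d_3-1)$-dimensional subspaces of $V_2/\langle v\rangle$ for a fixed $v$), divided appropriately, and compare to the total Gaussian binomial $\binom{d_2}{d_3}_q$. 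Either way the computation is routine once the reduction to $V_3 \cap W \neq \{0\}$ is in place, and I would present the vector-by-vector argument as it most transparently uses the hypothesis $d_3 + d_1 \leq d_2$.
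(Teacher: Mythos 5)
Your approach is essentially the same as the paper's: reduce to bounding $\Pr[V_3 \cap (V_1 \cap V_2) \neq \{0\}]$ and then build $V_3$ one basis vector at a time inside $V_2$. The only wrinkle is that you condition on linear independence at each step, which shrinks the denominator to $q^{d_2} - q^{i-1}$ and (as you noticed) spoils the exact constant $\frac{1}{q-1}$. The paper sidesteps this entirely: it samples $x_1,\ldots,x_{d_3}$ i.i.d.\ uniformly from $V_2$ (no conditioning), and the event that each $x_i$ avoids $\spn(x_1,\ldots,x_{i-1}) + (V_1\cap V_2)$ already implies that the $x_i$ are linearly independent; since the unconditional probability of this event is at most the conditional probability given that $(x_1,\ldots,x_{d_3})$ is linearly independent, one gets the clean lower bound $\prod_{i=0}^{d_3-1}\bigl(1 - q^{d_1+i-d_2}\bigr) \geq 1 - \frac{1}{q-1}$ directly. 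So the fix to recover the stated constant is simply to drop the conditioning, not to tighten the bookkeeping.
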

\begin{proof}
Let $U = V_1 \cap V_2$ and write $V = U + V'_1$ with $V'_1 \cap V_2 = \{0\}$. Then:
\[\Pr_{V_3 \subset_{d_3} V_2}[V_3 \cap U = \{0\}] \geq \prod_{i =0}^{d_3-1}\left(\frac{q^{d_2}-q^{d_1+i}}{q^{d_2}}\right) \geq 1-\frac{1}{q-1}.\]
If $V_3 \cap U =\{0\}$ then $V_3\cap V_1 = \{0\}$ too, hence $\dim(V_3+V_1)=d_3+d_1$ thus completing the proof.
\end{proof}

\begin{lemma}\label{lem:random-block-decomp}
There exists a block decomposition $\cB$ such that for all $b < a \in [3]$: 
\[\Pr_{V \sim G_a}[V \in \good_a(\cB)] \geq 1-O\left(\frac{t}{q}\right)\]
\[\Pr_{\substack{V \sim G_a\\W \subset_b V}}[(V,W) \in \good_{ab}(\cB)] \geq 1-\frac{\poly(t)}{q}.\]
\end{lemma}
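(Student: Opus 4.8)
The plan is to exhibit a single block decomposition $\cB$ that works, by showing that a random choice of the "refinement vectors" defining $\cB$ succeeds with positive probability, and in fact with probability bounded away from zero. Recall that $\cB$ must assign to each $V\in G_i$ a chain of blocks $V_{(1)},\ldots,V_{(t)}$ with $V_{(\le t-k_i/k)}$ forced to be the fixed vectors $U_{(\le t-k_i/k)}$ and $V_{>t-k_i/k}$ spanning $V$; so the only freedom is in how we split the "new" part of each $V$ into blocks of dimension $k$. I would choose these splits uniformly at random, independently across all subspaces $V$ (one can do this since the graph is vertex-transitive under $\GL_d(\F_q)$, but it is cleanest to just think of it as making an independent uniform choice of ordered basis-completion for each subspace), and then union-bound over the finitely many dimension-equalities that $\good_a(\cB)$ and $\good_{ab}(\cB)$ demand.

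The key computation is a union bound over events of the form $\dim(W_1+W_2)<\dim W_1+\dim W_2$ for various subspaces $W_1,W_2$ that arise as spans of blocks and tails of $U$. Each such bad event, conditioned on everything except the last block being chosen, is controlled by Claim~\ref{claim:random-subspace-intersection}: as long as the ambient subspace from which we draw the fresh block has dimension at least the sum of the two relevant dimensions — which is exactly where the hypothesis $k_3-k_2\le k_2-k_1$ and the normalization $k=k_3-k_2$, $t=k_3/k$, $t'=k_1/k$ are used, since then all the partial sums of block-dimensions stay $\le k_3$ — each bad event has probability $\le \frac{1}{q-1}$. For $\good_a(\cB)$ there are $t$ conditions per vertex, giving failure probability $O(t/q)$ after a union bound; for $\good_{ab}(\cB)$ there are $O(t^2)$ conditions of the form $\dim(V_{\le j},W_{(j+1)},\ldots,W_{(i)},U_{>i})=k_3$ together with the requirement that both endpoints lie in some $\good_{a'}(\cB)$, which by the same argument (revealing blocks one at a time and applying Claim~\ref{claim:random-subspace-intersection} to the last fresh block each time) gives failure probability $\poly(t)/q$. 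Taking expectations over $\cB$: the expected measure of $G_a\setminus\good_a(\cB)$ is $O(t/q)$ and the expected measure of the bad edge set is $\poly(t)/q$, so by an averaging/Markov argument (or simply: the sum of the two expected deficiencies is less than $1$ for $q$ large, and for small $q$ the statement is vacuous since $\poly(t)/q$ can exceed $1$) there exists a single $\cB$ simultaneously achieving $\Pr_{V\sim G_a}[V\in\good_a(\cB)]\ge 1-O(t/q)$ and $\Pr_{V\sim G_a, W\subset_b V}[(V,W)\in\good_{ab}(\cB)]\ge 1-\poly(t)/q$ for all $b<a\in[3]$.

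The main subtlety — and the step I would be most careful about — is bookkeeping the nested dimension conditions so that each one really does reduce, after conditioning on the earlier blocks, to a single application of Claim~\ref{claim:random-subspace-intersection} with the dimension hypothesis satisfied. Concretely, for a condition like $\dim(V_{\le j},W_{(j+1)},\ldots,W_{(i)},U_{>i})=k_3$, one reveals $W_{(j+1)},\ldots,W_{(i)}$ one block at a time: each $W_{(\ell)}$ is a uniform $k$-dimensional subspace of the fresh part of $W$ (an ambient space of dimension $\ge k_3-k_1\ge k_2-k_1\ge k_3-k_2=k$ times the number of remaining blocks, using $k_3-k_2\le k_2-k_1$), and the already-built space $(V_{\le j},W_{(j+1)},\ldots,W_{(\ell-1)},U_{>i})$ has dimension $<k_3$ as long as we are not yet done, so Claim~\ref{claim:random-subspace-intersection} applies with slack. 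Summing $\frac{1}{q-1}$ over the $O(t)$ blocks and $O(t)$ starting conditions gives the $\poly(t)/q$ bound. The case $k_2-k_1<k_3-k_2$ is handled symmetrically by reversing the roles of the "bottom" and "top" directions (building blocks of $V\in G_1$ downward rather than $V\in G_3$ upward), with the identical estimates; I would state that this symmetric case "follows by the same argument with the roles of $k_1$ and $k_3$ swapped" rather than rewriting it.
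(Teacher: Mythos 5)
Your proposal follows essentially the same route as the paper's proof: choose a uniformly random block decomposition independently for each subspace, apply Claim~\ref{claim:random-subspace-intersection} iteratively (revealing blocks one at a time) with a union bound over the $O(t^2)$ dimension conditions, and then average over $\cB$ to extract one decomposition that works simultaneously for all parts and edge-types. The only thing the paper spells out more carefully is the two-stage conditioning for the edge events — first fixing the blocks of $V$ and showing $\dim(V_{R_1},U_F)$ is full, then arguing the conditional distribution of $W$'s blocks is $O(1/q)$-close to uniform — which is exactly the ``bookkeeping the nested dimension conditions'' step you flagged as the one to be careful about.
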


\begin{proof}
Let us consider a random block decomposition $\cB$, that for each subspace $V \in G_a$, picks a sequence of random subspaces:
\[V_{(t-k_a/k+1)} \subset_k (V_{(t-k_a/k+1)}, V_{(t-k_a/k+1)}) \subset_{2k} \ldots \subset_{k_a-k} V,\]
and sets $V_{(i)}=U_{(i)}$ for all $i \leq t-k_a/k$. Let us prove that with high probability $V \in \good_a$. Fix some $i \in [t]$ and let $F \subset [t]$ be the set $\{1,\ldots,t-k_a/k\}\cup\{i+1,\ldots,t\}$. For a set $S \subset [t]$, let $U_S$ denote the subspace $\spn(U_{(i)} \mid i \in S)$. Then using Claim~\ref{claim:random-subspace-intersection} we get,
\[\Pr_{V \sim G_a,\cB}[\dim(V_{\leq i},U_{>i})=k_3]=\Pr_{V \sim G_a,\cB}[\dim(V_{\overline{F}},U_{F})=k_3] \geq 1-\frac{1}{q-1},\]
where we used that $V_{\overline{F}}$ is distributed as a uniformly random $k(t-|F|)$-dimensional subspace. Taking a union bound over $i \in [t]$ we get that,
\begin{equation}\label{eq:good-verts-B}
\Pr_{V \sim G_a,\cB}[V \in \good_a(\cB)] \geq 1-\frac{t}{q-1},
\end{equation}
establishing the first item. For the second item, 
fix $b <a \in [3]$, $j<i \in [t]$, 
and let 
\[
F = [t-k_a/k] \cup \{i+1,\ldots,t\} \cup ([t-k_b/k]\cap\{j+1,\ldots,i\}),
~~
R_1=[j]\setminus F,
~~ 
R_2 = \{j+1,\ldots,i\}\setminus F.
\] 
We have:
\begin{align}
&\Pr_{\substack{W \subset_{k_1} V, \cB}}[\dim(V_{\leq j}, W_{(j+1)},\ldots,W_{(i)},U_{>i})=k_3]\notag \\
&=\Pr_{\substack{W \subset V, \cB}}[\dim(V_{R_1}, W_{R_2},U_{F})=k_3]\notag \\
&=\Pr_{\substack{V,\cB}}[\dim(V_{R_1},U_{F})=(|R_1|+|F|)k]\notag\\
&\cdot\Pr_{\substack{W \subset_{k_2} V,\cB}}[\dim(V_{R_1},W_{R_2},U_{F})=k_3 \mid\dim(V_{R_1},U_{F})=(|R_1|+|F|)k].\label{eq:random-grassmann}
\end{align}
The first term is at least $1-\frac{1}{q-1}$ by Claim~\ref{claim:random-subspace-intersection}, 
and we next lower bound the second term. 
By symmetry, it suffices to bound it for a fixed $V_{R_1}$ which satisfies $\dim(V_{R_1},U_{F})=(|R_1|+|F|)k$. We will first show that the distribution over $W_{R_2}$ conditioned on $V_{R_1}$ is $O(1/q)$-close to uniform. Indeed, by Claim~\ref{claim:random-subspace-intersection} with probability $1-O(1/q)$ 
we have that $W \cap T = \{0\}$. Conditioned on this, $W_{R_2}$ is a uniformly random $|R_2|k$-dimensional subspace amongst $|R_2|k$-dimensional subspaces intersecting $V_{R_1}$ at $\{0\}$. The latter distribution is $O(1/q)$-close to a uniformly random $|R_2|k$-dimensional subspace, as by Claim~\ref{claim:random-subspace-intersection} the probability a random $|R_2|k$-dimensional subspace intersects $V_{R_1}$ at $\{0\}$ is $1-O(1/q)$. When $W_{R_2}$ is chosen to be a uniformly random $|R_2|k$-dimensional subspace we can easily bound the probability that $\dim(V_{R_1},W_{R_2},U_F) = k_3$ using Claim~\ref{claim:random-subspace-intersection}.
Therefore we get,
\[\Pr_{\substack{W \subset_{k_2} V,\cB}}[\dim(V_{R_1},W_{R_2},U_{F})=k_3 \mid\dim(V_{R_1},U_{F})=(|R_1|+|F|)k] \geq 1-O(1/q).\]
Plugging this back into \eqref{eq:random-grassmann} and taking a union bound over all $j<i\in [t]$ we get,
\begin{equation}\label{eq:good-edges-B}
\Pr_{(V,W)\sim E(G_a,G_b),\cB}[(V,W) \in \good_{ab}(\cB)] \geq 1-\frac{\poly(t)}{q}.    
\end{equation}
An averaging argument on \eqref{eq:good-verts-B} and \eqref{eq:good-edges-B} now gives us that there is a $\cB$ for which most vertices and edges are good as required.
\end{proof}

Henceforth fix a block decomposition $\cB$ satisfying the conclusions of Lemma~\ref{lem:random-block-decomp}.

\subsubsection{Constructing the Paths when $k_3-k_2\leq k_2-k_1$}
In this section, fixing $U$,
we construct collections of canonical
paths between $U$ and good vertices in our graph.
For each $V\in \bigcup_{i \in [3]}\good_i(\cB)$, 
this is achieved using the block decomposition of
$V$ as follows:
\begin{align*}
P(U,V) = &(U_{(1)},\ldots, U_{(t)}) \rightarrow (U_{(2)},\ldots,U_{(t)}) \rightarrow (V_{(1)},U_{(2)},\ldots, U_{(t)}) \\
&\rightarrow (V_{(1)},U_{(3)},\ldots,U_{(t)}) \rightarrow (V_{(1)},V_{(2)},U_{(3)},\ldots,U_{(t)}) \rightarrow \ldots \\
&\rightarrow (V_{(1)},\ldots,V_{(t-1)},U_{(t)}) \rightarrow (V_{(1)},\ldots,V_{(t-1)}) \rightarrow (V_{(1)},\ldots,V_{(t)}) \rightarrow V,    
\end{align*}
where we omit the last step if $V \in G_3$ since $V = (V_{(1)},\ldots,V_{(t)})$. Note that since $V \in \good_i$, this path alternates between vertices from $S_3$ and $S_2$. When $V \in G_1$ or $G_2$, the subspace $U$ appears more than once on the path, and the path from $U$ to $V$ could be shortened. We use this longer path instead to keep things notationally simpler, since now all paths are of length either $2t$ or $2t+1$.

\subsubsection{Triangulating Cycles when $k_3-k_2\leq k_2-k_1$}
Having constructed paths from $U$ to all good 
vertices, we notice that if $(W,V)$ is an edge
in the graph, then together with the paths from $U$
we have a cycle $C(U,V,W)$. In the following lemma,
we show how to triangulate this cycle using a small
number of triangles.
\begin{lemma}\label{lem:triangulation-gr1}
When $k_3-k_2 \leq k_2-k_1$, for every edge $(V,W) \in \bigcup_{b<a \in [3]}\good_{ab}(\cB)$, the cycle $C(U,V,W) = U \xrightarrow[]{P(U, V)} V \rightarrow W \xrightarrow[]{P(W, U)} U$ has a triangulation of size $O((\frac{k_3}{k_3-k_2})^2)$. 
\end{lemma}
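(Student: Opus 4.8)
The plan is to triangulate the cycle $C(U,V,W)$ by walking along the two canonical paths $P(U,V)$ and $P(U,W)$ in lockstep and, at each step, inserting a bounded number of triangles that ``zip up'' the region between consecutive vertices on the two paths. Recall that both $P(U,V)$ and $P(U,W)$ have the same combinatorial shape: at stage $i$ the vertex on the $V$-path is (a dimension truncation of) $(V_{(1)},\dots,V_{(i)},U_{(i+1)},\dots,U_{(t)})$ and on the $W$-path it is $(W_{(1)},\dots,W_{(i)},U_{(i+1)},\dots,U_{(t)})$, and since $W\subseteq V$ we have $W_{(j)}\subseteq V_{\leq j}$ in the relevant sense, so the two hybrid subspaces differ only in the first $i$ blocks. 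The idea is that the ``interface'' between the $i$-th hybrids on the two paths is controlled entirely by the subspace $(V_{\leq i}, W_{(1)},\dots,W_{(i)}, U_{>i})$, which has full dimension $k_3$ precisely because $(V,W)\in\good_{ab}(\cB)$ (this is exactly condition (b) in the definition of $\good_{ab}$). So I would introduce, for each $i$, the auxiliary vertex $Z_i = (V_{\leq i}, W_{(i+1)},\dots,W_{(i)}?,U_{>i})$ — more precisely the hybrid that agrees with the $V$-path in blocks $\leq i$ and with $U$ afterward, intersected appropriately to land in $G_3$ — and then the quadrilateral with corners (the $i$-th $V$-hybrid, the $(i{+}1)$-th $V$-hybrid, the $(i{+}1)$-th $W$-hybrid, the $i$-th $W$-hybrid) together with $Z_i$ breaks into a constant number of triangles, since all five of these subspaces are pairwise comparable-or-joinable inside the $k_3$-dimensional space $(V_{\leq i+1},W_{\leq i+1},U_{>i+1})$ guaranteed to exist by goodness. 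There are $O(t) = O(k_3/k) = O(k_3/(k_3-k_2))$ stages, each contributing $O(1)$ triangles, plus $O(1)$ triangles to handle the two endpoints (the edge $(V,W)$ itself at one end, and the degenerate first few steps near $U$ at the other), giving the claimed bound $O((k_3/(k_3-k_2))^2)$ — in fact $O(k_3/(k_3-k_2))$ suffices here, but since the companion case $k_2-k_1 < k_3-k_2$ will produce a nested double loop over the two ratios, the square is the honest bound to state.

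In more detail I would proceed in the following order. First, set up notation: fix the basis of $U$ and the block decomposition $\cB$, and write out the explicit vertex sequences of $P(U,V)$ and $P(U,W)$ with their dimension truncations, noting that all three of $U,V,W$ are good and the edge $(V,W)$ is good, so every ``full dimension $=k_3$'' assertion we need is available. Second, for each stage $i\in[t]$ define the auxiliary subspaces interpolating between the two paths, and verify the key incidence facts: that consecutive auxiliary vertices are adjacent in $G$, and that each auxiliary vertex is adjacent to the corresponding vertices on both paths — these are all immediate once one checks the relevant sums have dimension exactly $k_3$, which follows from Claim~\ref{claim:random-subspace-intersection}-type reasoning already baked into the definition of $\good_{ab}(\cB)$, so no new probabilistic argument is needed, only bookkeeping of which index sets are involved. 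Third, exhibit the explicit triangulation of each stage-$i$ region: this is a finite case analysis showing that the octagon (or hexagon, depending on parity of which parts $a,b$ the edge $(V,W)$ connects) formed by the two path segments of stage $i$ together with the auxiliary vertex decomposes into $O(1)$ triangles with the disjoint-or-share-an-edge property of Definition~\ref{def:triangulation}. Fourth, handle the boundary: the initial segment of both paths passes through $U$ (and a shift of $U$), so the first couple of steps are triangulated trivially; and the final segment must absorb the edge $(V,W)$, which is handled by one extra triangulation using a common refinement of $V$ and $W$. Finally, sum the contributions.

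The main obstacle I expect is the explicit case analysis in the third step — making sure that the small triangulations at each stage are genuinely compatible, i.e. that when we glue the stage-$i$ and stage-$(i{+}1)$ pieces along their shared edge the union is still a valid triangulation (triangles pairwise disjoint or sharing a full edge), and that this works uniformly regardless of which parts $a,b\in[3]$ the edge $(V,W)$ joins, since the paths land in $G_3$ and $G_2$ alternately and an edge between $G_1$ and $G_2$, versus $G_2$ and $G_3$, versus $G_1$ and $G_3$, sits at slightly different places along the path. I would manage this by reducing every local picture to the same ``model'' configuration: three chains of subspaces inside a fixed $k_3$-dimensional ambient space that pairwise span it, for which a single fixed triangulation pattern (breaking an $8$-cycle into triangles via two well-chosen diagonals and one auxiliary vertex, exactly as sketched in the introduction) can be quoted each time. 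The other, milder, obstacle is the arithmetic of dimension truncations when $k_1,k_2,k_3$ are not multiples of $k=k_3-k_2$; as the authors note this only changes the sizes of the first and last blocks and is handled by the same argument with cosmetic changes, so I would remark on it rather than redo it.
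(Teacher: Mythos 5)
Your overall strategy (build paths $P(U,V)$ and $P(U,W)$ using block decompositions, then fill in the annular region between them) is the right one and matches the paper, but there is a real quantitative gap in the middle of the plan: you claim that each ``stage~$i$'' is a bounded-size cycle (octagon/hexagon) and hence needs only $O(1)$ triangles, so $O(t)$ in total would suffice. This is incorrect. The $i$-th vertex on $P(U,V)$ is $(V_{\leq i},U_{>i})$ and the $i$-th vertex on $P(U,W)$ is $(W_{\leq i},U_{>i})$; these differ in all $i$ of the first blocks, so they are at distance $\Theta(i)$ in the graph, not adjacent and not at bounded distance. The stage-$i$ region bounded by $P_{2i}(U,V),P_{2i+2}(U,V),P_{2i+2}(U,W),P_{2i}(U,W)$ is therefore a cycle whose length grows linearly in $i$, and no single auxiliary vertex $Z_i$ can triangulate it in $O(1)$ triangles: the paper instead constructs, for each $i$, a ``ladder'' path $R^i$ of length $2i+1$ joining $P_{2i}(U,W)$ to $P_{2i}(U,V)$ by flipping one block $W_{(j)}\to V_{(j)}$ at a time, then joins consecutive ladders $R^i,R^{i+1}$ by length-$2$ rungs, producing a triangular grid of $\Theta(t^2)$ $8$-cycles. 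Each $8$-cycle is then triangulated in $O(1)$ triangles by inserting a $k_1$-dimensional subspace $Y$ contained in the common part $X=(V_{\leq j-1},W_{(j+1)},\dots,W_{(i-1)},U_{>i})$; here the hypothesis $k_2-k_1\geq k_3-k_2$ is used to guarantee $\dim X\geq k_1$, a fact your sketch does not invoke.

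Consequently, your parenthetical ``in fact $O(k_3/(k_3-k_2))$ suffices here'' is false, and the $O(t^2)$ bound is not stated merely for uniformity with the companion lemma — it is genuinely what this construction costs. To repair the argument you need to (1) insert the intermediate ladder paths $R^i$ and observe that their vertices lie in $G_2\cup G_3$ precisely because $(V,W)\in\good_{ab}(\cB)$ (item (b) of the definition guarantees the relevant hybrid dimensions equal $k_3$); (2) count the resulting grid of $8$-cycles correctly as $O(t^2)$; and (3) supply the $k_1$-dimensional auxiliary vertex inside the common intersection $X$ to triangulate each $8$-cycle, which is exactly where the case assumption $k_3-k_2\leq k_2-k_1$ enters. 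Your step 4 (absorbing the edge $(V,W)$ via the final cycle $C^t$) and the remark about the non-divisible case are both fine and match the paper.
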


\begin{proof}
To make notation simpler we will give a triangulation $T(U,V,W)$ of the cycle with some repeating vertices. We put in edges between two vertices that are the same, and label them with the identity permutation. These are called ``equality edges''. This introduces triangles that might have at least two identical vertices, but it is easy to see that such a triangle is consistent, hence we can use these triangles essentially for free. 
 
\paragraph{Tiling by 8-cycles when $V \in G_3$ and $W \in G_2$:} Let $W' = (W_{(1)},\ldots,W_{(t)})$ with $W_{(1)}=U_{(1)}$, $W = (W_{\geq 2})$, and $V = (V_{(1)},\ldots, V_{(t)})$. To get a triangulation we first create paths between the $(2i)^{th}$ vertex on the path $P(U,V)$ and the $(2i)^{th}$ vertex on $P(U,W)$ for all $i \in \{1,\ldots,t\}$. Recall that $P_{2i}(U,V) = (V_{\leq i}, U_{>i})$ and $P_{2i}(U,W) = (W_{\leq i}, U_{>i})$. For $i \in [t]$ we take the obvious path $R^i(U,V,W)$ between $P_{2i}(U,W)$ and $P_{2i}(U,V)$ that flips a block of $W$ to a block of $V$ one at a time:
\begin{align*}
R^i(U,V,W) := &P_{2i}(U,W) \rightarrow (W_{(2)},\ldots, W_{(i)},U_{>i}) 
\rightarrow (V_{(1)},W_{(2)},\ldots, W_{(i)}, U_{>i}) \rightarrow \ldots \rightarrow P_{2i}(U,V),
\end{align*}
that alternates between $k_3$ and $k_2$ dimensional vertices. Here we used the fact that $(V,W)\in\good_{32}(\cB)$ which implies that the intermediate odd vertices -- $(V_{\leq j},W_{(j+1)},\ldots,W_{(i)},U_{>i})$, on the path $R^i$ above are of dimension $k_3$. 

Let us henceforth drop the notation $U,V,W$ in $R^i(U,V,W)$ since $U,V,W$ are fixed. First note that by creating the paths $R^{i}$, we have broken the original cycle $C(U,V,W)$ into $O(t)$ cycles of the form:     
\begin{align*}
C^i = P_{2i}(U,W) \xrightarrow[]{R^{i}} P_{2i}(U,V) \rightarrow P_{2i+1}(U,V) \rightarrow P_{2i+2}(U,V) \xrightarrow[]{R^{i+1}} P_{2i+2}(U,W) 
&\rightarrow P_{2i+1}(U,W)\\
&\rightarrow P_{2i}(U,W),
\end{align*}
for $i \in [t-1]$ and 
\[C^{t} = W' = (W_{(1)},\ldots,W_{(t)}) \xrightarrow[]{R^{t}} V=(V_{(1)},\ldots,V_{(t)}) \rightarrow W \rightarrow V.\] 

We will tile each $C^i$ by 8-cycles and triangles, starting with $i \in [t-1]$. For all $j \in [1,i+1]$, let $R^i_j$ denote the $(2j-1)^{th}$ vertex on the path $R^i$, that is, 
\[R^i_j = (V_{<j},W_{(j)},\ldots, W_{(i)},U_{> i}).\] 
It is easy to see that for all $j \in [1,i+1]$, the vertices $R^i_{j}$ and $R^{i+1}_j$ are connected via a path of length two: 
\[R^i_j \rightarrow (V_{<j},W_{(j)},\ldots, W_{(i)},U_{\geq i+2},\ldots, U_{(t)}) \rightarrow R^{i+1}_j.\] 
As for the last part of the cycle, the vertex $P_{2i+1}(U,V)$, occurs as the middle vertex in all the following length 2 paths: 1) $R^i_{i+1} = P_{2i}(U,V) \rightarrow R^{i+1}_{i+1}$, 2)$P_{2i}(U,V) \rightarrow P_{2i+2}(U,V)$ and $R^{i+1}_{i+1} \rightarrow R^{i+1}_{i+2}$. Thus using equality edges between these vertices we can check that the whole cycle has been broken into $O(t)$ 8-cycles and $O(1)$ triangles (see figure below). 
\begin{center}
\includegraphics[scale=0.20]{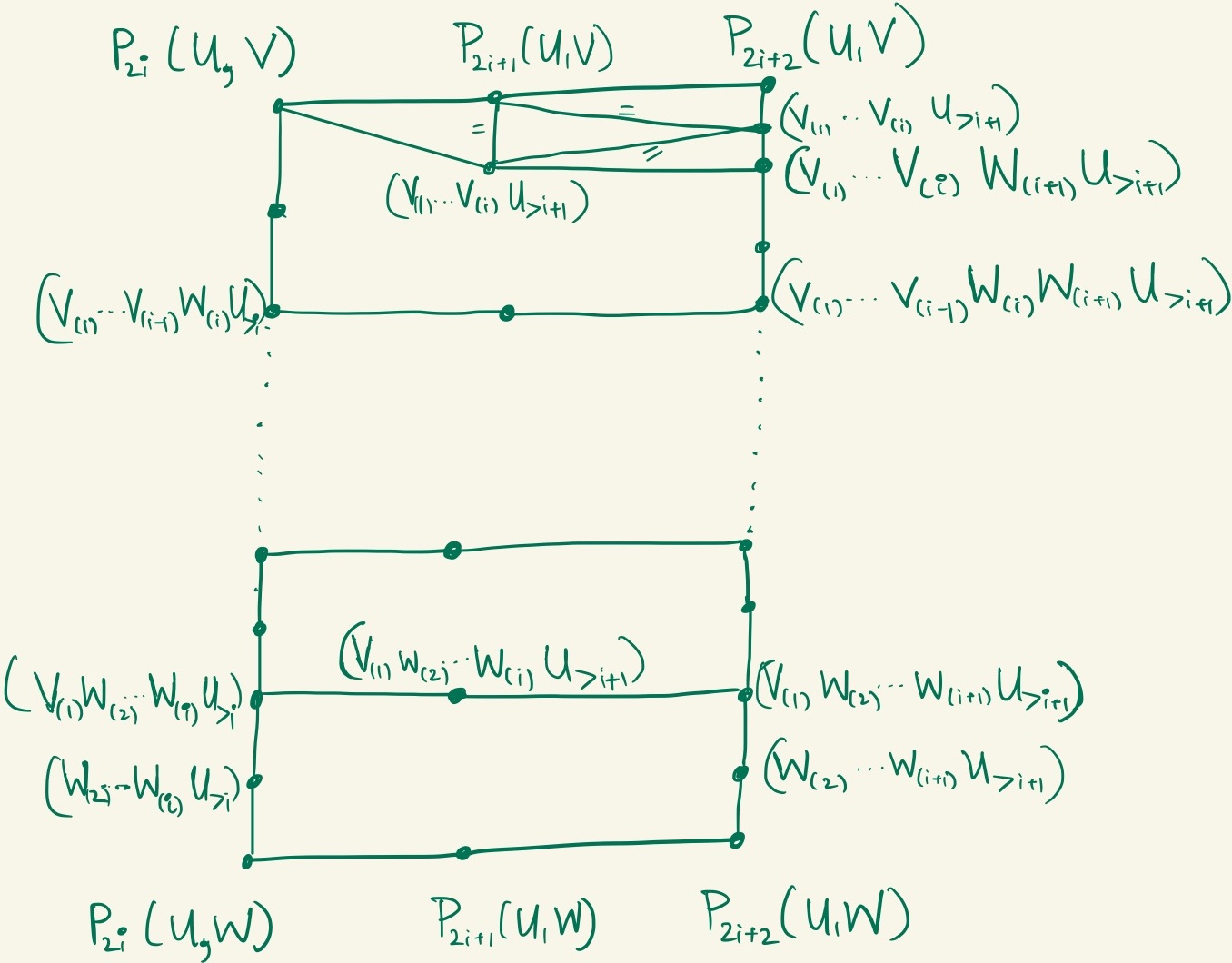}
\end{center}

Let us now tile the cycle $C^{t}$. We have that the first three vertices of $R^t$ are $W'$, $(W_{\geq 2})$ and $(V_{(1)},W_{\geq 2})$ which are all connected to $W$. Additionally including $(V_{(1)},W_{\geq 2})$, every other odd vertex on $R^t$, i.e. for all $j \in [2,t+1]$, $R^{t}_j = (V_{<j},W_{(j)},\ldots, W_{(t)})$
is contained (as a subspace) in $V$ since both $W$ and $V'$ are in $V$. Since the dimensions match it means that these are all equal to $V$ and therefore we can put in equality edges between them. Similarly every intermediate vertex ($k_2$-dimensional) on $R^t$ is contained in $V$. This means that $C^t$ is broken into $O(t)$ triangles. So overall $C(U,V,W)$ has been broken into $O(t^2)$ 8-cycles and $O(t)$ triangles.

\paragraph{Tiling by 8-cycles when $V \in G_3$ and $W \in G_1$:} Let $W' = (W_{(1)},\ldots,W_{(t)})$, where $W = (W_{>t-k_1/k+1}) \subset V$. Now we create the same paths $R^i$ of length $2i+1$ between $P_{2i}(U,W) \rightarrow P_{2i}(U,V)$ for all $i \in [t]$, and each vertex on these paths is in $S_3$ or $S_2$ because $(V,W) \in \good_{31}(\cB)$. This breaks the cycle into the cycles $C^i$, for $i \in [t]$. The tiling of $C^i, i \in [t-1]$ proceeds identical to the first case, therefore let us discuss the tiling of the last cycle: 
\[C^{t} = W' = (W_{(1)},\ldots,W_{(t)}) \xrightarrow[]{R^{t}} V = (V_{(1)},\ldots,V_{(t)}) \rightarrow W \rightarrow W'.\] 
As in the case above, we have that the first $2(t-k_1/k)+1$ vertices of $R^t$ are all connected to $W$, since they are of the form $(V_{<j},U_{(j)},\ldots,U_{(t-k_1/k)},W)$. 
After that all subsequent odd vertices on $R^t$, i.e. for all $j \in [t-k_1/k+1,t]$, $R^{t}_j = (V_{<j},W_{(j)},\ldots, W_{(t)})$ are connected to $V$. 
This means that $C^t$ is broken into $O(t)$ triangles. So overall $C(U,V,W)$ has been broken into $O(t^2)$ 8-cycles and $O(t)$ triangles.

\paragraph{$(V,W)$ is an edge between $G_2$ and $G_1$:} We can show that $C(U,V,W)$ in this case too can be broken into $O(t^2)$ 8-cycles and $O(t)$ triangles. The proof for the cycles $C^{i}$ for $i \in [t-1]$ is the same, so we only discuss the tiling of $C^t$. We have that,
\[C^{t} = W' = (W_{(1)},\ldots,W_{(t)}) \xrightarrow[]{R^{t}} V' = (V_{(1)},\ldots,V_{(t)}) \rightarrow V \rightarrow W \rightarrow W',\] 
where $W = (W_{>t-k_1/k}), V = (V_{>t-k_2/k})$. Again we have that the first $2(t-k_1/k)+1$ vertices of $R^t$ are all connected to $W$, since they are of the form $(V_{<j},U_{(j)},\ldots,U_{(t-k_1/k)},W)$. Then the vertex $(V_{\leq t-k_1/k},W)$ is additionally connected to $V',V$. After that all subsequent vertices on $R^t$, i.e. for all $j \in [t-k_1/k+1,t-1]$, $R^{t}_j = (V_{\leq j},W_{(j)},\ldots, W_{(t)})$ and the intermediate vertices in $G_1$ are all connected to $V'$, by the same reasoning as the above two cases. 
This means that $C^t$ is broken into $O(t)$ triangles. 

\paragraph{Triangulating the 8-cycles:} Having shown that the cycle $C(U,V,W)$ can always be tiled by at most $O(t^2)$ 8-cycles and $O(t)$ triangles, it suffices to show that each one of the resulting 8-cycles can be triangulated individually. Towards this end, we first notice that the $8$-cycles we formed consist of edges between subspaces of dimension $k_3$ and $k_2$. Thus, to triangulate them we will have
to use auxiliary vertices of dimension $k_1$.

Note that each $8$-cycle in the above tiling is of the following form for some $1 \leq j < i \in [t]$:
\begin{align*}
(W_{(j)},U_{(i)},X) \rightarrow (W_{(j)},X) \rightarrow (W_{(j)},W_{(i)},X) \rightarrow (W_{(i)},X)\notag \\
\rightarrow (V_{(j)},W_{(i)},X) \rightarrow (V_{(j)},X) \rightarrow (V_{(j)},U_{(i)},X) \rightarrow (U_{(i)},X) \rightarrow (W_{(j)},U_{(i)},X),
\end{align*}
with $X = (V_{\leq j-1},W_{(j+1)},\ldots,W_{(i-1)},U_{(> i)})$. We know that $\dim(X) = k_2-k = 2k_2-k_3$ and since $k_2-k_1 \geq k_3-k_2$, $\dim(X) \geq k_1$. This implies that there is some $k_1$-dimensional subspace $Y \subseteq X$ that is contained within all the vertices of this 8-cycle. Putting this vertex in the middle of the 8-cycle completes the triangulation of this cycle. 
\paragraph{Size of Triangulation:} In total we used  $O((k_3/k_3-k_2)^2)$ 8-cycles, each of which used $O(1)$ triangles thus completing the proof.
\end{proof}

\subsubsection{The Case that $k_3-k_2 > k_2-k_1$}
We now move on to discuss the case that $k_2-k_1\leq k_3-k_2$ and let $k=k_2-k_1, t=k_2/k$ henceforth. We will need to use a different set of paths, and towards this end we fix an arbitrary vertex $U \in G_2$ and an arbitrary basis for it: $u_1,\ldots, u_{k_2}$. Let $U_{(1)}$ denote the set of first $k$ vectors, $\{u_1,\ldots, u_{k}\}$, $U_{(2)}$ the second set of $k$ vectors and so on up to $U_{(t)}$. We now fix a set of paths from $U$ to $V$ for most $V \in G$. We will choose a block decomposition $\cB$ for each subspace $V \in G$, 
namely:
\begin{enumerate}
\item For $V \in G_2$, $\cB(V) = (V_{(1)},\ldots,V_{(t)}) = V$. 
\item For every $V \in G_1$, let $V' = (V_{(1)},\ldots,V_{(t)})$ for $V_{(1)}= U_{(1)}$ and $V = (V_{\geq 2})$.
\item For $V\in G_3$, the decomposition $\cB$ first associates with $V$ a $k_2$-dimensional subspace $V'\subseteq V$ along with its block decomposition $V' = (V_{(1)},\ldots,V_{(t)})$.
\end{enumerate}
As before, we will need the block decomposition $\cB$ 
to satisfy several genericness properties, that we explain next.
\paragraph{Set of Good Vertices and Edges with respect to $\cB$ when $k_3-k_2 > k_2-k_1$:}
\begin{enumerate}
\item Let $\good_i(\cB) \subseteq G_i$ be the set of $V \in G_i$ that satisfy for all $i \in [t], \dim(V_{(1)},\ldots,V_{(i)},U_{>i}) = k_2$. 
\item For all $b<a \in [3]$ let $\good_{ab}(\cB) \subseteq E(G_a,G_b)$ be the set of edges $(V,W)$ with $W \subset_b V$ that satisfy,
\begin{enumerate}
\item Both $V,W \in \bigcup_{a \in [3]} \good_a(\cB)$.
\item For all $j < i \in [t], \dim(V_{\leq j},W_{(j+1)},\ldots,W_{(i)},U_{>i}) = k_2.$
\end{enumerate}
\end{enumerate}

Analogously to Lemma~\ref{lem:random-block-decomp} one can show that there is a $\cB$ for which $1-\poly(t)/q$-fraction of vertices and edges are good, and we fix such 
$\cB$ henceforth.

\subsubsection{Constructions the Paths when $k_3-k_2 > k_2-k_1$} 
For a vertex $V \in \bigcup_{i \in [3]}\good_i(\cB)$, we first construct a path $P(U,V')$, where $V' = (V_{(1)},\ldots,V_{(t)})$, by flipping a block of $U$ to a block of $V'$ one at a time. One can check that this path alternates between $k_2$ and $k_1$-dimensional vertices since $V$ is good. If $V \in G_2$, then $V' = V$ and we are done, else in the last step we go from $V' \rightarrow V$.

\subsubsection{Triangulating Cycles when $k_3-k_2 > k_2-k_1$}
Having formed the paths $P(U,V)$, we now note that
if $(V,W)$ is an edge, then $P(U,V), (V,W), P(U,W)$ 
form a cycle. We call this cycle $C(U,V,W)$ as before,
and show that it can be triangulated using a small
number of triangles.
\begin{lemma}\label{lem:triangulation-gr2}
Suppose that $k_2-k_1 \leq k_3-k_2$, and let $(V,W) \in \bigcup_{b<a \in [3]} \good_{ab}(\cB)$ be an edge. 
Then cycle $C(U,V,W) = U \xrightarrow[]{P(U, V)} V \rightarrow W \xrightarrow[]{P(W, U)} U$ has a triangulation of size $O((\frac{k_2}{k_2-k_1})^2)$. 
\end{lemma}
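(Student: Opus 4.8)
The plan is to follow the same template as Lemma~\ref{lem:triangulation-gr1}, with the roles of ``going down'' and ``going up'' interchanged: now $U\in G_2$ is fixed together with the good block decomposition $\cB$ chosen above, the canonical paths constructed in the previous subsection live among subspaces of dimensions $k_2$ and $k_1$, and the slack $k_3-k_2\ge k_2-k_1=k$ plays the role that $k_2-k_1\ge k_3-k_2$ played before, this time letting us triangulate short cycles with auxiliary vertices of dimension $k_3$. Given a good edge $(V,W)$, I split into three cases according to $(\dim V,\dim W)\in\{(k_3,k_2),(k_3,k_1),(k_2,k_1)\}$, writing $V'$ (resp.\ $W'$) for the $k_2$-dimensional subspace that $\cB$ associates to $V$ (resp.\ $W$), so that $P(U,V)$ equals $P(U,V')$ followed by the single step $V'\to V$ exactly when $V\in G_3$, and similarly for $W$.

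For each $i\in[t]$ I build, exactly as in Lemma~\ref{lem:triangulation-gr1}, an interpolating path $R^i$ between the $2i$-th vertices $P_{2i}(U,V')=(V_{\le i},U_{>i})$ and $P_{2i}(U,W')=(W_{\le i},U_{>i})$, flipping one block of $W'$ to a block of $V'$ at a time and passing through $k_1$-dimensional vertices; goodness of $(V,W)$ guarantees that the intermediate $k_2$-dimensional vertices on $R^i$ have full dimension. Creating all the $R^i$'s breaks $C(U,V,W)$ into $O(t)$ layer cycles; using length-two connectors between $R^i_j$ and $R^{i+1}_j$ (through the $k_1$-dimensional vertex obtained by deleting the block $U_{(i+1)}$), each layer cycle except the last one further breaks into $O(t)$ octagons whose vertices have dimensions $k_2$ and $k_1$, plus $O(1)$ degenerate triangles handled by equality edges, giving $O(t^2)$ octagons in all. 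The last layer cycle $C^t$, which contains the steps $V'\to V$, $V\to W$ and (when $W\in G_1$) $W'\to W$, is triangulated directly: every vertex of $R^t$ other than its endpoint $W'$ is contained in $V$, so $C^t$ minus $O(1)$ degenerate triangles near $W'$ is triangulated by placing $V$ at its center, at a cost of $O(t)$ triangles.

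The crux is triangulating a single octagon, and here the argument genuinely departs from Lemma~\ref{lem:triangulation-gr1}. If $Y$ denotes the common intersection of the eight vertices of the octagon then $\dim Y=k_2-2k<k_1$, so no $k_1$-dimensional vertex lies inside all eight of them; and the span of the eight vertices can have dimension $k_2+2k$, which may exceed $k_3$, so the octagon need not embed into one $k_3$-dimensional vertex either. Instead I split the octagon along the two block positions it varies. Let $M_j,M_{j+1}$ be its two $k_1$-dimensional connector midpoints; choose a $k_2$-dimensional vertex $A\supseteq Y+V_{(j)}+W_{(j)}$ (possible since this span has dimension at most $k_2$), and then $k_3$-dimensional vertices $Z_1\supseteq A+U_{(i+1)}$ and $Z_2\supseteq A+W_{(i+1)}$ (possible since $A+U_{(i+1)}$ and $A+W_{(i+1)}$ have dimension at most $k_2+k\le k_3$, using $k_3-k_2\ge k_2-k_1$). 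One checks that $Z_1$ contains $M_j$, $M_{j+1}$ and the boundary vertices of the octagon involving $U_{(i+1)}$, that $Z_2$ contains $M_j$, $M_{j+1}$ and the boundary vertices involving $W_{(i+1)}$, and that $A\subseteq Z_1\cap Z_2$ with $M_j,M_{j+1}\subseteq A$. Fanning $Z_1$ over the four consecutive boundary edges it sees (from $M_j$ to $M_{j+1}$ through the layer-$i$ vertices), fanning $Z_2$ over the other four, and placing $A$ at the center of the remaining quadrilateral with vertices $Z_1,M_{j+1},Z_2,M_j$ triangulates the octagon into $O(1)$ triangles, each having exactly one vertex of each of the dimensions $k_1,k_2,k_3$ and hence a genuine triangle of $\Gr_d(k_1,k_2,k_3)$ (choosing the various auxiliary vertices generically so that all of them are distinct).

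Summing up, $C(U,V,W)$ is triangulated by $O(t^2)$ octagon triangulations of constant size plus $O(t)$ triangles from the last layer cycle, for a total of $O(t^2)=O((k_2/(k_2-k_1))^2)$. I expect the octagon triangulation to be the main obstacle: one must check carefully that the ``two $k_3$-dimensional fans plus a $k_2$-dimensional bridge'' construction really yields a valid triangulation in the sense of Definition~\ref{def:triangulation} — in particular that every resulting triangle is a triangle of the graph and that the equality triangles are consistent — and, as in Lemma~\ref{lem:triangulation-gr1}, one has to grind through the last-layer-cycle case analysis separately in each of the three cases $(\dim V,\dim W)$, handling the degenerate configurations that arise near $W'$ when $W\in G_1$.
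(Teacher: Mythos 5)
Your proposal is correct and matches the paper's proof essentially step for step: the same octagon tiling via the paths $R^i$ with $k_2$/$k_1$-dimensional vertices and $k_1$-dimensional connectors, and the same three auxiliary vertices inside each octagon (your $A,Z_1,Z_2$ are the paper's $X_0=(V_{(j)},W_{(j)},X,Y_0)$, $X_1=(X_0,U_{(i)},Y_1)$, $X_2=(X_0,W_{(i)},Y_2)$), justified by exactly the same dimension count $k_3\ge k_2+k$. You also correctly identify the two obstructions that rule out the single-center-vertex trick of Lemma~\ref{lem:triangulation-gr1}, which is the reason the ``two $k_3$-fans plus a $k_2$-bridge'' shape is needed.
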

\begin{proof}
We only give a proof sketch here since the details are exactly the same as Lemma~\ref{lem:triangulation-gr1}. To create the paths $R^i$ we take the obvious path between $P_{2i}(U,W)$ and $P_{2i}(U,V)$ by flipping one block at a time, alternating between $k_2$ and $k_1$-dimensional vertices (instead of $k_2,k_3$-dimensional ones). Then putting in the length two paths between $R^i_j$ and $R^{i+1}_j$ (that are two $k_2$-dimensional vertices, that can be connected using one $k_1$-dimensional vertex) we break the cycle into $O(t^2)$ $8$-cycles and $O(t)$ triangles. 

Note that each $8$-cycle in the above tiling is of the following form for some $1 \leq j < i \in [t]$:
\begin{align*}
&(W_{(j)},U_{(i)},X) \rightarrow (W_{(j)},X) \rightarrow (W_{(j)},W_{(i)},X) \rightarrow (W_{(i)},X)\notag \\
&\rightarrow (V_{(j)},W_{(i)},X) \rightarrow (V_{(j)},X) \rightarrow (V_{(j)},U_{(i)},X) \rightarrow (U_{(i)},X)\rightarrow (W_{(j)},U_{(i)},X),
\end{align*}
with $X = (V_{\leq j-1},W_{(j+1)},\ldots,W_{(i-1)},U_{(> i)})$, with $\dim(X)=k_1-k=2k_1-k_2$. Since $k_3-k_2 \geq k_2 -k_1$, we get that $k_3 \geq k_2+k$. Therefore to tile this cycle we can use the vertices: $X_0 = (V_{(j)},W_{(j)},X,Y_0)$, where $Y_0$ is chosen so that $\dim(X_0) = k_2$, $X_1 = (X_0,U_{(i)},Y_1)$, and $X_2 = (X_0,W_{(i)},Y_2)$ where $Y_1,Y_2$ are chosen so that $\dim(X_1) = \dim(X_2) = k_3$. This is possible since $\dim(X_0,W_{(i)}),\dim(X_0,U_{(i)})\leq k_2+k\leq k_3$.
The figure below shows that after adding these in the cycle breaks into triangles. 

\begin{center}
\includegraphics[scale=0.20]{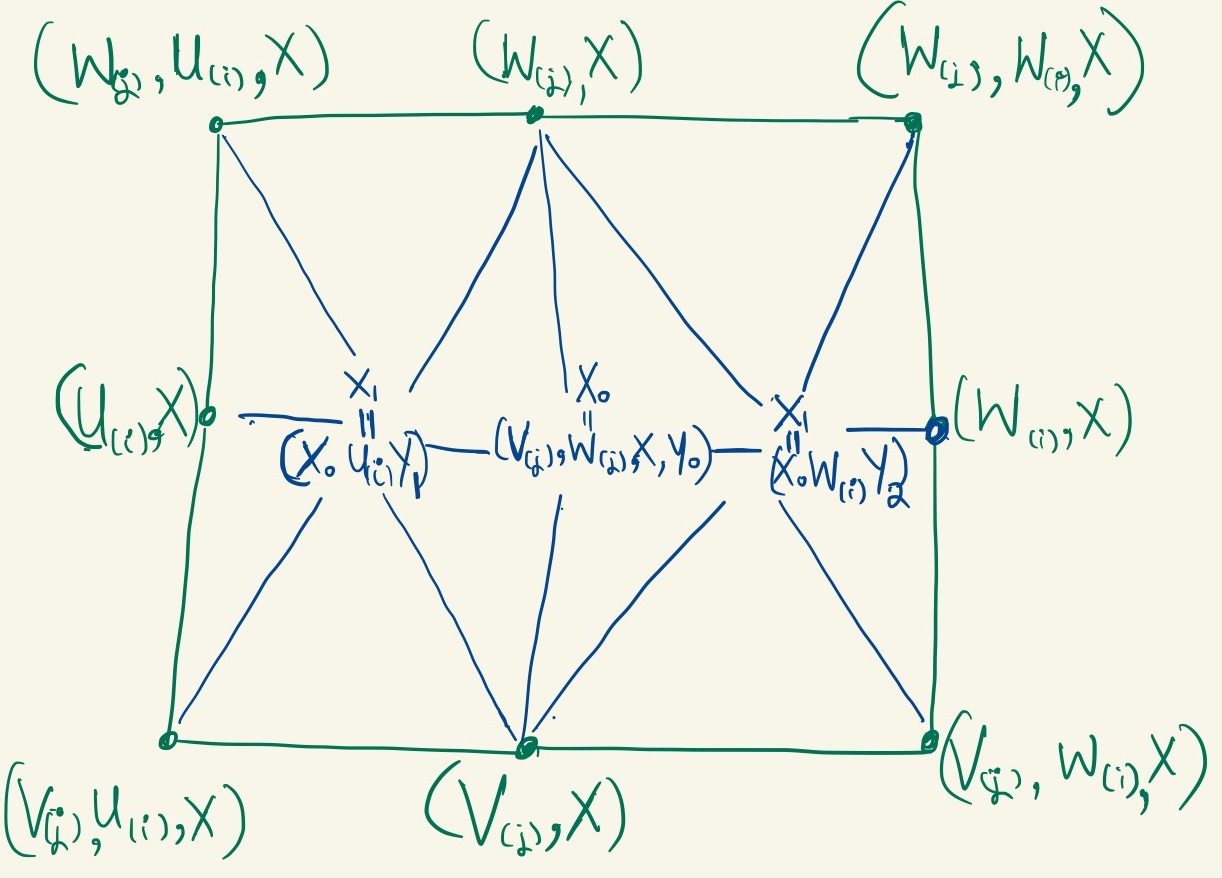}
\end{center}

As for the size of the triangulation: we had $O(t^2)$ 8-cycles, each tiled by $O(1)$ triangles, which gives an overall triangulation size of $O((k_2/k_2-k_1)^2)$ as required.
\end{proof}

\subsubsection{Proof of Lemma~\ref{lem:base-grassmann}}
Given the paths $P(U,V)$ and triangulations $T(U,V,W)$ for every good edge $(V,W)$, it is easy to complete the proof of Lemma~\ref{lem:base-grassmann} using the fact that $\GL_d(\F_q)$ acts transitively on the triangles of $\Gr(k_1,k_2,k_3)$. 

For each invertible linear transformation $L \in \GL_d(\F_q)$, and a subspace $V$ let $L(V)$ denote the subspace $\text{span}(Lv \mid v \in V)$ and let $L^{-1}(V)$ denote the subspace $W$ such that $L(W) = V$. For every $V \in \bigcup_{i\in [3]}\good_i(\cB)$ let $P_L(L(U), L(V))$ denote the path from $L(U) \rightarrow L(V)$ where at the $i^{th}$-step we have the vertex $L(P_i(U,V))$. It is easy to see that this is a valid path from $L(U)$ to $L(V)$. For a triangle $\Delta$ let $L(\Delta)$ denote the triangle whose vertices are $L(U_i), \forall U_i \in \Delta$. In fact for every good edge $(V,W)$ we can let $T_L(L(U),L(V),L(W))$ be the triangulation of the cycle $L(U) \xrightarrow[]{P_L(L(U),L(V))} L(V) \rightarrow L(W) \xrightarrow[]{P_L(L(W),L(U))} L(U)$ where a triangle in this triangulation is given by $L(\Delta)$ for $\Delta \in T(U,V,W)$. Again it is easy to see that this is a valid triangulation of the cycle for every edge $(V,W) \in \bigcup_{b<a}\good_{ab}(\cB)$ with the same size as $T(U,V,W)$.

We have the following randomized algorithm to get a highly satisfying UG solution to an arbitrary UG instance $\Phi$. 
\begin{mdframed}
\begin{algo}[$\Phi = (\Gr(k_1,k_2,k_3),\Pi)$]\mbox{}\label{algo:prop}\\
Input: UG instance $\Phi$ on $\Gr(k_1,k_2,k_3)$.\\
Output: A function $f: V(\Gr(k_1,k_2,k_3)) \rightarrow \S_m$.
\begin{enumerate}
\item Choose a random linear transformation $L \in \GL_d(F_q)$ and set $f(L(U)) = \text{id}$.
\item For each subspace $V \in \cup_{i \in [3]}\good_i(\cB)$, assign $f_L(L(V))$ the label obtained by propagating the label of $L(U)$ to $L(V)$ via the path $P_L(L(U),L(V))$, chosen appropriately according to whether $k_3-k_2$ or $k_2-k_1$ is larger.
\item For every subspace $V \notin \cup_{i \in [3]}\good_i(\cB)$ choose an arbitrary label for $L(V)$.
\end{enumerate}
\end{algo}
\end{mdframed}

We now complete the proof of Lemma~\ref{lem:base-grassmann} via the following lemma:
\begin{lemma}\label{lem:expected-sol-grassmann}
Let $\Phi$ be any UG instance over $\S_m$ with $\incons(\Phi) =\delta$. Then in expectation over $L \sim \GL_d(\F_q)$, the algorithm violates at most $O(K^2)\delta+\poly(K)/q$-fraction of edges, where 
\[
K = \max\left(\left\lceil\frac{k_2}{k_3-k_2}\right\rceil, \left\lceil\frac{k_2}{k_2-k_1}\right\rceil\right).
\]
\end{lemma}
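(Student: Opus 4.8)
The plan is to carry out the standard cones-method analysis for the propagation algorithm, Algorithm~\ref{algo:prop}. Fix the random $L\in\GL_d(\F_q)$ for a moment, so that $f_L$ is obtained by propagating the identity label from $L(U)$ along the transported canonical paths. The basic observation is: if an edge $e=(P,Q)$ of $\Gr(k_1,k_2,k_3)$ lies in $L(\good_{ab}(\cB))$, then $f_L(P)$ and $f_L(Q)$ are the products of the constraints along $P_L(L(U),P)$ and $P_L(L(U),Q)$ respectively, so $e$ is satisfied by $f_L$ if and only if the product of constraints around the closed walk $L(U)\xrightarrow{P_L}P\to Q\xrightarrow{P_L}L(U)$ is the identity. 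By the standard cancellation of internal edges in a triangulation (Definition~\ref{def:triangulation}), this product equals the appropriately-ordered product of the constraint-products of the triangles $L(\Delta)$ for $\Delta\in T(U,L^{-1}P,L^{-1}Q)$, hence is the identity whenever every such triangle is consistent. Consequently, for every $L$ and every edge $e$, if $e$ is violated by $f_L$ then either $e\notin L(\good_{ab}(\cB))$, or at least one of the at most $M=O(K^2)$ triangles of the transported triangulation is inconsistent (the size bound being Lemma~\ref{lem:triangulation-gr1} or Lemma~\ref{lem:triangulation-gr2} together with the elementary inequalities $k_3/(k_3-k_2)\le K+1$, resp.\ $k_2/(k_2-k_1)\le K$).

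I would then bound the two contributions separately after taking expectation over $L$ and a uniformly random edge $e$. For the first, the edge distribution of $\Gr(k_1,k_2,k_3)$ is $\GL_d(\F_q)$-invariant, so for every fixed $L$ the fraction of edges outside $L(\good_{ab}(\cB))$ equals the fraction outside $\good_{ab}(\cB)$, which is at most $\poly(t)/q=\poly(K)/q$ by Lemma~\ref{lem:random-block-decomp} (or its analogue when $k_3-k_2>k_2-k_1$), using $t=O(K)$. For the second, fix a triangle-slot $j\le M$; it suffices to show that over the joint choice of $L$ and the random edge $e=(P,Q)$, the probability that $e$ is good and the $j$-th transported triangle $L(\Delta_j(L^{-1}P,L^{-1}Q))$ is inconsistent is at most $\delta$. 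Changing variables to $(\bar V,\bar W):=(L^{-1}P,L^{-1}Q)$ and again using $\GL_d$-invariance of the edge distribution, $(L,(\bar V,\bar W))$ is distributed as $L$ uniform and $(\bar V,\bar W)$ an independent uniform edge, so the quantity is at most $\E_{(\bar V,\bar W)}\big[\Pr_L[L(\Delta_j(\bar V,\bar W))\text{ is inconsistent}]\big]$. For each fixed good $(\bar V,\bar W)$ the triangle $\Delta_j(\bar V,\bar W)$ is either degenerate, i.e.\ carries an equality edge and is therefore consistent for free, or is an honest flag of type $(k_1,k_2,k_3)$; in the latter case $\GL_d(\F_q)$ acts transitively on such flags and the triangle distribution of $\Gr(k_1,k_2,k_3)$ is the induced $\GL_d$-invariant uniform distribution, so $\Pr_L[L(\Delta_j(\bar V,\bar W))\text{ is inconsistent}]=\incons(\Phi)=\delta$. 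Summing over the $M=O(K^2)$ slots and adding the first contribution gives $\E_{L}[\viol_\Phi(f_L)]\le O(K^2)\delta+\poly(K)/q$, as claimed; the assumption that $k_1,k_2,k_3$ are multiples of $k$ was used only for notational convenience and its removal costs only constants.

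The hard part is making the cones cancellation step fully rigorous over the non-Abelian group $\S_m$: one has to fix orientations consistently and track the order of the (non-commuting) triangle-products, and, crucially, verify that every triangle arising in the triangulations of Lemmas~\ref{lem:triangulation-gr1} and~\ref{lem:triangulation-gr2} is either one of the ``free'' triangles carrying an equality edge --- these come from the repeated occurrences of $U$ in the canonical path when $V\in G_1\cup G_2$, and from the auxiliary vertices inserted to triangulate the $8$-cycles --- or is a genuine flag of type $(k_1,k_2,k_3)$ to which the transitivity argument applies verbatim, including the auxiliary $k_1$-dimensional vertices used in the $8$-cycle triangulations. Everything else is routine bookkeeping.
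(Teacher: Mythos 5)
Your proposal is correct and follows essentially the same cones-method approach as the paper's proof: bound the bad-edge contribution by $\poly(K)/q$ using the $\GL_d$-invariance of the edge distribution, and bound the good-edge contribution by $O(K^2)\delta$ by observing that each transported triangle $L(\Delta)$ is either degenerate (hence automatically consistent) or uniformly distributed over the triangles of $\Gr(k_1,k_2,k_3)$ under $L\sim\GL_d(\F_q)$. The only cosmetic difference is bookkeeping: the paper pulls out $\max_{(V,W)}|T(U,V,W)|$ and then averages over $\Delta\in T(U,V,W)$, whereas you fix a slot index $j\le M=O(K^2)$ and sum over slots; both give the same bound.
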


\begin{remark}\label{rem:cones-grassmann}
\dor{The proof of Lemma~\ref{lem:expected-sol-grassmann} below uses the triangulations we constructed to get a high-valued solution to any locally consistent UG instance. This is a standard technique known as the cones method that has been used many times in the context of Abelian coboundary expansion. We mention that~\cite{DiksteinD23} were the first to use this method for getting non-Abelian coboundary expansion and this was later also used in~\cite{dikstein2023swap} (see Lemma 4.3 therein). They did it in the case \emph{every} cycle $C(U,V,W)$ has a small triangulation, namely the case without additive error. 
In the case of the current 
paper, the additive error is due to the fact that we constructed a small triangulation for \emph{most} (but not all) cycles $C(U,V,W)$, and this requires very minor modifications (the proof is included for completeness).}
\end{remark}

\begin{proof}
Suppose the propagation algorithm chooses a linear transformation $L$. Let $f_L: V(\Gr(k_1,k_2,k_3)) \rightarrow \S_m$ denote the assignment outputted by Algorithm~\ref{algo:prop} in this case. Let $E$ denote $E(\Gr(k_1,k_2,k_3))$ and let $\good(E) = \bigcup_{b<a}\good_{ab}(\cB)$. For every $(V,W) \in \good(E)$, the edge $(L(V),L(W))$ is satisfied by $f_L$ if the cycle $L(U) \xrightarrow[]{P_L(L(U),L(V))} L(V) \rightarrow L(W) \xrightarrow[]{P_L(L(W),L(U))} L(U)$ is consistent. Furthermore this is true if every triangle in $T_L(L(U),L(V),L(W))$ 
is consistent. Recall that this is the set of triangles $L(\Delta), \Delta \in T(U,V,W)$. So we get that,
\begin{align*}
\viol(f_L) &\leq \Pr_{(V,W) \sim E}[(V,W)\notin \good(E)] +\E_{(V,W)\sim \good(E)}[\Ind(\exists \Delta \in T(L(U),L(V),L(W)) \cap \incons(\Phi)]\\
&\leq \frac{\poly(K)}{q} + \max_{(V,W) \in \good(E)}(|T(U,V,W)|)\E_{(V,W) \sim \good(E)}\E_{\Delta \in T(U,V,W)}[\Ind(L(\Delta) \in \incons(\Phi)]\\
&\leq \frac{\poly(K)}{q} + O(K^2)\E_{(V,W) \sim \good(E)}\E_{\Delta \in T(U,V,W))}[\Ind(L(\Delta) \in \incons(\Phi)],
\end{align*}
where in the second inequality we used Lemma~\ref{lem:random-block-decomp} and the last one we used Lemmas~\ref{lem:triangulation-gr1} and~\ref{lem:triangulation-gr2} to bound the size of the triangulation.
Now taking an expectation over $L \sim \GL_d(\F_q)$ we get:
\begin{align*}
\E_L[\viol(f_L)] &\leq \frac{\poly(K)}{q}+O(K^2)\E_{(V,W) \sim \good(E)}\E_{\Delta \in T(U,V,W))}\E_L[\Ind(L(\Delta) \in \incons(\Phi)]\\
&\leq\frac{\poly(K)}{q}+ O(K^2)\delta,
\end{align*}
which completes the proof.   
\end{proof}

\subsection{The Base Case for Spherical Buildings of Type C}\label{sec:symp-base-case}
Let $\mu$ be the uniform distribution over chains of isotropic subspaces $(V_1 \subset \ldots V_d)$ of $\F_q^{2d}$, with $\dim(V_i) = i$. For $k_1,k_2,k_3 \leq d$ let $S_d(k_1,k_2,k_3)$ denote the tripartite graph $T(\{k_1\},\{k_2\},\{k_3\};\mu)$, where we drop the subscript $d$ when clear from context. We will prove that:
\begin{lemma}\label{lem:base-symplectic}
For all $m \in \N$, $0 < k_1,k_2,k_3 \leq d$, let $K = \max(\lceil\frac{k_2}{k_3-k_2}\rceil, \lceil\frac{k_2}{k_2-k_1}\rceil)$. Then $S(k_1,k_2,k_3)$ is an $(O(K^2),\poly(K)/q)$-coboundary expander over $\S_m$.
\end{lemma}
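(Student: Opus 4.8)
## Proof Proposal for Lemma~\ref{lem:base-symplectic}

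The plan is to mirror the strategy used for the type-A case (Lemma~\ref{lem:base-grassmann}), adapting the cones method to the symplectic setting. As before, I would split into two cases according to whether $k_3 - k_2 \leq k_2 - k_1$ or $k_3 - k_2 > k_2 - k_1$, setting $k$ to be the smaller of the two gaps and $t \approx k_2/k$ (or $k_3/k$ in the first case). The overall skeleton is: (i) fix a ``root'' isotropic subspace $U$ of the appropriate dimension together with a basis; (ii) choose a generic block decomposition $\cB$ assigning to most subspaces $V$ in the graph a chain built from blocks of $U$ and blocks of $V$, so that the relevant spans have full expected dimension \emph{and remain isotropic}; (iii) build canonical paths $P(U,V)$ by flipping one block at a time; (iv) triangulate the cycle $C(U,V,W)$ formed by $P(U,V)$, the edge $(V,W)$, and $P(W,U)$ into $O(t^2)$ octagons plus $O(t)$ triangles, then triangulate each octagon with $O(1)$ auxiliary vertices; (v) use transitivity of $\mathrm{Sp}_{2d}(\F_q)$ on chains of isotropic subspaces (stated in the ``Properties of $\sp_{2d}(\F)$'' paragraph) to run the randomized propagation algorithm and conclude via a cones-method argument identical in form to Lemma~\ref{lem:expected-sol-grassmann}, giving the bound $O(K^2)\delta + \poly(K)/q$.

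The new ingredient compared to type A is the isotropy constraint: every subspace appearing as a vertex on a path, and every auxiliary vertex used in a triangulation, must be isotropic with respect to $\omega$. So the genericness lemma analogous to Lemma~\ref{lem:random-block-decomp} must be strengthened: when we pick the blocks of $V$ at random we should pick them as random isotropic extensions (inside the relevant ambient isotropic-or-quotient space), and I would prove that with probability $1 - \poly(t)/q$ the resulting spans are (a) isotropic of the expected dimension and (b) in ``generic position'' relative to the blocks of $U$ in the sense required to run the path constructions. The counting here is the symplectic analogue of Claim~\ref{claim:random-subspace-intersection}: one shows that a random isotropic subspace of bounded dimension inside a fixed isotropic space (or, in the $G_3$ case, inside $W^{\perp_\omega}/W$ for relevant $W$) misses a fixed low-dimensional subspace with probability $1 - O(1/q)$. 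The key structural fact making this work is that \emph{any} subspace of an isotropic subspace is isotropic, so as long as all our blocks and auxiliary vertices are chosen to lie inside a single isotropic subspace (which is how the type-A triangulation's auxiliary vertex $Y \subseteq X$ was found), isotropy is automatic. In the first case ($k_3-k_2 \leq k_2-k_1$) the auxiliary vertex for each octagon was a $k_1$-dimensional $Y \subseteq X$ with $\dim X = 2k_2 - k_3$; here one must additionally check $X$ (hence $Y$) is isotropic, which follows because $X$ is spanned by blocks of $U$, $V$, $W$ all lying inside the isotropic subspace $V$ (when $V \in G_3$ is isotropic). In the second case ($k_3-k_2 > k_2-k_1$) the octagon is tiled using vertices $X_0, X_1, X_2$ of dimensions $k_2, k_3, k_3$ built by \emph{adding} vectors; there one must verify these extensions can be taken isotropic, which is where a small amount of genuinely new work is needed — one uses that the relevant ambient space $X_0^{\perp_\omega}$ has codimension $\dim X_0 = k_2$, so isotropic extensions of the needed dimension exist as long as $k_3 \leq d$, and generically (up to $\poly(t)/q$ error, absorbed into the additive term) the random choices land in good position.

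Concretely I would carry out the steps in this order: first state and prove the symplectic version of Claim~\ref{claim:random-subspace-intersection} (random isotropic subspace avoids a fixed subspace); second, for the case $k_3 - k_2 \le k_2 - k_1$, define $\good_i(\cB)$ and $\good_{ab}(\cB)$ exactly as in Section~\ref{sec:grass-base-case} but with ``$\dim = k_3$'' conditions and the implicit requirement that all listed spans be isotropic, prove the analogue of Lemma~\ref{lem:random-block-decomp}, construct the paths $P(U,V)$, and prove the triangulation lemma analogous to Lemma~\ref{lem:triangulation-gr1} — reusing verbatim the octagon-tiling combinatorics and only re-checking isotropy of each vertex (which holds since everything sits inside the isotropic $V \in G_3$ or inside $U$); third, repeat for $k_3 - k_2 > k_2 - k_1$ following Lemma~\ref{lem:triangulation-gr2}, which is the one place requiring the new ``isotropic extension exists'' argument for $X_0, X_1, X_2$; finally, invoke transitivity of $\mathrm{Sp}_{2d}(\F_q)$ and run Algorithm~\ref{algo:prop} with $\GL_d$ replaced by $\mathrm{Sp}_{2d}$, concluding via the expected-violation computation of Lemma~\ref{lem:expected-sol-grassmann} verbatim. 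The main obstacle I anticipate is verifying that the octagon-triangulating auxiliary vertices in the $k_3 - k_2 > k_2 - k_1$ case can simultaneously be taken isotropic, of the right dimension, and in generic position with respect to all the path vertices; everything else is a faithful transcription of the type-A argument with the observation that subspaces of isotropic subspaces are isotropic doing most of the bookkeeping for free.
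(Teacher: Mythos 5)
There is a genuine gap, and it sits precisely at the spot you describe as doing ``most of the bookkeeping for free.'' The observation that any subspace of an isotropic subspace is isotropic does not make the intermediate vertices isotropic, because those vertices mix blocks of $U$ with blocks of $V$ and $W$, and $U$ and $V$ are different isotropic subspaces that need not lie inside a common isotropic subspace. For instance, a typical path vertex is $(V_{\leq i},U_{>i})$, and when $V$ is generic the sum of blocks of $V$ and blocks of $U$ will not be isotropic unless you build the block decomposition to enforce this. The same problem kills your claim that the auxiliary vertex $Y\subseteq X$ with $X=(V_{\le j-1},W_{(j+1)},\dots,W_{(i-1)},U_{(>i)})$ is automatically isotropic: $X$ contains blocks of $U$, which do not live inside $V$. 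This is exactly what the paper's Claim~\ref{claim:isotropic-path} is for — given isotropic $W_1,W_2$ it produces $W\subseteq W_2$ with $W+W_1$ isotropic — and it requires a genuine argument about codimensions in the symplectic form, not a free consequence of the ``subspace of isotropic is isotropic'' fact. The consequence of your oversight is concrete: mirroring the type-A split and rooting at $U\in G_3$ when $k_3-k_2\le k_2-k_1$ does not work, because in that regime the paths would have to pass through sums of $k$-blocks of $U$ and $V$ inside subspaces of dimension $k_3$, and there is no single isotropic container that guarantees isotropy. The paper resolves this by always rooting at $U\in S_2$, always taking $k=k_2-k_1$, and always running paths that alternate $S_1,S_2$, so the block decomposition is constructed one block at a time via Claim~\ref{claim:isotropic-path}; the $k_3$-dimensional vertices then appear only as auxiliary vertices in the octagon step, which requires yet more symplectic care.

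The second thing missing from the proposal is that the octagon triangulation in the symplectic setting is not a ``faithful transcription'' of the type-A tiling plus an isotropy check. The paper isolates it as Lemma~\ref{lem:8-cycle-symp}, whose hypothesis $C\cap\symp(D)=\{0\}$ is ensured by a separate structural lemma (Claim~\ref{claim:u_i-w_i}), and whose proof requires a two-case analysis depending on whether $A\subseteq\symp(B)$. Moreover, in the regime $k_3-k_2\le k_2-k_1$ — which you flag only as needing ``a small amount of genuinely new work'' — the 8-cycles produced by the path construction cannot be handled directly by Lemma~\ref{lem:8-cycle-symp}, because the blocks are of size $k_2-k_1$ but the available headroom above $k_2$ is only $k_3-k_2$. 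The paper therefore subdivides each 8-cycle into a grid of smaller 8-cycles by decomposing $A,B,C,D$ along special ``symplectic paths'' (Claim~\ref{claim:random-path-symp}, which again builds extensions vector by vector to preserve isotropy), and only after this refinement does Lemma~\ref{lem:8-cycle-symp} apply. So the regime you treat as easy by analogy with type A is, in the symplectic setting, the one requiring the most new machinery, and the regime you flag as hard is actually the one where Lemma~\ref{lem:8-cycle-symp} applies directly. Your outer framework (cones method, transitivity of $\mathrm{Sp}_{2d}(\F_q)$, expected-violation calculation) is correct and does match the paper, but the path construction and octagon triangulation need to be rebuilt from the ground up with the symplectic form in view, not patched.
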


The proof of this statement will follow closely along the lines of Lemma~\ref{lem:base-grassmann}. 
\subsubsection{Auxiliary Claims}
\begin{claim}\label{claim:isotropic-path}
For $t < k < d \in \N$, given isotropic subspaces $W_1, W_2 \subseteq \F_q^{2d}$ with $\dim(W_1) = d_0-k$ and $\dim(W_2) = d_0$, there exists an isotropic subspace $W \subseteq W_2$ satisfying $\dim(W) = k$, and $W \cap (W_1 \cap W_2) = \{0\}$ such that $W+W_1$ is a $d_0$-dimensional isotropic subspace.
\end{claim}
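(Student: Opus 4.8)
The plan is to produce $W$ as a suitably generic $k$-dimensional subspace of the ``allowed region'' $U := W_2 \cap W_1^{\perp}$, where $W_1^{\perp} = \{v \in \F_q^{2d} : \omega(v,w) = 0 \text{ for all } w \in W_1\}$ denotes the symplectic orthogonal complement. First I would record the easy structural facts: $U$ is isotropic, being a subspace of the isotropic space $W_2$; and $W_1 \cap W_2 \subseteq U$, since $W_1 \cap W_2 \subseteq W_1 \subseteq W_1^{\perp}$ (using that $W_1$ is isotropic) and also $W_1 \cap W_2 \subseteq W_2$. Any isotropic $W \subseteq U$ of dimension $k$ with $W \cap W_1 = \{0\}$ will then automatically satisfy all requirements: $W + W_1$ has dimension $k + (d_0 - k) = d_0$, and it is isotropic because $W$ and $W_1$ are isotropic and $W \subseteq W_1^{\perp}$ forces $\omega$ to vanish across the two summands. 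Moreover, since $W \subseteq W_2$, the condition $W \cap W_1 = \{0\}$ is the same as $W \cap (W_1 \cap W_2) = \{0\}$, which is the form stated in the claim.

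So the claim reduces to finding a $k$-dimensional subspace of $U$ meeting $W_1 \cap W_2$ trivially, and this is possible as soon as $\dim U \geq k + \dim(W_1 \cap W_2)$: in that case take any complement $C$ of $W_1 \cap W_2$ inside $U$ and let $W$ be any $k$-dimensional subspace of $C$. The key step — and the only place the isotropy of $W_2$ is really used — is proving this dimension bound. I would introduce the linear map $\phi \colon W_2 \to W_1^{\ast}$ given by $\phi(w) = \omega(w, \cdot)\big|_{W_1}$, whose kernel is exactly $U$. The crucial observation is that every functional in the image of $\phi$ annihilates $W_1 \cap W_2$: for $w \in W_2$ and $x \in W_1 \cap W_2 \subseteq W_2$ we have $\omega(w, x) = 0$ because $W_2$ is isotropic. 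Hence $\mathrm{image}(\phi)$ lies inside the annihilator of $W_1 \cap W_2$ in $W_1^{\ast}$, whose dimension is $\dim W_1 - \dim(W_1 \cap W_2) = (d_0 - k) - \dim(W_1 \cap W_2)$. Rank–nullity then gives $\dim U = \dim W_2 - \dim \mathrm{image}(\phi) \geq d_0 - (d_0 - k) + \dim(W_1 \cap W_2) = k + \dim(W_1 \cap W_2)$, as needed.

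Finally I would simply check the listed properties for the $W$ produced above: $W \subseteq U \subseteq W_2$; $\dim W = k$ by construction; $W$ is isotropic since $U$ is; $W \cap (W_1 \cap W_2) \subseteq C \cap (W_1 \cap W_2) = \{0\}$; and $W + W_1$ is a $d_0$-dimensional isotropic subspace by the remarks of the first paragraph. I do not anticipate a genuine obstacle: the whole argument is elementary symplectic linear algebra and in particular holds over every field $\F_q$ (the hypothesis $t < k$ in the statement plays no role). The one point to keep straight is the distinction between ``trivial intersection with $W_1 \cap W_2$'' and ``trivial intersection with $W_1$'', which is reconciled precisely by the containment $W \subseteq W_2$.
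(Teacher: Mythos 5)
Your proof is correct, and it takes a somewhat different route from the paper's. The paper sets $U = W_1 \cap W_2$, chooses direct-sum complements $W_1 = U \oplus W_1'$ and $W_2 = U \oplus W_2'$, lets $V = (W_1')^{\perp}$, and shows $\dim(V \cap W_2') \geq k$ by the inclusion–exclusion estimate $\dim(V) + \dim(W_2') - 2d \geq k$; it then takes $W$ inside $V \cap W_2'$. You instead work directly with $W_2 \cap W_1^{\perp}$, without decomposing $W_1$ or $W_2$, and obtain the stronger bound $\dim(W_2 \cap W_1^{\perp}) \geq k + \dim(W_1 \cap W_2)$ via rank--nullity on the pairing map $\phi\colon W_2 \to W_1^{\ast}$, using the isotropy of $W_2$ to confine $\mathrm{image}(\phi)$ to the annihilator of $W_1 \cap W_2$; then you pick $W$ inside a complement of $W_1 \cap W_2$ in $W_2 \cap W_1^{\perp}$. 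The two arguments are of the same elementary character and encode the same dimension count, but yours is a bit cleaner in two respects: it avoids choosing complements $W_1'$, $W_2'$ at the outset, and it isolates exactly where the isotropy of $W_2$ is used (the annihilator step). Your observation that the parameter $t$ in the statement plays no role is also correct; it appears to be vestigial in the paper's claim statement.
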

\begin{proof}
Define $U = W_1 \cap W_2$ and denote $d_1 =\dim(U)$. We can write $W_1 = U+W'_1$ and $W_2 = U+W'_2$ with $W_1' \cap W_2' = \{0\}$.
Let $V$ be the subspace of vectors $v$ satisfying $\omega(v, w') = 0$ for all $w' \in W'_1$. Note that $V$ has dimension $2d -\dim(W'_1) = 2d-d_0+k+d_1$. Also 
\[\dim(V \cap W'_2) = \dim(V)+\dim(W'_2)-\dim(V+W'_2) \geq (2d-d_0+k+d_1)+(d_0-d_1)-2d = k.\]
Take $W \subseteq V \cap W'_2$ 
of dimension $k$.
Since $W\subseteq W_2'$ we get that $W$ is isotropic, and as $W\subseteq V$ we get that $W+W'_1$ is isotropic. Next, note that 
$W\cap W_1\cap W_2\subseteq W\subseteq W_2'\cap U = \{0\}$.
Finally, now that
\[
\dim(W + W_1)
=k+(d_0-k) - \dim(W\cap W_1)
=d_0,
\]
as $W\cap W_1\subseteq W\cap W_1\cap W_2'\subseteq W\cap W_1\cap W_2 = \{0\}$.
\end{proof}

For an isotropic subspace $U$, let $\symp(U)$ denote the subspace that is symplectically orthogonal to $U$, i.e. for all $w \in \symp(U), u \in U, \omega(w,u) = 0$. We note the 
standard facts that $\dim(\symp(U))=2d-\dim(U)$ and $\symp(U\oplus V) = \symp(U)\cap \symp(V)$. We also have that $\dim(U \cap \symp(W))=\dim(U)-\dim(W)+\dim(\symp(U)\cap W)$. In particular, when $\dim(U) = \dim(W)$ we get that $\dim(U \cap \symp(W))=\dim(\symp(U)\cap W)$.
\begin{claim}\label{claim:intersection-symp}
The following facts are true for sufficiently large $q$:
\begin{enumerate}
\item Fix a subspace $U$, let $1\leq i\leq d$ and choose an 
isotropic subspace $W$ of dimension $i$ uniformly at random. Then:
\begin{enumerate}
    \item With probability at least $1-\frac{4}{q}$ we have that \[
    \dim(U \cap \symp(W)) = 
    \max(0,\dim(U) - \dim(W)).
    \]
    \item With probability at least 
    $1-\frac{4}{q}$ we have that 
    \[
    \dim(\symp(U) \cap W) = 
    \max(0,\dim(W) - \dim(U)).
    \]
\end{enumerate}
\item Choosing maximal isotropic subspaces $U,W$ of $\mathbb{F}_q^{2d}$ randomly, we have that $U\cap W = \{0\}$ with probability at least $1-\frac{4}{q}$.

\end{enumerate}    
\end{claim}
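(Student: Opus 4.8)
My plan is to reduce all three assertions to a single general–position statement about a random isotropic subspace, and then prove that by a greedy, vector–by–vector construction. First, parts (a) and (b) of item~1 describe the \emph{same} event: writing $x=\dim U-\dim W$, the identity $\dim(U\cap\symp(W))=\dim U-\dim W+\dim(\symp(U)\cap W)$ (valid for arbitrary subspaces) shows that $\dim(U\cap\symp(W))=\max(0,x)$ if and only if $\dim(\symp(U)\cap W)=\max(0,x)-x=\max(0,\dim W-\dim U)$. Next, item~2 is the special case ``$Z=U$, $i=d$'' of the statement below, applied with $U$ a fixed maximal isotropic subspace (so the target value is $\max(0,d+d-2d)=0$) and then averaged over a uniformly random $U$. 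Hence it suffices to prove: \emph{for any fixed subspace $Z\subseteq\F_q^{2d}$ and a uniformly random isotropic subspace $W$ of dimension $i\in[1,d]$, one has $\dim(W\cap Z)=\max(0,i+\dim Z-2d)$ with probability at least $1-4/q$}; item~1(b), hence~1(a), is then the case $Z=\symp(U)$.

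To prove this statement I would generate $W$ one basis vector at a time: set $W_0=\{0\}$ and, given an isotropic $W_t$ of dimension $t<d$, put $W_{t+1}=W_t+\spn(\{w_{t+1}\})$ with $w_{t+1}$ drawn uniformly from $\symp(W_t)\setminus W_t$, a set of size $q^{2d-t}-q^t$. Since $\omega$ is alternating, $\omega(v,v)=0$ for all $v$, so any $w_{t+1}\in\symp(W_t)$ keeps $W_{t+1}$ isotropic; and by transitivity of $\mathrm{Sp}(2d,\F_q)$ on isotropic subspaces of a fixed dimension (equivalently, on their ordered bases) the resulting $W=W_i$ is uniformly distributed. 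Throughout I track the pair $(g_t,e_t)=(\dim(W_t\cap Z),\dim(W_t\cap\symp(Z)))$; both are non-decreasing and can increase by at most $1$ per step. Since passing from $\symp(W_t)$ to $\symp(W_{t+1})$ just intersects with the $\omega$-orthogonal hyperplane of $w_{t+1}$, a short computation using $\dim\symp(\cdot)=2d-\dim(\cdot)$, $\symp(A+B)=\symp(A)\cap\symp(B)$, and inclusion--exclusion gives the closed form $\Pr[g_{t+1}=g_t+1\mid W_t]=\frac{q^{\dim Z-g_t+e_t}-q^t}{q^{2d-t}-q^t}$, and the analogous formula for $e$ by interchanging $Z\leftrightarrow\symp(Z)$ and $g\leftrightarrow e$. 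I would then prove by induction on $t$ that, with probability at least $1-4/q$, the pair $(g_t,e_t)$ remains equal to its ``generic'' value $(\max(0,t+\dim Z-2d),\max(0,t-\dim Z))$ for all $t\le i$; the case $t=i$ is the claim. The key point is that, because $t\le d$, at most one of the two generic coordinates is ever positive, so at each step one coordinate is either \emph{forced} to move — in which case the displayed ratio is exactly $1$ — or should stay fixed, in which case the displayed ratio is $O(q^{\dim Z+t-2d})$ (resp.\ $O(q^{t-\dim Z})$). Summing over $t=0,\dots,i-1$ is a geometric series whose largest relevant term, at the single step where the intersection is about to become nonzero, is $O(1/q)$; so the total failure probability is $\le 4/q$, independent of $d$ and $i$.

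The main obstacle is that the two tracked quantities are genuinely coupled: the probability that $\dim(W_t\cap Z)$ misbehaves at a step depends on the current value of $\dim(W_t\cap\symp(Z))$ — if the latter has already jumped ahead of its generic value, this probability is $\Theta(1)$ rather than $O(1/q)$ — and symmetrically, so one cannot analyze the two coordinates separately and must carry the joint invariant through the induction. The second delicate point is arranging the estimates so that the per-step failure probabilities genuinely form a $q$-geometric series; this is what makes the bound dimension-free (and hence strong enough for the inductive proof of Lemma~\ref{lem:base-symplectic}), and it rests on the precise exponents coming out of the dimension count, together with the fact that along the generic track the probability of ``no progress'' at step $t+1$ is $\Theta(q^{t-\dim Z})$ (resp.\ $\Theta(q^{t+\dim Z-2d})$) — negligible for small $t$ and peaking only at $\Theta(1/q)$.
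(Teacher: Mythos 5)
Your proof is correct and takes a genuinely different route from the paper's. The paper proves item~1(a) directly: it grows $w_1,\ldots,w_i$ one at a time in $\symp(\{w_1,\ldots,w_{j-1}\})$, then introduces ``primed'' dual vectors $w_j'$ (obtained by the coordinate swap that turns $\omega$-orthogonality into ordinary orthogonality), observes that $U\cap\symp(\{w_1,\ldots,w_i\})$ is the set of vectors in $U$ orthogonal to $\spn(w_1',\ldots,w_i')$, union-bounds over all $q^{j-1}$ affine shifts of each $w_j'$, and separately bounds the probability that $w_1,\ldots,w_i$ are linearly dependent; items~1(b) and~2 are then obtained from~1(a) via the dimension identity and by taking $i=d$, respectively. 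You instead reduce all three items to a single general-position lemma (about $\dim(W\cap Z)$ for arbitrary $Z$, recovering~1(b) with $Z=\symp(U)$, then~1(a) by the same dimension identity, and~2 by averaging) and prove it by the same vector-by-vector construction, but you analyze it combinatorially by tracking the coupled pair $(g_t,e_t)=(\dim(W_t\cap Z),\dim(W_t\cap\symp(Z)))$ with an exact per-step transition probability
$\Pr[g_{t+1}=g_t+1\mid W_t]=\frac{q^{\dim Z-g_t+e_t}-q^t}{q^{2d-t}-q^t}$ (which I verified), and show by induction that the pair stays on the generic track, the per-step failure probabilities forming a $q$-geometric series peaking at $O(1/q)$. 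What your approach buys: a single unified lemma; no need for a separate linear-independence argument (you avoid dependence by construction, choosing $w_{t+1}\in\symp(W_t)\setminus W_t$); and the ``dual vector'' trick is replaced by a purely counting argument — at the cost of carrying a two-variable invariant through the induction and the observation (using $t\le d$) that at most one coordinate of the generic pair is ever positive, which is exactly what makes the transitions decouple. The only nit: where you describe the peaking term as the ``probability of no progress'', you really mean the probability of an \emph{unwanted} move (a jump when the generic track says stay constant); the exponents $q^{t+\dim Z-2d}$ and $q^{t-\dim Z}$ you quote are the right ones and match my derivation, so this is a phrasing slip rather than a mathematical gap.
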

\begin{proof}
    We prove each item separately, 
    and we begin with the first item. We start with (a), and assume that $\dim(U)\geq i$;
    otherwise we replace $U$ by 
    a subspace of dimension $i$ that contains it.
    Choose $w_1,\ldots,w_{i}$
    uniformly conditioned on $w_j$
    being symplectically orthogonal 
    to $w_1,\ldots,w_{j-1}$ for all $j$. Define the vectors $w_1',\ldots,w_i'$ by 
    $w_j(s) = -w_j(s+d)$ for 
    $s\leq d$, and $w_j(s) = w_j(s-d)$ for $s>d$. Thus, the space 
    $U\cap \symp(\{w_1,\ldots,w_{i}\})$ corresponds
    to all vectors in $U$ that are
    orthogonal to $w_1',\ldots,w_i'$.
    Note that for each $1\leq j\leq i$ and $\alpha_1,\ldots,\alpha_{j-1}\in\mathbb{F}_q$, the marginal distribution of $w_j'+\sum\limits_{\ell=1}^{j-1}\alpha_{\ell}w_{\ell}'$ 
    is uniform, and hence it is
    orthogonal to $U$ with probability at most $q^{(2d-\dim(U))-2d}=q^{-\dim(U)}$. It follows that no vector in $\spn(\{w_1',\ldots,w_i'\})$ is orthogonal to $U$ with probability at least
    \[
    1-\sum\limits_{j=1}^{i}q^{j-1}q^{-\dim(U)}
    \geq 1-\frac{2}{q^{(i-1)-\dim(U) }}
    \geq 1-\frac{2}{q},
    \]
    where we used the fact that 
    $\dim(U)\geq i$.
    Next, we note that the probability that $w_1,\ldots,w_i$
    is linearly independent is at 
    least 
    \[
    1-\sum\limits_{j=1}^{i}\frac{q^{2j}}{q^{2d}}
    \geq 1-2\frac{q^{2i}}{q^{2d}},
    \]
    in which case $w_1',\ldots,w_i'$
    are linearly independent too.   It follows from the union bound that with probability at least 
    $1-\frac{3}{q}$, the set $w_1,\ldots,w_i'$ is linearly independent and no vector in 
    it is orthogonal to $U$, in which case 
    the subspace in $U$ of vectors 
    orthogonal to $w_1,\ldots,w_i'$
    has dimension $\dim(U)-i$, and (a) follows.

    We move on to proving (b). Note 
    that
    \begin{align*}
    \dim(\symp(U)\cap W)
    &=\dim(\symp(U)) + \dim(W)
    -\dim(\symp(U)\oplus \dim(W))\\
    &=\dim(\symp(U)) + \dim(W)
    -2d + \dim(U\cap \symp(W))\\
    &=i-\dim(U) + \dim(U\cap \symp(W)),
    \end{align*}
    and the result follows from (a).
    
    The second item follows immediately from the first item with $i = d$.
\end{proof}

\begin{claim}\label{claim:random-path-symp}
Let $A,B$ be such that $\dim(A) = \dim(B) = k$. Then there exists a randomized algorithm to choose $A' \subseteq_{k-1} A, b \in B$ such that $A'+\spn(\{b\})$ is an isotropic subspace and for all $d_0$-dimensional $C$ where $(A+C)$ and $(B+C)$ are two $d_0+k$-dimensional isotropic subspaces,
\[\Pr_{A',b}[\dim(A'+\spn(\{b\})+C) = d_0+k] \geq 1-O\left(\frac{1}{q}\right).\]
\end{claim}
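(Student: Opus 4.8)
The plan is to give an explicit randomized rule for $(A',b)$ and then establish the dimension bound by quotienting out $C$. \emph{The algorithm:} first I would sample $b$ uniformly from $B\setminus\{0\}$. Since $\omega$ is non-degenerate and $b\neq 0$, the subspace $H_b:=\symp(\spn(\{b\}))$ is a hyperplane of $\F_q^{2d}$, so $A\cap H_b$ has dimension $k$ or $k-1$, with dimension $k$ occurring exactly when $\omega(a,b)=0$ for all $a\in A$. If $\dim(A\cap H_b)=k-1$ I take $A':=A\cap H_b$; otherwise I take $A'$ to be a uniformly random $(k-1)$-dimensional subspace of $A$. In either case $A'$ is isotropic (a subspace of the isotropic space $A$), $\spn(\{b\})$ is isotropic, and $A'\subseteq H_b$, so $A'+\spn(\{b\})$ is isotropic; thus the isotropy requirement holds with probability $1$, and everything reduces to the dimension estimate.

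\emph{Reduction to the quotient.} Fix a valid $C$ and let $\pi\colon\F_q^{2d}\to\F_q^{2d}/C$ be the quotient map. From $\dim(A+C)=d_0+k=\dim A+\dim C$ we get $A\cap C=\{0\}$, and likewise $B\cap C=\{0\}$, so $\pi$ restricts to isomorphisms on $A$ and on $B$; in particular $\pi(b)\neq 0$ and $\dim\pi(A')=k-1$. Since $(A'+\spn(\{b\})+C)/C=\pi(A')+\spn(\{\pi(b)\})$, the event $\dim(A'+\spn(\{b\})+C)=d_0+k$ is exactly the event $\pi(b)\notin\pi(A')$, so it suffices to prove $\Pr_{A',b}[\pi(b)\in\pi(A')]=O(1/q)$ for this fixed $C$.

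\emph{Case analysis.} If $\pi(A)\neq\pi(B)$ then $\dim(\pi(A)\cap\pi(B))\leq k-1$, so, as $\pi(b)$ is uniform in $\pi(B)\setminus\{0\}$, $\Pr[\pi(b)\in\pi(A)]\leq\frac{q^{k-1}-1}{q^{k}-1}\leq\frac1q$; and whenever $\pi(b)\notin\pi(A)$ we also have $\pi(b)\notin\pi(A')\subseteq\pi(A)$, so the failure probability is at most $1/q$ regardless of how $A'$ was chosen. The remaining case is $\pi(A)=\pi(B)$, i.e. $W:=A+C=B+C$: then $W$ is isotropic and contains both $A$ and $B$, so $\omega$ vanishes between $A$ and $B$, which means every $b\in B$ is symplectically orthogonal to all of $A$; hence the algorithm always takes the second branch, and (via the isomorphism $\pi|_A$) $\pi(A')$ is a uniformly random hyperplane of the $k$-dimensional space $\pi(A)$. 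Since $\pi(b)$ is a fixed nonzero vector of $\pi(A)=\pi(B)$, the probability a random hyperplane of $\pi(A)$ contains it equals $\frac{q^{k-1}-1}{q^{k}-1}\leq\frac1q$. Either way the failure probability is $O(1/q)$; all the counting bounds used are elementary and could alternatively be read off from Claim~\ref{claim:intersection-symp}.

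\emph{Main obstacle.} The one genuinely delicate point is the degenerate configuration $A+C=B+C$, and this is exactly what rules out the ``swap a whole block at once'' move used in the type-$A$ argument of Lemma~\ref{lem:triangulation-gr1}: such a block-swap would produce intermediate subspaces that need not be isotropic, forcing the path to replace one vector at a time, and in this configuration the otherwise canonical choice $A'=A\cap H_b$ has dimension $k$ instead of $k-1$. The resolution is to notice that precisely in this case $b$ is automatically orthogonal to all of $A$, so one has spare randomness—a uniformly random hyperplane of $A$—only an $O(1/q)$ fraction of which is bad for any given $C$. Getting this dichotomy right (rather than merely handling a ``generic'' $C$) is the crux of the argument.
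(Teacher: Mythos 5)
Your proof is correct, and the randomized rule is genuinely different from the one in the paper. The paper's algorithm makes an up-front case split on $A,B$ alone: if $B\cap\symp(A)=\{0\}$ it picks a \emph{deterministic} pair $(A',b)$ via Claim~\ref{claim:isotropic-path}, which then succeeds for every valid $C$ with zero failure probability; otherwise it fixes $b=b_0\in B\cap\symp(A)\setminus\{0\}$ and takes $A'$ uniformly random, and shows that a random $A'$ misses the at-most-one-dimensional obstruction $(C+\spn(b_0))\cap A$ with probability $1-O(1/q)$. You instead always draw $b$ uniformly from $B\setminus\{0\}$ and let $A'$ depend on $b$ (deterministic when $A\not\subseteq H_b$, uniformly random when $A\subseteq H_b$), and you do the failure-probability analysis after quotienting by $C$, splitting on whether $\pi(A)=\pi(B)$ rather than on the relationship between $A$ and $B$. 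That quotient reduction is a clean abstraction the paper does not use, and it cleanly separates the ``degenerate'' case $A+C=B+C$ (in which your algorithm is forced into the random-$A'$ branch and the hyperplane count gives $\leq 1/q$) from the generic case (in which the randomness of $b$ alone suffices). Both routes rest on the same key observation that $b\in\symp(A)$ makes $A'+\spn(\{b\})$ automatically isotropic for any $A'\subseteq A$, so that one has free randomness precisely where the deterministic choice would fail; you pay an $O(1/q)$ failure even in the case where the paper gets zero failure, but that makes no difference to the stated bound and buys you a single algorithm with no case split on $(A,B)$.
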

\begin{proof}
We know that $A \cap C = B \cap C = \{0\}$, and there are two cases:
\begin{enumerate}
    \item If there is no $b \in B$ that is isotropic to all of $A$, then choose $A' \subset_{k-1} A$ arbitrarily, and let $b \in B$ so that $A'+\spn(\{b\})$ is a $k$-dimensional isotropic subspace by Claim~\ref{claim:isotropic-path}. In this case, we know that $B \cap (A+C) = \{0\}$, because every vector in $A+C$ is in $\symp(A)$. Therefore $b \notin A'+C$, which gives that,
\[\dim(A'+\spn(\{b\})+C) = \dim(A'+C)+\dim(\spn(\{b\})) = \dim(A')+\dim(C)+1 = d_0+k,\]
as required.
\item If there is some non-zero $b_0 \in B \cap \symp(A)$, then choose $b = b_0$ and choose a uniformly random $A' \subset_{k-1} A$. Note that $A'+\spn(\{b\})$ is an isotropic subspace. Furthermore $\dim((C+\spn(\{b_0\})) \cap A) \leq 1$ since $\dim(C \cap A) = 0$. If $\dim((C+\spn(\{b_0\})) \cap A) = 0$, then $\dim(C+\spn(\{b_0\})+A') =d_0+k$, as required. Else, $(C+\spn(\{b_0\})) \cap A = \spn(\{v\})$ for some $v\neq 0$; by Claim~\ref{claim:random-subspace-intersection} we know that $A' \cap \spn(\{v\}) = \{0\}$ with probability $1-O(1/q)$, in which case $(C+\spn(\{b_0\})) \cap A' = \emptyset$ and $\dim(C+\spn(\{b_0\})+A) = d_0+k$ as required. 
\qedhere
\end{enumerate}
\end{proof}

\subsubsection{Constructing Paths}
In this section, we set up paths 
between vertices in the spherical 
building that will be convenient 
for us to triangulate. Let $S_i$ denote the vertices of dimension $k_i$, and let $k = k_2-k_1$. 
For simplicity of notation we will assume that $k_2$ is a multiple of $k$ and that $k_1\geq k$ (and so $k_1$ is also a multiple of $k$). The same proof with some slight modifications works when this is not the case, and in Section~\ref{sec:edge_cases} we explain the necessary modifications. 

Fix an arbitrary vertex $U \in S_2$, and pick an arbitrary basis for $U$: $\{u_1,\ldots, u_{k_2}\}$. Let $U_{(1)}$ denote the set of first $k$ vectors, $\{u_1,\ldots, u_{k}\}$, $U_{(2)}$ the second set of vectors and so on up to $U_{(t)}$ for $t = k_2/k$. Note that $k_1 = k(t-1)$. Define:
\begin{enumerate}
\item $\good_1 \subseteq S_1$ to be the set of subspaces $V$ that satisfy the following: $V \cap U = \{0\}$, and for all $i \in [t-1]$, $\dim(V \cap \symp(U_{>i})) = (i-1)k$ 
and $\dim(\symp(V)\cap \symp(U_{>i})) = 2d - k_1 - k_2 + ik$.
\item $\good_2 \subseteq S_2$ to be the set of subspaces $V$ that satisfy the following: $V \cap U = \{0\}$, and for all $i \in [t-1]$, $\dim(V \cap \symp(U_{>i})) = ik$.
\item $\good_3 \subseteq S_3$ to be the set of subspaces $V$ that satisfy the following: $V \cap U = \{0\}$ and $\dim(V \cap \symp(U)) = k_3-k_2$.
\end{enumerate}

\begin{lemma}\label{lem:good-verts-symp}
For all $i \in [3]$:
\[\Pr_{V \sim S_i}[V \in \good_i] \geq 1-O\left(\frac{t}{q}\right). \] 
\end{lemma}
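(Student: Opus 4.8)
The plan is to bound, for each $i\in\{1,2,3\}$, the probability that a uniformly random isotropic subspace $V$ of dimension $k_i$ fails one of the finitely many (at most $O(t)$) dimension conditions defining $\good_i$, and then take a union bound. Each individual condition is of the form "$\dim(V\cap \symp(W)) = \max(0,\dim(V)-\dim(W))$" or "$\dim(\symp(V)\cap W) = \max(0,\dim(W)-\dim(V))$" for some fixed subspace $W$ built from the fixed basis of $U$, together with one genericity condition "$V\cap U=\{0\}$". All of these are exactly the events controlled by Claim~\ref{claim:intersection-symp}, each of which fails with probability at most $4/q$.

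Concretely: for $\good_1$, $V$ has dimension $k_1=(t-1)k$; the condition $V\cap U=\{0\}$ is an instance of Claim~\ref{claim:intersection-symp}(1)(a) with the roles set so that we ask a random isotropic $V$ to meet the fixed space $U$ trivially (equivalently, apply the claim with $W=V$ random and $U$ fixed, using $\dim(U)=k_2\ge k_1$). For each $i\in[t-1]$, the condition $\dim(V\cap\symp(U_{>i})) = (i-1)k$: here $\dim(U_{>i}) = (t-i)k$, so $\dim(V)-\dim(U_{>i}) = (t-1)k-(t-i)k = (i-1)k\ge 0$, matching the claimed value, and this is exactly Claim~\ref{claim:intersection-symp}(1)(a) applied to the fixed subspace $U_{>i}$ and the random isotropic $V$ (note the claim lets the \emph{subspace} be arbitrary and the isotropic one be random, which is our situation). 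Similarly $\dim(\symp(V)\cap\symp(U_{>i})) = \dim(\symp(V)) - \dim(V) + \dim(V\cap\symp(U_{>i}))$ — wait, more directly: $\symp(V)\cap\symp(U_{>i}) = \symp(V + U_{>i})$, so its dimension is $2d - \dim(V+U_{>i})$, and on the event $\dim(V\cap\symp(U_{>i})) = (i-1)k$ one computes $\dim(V+U_{>i})$ and hence this dimension is forced; alternatively invoke part (1)(b) of the claim directly. For $\good_2$, $V$ has dimension $k_2=tk$; the conditions $\dim(V\cap\symp(U_{>i})) = ik$ for $i\in[t-1]$ and $V\cap U=\{0\}$ are again direct instances of Claim~\ref{claim:intersection-symp}(1)(a) since $\dim(V)-\dim(U_{>i}) = tk-(t-i)k = ik$. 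For $\good_3$, $V$ has dimension $k_3$; $V\cap U=\{0\}$ and $\dim(V\cap\symp(U)) = k_3-k_2 = \dim(V)-\dim(U)$ are two applications of the same claim. In every case we then union-bound over the at most $2(t-1)+1 = O(t)$ events, each of probability $\le 4/q$, to get $\Pr_{V\sim S_i}[V\notin\good_i]\le O(t/q)$, as desired.

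The only mild subtlety — and the place to be careful — is matching the quantifier structure of Claim~\ref{claim:intersection-symp} to the quantities appearing in the definition of $\good_i$: the claim fixes an arbitrary subspace and randomizes the isotropic one, whereas in $\good_i$ the isotropic subspace $V$ is the random object and $U_{>i}$, $U$, $\symp(U_{>i})$ are the fixed ones, so one must check that the roles line up (they do: e.g. $\dim(V\cap\symp(U_{>i}))$ has the same distribution whether we think of $U_{>i}$ as fixed and $V$ random, since that is literally what the claim addresses). One should also note that $U_{>i}$ is isotropic (a subspace of $U$) but the claim does not require the \emph{fixed} space to be isotropic, so this is no obstruction; and the conditions on $\symp(V)\cap\symp(U_{>i})$ reduce, via $\symp(A)\cap\symp(B)=\symp(A+B)$ and the dimension identities recorded just before Claim~\ref{claim:intersection-symp}, to conditions already covered. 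Assembling these observations and the union bound completes the proof; I do not expect any genuine obstacle here — it is a routine but slightly bookkeeping-heavy application of Claim~\ref{claim:intersection-symp}.
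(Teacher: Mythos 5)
Your approach is exactly the paper's: the paper's entire proof of this lemma is ``Immediate from Claim~\ref{claim:intersection-symp} and the union bound,'' and your proposal just fills in the routine bookkeeping. One tiny slip: for $V\cap U=\{0\}$ you cite part (1)(a) of Claim~\ref{claim:intersection-symp}, but with $U$ fixed and $W=V$ random that gives $\dim(U\cap\symp(V))=k_2-k_1>0$, not $0$; you want part (1)(b), which gives $\dim(\symp(U)\cap V)=\max(0,k_1-k_2)=0$, and then $U\subseteq\symp(U)$ (as $U$ is isotropic) forces $V\cap U=\{0\}$ -- a one-word fix that does not affect the argument.
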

\begin{proof}
Immediate from Claim~\ref{claim:intersection-symp} 
and the union bound.
\end{proof}

\paragraph{Setting up Paths from $U$:} 
We will now be interested in constructing 
paths from $U$ to other 
vertices $V$ in the graph.
Towards this end, for fixed $U$ and $V$ we will associate with $V$ a vertex $V'$. In the case that $\dim(V) = k_2$, we will take $V' = V$, and otherwise $V'$ will be an appropriately chosen subspace or superspace of $V$.
\begin{enumerate}
\item For a vertex $V$ of dimension $k_2$, using Claim~\ref{claim:isotropic-path} on $V$ and $U_{\geq 2}$ we find $V_{(1)} \subseteq_{k} V$ such that $(V_{(1)}, U_{(2)},\ldots, U_{(t)}) \in S_2$. Applying this claim iteratively, we find $V_{(2)}$ such that 
\[
(V_{(1)}, V_{(2)},U_{(3)},\ldots, U_{(t)}) \in S_2,
\] 
and so on. 
Then consider the following path from $U \rightarrow V$ which flips a block of $U$ to a block of $V$ one at a time: 
\begin{align*}
P(U,V) = &(U_{(1)},\ldots, U_{(t)}) \rightarrow U_{\geq 2} \rightarrow (V_{(1)},U_{\geq 2}) 
\rightarrow (V_{(1)},U_{\geq 3}) \rightarrow (V_{(1)},V_{(2)},U_{\geq 3}) \rightarrow \ldots \\
&\rightarrow (V_{< t},U_{(t)}) \rightarrow V_{<t} \rightarrow V.    
\end{align*}
Note that this path alternates between vertices of $S_2$ and $S_1$.
We set $V' = V$.
\item For a vertex $V$ of dimension $k_3$, we know that $\dim(V \cap \symp(U)) = k_3-k_2$, and thus we may choose $V' \subset_{k_2} V$ such that $V' \cap \symp(U) = \{0\}$. Consider its block decomposition $V' = (V_{(1)},\ldots, V_{(t)})$ such that for all $i \leq t$, $(V_{(1)},\ldots, V_{(i)},U_{(i+1)},\ldots,U_{(t)}) \in S_2$. As shown above, we know that such a decomposition is possible using Claim~\ref{claim:isotropic-path} iteratively. Then consider the following path from $U \rightarrow V$ which flips a block of $U$ to a block of $V$ one at a time similar to the above: 
\begin{align*}
P(U,V) = &(U_{(1)},\ldots, U_{(t)}) \rightarrow U_{\geq 2} \rightarrow (V_{(1)},U_{\geq 2}) 
\rightarrow (V_{(1)},U_{\geq 3}) \rightarrow (V_{(1)},V_{(2)},U_{\geq 3}) \rightarrow \ldots \\
&\rightarrow V' \rightarrow V.
\end{align*}
 \item We only give a path from $U$ to $V \in \good_1$. For such a $V$, we first find $V_{(1)}$ of dimension $k$ such that $(V_{(1)},V)$ and $(V_{(1)},U_{\geq 2})$ are in $S_2$ and $V_{(1)} \cap U_{(1)} = 0$. Such a subspace exists because 
\[
\dim(\symp(U_{\geq 2}) \cap \symp(V)) 
= 2d-k_1-k_2+k 
\geq 2(d-k_2) + 2k 
\geq 2k.\]
By Claim~\ref{claim:intersection-symp} picking a random $k$-dimensional isotropic subspace $V_{(1)}$ from the intersection will satisfy that $V_{(1)} \cap \symp(U_{(1)}) = \{0\}$, since $\symp(V)\cap \symp(U)$
is a co-dimension $k$ subspace of $\symp(U_{\geq 2}) \cap \symp(V)$. Additionally both $(V_{(1)},V)$ and $(V_{(1)},U_{\geq 2})$ are isotropic. Since $V \in \good_1$, $\symp(V) \cap U_{\geq 2} = V \cap \symp(U_{\geq 2}) = \{0\}$, which gives us that $V_{(1)} \cap V = V_{(1)} \cap U_{\geq 2} = \{0\}$ implying that $(V_{(1)},V)$ and $(V_{(1)},U_{\geq 2})$ are in $S_2$.

Now set $V' = (V_{(1)}, V)$ and consider the a path from $U \rightarrow V$. This path is the 
result of the above process for $k_2$ dimensional vertices 
applied on $V'$, where we already picked $V_{(1)}$, followed by
a final step from $V'$ to $V$.
\begin{align*}
P(U,V) = &U\rightarrow U_{\geq 2} \rightarrow (V_{(1)},U_{\geq 2}) 
\rightarrow (V_{(1)},U_{\geq 3}) \rightarrow (V_{(1)},V_{(2)},U_{\geq 3}) \rightarrow \ldots \\
&\rightarrow (V_{(1)},\ldots,V_{(t)}) \rightarrow V.
\end{align*}
Note that this path too alternates between vertices from $S_2$ and $S_1$.
\end{enumerate}

The block decomposition above satisfies the following properties:
\begin{claim}\label{claim:u_i-w_i}
Fix $U$ and take $V \in \bigcup_{i \in [3]} \good_i$, let $V'$ be the associated vertex with $V$ and write $V' = (V_{(1)},\ldots,V_{(t)})$ the block decomposition chosen by the path from $U$ as above. Then for all $i \in [t]$, $V_{(i)} \cap \symp(U_i) = \{0\}$.
\end{claim}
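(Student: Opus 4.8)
The plan is to isolate a single ``genericity relative to $U$'' property of the $k_2$-dimensional isotropic subspace $V'$ associated with $V$, namely that
\[
\dim\big(V'\cap\symp(U_{>i})\big)=ik\qquad\text{for all }0\le i\le t,
\]
and to deduce the claim from it uniformly in the three cases. The lower bound $\dim(V'\cap\symp(U_{>i}))\ge ik$ is automatic, since $U_{>i}=U_{(i+1)}+\dots+U_{(t)}$ is isotropic of dimension $(t-i)k$; the case $i=t$ is vacuous and the case $i=0$ reads $V'\cap\symp(U)=\{0\}$. Granting this identity, chain–isotropy of the block decomposition means the partial sum $V_{\le i}:=V_{(1)}+\dots+V_{(i)}$ is symplectically orthogonal to $U_{>i}$, so $V_{\le i}\subseteq V'\cap\symp(U_{>i})$; as both sides have dimension $ik$ (using the identity above and that the blocks are in direct sum), $V_{\le i}=V'\cap\symp(U_{>i})$. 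Using $\symp(U_{>i-1})=\symp\big(U_{(i)}+U_{>i}\big)=\symp(U_{(i)})\cap\symp(U_{>i})$ we obtain
\[
V_{\le i-1}=V'\cap\symp(U_{>i-1})=\big(V'\cap\symp(U_{>i})\big)\cap\symp(U_{(i)})=V_{\le i}\cap\symp(U_{(i)}).
\]
Since $V_{(i)}$ is by construction a complement of $V_{\le i-1}$ inside $V_{\le i}$, we have $V_{(i)}\subseteq V_{\le i}$ and $V_{(i)}\cap V_{\le i-1}=\{0\}$, hence $V_{(i)}\cap\symp(U_{(i)})=V_{(i)}\cap V_{\le i}\cap\symp(U_{(i)})=V_{(i)}\cap V_{\le i-1}=\{0\}$, as claimed.

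It remains to verify the genericity identity in each case. For $V\in\good_2$ one has $V'=V$, and the identity for $1\le i\le t-1$ is precisely the definition of $\good_2$; the boundary instance $V\cap\symp(U)=\{0\}$ ($i=0$) is the $i=0$ case of the same family of genericity conditions, which holds with probability $1-O(1/q)$ for a random $V\in S_2$ by Claim~\ref{claim:intersection-symp} and may therefore be taken to be part of membership in $\good_2$ (cf.\ the proof of Lemma~\ref{lem:good-verts-symp}). For $V\in\good_3$ the subspace $V'\subseteq_{k_2}V$ is already selected with $V'\cap\symp(U)=\{0\}$; since each of the events $\dim(V'\cap\symp(U_{>i}))=ik$ holds with probability $1-O(1/q)$ for a uniformly random $k_2$-dimensional subspace $V'$ of $V$ (again by Claim~\ref{claim:intersection-symp}), a union bound over the $O(t)$ values of $i$ shows that $V'$ may be chosen to satisfy the full identity simultaneously. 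For $V\in\good_1$ the construction already picks the first block $V_{(1)}$, a random $k$-dimensional isotropic subspace of $\symp(U_{\geq 2})\cap\symp(V)$, so that $V_{(1)}\cap\symp(U_{(1)})=\{0\}$ — exactly the $i=1$ instance of the lemma — and the remaining blocks of $V$ leave enough freedom for $V'=(V_{(1)},V)$ to satisfy the whole identity, after which the structural argument of the previous paragraph applies.

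The step I expect to be the main obstacle is this last piece of genericity bookkeeping: one has to confirm that the isotropy constraints built into the path construction (that all partial sums are isotropic subspaces) are compatible with the identity, i.e.\ that imposing them never forces $\dim(V'\cap\symp(U_{>i}))$ strictly above $ik$. This is carried out as in Lemma~\ref{lem:good-verts-symp}: produce the blocks via Claim~\ref{claim:isotropic-path} and bound the failure probability of each of the $O(t)$ conditions by $O(1/q)$ using Claim~\ref{claim:intersection-symp}, so for $q$ large a choice meeting all of them at once exists. The $\good_1$ and $\good_3$ cases differ from $\good_2$ only in that $V'$ is a one-step super- or sub-space of $V$, so the genericity must be transported across that inclusion; the divisibility and boundary cases (such as $k_2=d$) are handled by the same method in Section~\ref{sec:edge_cases}.
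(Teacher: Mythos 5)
Your reorganization — isolate the ``genericity identity'' $\dim(V'\cap\symp(U_{>i}))=ik$ for $0\le i\le t$, then run the chain
$V_{\le i-1}=V'\cap\symp(U_{>i-1})=(V'\cap\symp(U_{>i}))\cap\symp(U_{(i)})=V_{\le i}\cap\symp(U_{(i)})$
to conclude $V_{(i)}\cap\symp(U_{(i)})=V_{(i)}\cap V_{\le i-1}=\{0\}$ — is exactly the paper's core argument, written out cleanly and uniformly across the three cases. Your observation that the $\good_2$ definition should be read as also including the $i=0$ instance $V\cap\symp(U)=\{0\}$ (which holds with probability $1-O(1/q)$ and is implicitly used in the paper's own chain at $i=1$) is a fair and correct reading.

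Where you work harder than necessary is the verification of the genericity identity in the $\good_3$ and $\good_1$ cases, and your stated ``main obstacle'' is not a real one. You worry that the isotropy constraints imposed in constructing the blocks $V_{(i)}$ via Claim~\ref{claim:isotropic-path} might push $\dim(V'\cap\symp(U_{>i}))$ above $ik$; but the subspaces $V'$ and $U$ are fixed before any block is chosen, so the block decomposition cannot affect $V'\cap\symp(U_{>i})$ at all — it only reveals (or fails to reveal) it. More importantly, the whole probabilistic union bound over $i$ is superfluous: the single condition $V'\cap\symp(U)=\{0\}$, together with $\dim(V')=\dim(U)=k_2$, already \emph{forces} $\dim(V'\cap\symp(U_{>i}))=ik$ for every $i$. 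Indeed, the symplectic pairing gives a linear map $\phi\colon V'\to U^*$, $\phi(v)(u)=\omega(v,u)$, with $\ker\phi=V'\cap\symp(U)=\{0\}$, so $\phi$ is an isomorphism, and $V'\cap\symp(U_{>i})=\phi^{-1}\bigl(\{f\in U^*: f|_{U_{>i}}=0\}\bigr)$ has dimension exactly $\dim(U)-\dim(U_{>i})=ik$. The same duality argument, applied to $V$ itself (using $\dim(V\cap\symp(U))=k_3-k_2$ from $\good_3$, which makes $\phi\colon V\to U^*$ surjective), shows $\dim(V\cap\symp(U_{>i}))=k_3-k_2+ik$ deterministically, so there is no need for the random-$V'$ computation and no hidden genericity assumption on $V$. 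This also dispenses with your appeal to ``enough freedom'' in the $\good_1$ case: once $V'\cap\symp(U)=\{0\}$ is established there (which follows from $V_{(1)}\cap\symp(U_{(1)})=\{0\}$, $V_{(1)}\subseteq\symp(U_{\geq 2})$, and $\dim(V\cap\symp(U_{\geq 2}))=0$), the full identity is automatic.
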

\begin{proof}
We split the proof to cases.
\paragraph{The case that $\dim(V) = k_2$:} in that case, by construction 
and definition of $\good_2$ 
we have that for $i\geq 2$ it holds that $V\cap \symp(U_{\geq i})= V_{\leq i-1}$. By this equality for $i+1$ instead of $i$, it follows that 
each $v\in V_{(i)}$ is symplectic to 
$U_{\leq i+1}$, and we conclude that
\[
V_{(i)}\cap \symp(U_i)
=V_{(i)}\cap \symp(U_{\geq i})
=V_{(i)}\cap V\cap \symp(U_{\geq i})
=V_{(i)}\cap V_{\leq i-1}
=\{0\}.
\]
\paragraph{The case that $\dim(V) = k_3$.} In that case, we picked 
$V'\subseteq V$ of dimension $k_2$ such that $V'\cap \symp(U) = \{0\}$, 
and the proof is exactly as in the 
previous case.

\paragraph{The case that $\dim(V) = k_1$:} by construction 
$V_{(1)}\cap \symp(U_1) = \{0\}$. For $i\geq 2$, 
the fact that 
$V_{(i)}\cap \symp(U_i) = \{0\}$ follows from 
the same argument above,
as the path from $U$ to 
$V'$ is constructed in 
the same way.
\end{proof}

\subsubsection{Constructing the Triangulations: Handling Special $8$-cycles}
With the paths $P(U,V)$, we would like to construct triangulations of cycles that they form. Towards this  end, we will have to triangulate $8$-cycles of a special form as in the following lemma:
\begin{lemma}\label{lem:8-cycle-symp}
Let $k \leq k_1 < k_2 < k_3 \leq d$ with $k_3 \geq k_2+k$, let $A,B,C,D$ be $k$-dimensional isotropic subspaces, let $X$ be a $k_2-2k$-dimensional isotropic subspace and let $X_1,X_2 \subseteq X$ be of dimension $k_1-k$. Suppose that $C \cap \symp(D) = \symp(C) \cap D = \{0\}$. Then the following 8-cycle, whose odd vertices are in $S_2$ and even ones are in $S_1$,    
\begin{align*}
C_8 &= (A,C,X) \rightarrow (A,X_1) \rightarrow (A,D,X) \rightarrow (D,X_2) \rightarrow (B,D,X) \\
&\rightarrow (B,X_1) \rightarrow (B,C,X) \rightarrow (C,X_2) \rightarrow (A,C,X),
\end{align*}
has a triangulation of size $O(1)$.
\end{lemma}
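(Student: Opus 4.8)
Write $v_1,\dots,v_8$ for the successive vertices of $C_8$, so that $v_1=(A,C,X)$, $v_2=(A,X_1)$, $v_3=(A,D,X)$, $v_4=(D,X_2)$, $v_5=(B,D,X)$, $v_6=(B,X_1)$, $v_7=(B,C,X)$, $v_8=(C,X_2)$. The plan is to adapt the triangulation of the analogous Grassmann $8$-cycle in the proof of Lemma~\ref{lem:triangulation-gr2}, the one new constraint being that every auxiliary subspace we introduce must be isotropic. This is a genuine obstruction: in the Grassmann case one triangulates via a single $k_2$-dimensional ``hub'' together with two $k_3$-dimensional subspaces above it, but the direct analogue of the hub here would have to contain both $C$ and $D$ (impossible, since $C\cap\symp(D)=\{0\}$ forces $C+D$ to be non-isotropic) or both $A$ and $B$ (for which there is no orthogonality hypothesis). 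The remedy is to replace the single hub by a small isotropic ``core'' and to cut the corners of $C_8$ in two layers --- first the four $S_2$-vertices $v_1,v_3,v_5,v_7$, then the four $S_1$-vertices $v_2,v_4,v_6,v_8$.

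Setting up the subspaces would go as follows. Using that all eight $v_i$ are isotropic of the stated dimensions, together with the hypothesis $C\cap\symp(D)=\symp(C)\cap D=\{0\}$, one checks that $A+C+X$, $A+D+X$, $B+D+X$, $B+C+X$ are all direct sums, that $D\cap(A+C+X)=C\cap(A+D+X)=\{0\}$, and hence that $\Sigma:=A+B+C+D+X$ has dimension at most $k_2+2k$. Since $k\le k_2-k_1$ we also have $\dim X=k_2-2k\ge k_1-k$ and $\dim(X+E)=k_2-k\ge k_1$ for any $k$-dimensional $E$ transverse to $X$. Using Claim~\ref{claim:intersection-symp} and a dimension count valid for $q$ large, I would fix a $k$-dimensional isotropic subspace $E\subseteq\symp(\Sigma)$ with $E\cap\Sigma=\{0\}$, so that $E$ is isotropic and symplectically orthogonal to each of $A,B,C,D,X$. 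For the ``slot'' subspace $Z\in\{A,B,C,D\}$ occurring in an $S_1$-vertex $w$ of $C_8$ (so $Z=A$ for $w=(A,X_1)$, $Z=D$ for $w=(D,X_2)$, and so on), put $H_w:=(Z,X)+E$, a $k_2$-dimensional isotropic subspace containing $w$ (as $X_1,X_2\subseteq X$). For an $S_2$-vertex $v$ of $C_8$, let $W_v$ be any isotropic $k_3$-dimensional superspace of $v+E$; such $W_v$ exists because $\dim(v+E)=k_2+k\le k_3\le d$, and this inequality is the unique place the hypothesis $k_3\ge k_2+k$ is used. Finally fix any $k_1$-dimensional subspace $u^{\ast}\subseteq X+E$. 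One then checks the containments $v\subseteq W_v$ for each $S_2$-vertex $v$; $w\subseteq H_w\subseteq W_v\cap W_{v'}$ whenever $v,v'$ are the two $C_8$-neighbours of an $S_1$-vertex $w$; and $u^{\ast}\subseteq H_w$ and $u^{\ast}\subseteq W_v$ throughout --- all immediate from $H_w=(Z,X)+E\subseteq v+E\subseteq W_v$ and $X+E\subseteq H_w$.

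With these subspaces the triangulation is routine. In the first layer, for each $S_2$-vertex $v$ with $C_8$-neighbours $w^-,w^+$, the two triangles $(w^-,v,W_v)$ and $(w^+,v,W_v)$ reroute the arc $w^-\to v\to w^+$ to $w^-\to W_v\to w^+$; after these eight triangles $C_8$ becomes the octagon $W_{v_1}\to v_2\to W_{v_3}\to v_4\to W_{v_5}\to v_6\to W_{v_7}\to v_8\to W_{v_1}$, whose odd vertices lie in $S_3$ and whose even ones lie in $S_1$. In the second layer, for each $S_1$-vertex $w$ of this octagon, with $S_3$-neighbours $W^-,W^+$ (so $H_w\subseteq W^-\cap W^+$), the triangles $(w,H_w,W^-)$ and $(w,H_w,W^+)$ reroute $W^-\to w\to W^+$ to $W^-\to H_w\to W^+$, leaving the octagon $W_{v_1}\to H_{v_2}\to W_{v_3}\to H_{v_4}\to W_{v_5}\to H_{v_6}\to W_{v_7}\to H_{v_8}\to W_{v_1}$; and since $u^{\ast}$ is contained in every $H_{v_{2i}}$ and hence in every $W_{v_{2i-1}}$, fanning from $u^{\ast}$ --- the triangles $(u^{\ast},H_{v_{2i}},W_{v_{2i-1}})$ and $(u^{\ast},H_{v_{2i}},W_{v_{2i+1}})$, indices mod $8$ --- triangulates what remains. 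In total $|S_{C_8}|=9$ and the number of triangles is $24=O(1)$; a direct inspection, exactly as for the Grassmann $8$-cycles, confirms this is a triangulation in the sense of Definition~\ref{def:triangulation}.

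The step I expect to be the main obstacle is not the combinatorics of the two layers but the simultaneous construction of the auxiliary subspaces: $E$ must be symplectically orthogonal to every block occurring in $C_8$ while remaining transverse to $\Sigma$, and each sum one forms afterwards --- $v+E$, $(Z,X)+E$, $X+E$ --- must come out isotropic of the intended dimension. Verifying this is where the symplectic genericity estimates (Claims~\ref{claim:isotropic-path}, \ref{claim:random-subspace-intersection} and \ref{claim:intersection-symp}) and the assumption that $q$ is large are needed, and where the hypothesis $C\cap\symp(D)=\symp(C)\cap D=\{0\}$ enters precisely (in the dimension counts $D\cap(A+C+X)=C\cap(A+D+X)=\{0\}$). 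The degenerate cases $k\nmid k_2$ or $k_1<k$ are handled by the same minor modifications used elsewhere in this section (cf.\ Section~\ref{sec:edge_cases}).
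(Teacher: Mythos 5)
Your two-layer cutting strategy is a genuinely different combinatorial reduction from the paper's (the paper uses a single ``hub'' $Z_0 = (Z,Y)$ with two isotropic extensions $Z_1, Z_2$ when $A \subseteq \symp(B)$, and in the general case reduces to this one via a separate step-down). If the auxiliary subspace $E$ existed as you describe, your approach would in fact handle both of the paper's cases uniformly and would be cleaner. Unfortunately, the existence of $E$ fails at exactly the edge of the hypotheses.

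The gap is this: you want $E$ isotropic of dimension $k$ with $E \subseteq \symp(\Sigma)$ and $E \cap \Sigma = \{0\}$, where $\Sigma = A+B+C+D+X$. Take the boundary case $d = k_3 = k_2 + k$ permitted by the lemma, with $A \subseteq \symp(B)$ and $A,B,X$ in direct sum, so $Z' := A+B+X$ has dimension $k_2$ and is isotropic. Since $C \cap \symp(D) = \{0\}$ forces $C+D$ to be a nondegenerate $2k$-plane, $\symp(C+D)$ is a symplectic space of dimension $2d-2k = 2k_2$, inside which $Z'$ is a Lagrangian. Hence $\symp(\Sigma) = \symp(Z') \cap \symp(C+D) = Z' \subseteq \Sigma$, and \emph{every} nonzero $E \subseteq \symp(\Sigma)$ meets $\Sigma$. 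This is not a matter of $q$ being large, and Claims~\ref{claim:isotropic-path}--\ref{claim:intersection-symp} give transversality statements inside the full symplectic space $\F_q^{2d}$, not inside the degenerate subspace $\symp(\Sigma)$; they do not produce the $E$ you need here. The dimension count you invoke ($\dim\Sigma \leq k_2+2k$) is correct but, combined with the degeneracy of the form on $\symp(\Sigma)$, actually requires $d \geq k_2 + 2k$ for your $E$ to avoid $\Sigma$; the hypothesis $k_3 \geq k_2 + k$ only guarantees $d \geq k_2 + k$.

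The paper sidesteps this by letting the auxiliary isotropic subspace have \emph{variable} dimension: in Case 1 it is an $\ell$-dimensional $Y$ where $\ell = k_2 - \dim(A+B+X)$, which shrinks to $0$ in the degenerate situation above (so $Z_0 = Z$ already has the right dimension and nothing new needs to be found), and the dimension count $\dim V \geq 2d - 2(k_2-\ell) \geq 2k+2\ell$ always closes. Your fixed-dimension-$k$ hub lacks this slack. It may be possible to repair your argument by noticing that the condition $E \cap \Sigma = \{0\}$ is stronger than what the construction actually needs --- you only need $E \cap (Z,X) = \{0\}$ for $Z \in \{A,B,C,D\}$, and in the boundary case above a generic $k$-dimensional subspace of $Z'$ does avoid the two $(k_2-k)$-dimensional subspaces $(A,X)$ and $(B,X)$ --- but then $E$ would typically have to be built as a direct sum of a piece inside $\mathrm{rad}(\symp(\Sigma))$ and a piece in a nondegenerate complement, and verifying isotropy and all the transversalities for such an $E$ uniformly across the two orthogonality regimes of $A$ and $B$ is nontrivial and would have to be carried out; as written, the proof does not do it.
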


\begin{proof}
We break the proof into two cases. We will use the fact that $d \geq k_3 \geq k_2+k$.

\paragraph{Case 1--$A \subseteq \symp(B)$:}
This implies that $B \subseteq \symp(A)$ too.
Let $Z = (A,B,X)$ and $\dim(Z) = k_2-\ell$ for some $\ell \in [0,k]$. Let $V$ be a subspace such that: $V \subseteq \symp(Z)$ and $V \cap Z = \{0\}$.  We get that $\dim(V) \geq 2d - 2(k_2-\ell) \geq 2k+2\ell$. Let $G_1 \subseteq V$ be such that $G_1 \subseteq \symp(C) \cap V$ and $G_2 \subseteq V \cap \symp(D)$. We get that $\dim(G_1), \dim(G_2) \geq \dim(V) - k$, therefore: 
\[\dim(G_1 \cap G_2) \geq \dim(V)- 2k \geq 2\ell.\]
Pick an arbitrary $\ell$-dimensional isotropic subspace $Y \subset G_1 \cap G_2$ and add in the isotropic subspace $Z_0 = (Z,Y)$. As $Y \subset V$ it follows that $Y \cap Z = \{0\}$, and therefore $\dim(Z_0) = k_2$. Then add in the vertices $Z_1 = (Z,Y,C,Y_1)$ and $Z_2 = (Z,Y,D,Y_2)$,  where $Y_1,Y_2$ are chosen such that $Z_1,Z_2$ are $k_3$-dimensional isotropic subspaces. This is possible since $(Z,Y,C),(Z,Y,D)$ are subspaces of dimension $\leq \dim(Z,Y)+k \leq k_3$ and they are isotropic since $Y \subset G_1 \cap G_2$. The figure below shows that adding in $Z_0,Z_1,Z_2$ breaks the 8-cycle into triangles.

\begin{center}
\includegraphics[scale=0.20]{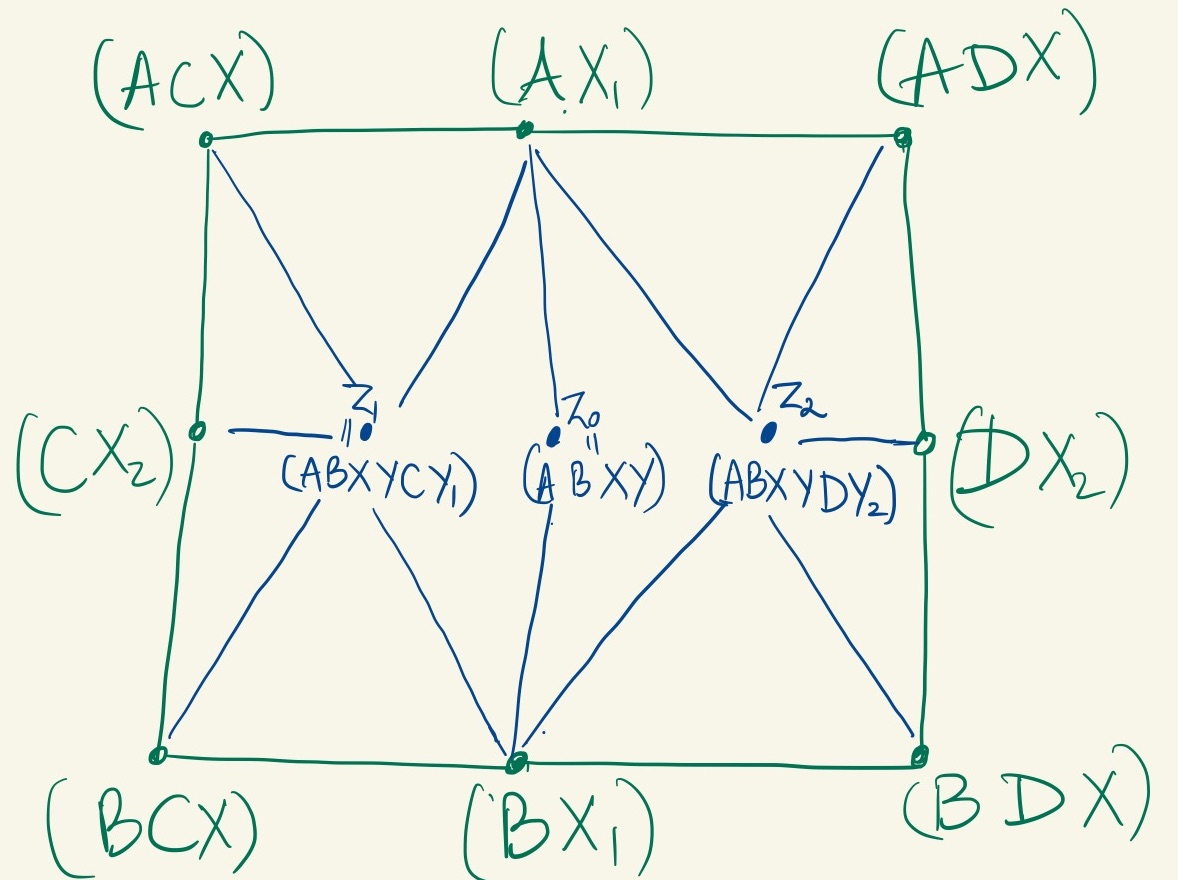}
\end{center}
\paragraph{Case 2 -- $A$ may not be in $\symp(B)$:}
Write $A = A_1 + A_2$, where $A_1 \subset \symp(B)$ and $A_2 \cap \symp(B) = 0$. Similarly write $B = B_1+B_2$, where $B_1 \subset \symp(A)$ and $B_2 \cap \symp(A) = 0$. We know that $\dim(A_1) = \dim(B_1)$, and say they are equal to $k-\ell$ for some $\ell \in [0,k]$; thus $\dim(A_2) = \dim(B_2)=\ell$. We know that $\dim(A_1,B_1,X) \leq 2(k-\ell)+k_2-2k = k_2-2\ell$, and so we write $\dim(A_1,B_1,X)=k_2-2\ell-z$ for some $z \geq 0$. Let us find a subspace $V'$ that satisfies: 
\[V' \subseteq \symp(A_2)\cap\symp(B_2)\cap \symp(C)\cap\symp(D)\cap \symp(A_1,B_1,X),~~~ V' \cap (A_1,B_1,X) = \{0\},\]

By dimension counting we get,
\begin{align*}
\dim(V') &\geq 2d-\dim(A_2)-\dim(B_2)-\dim(C)-\dim(D) - 2\dim(A_1,B_1,X) \\
&= 2d-2\ell-2k-2(k_2-2\ell-z)\\
&= 2(d-k_2-k) + 2(\ell+z)\\
&\geq 2(\ell+z).
\end{align*}
We can now pick any $(\ell+z)$-dimensional isotropic subspace $V$ inside $V'$. Also fix an arbitrary $\ell$-dimensional subspace $\wt{V}$ inside $V$. 

Below we elaborate on the vertices we add inside the cycle. It is easy to check that by construction each of these vertices are isotropic subspaces, therefore we will only check that they are of the correct dimensions ($k_1,k_2$ or $k_3$). Let $A' = (B_1,\wt{V})$. 
\begin{enumerate}
\item $(A',X_1)$: By construction $V \cap (B_1,X_1) = \{0\}$ and $\wt{V}\subseteq V$, hence $(B_1,X_1)\cap \wt{V} = \{0\}$ and by the dimension formula we get
\[\dim(A',X_1) = \dim(B_1,X_1)+\dim(\wt{V})=\dim(B_1)+\dim(X_1)+\dim(\wt{V})=k-\ell+k_1-k+\ell = k_1.\]
\item $(A,B_1,V,X)$: This is equal to $(A_2,A_1,B_1,X,V)$. We know that $A_2 \cap (A_1,B_1,V,X) = \{0\}$, since the latter is symplectically orthogonal to $B$ and $A_2\cap \symp(B) = \{0\}$. Also, by construction $V \cap (A_1,B_1,X)=\{0\}$, 
and applying the dimension formula twice we get that:
\[\dim(A,B_1,V,X) =\dim(A_2)+\dim(A_1,B_1,X)+\dim(V)=\ell+k_2-2\ell-z+\ell+z = k_2.\]
\item $(A,B_1,V,C,X,Y_1)$ where $Y_1$ is chosen so that the whole subspace is $k_3$-dimensional. This is possible since:
\[\dim(A,B_1,V,C,X)=\dim(A_2)+\dim(C)+\dim(A_1,B_1,X)+\dim(V)= k_2+k \leq k_3,\] 
where we used that $A_2 \cap (A_1,B_1,V,C) = \{0\}$ since the latter is symplectically orthogonal to $B$ 
and $A_2\cap \symp(B) = \{0\}$, and further $C \cap (A_1,B_1,V,X)$, since the latter is symplectically orthogonal to $D$ and $C \cap \symp(D) = \{0\}$ by assumption.
\item $(A',C,X)$: This equals $(B_1,\wt{V},C,X)$ and like the above, we get
\[\dim(A',C,X)=\dim(C)+\dim(B_1)+\dim(X)+\dim(\wt{V})=k_2.\]
\item $(A,B_1,V,D,X,Y_2)$ where $Y_2$ is chosen so that the whole subspace is $k_3$-dimensional, where the proof that this is possible is the same as the third item above.
\item $(A',D,X)$: This is a $k_2$-dimensional isotropic subspace like the fourth item above.
\end{enumerate}
The figure below shows that this breaks the cycle into triangles and another 8-cycle corresponding to the subspaces $A',B,C,D,X,X_1,X_2$. 

\begin{center}
\includegraphics[scale=0.20]{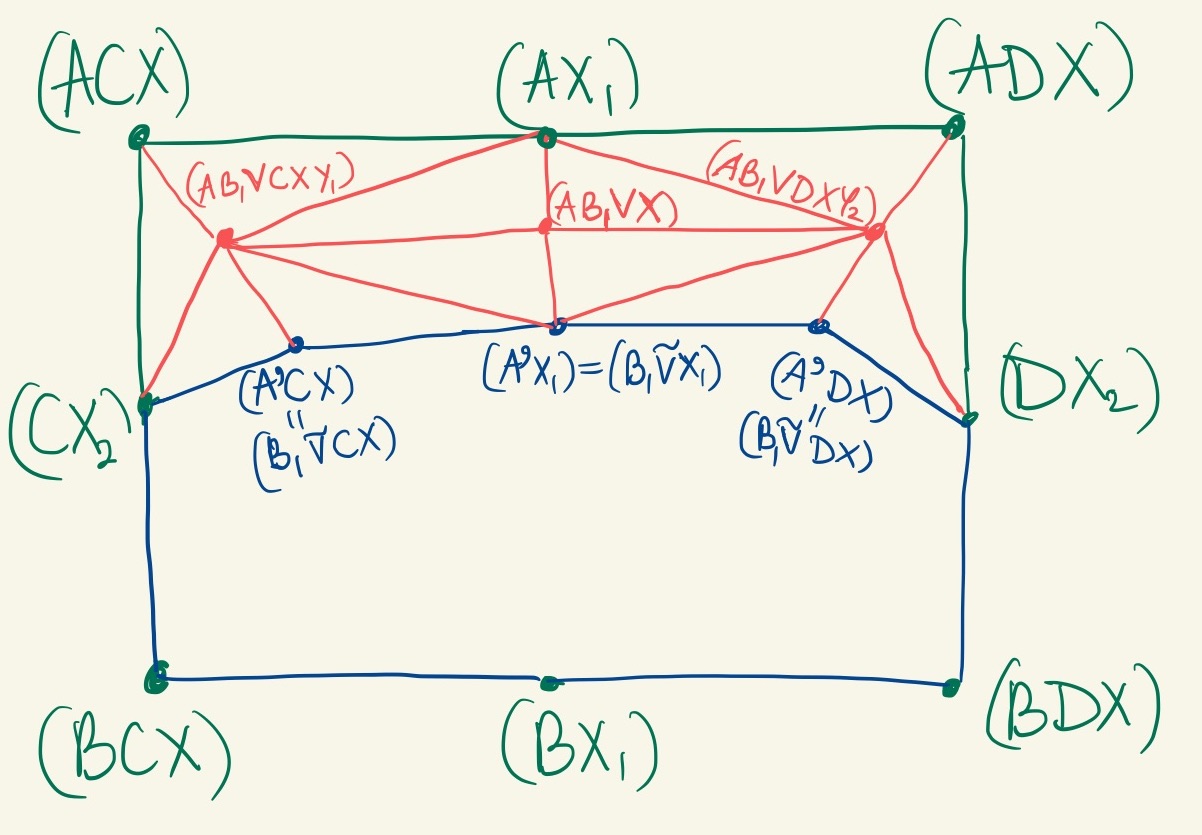}
\end{center}

Now note that $A' = (B_1,\wt{V})$ satisfies $A' \subseteq \symp(B)$. Therefore we can triangulate the remaining 8-cycle (in blue above) using Case 1, which gives a triangulation of the original cycle using $O(1)$ triangles.
\end{proof}


\subsubsection{Triangulating General Cycles}
\begin{lemma}\label{lem:triangulation-symp1}
For every edge $(V,W) \in S(k_1,k_2,k_3)$, where $V,W \in \bigcup_{i=1}^{3} \good_i$,
the cycle $C(U,V,W) = U \xrightarrow[]{P(U, V)} V \rightarrow W \xrightarrow[]{P(W, U)} U$ has a triangulation of size $O(K^2)$, where $K = \max\left(\frac{k_2}{k_2-k_1},\frac{k_2}{k_3-k_2}\right)$. 
\end{lemma}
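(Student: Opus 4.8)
The plan is to follow the architecture of Lemma~\ref{lem:triangulation-gr1} (and, for the opposite ordering of the gaps, Lemma~\ref{lem:triangulation-gr2}), replacing the trivial dimension counts of the Grassmann setting by the symplectic bookkeeping supplied by Claims~\ref{claim:isotropic-path},~\ref{claim:random-path-symp} and~\ref{claim:u_i-w_i}. Recall $k=k_2-k_1$, $t=k_2/k$, and that (after the reductions in the path‑construction section, with the remaining case $k_3-k_2<k_2-k_1$ treated symmetrically by flipping toward the $k_3$‑side) we are in the regime $k_3\ge k_2+k$. First I would attach to each vertex $V\in\bigcup_i\good_i$ its associated $k_2$‑dimensional vertex $V'$ with block decomposition $V'=(V_{(1)},\dots,V_{(t)})$ as in the construction, so that $P(U,V)$ is the block‑flip path through the vertices $P_{2i}(U,V)=(V_{\le i},U_{>i})$. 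Given the edge $(V,W)$, for each $i\in[t]$ build a connector path $R^i$ from $P_{2i}(U,W)$ to $P_{2i}(U,V)$ that flips one $W$‑block to a $V$‑block at a time; the existence of each intermediate $S_2$‑ and $S_1$‑vertex along $R^i$ (i.e.\ that the relevant sums are isotropic and of the correct dimension) is exactly what Claim~\ref{claim:isotropic-path} and the randomized construction of Claim~\ref{claim:random-path-symp} provide, using $\dim X \le k_2-2k$ throughout and $q$ large. Inserting the $R^i$'s slices $C(U,V,W)$ into $O(t)$ cycles $C^1,\dots,C^{t-1}$ and a terminal cycle $C^t$, exactly as in Lemma~\ref{lem:triangulation-gr1}.

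Next I would tile each $C^i$ with $i\le t-1$. As in the Grassmann case, consecutive odd vertices $R^i_j=(V_{<j},W_{(j)},\dots,W_{(i)},U_{>i})$ and $R^{i+1}_j$ are joined by a length‑two path through a single $S_1$‑vertex (available by Claim~\ref{claim:isotropic-path}), and a few equality edges at the $P_{2i+1}$‑corner reduce $C^i$ to $O(t)$ $8$‑cycles plus $O(1)$ triangles. Each of these $8$‑cycles is of the precise shape handled by Lemma~\ref{lem:8-cycle-symp}: its $S_2$‑vertices are $(A,C,X),(A,D,X),(B,D,X),(B,C,X)$ and its $S_1$‑vertices are $(A,X_1),(D,X_2),(B,X_1),(C,X_2)$ where $A,B,C,D$ are blocks of $U,V,W$ and $X=(V_{\le j-1},W_{(j+1)},\dots,W_{(i-1)},U_{>i})$ has dimension $k_2-2k$. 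To apply Lemma~\ref{lem:8-cycle-symp} one must check its hypothesis $C\cap\symp(D)=\symp(C)\cap D=\{0\}$ for the pair of blocks playing the roles of $C,D$ (namely $U_{(i)}$ versus a $V_{(j)}$ or $W_{(j)}$); this is where Claim~\ref{claim:u_i-w_i} — together with the defining intersection conditions of $\good_1,\good_2,\good_3$ and Claim~\ref{claim:intersection-symp} — is used, since a priori two blocks need not be symplectically transverse. Then I would handle $C^t$ by the same three‑case split as in the Grassmann proof, according to whether $(V,W)$ is an edge between $(S_3,S_2)$, $(S_3,S_1)$, or $(S_2,S_1)$: in each case the first few vertices of $R^t$ lie inside $W$ and the remaining ones lie inside $V$ (or $V'$), so by matching dimensions they are forced equal and $C^t$ collapses to $O(t)$ triangles via equality edges. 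The only genuinely new wrinkle here relative to type A is the $S_1$‑case, where $V'=(V_{(1)},V)\supseteq V$ is a \emph{superspace}; but the isotropy and transversality of $V'$ were already secured during the path construction (using $\dim(\symp(U_{\ge 2})\cap\symp(V))\ge 2k$ and Claim~\ref{claim:intersection-symp}), so the collapse argument goes through verbatim.

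Finally, counting: there are $O(t)$ cycles $C^i$, each producing $O(t)$ $8$‑cycles (and $O(t)$ triangles), and each $8$‑cycle is triangulated with $O(1)$ triangles by Lemma~\ref{lem:8-cycle-symp}; since $t=k_2/k=k_2/(k_2-k_1)\le K$ in this regime and $t=k_2/(k_3-k_2)\le K$ in the symmetric one, the total triangulation has size $O(t^2)=O(K^2)$, as claimed. I expect the main obstacle to be not the global tiling (which is a direct transcription of the Grassmann argument) but the local isotropy/transversality bookkeeping: one must verify, uniformly over all $8$‑cycles and all terminal‑cycle cases, that the subspaces appearing are isotropic of the right dimension and that the hypotheses of Lemma~\ref{lem:8-cycle-symp} hold — i.e.\ that the ``filler'' isotropic subspace sitting in the middle of each $8$‑cycle actually exists — and this is exactly the content for which Claims~\ref{claim:isotropic-path},~\ref{claim:intersection-symp},~\ref{claim:random-path-symp} and~\ref{claim:u_i-w_i} were set up. The slack from the randomized/generic constructions is what forces the additive $\poly(K)/q$ error in Lemma~\ref{lem:base-symplectic}, consistent with the Grassmann case.
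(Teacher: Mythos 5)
Your outline of the first part of the argument — fixing $U\in S_2$, block decomposition with blocks of size $k=k_2-k_1$, connector paths $R^i$ slicing $C(U,V,W)$ into $O(t)$ cycles $C^1,\dots,C^t$, tiling each $C^i$ ($i<t$) into $O(t)$ $8$-cycles via length-two paths between $R^i_j$ and $R^{i+1}_j$, collapsing $C^t$ with containment/equality arguments case by case on the type of edge $(V,W)$, and identifying each $8$-cycle as $(A,C,X)\to(A,X)\to(A,D,X)\to\cdots$ with $X$ of dimension $k_2-2k$ — matches the paper closely. For the sub-case $k_3-k_2\ge k_2-k_1$ you can then invoke Lemma~\ref{lem:8-cycle-symp} directly, as you say.

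The gap is the sub-case $k_3-k_2<k_2-k_1$. You propose to handle it ``symmetrically by flipping toward the $k_3$-side,'' but this both misdescribes the paper and doesn't patch the argument. If you flip to paths alternating between $S_2$ and $S_3$ with blocks of size $k'=k_3-k_2$, the resulting $8$-cycles have their odd vertices in $S_3$ and even vertices in $S_2$, and Lemma~\ref{lem:8-cycle-symp} — which is stated for $8$-cycles between levels $k_1$ and $k_2$ with fillers pushed \emph{up} to level $k_3$ — no longer applies; its hypothesis ``$k_3\geq k_2+k$'' with $k$ the block size would become $k_3\ge k_2+k'=2k_3-k_2$, i.e.\ $k_2\ge k_3$, which is false. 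You would need a separate $8$-cycle lemma for the $S_2$–$S_3$ level pair (plausibly by pushing \emph{down} to a $k_1$-dimensional isotropic filler inside $X$, as in the type~A Lemma~\ref{lem:triangulation-gr1}, since $\dim X=2k_2-k_3\ge k_1$ in this regime), together with a reworked set of goodness conditions for the flipped block decompositions; none of this is carried out. What the paper actually does is keep the $S_1$–$S_2$ paths and blocks of size $k=k_2-k_1$ in both sub-cases, and in the hard sub-case it performs a second-level \emph{grid subdivision} of each $8$-cycle: using repeated applications of Claim~\ref{claim:random-path-symp} it builds interpolating sequences $C,(d_1,C_2),\dots,D$ and $A,(b_1,A_2),\dots,B$, verifies the grid vertices $(b_{\le i},A_i,d_{\le j},C_j,X)$ are full rank, and slices the $8$-cycle into $O\left(\bigl\lceil (k_2-k_1)/(k_3-k_2)\bigr\rceil^2\right)$ smaller $8$-cycles with blocks of size $k'=k_3-k_2$, each of which \emph{does} satisfy the hypotheses of Lemma~\ref{lem:8-cycle-symp}. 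This grid argument, and the accompanying transversality bookkeeping (identifying $A',B',C',D'$ from the sequences and checking $C'\cap\symp(D')=\{0\}$), is the bulk of the technical novelty of this lemma over its type~A analogue, and it is entirely absent from your proposal. Relatedly, Claim~\ref{claim:random-path-symp} enters the paper's proof only at this grid stage, not in the construction of the connector paths $R^i$ as you suggest; those are justified by the goodness conditions via~\eqref{eq:intermediate-verts-symp}. Your final count ``$O(t^2)$ $8$-cycles, $O(1)$ triangles each'' therefore only covers one of the two sub-cases; in the other, each $8$-cycle costs $O\left(((k_2-k_1)/(k_3-k_2))^2\right)$ triangles, and the total still lands at $O(K^2)$ only because of the compensating factor coming from the grid subdivision.
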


\begin{proof}
Without loss of generality we assume that $\dim(V) > \dim(W)$, meaning that $W \subset V$. First, we break the cycle $C(U,V,W)$ into $O(t^2)$ cycles of length $8$ and $O(t)$ triangles. This step is slightly different depending on which type of edge we have, and we proceed by 
case analysis.

\paragraph{Tiling by $8$-cycles when $V \in S_3, W \in S_2$:} Let $V' \subset V$ be the vertex chosen by the path from $U$, with $V' = (V_{(1)},\ldots,V_{(t)})$ being the corresponding block decomposition. Let $(W_{(1)},\ldots,W_{(t)})$ be the block decomposition of $W$. For all $0 \leq i < j \leq t$ we can check that: 
\begin{equation}\label{eq:intermediate-verts-symp}
(V_{(1)},\ldots, V_{(i)},W_{(i+1)},\ldots, W_{(j)},U_{(j+1)},\ldots,U_{(t)}) \in S_2.
\end{equation}
Indeed, fix $i,j$. First note that this vertex is an isotropic subspace because (1) we know that $(V_{\leq i},U_{>i}) \in S_2$ so $V_{\leq i}$ is symplectically orthogonal to $U_{> j}$, (2) $(W_{\leq j},U_{>j}) \in S_2$ implying $(W_{(i+1)},\ldots,W_{(j)})$ is symplectically orthogonal to $U_{> j}$, and (3) $W, V' \subset V$ implying that $(W_{(i+1)},\ldots,W_{(j)}),V_{\leq i}$ are symplectically orthogonal. Therefore all vectors in the subspace in~\eqref{eq:intermediate-verts-symp} are symplectically orthogonal to each other. 

Second, we check that the dimension of the space in~\eqref{eq:intermediate-verts-symp} is $tk$. Since $W \in \good_2$, $\dim(W \cap \symp(U_{>i})) = \dim(W) - \dim(U_{>i}) = ik$. Since $\dim(W_{\leq i}) = ik$, and by construction $W_{\leq i}$ is symplectic to $U_{> i}$, it follows that $W\cap \symp(U_{>i}) = W_{\leq i}$, and so nothing in $W_{>i}$ is symplectically orthogonal to $U_{>i}$. Thus, as 
$V_{\leq i} \subset \symp(U_{> i})$, we conclude that $V_{\leq i} \cap W_{>i} = \{0\}$. We also argue that 
\[
(V_{\leq i}, W_{(i+1)},\ldots,W_{(j)})\cap(U_{>j})
=\{0\}.
\]
Indeed, if $v+w\in (V_{\leq i}, W_{(i+1)})$ and $w\neq 0$, then $v$ is symplectic to $U_{>j}$ and $w$ is not,
so $v+w$ is not symplectic to $U_{>j}$ 
and hence $v+w\not\in U_{>j}$. 
If $w = 0$, then the claim follows
as $V_{\leq i}\cap U_{>j} = \{0\}$
for $i\leq j$.
Combining everything, we get from 
the dimension formula that
\begin{align*}
\dim(V_{\leq i},W_{(i+1)},\ldots, W_{(j)},U_{>j}) &= \dim(V_{\leq i},W_{(i+1)},\ldots, W_{(j)})+\dim(U_{>j}) \\
&= \dim(V_{\leq i})+\dim(W_{(i+1)},\ldots, W_{(j)})+\dim(U_{>j}) \\
&= tk.
\end{align*}

Recall that $P_{2i}(U,V) = (V_{\leq i},U_{>i})$ and same for $W$. We will now create paths $R^i$ of length $2i+1$ between $P_{2i}(U,W)$ and $P_{2i}(U,V)$ for all $i \in [t]$ using the intermediate vertices from~\eqref{eq:intermediate-verts-symp}. These paths are constructed as:
\begin{align*}
R^i := &P_{2i}(U,W) \rightarrow (W_{(2)},\ldots, W_{(i)}, U_{>i}) 
\rightarrow (V_{(1)},W_{(2)},\ldots, W_{(i)}, U_{>i}) \rightarrow \ldots 
&\rightarrow P_{2i}(U,V).    
\end{align*}
Note that this is a valid path because the odd vertices (starting from $P_{2i}(U,W)$) are in $S_2$ by \eqref{eq:intermediate-verts-symp}, which also implies that the even vertices are in $S_1$. By creating the paths $R^{i}$, we have broken the original cycle $C(U,V,W)$ into $O(t)$ cycles of the form: 
\begin{align*}
    C^i = P_{2i}(U,W) \xrightarrow[]{R^{i}} P_{2i}(U,V) \rightarrow P_{2i+1}(U,V) \rightarrow P_{2i+2}(U,V) \xrightarrow[]{R^{i+1}} P_{2i+2}(U,W) 
&\rightarrow P_{2i+1}(U,W) \\
&\rightarrow P_{2i}(U,W),
\end{align*}
for $i \in [1,t-1]$ and 
\[C^{t} = W \xrightarrow[]{R^{t}} V' = (V_{(1)},\ldots,V_{(t)}) \rightarrow V \rightarrow W.\] 

We will tile each $C^i$ by 8-cycles and triangles, starting with $i \in [t-1]$. For all $j \in [1,i+1]$, let $R^i_j$ denote the $(2j-1)^{th}$ vertex on the path $R^i$, that is, 
\[R^i_j = (V_{<j},W_{(j)},\ldots, W_{(i)},U_{> i}).\] 
It is easy to see that for all $j \in [1,i+1]$, the vertices $R^i_{j}$ and $R^{i+1}_j$ are connected via a path of length two: 
\[R^i_j \rightarrow (V_{<j},W_{(j)},\ldots, W_{(i)},U_{\geq i+2},\ldots, U_{(t)}) \rightarrow R^{i+1}_j.\] 
Additionally the middle vertex in the length 2 path is equal to $P_{2i+1}(U,V)$ in the path from $R^i_{i+1} = P_{2i}(U,V) \rightarrow R^{i+1}_{i+1}$, $P_{2i}(U,V) \rightarrow P_{2i+2}(U,V)$ and $R^{i+1}_{i+1} \rightarrow R^{i+1}_{i+2}$, thus showing that the whole cycle has been broken into $O(t)$ 8-cycles and $O(1)$ triangles.

Let us now tile the cycle $C^{t}$. We have that for all $j \in [1,t+1]$, $R^{t}_j = (V_{<j},W_{(j)},\ldots, W_{(t)})$ is contained in $V$ since both $W$ and $V'$ are in $V$. This means that $C^t$ is broken into $O(t)$ triangles. So overall $C(U,V,W)$ has been broken into $O(t^2)$ 8-cycles and $O(t)$ triangles.

\paragraph{Tiling by $8$-cycles when $V \in S_3, W \in S_1$:} Let $V' \subset V$ be the vertex chosen by the path from $U$, with $V' = (V_{(1)},\ldots,V_{(t)})$ being the corresponding block decomposition. Let $W_{(1)}$ be the additional vertex used in the path from $U$ to $W$, with the block decomposition $W' = (W_{(1)},\ldots,W_{(t)})$, where $W = (W_{(2)},\ldots, W_{(t)}) \subset V$. For all $0 \leq i < j \in [t]$,
using the same argument as in~\eqref{eq:intermediate-verts-symp} we have that 
\begin{equation}\label{eq:intermediate-verts-symp2}
(V_{(1)},\ldots, V_{(i)},W_{(i+1)},\ldots, W_{(j)},U_{(j+1)},\ldots,U_{(t)}) \in S_2.
\end{equation}


Now we create the same paths $R^i$ of length $2i+1$ between $P_{2i}(U,W) \rightarrow P_{2i}(U,V)$ for all $i \in [t]$, and each vertex on these paths is in $S_2$ or $S_1$ because of \eqref{eq:intermediate-verts-symp2}. This breaks the cycle into the cycles $C^i$, for $i \in [t]$. The tiling of $C^i, i \in [t-1]$ proceeds identical to the first case, therefore let us discuss the tiling of the last cycle: 
\[C^{t} = W' = (W_{(1)},\ldots,W_{(t)}) \xrightarrow[]{R^{t}} V' = (V_{(1)},\ldots,V_{(t)}) \rightarrow V \rightarrow W \rightarrow W'.\] 
We have that the first three vertices of $R^t$ are $W'$, $(W_{\geq 2})$ and $(V_{(1)},W_{\geq 2})$ which are connected to $W$. Additionally including $(V_{(1)},W_{\geq 2})$, every other vertex on $R^t$, i.e. for all $j \in [2,t+1]$, $R^{t}_j = (V_{<j},W_{(j)},\ldots, W_{(t)})$ and the intermediate vertices in $S_1$, is contained in $V$ since both $W$ and $V'$ are in $V$. This means that $C^t$ is broken into $O(t)$ triangles. So overall $C(U,V,W)$ has been broken into $O(t^2)$ 8-cycles and $O(t)$ triangles.

\paragraph{Tiling by $8$-cycles when $V \in S_2, W \in S_1$:} We can show that $C(U,V,W)$ in this case too can be broken into $O(t^2)$ 8-cycles and $O(t)$ triangles. The proof for the cycles $C^{i}$ for $i \in [t-1]$ is the same, so we only discuss the tiling of $C^t$. We have that,
\[C^{t} = W' = (W_{(1)},\ldots,W_{(t)}) \xrightarrow[]{R^{t}} V = (V_{(1)},\ldots,V_{(t)}) \rightarrow W \rightarrow W',\] 
where $W = (W_{(2)},\ldots,W_{(t)})$. The first and second vertex on $R^t$, $W',W$ are connected to $W$, and then one can check that every vertex on $R^t$ except for the first one, is connected to $V$. This breaks $C^t$ into $O(t)$ triangles.

\paragraph{Triangulating the $8$-cycles:}
Having shown that the cycle $C(U,V,W)$ 
can always be tiled by at most $O(t^2)$
$8$-cycles and $O(t)$ triangles, it suffices
to show that each one of the resulting 
$8$-cycles can be triangulated individually.
Towards this end, we first notice that the 
$8$-cycles we formed consist of edges between isotropic subspaces of dimension $k_1$ and isotropic subspaces of dimension 
$k_2$. Thus, to triangulate them we will have
to use auxiliary vertices of dimension $k_3$.
Intuitively, the difference $k_2-k_1$ measures how ``different'' adjacent vertices
in the cycle are, and it stands to reason
that the closer these vertices are, the 
easier time we will have triangulating it. 
In the proof below we handle two
cases separately: the case $k_2-k_1\leq k_3-k_2$, and the case that $k_2-k_1 > k_3-k_2$, 
and we begin with the former easier case.

\paragraph{Triangulating the 8-cycles when $k_3-k_2\geq k_2-k_1$:} 
Note that each $8$-cycle in the above tiling is of the following form for some $1 \leq j < i \in [t]$:
\begin{align}
(W_{(j)},U_{(i)},X) \rightarrow (W_{(j)},X) \rightarrow (W_{(j)},W_{(i)},X) \rightarrow (W_{(i)},X)\notag \\
\rightarrow (V_{(j)},W_{(i)},X) \rightarrow (V_{(j)},X) \rightarrow (V_{(j)},U_{(i)},X) \rightarrow (U_{(i)},X) \rightarrow (W_{(j)},U_{(i)},X),\label{eq:8-cycle-symp}
\end{align}
with $X = (V_{\leq j-1},W_{(j+1)},\ldots,W_{(i-1)},U_{(> i)})$. Let $A=W_{(j)},B=V_{(j)},C = U_{(i)},D = W_{(i)}$.
By Claim~\ref{claim:u_i-w_i}, $C \cap \symp(D) = \{0\}$. Furthermore each of these blocks are of size $k = k_2-k_1$ and our assumption reads that $k_3 \geq k_2+k$. This cycle therefore satisfies the assumptions of Lemma~\ref{lem:8-cycle-symp} and can thus be triangulated with $O(1)$ triangles.

\paragraph{Triangulating the 8-cycles when $k_3-k_2\leq k_2-k_1$:} 
This case is more difficult, and we first start
with an $8$-cycle of the form we created, and  transform it into an $8$-cycle that can 
be triangulated using Lemma~\ref{lem:8-cycle-symp}.
Fix some $i_0 <j_0 \in [t]$ consider an 8-cycle as in~
\eqref{eq:8-cycle-symp}, namely:
\begin{align*}
C_8 &= (A,C,X) \rightarrow (A,X) \rightarrow (A,D,X) \rightarrow (D,X) \rightarrow (B,D,X) \\
&\rightarrow (B,X) \rightarrow (B,C,X) \rightarrow (C,X) \rightarrow (A,C,X),
\end{align*}
where $A,B,C,D,X$ are defined appropriately as in the above paragraph. We have the property that $C \cap \symp(D) = D \cap \symp(C) = \{0\}$.

Picture the 8-cycle as a square, with the $k_2$-dimensional vertices at the 4 corners. We will construct two ``horizontal'' paths: $(A,C,X) \rightarrow (A,C,D)$ and $(B,C,X) \rightarrow (B,D,X)$ and two vertical paths: 
$(A,C,X) \rightarrow (B,C,X)$ and $(A,C,X) \rightarrow (B,D,X)$. To do so first we apply the randomized algorithm from Claim~\ref{claim:random-path-symp} on the subspaces $C_1:=C, D:=D_1$ to get $C_2 \subset_{k-1} C_1, d_1 \in D$ such that $(d_1,C_2)$ is an isotropic subspace. We can then write $D = D_2 +d_1$, such that $D_2 \cap \spn(\{d_1\}) = \emptyset$, and apply Claim~\ref{claim:random-path-symp} again to get $d_2 \in D_2, C_3 \subset_{k-2} C_2$ such that $(d_2,C_3)$ is an isotropic subspace and in fact $(d_1,d_2,C_3)$ is also isotropic. Similarly applying the claim $k$ times we get a sequence of subspaces:
\begin{equation*}\label{eq:sequence-cd}
C, ~~~ (d_1,C_2), ~~~ (d_1,d_2,C_3),\ldots, (d_1,\ldots,d_{k-1},C_{k}), ~~~ D.    
\end{equation*}
We know that $C_{k} \subset_1 C_{k-1} \subset_2 \ldots C_2 \subset_{1} C$. Similarly we create the following sequence between $A$ and $B$:
\begin{equation*}\label{eq:sequence-ab}
A, ~~~ (b_1,A_2), ~~~ (b_1,b_2,A_2),\ldots, (b_1,\ldots,b_{k-1},A_{k}), ~~~ B.    
\end{equation*}
We will show that for all $i,j \in [k]$:
\begin{equation}\label{eq:symp-grid-vertices}
\Pr_{a,b,c,d}[\dim(b_{\leq i},A_{i},d_{\leq j},C_{j},X) = 2k] \geq 1-\frac{\poly(k)}{q}.   
\end{equation}
This is because:
\begin{align*}
\Pr_{a,b}[\dim(b_{\leq i},A_{i},X)= k] &= \prod_{i'=1}^i \Pr_{a,b}[\dim(b_{\leq i'-1},b_{i'},A_{i'},X)= k \mid \dim(b_{\leq i'-1},A_{i'-1},X) = k]\\
&\geq \left(1 - O\left(\frac{1}{q}\right)\right)^i \geq 1 - O\left(\frac{k}{q}\right),
\end{align*}
where for each term in the product we used Claim~\ref{claim:random-path-symp} with the subspaces $C=(b_{\leq i'-1},X)$, $A = A_{i'-1}$ and $B = B_{i'-1}$. To prove \eqref{eq:symp-grid-vertices} we use the same trick again:
\begin{align*}
&\Pr_{a,b,c,d}[\dim(b_{\leq i},A_{i},d_{\leq j},C_j,X)=2k] \\
&=\Pr_{a,b}[\dim(b_{\leq i},A_{i},X)=k]\Pr_{a,b,c,d}[\dim(b_{\leq i},A_{i},d_{\leq j},C_j,X)=2k \mid \dim(b_{\leq i},A_{i},X)=k]\\ 
&\geq \left(1 - O\left(\frac{k}{q}\right)\right)\prod_{j'=1}^j \Pr_{a,b,c,d}[\dim(d_{\leq j'-1},b_{\leq i},A_{i},d_{j'},C_{j'},X)= 2k \mid \dim(d_{\leq j'-1},C_{j'-1},b_{\leq i},A_{i},X) = 2k]\\
&\geq 1 - O\left(\frac{\poly(k)}{q}\right),
\end{align*}
where for each term in the product we used Claim~\ref{claim:random-path-symp} with the subspaces $C=(d_{\leq j'-1},b_{\leq i},A_{i},X)$, $A = C_{j'-1}$ and $B = D_{j'-1}$, which proves \eqref{eq:symp-grid-vertices}. 
Since $q > \poly(k)$ we can now union bound over all choices of $i,j \in [k]$ to get that there exists a choice of $a,b,c,d$'s such that for all $i,j \in [k]$:
\begin{equation}\label{eq:grid-vertices-full-rank}
\dim(b_{\leq i},A_{i},d_{\leq j},C_{j},X) = 2k.
\end{equation}
Henceforth fix such a choice of $a,b,c,d$'s.

Denote $k' = k_3-k_2$ and write $k = t'k'+r'$, with $r' < k'$ and $t' \geq 1$. For all $i \in [0,t']$, set $P_i(C,D)= (d_{\leq ik'},C_{ik'+1})$ and $P_{t'+1}(C,D) = D$. Then we have the path
\begin{align*}
R^{1} &= (A,C,X) \rightarrow (A,X) \rightarrow (A,P_1(C,D),X) \rightarrow \ldots \\
&\rightarrow (A,X) \rightarrow 
(A,P_{t'}(C,D),X) \rightarrow (A,X) \rightarrow (A,D,X)  
\end{align*}
that alternates between $S_2$ and $S_1$ by \eqref{eq:grid-vertices-full-rank}. Similarly for all $i \in [0,t']$, let $P_i(A,B)$ denote the subspace, $(b_{\leq ik'},A_{ik'+1})$ and $P_{t'+1}(A,B) = B$. We get the following parallel path between $(B,C,X)$ and $(B,D,X)$:
\begin{align*}
R^{2} &= (B,C,X) \rightarrow (B,X) \rightarrow (B,P_1(C,D),X) \rightarrow \ldots \\
&\rightarrow (B,X) \rightarrow 
(B,P_{t'}(C,D),X) \rightarrow (B,X) \rightarrow (B,D,X). 
\end{align*}
We now create horizontal paths between $(A,X,P_i(C,D))$ and $(B,X,P_i(C,D))$ for all $0\leq i \leq t'+1$:
\begin{align*}
H^{i} &= (A,P_i(C,D),X) \rightarrow (P_i(C,D),X) \rightarrow (P_1(A,B),P_i(C,D),X) \rightarrow \ldots \\
&\rightarrow (P_i(C,D),X) \rightarrow 
(P_{t'}(A,B),P_i(C,D),X) \rightarrow (P_i(C,D),X) \rightarrow (B,P_i(C,D),X),  
\end{align*}
where every vertex is either in $S_2$ or $S_1$ by \eqref{eq:grid-vertices-full-rank}. Let $H^i_j$ denote the vertex $(P_j(A,B),P_i(C,D),X)$ -- this is connected to $H^{i+1}_j$ via the vertex $(P_j(A,B),X) \in S_1$. Therefore we have broken the whole cycle into multiple triangles on the periphery and a grid of 8-cycles of the form:
\begin{align*}
&(P_{j}(A,B),P_{i}(C,D),X) \rightarrow (P_{j}(A,B),X) \rightarrow (P_{j}(A,B),P_{i+1}(C,D),X)\\ 
&\rightarrow (P_{i+1}(C,D),X) \rightarrow (P_{j+1}(A,B),P_{i+1}(C,D),X) 
\rightarrow (P_{j+1}(A,B),X) \\
&\rightarrow (P_{j+1}(A,B),P_{i}(C,D),X) \rightarrow (P_{i}(C,D),X) \rightarrow (P_{j}(A,B),P_{i}(C,D),X),
\end{align*}
for $i,j \in [t']$. 
To rewrite this cycle in a more convenient form, let $A_{jk'+1} = A_{(j+1)k'+1} + A'$, for some $k'$-dimensional $A'$ satisfying $A' \cap A_{(j+1)k'+1} = \{0\}$. Let $B' = (b_{jk'+1},\ldots,b_{(j+1)k'})$, and $Y_{AB} = (b_{\leq jk'},A_{(j+1)k'+1})$. In these notations we have that
\[P_j(A,B)=(R_A,Y_{AB}) ~~~ P_{j+1}(A,B)=(b_{jk'+1},\ldots,b_{(j+1)k'},Y_{AB}).\]

Let $D' = (d_{ik'+1},\ldots,d_{(i+1)k'})$, so that $D_{ik'+1} = D_{(i+1)k'+1}+D'$. Let $C' = \symp(D_{(i+1)k'+1}) \cap C_{ik'+1}$, and $Y_{CD} = (d_{\leq ik'},C_{(i+1)k'+1})$. We will now show that $C' \cap \symp(D') = 0$ and $\dim(C') = k'$.
First note that for all $x \in [t']$, $C_x \cap \symp(D_x) = \symp(C_x) \cap D_x = \{0\}$. This is because $D = \spn(d_{\leq x-1})+D_x$ and $C_x \subset \symp(d_{\leq x-1})$ therefore if any vector $v \in C_x$ was symplectically orthogonal to $D_x$, then $v$ would be symplectically orthogonal to $D$ and we would get a contradiction to $C \cap \symp(D) = \{0\}$; thus $C_x\cap\symp(D_x) = \{0\}$, and the proof
for $\symp(C_x)\cap D_x = \{0\}$ is analogous.
Since $C_{(i+1)k'+1} \cap \symp(D_{(i+1)k'+1}) = \{0\}$, we get that $C_{(i+1)k'+1} \cap C' = 0$. By dimension counting we know that $\dim(C') \geq k'$,
which gives us the decomposition:
\[C_{ik'+1} = C_{(i+1)k'+1} + C',\]
and $\dim(C') = k'$. Since $C_{ik'+1} \cap \symp(D_{ik'+1}) = \{0\}$ and $C_{(i+1)k'+1} \subset \symp(D')$, we conclude that $C' \cap \symp(D') = \{0\}$. Finally we get that,
\[P_i(C,D)=(C',Y_{CD}) ~~~ P_{i+1}(C,D)=(d_{ik'+1},\ldots,d_{(i+1)k'},Y_{CD}).\]

Letting $Y = (Y_{AB},Y_{CD})$, we can rewrite the cycle as:
\begin{align*}
&(A',C',Y) \rightarrow (A',Y_{AB}) \rightarrow (A',D',Y) \rightarrow (D',Y_{CD}) \rightarrow (B',D',Y) \\
&\rightarrow (B',Y_{AB}) \rightarrow (B',C',Y_{AB}) \rightarrow (C',Y_{CD}) \rightarrow (A',C',Y).
\end{align*}
We have shown that each of $A',B',C',D'$ is of size $k' = k_3-k_2$, with $C' \cap \symp(D') = 0$ and  $k_3 \geq k_2+k'$. Thus the cycle satisfies the assumptions of Lemma~\ref{lem:8-cycle-symp} and can be triangulated in $O(1)$ triangles. 

\paragraph{Size of triangulation:} Since we showed a tiling of $C(U,V,W)$ by $O(t^2)$ 8-cycles, and a tiling of each 8-cycle by $O(\lceil k_2-k_1/k_3-k_2\rceil^2)$ triangles, $C(U,V,W)$ has a triangulation of size $O(K^2)$ as required. 
\end{proof}

\subsubsection{Using Triangulations for Coboundary Expansion}
Given the triangulations from Lemma~\ref{lem:triangulation-symp1} and as is
standard in applications of the cones method, we will use the property that $\sp_{2d}(\F_q)$ acts transitively on the triangles of $S(k_1,k_2,k_3)$ to complete the proof of Lemma~\ref{lem:base-symplectic}.

For a matrix $M \in \sp_{2d}(\F)$, and a subspace $V$ recall that $M(V)$ denotes the subspace $\spn(Mv \mid v \in V)$. Let $M^{-1}(V)$ denote the subspace $W$ such that $M(W) = V$. Given a path $P(U,V)$ from 
$U$ to $V$, let $P_M(M(U),M(V))$ denote the path from $M(U)$ to $M(V)$ where at the $i^{th}$-step we have the vertex $M(P_i(U,V))$. It is easy to see that this is a valid path from $M(U)$ to $M(V)$ if $V \in \bigcup_{i \in [3]}\good_i$. For a triangle $\Delta$ let $M(\Delta)$ denote the triangle whose vertices are $M(U_i), \forall U_i \in \Delta$. In fact we can let $T_M(M(U),M(V),M(W))$ be the triangulation of the cycle \[
M(U) \xrightarrow[]{P_M(M(U),M(V))} M(V) \rightarrow M(W) \xrightarrow[]{P_M(M(W),M(U))} M(U)
\]
where a triangle in this triangulation is given by $M(\Delta)$ for $\Delta \in T(U,V,W)$. Again it is easy to see that this is a valid triangulation of the cycle.

We have the following randomized algorithm to get a good solution to an arbitrary UG instance $\Phi$. 
\begin{mdframed}
\begin{algo}[$\Phi = (S(k_1,k_2,k_3),\Pi)$]\mbox{}\label{algo:prop2}\\
Input: UG instance $\Phi$ on $S(k_1,k_2,k_3)$.\\
Output: A function $f: V(S(k_1,k_2,k_3)) \rightarrow \S_m$.
\begin{enumerate}
\item Choose a random linear transformation $M \sim \sp_{2d}(\F_q)$ and set $f(M(U)) = \text{id}$.
\item For each subspace $V \in \bigcup_i \good_i$, assign $M(V)$ the label obtained by propagating the label of $M(U)$ to $M(V)$ via the path $P_M(M(U),M(V))$. For $V \notin \bigcup_i \good_i$, assign an arbitrary label to $M(V)$.
\end{enumerate}
\end{algo}
\end{mdframed}

We now complete the proof of Lemma~\ref{lem:base-symplectic} via the lemma below. \dor{The proof of this lemma is essentially the same as that of Lemma 4.3 in~\cite{dikstein2023swap} and Lemma~\ref{lem:expected-sol-grassmann} and we give it here for the sake of completeness (see also Remark~\ref{rem:cones-grassmann}).}
\begin{lemma}
Let $\Phi$ be any UG instance over $\S_m$ with $\incons(\Phi) =\delta$. Then in expectation over $M \sim \sp_{2d}(\F_q)$, the algorithm violates at most $O(K^2)\delta+\poly(K)/q$-fraction of edges where 
\[
K = \max\left(\left\lceil\frac{k_2}{k_3-k_2}\right\rceil, \left\lceil\frac{k_2}{k_2-k_1}\right\rceil\right).
\]
\end{lemma}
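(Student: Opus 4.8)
The plan is to run the standard cones-method argument, now with the group $\sp_{2d}(\F_q)$ in the role that $\GL_d(\F_q)$ played in Lemma~\ref{lem:expected-sol-grassmann}. First I would fix the vertex $U\in S_2$, the block decomposition and the good sets $\good_1,\good_2,\good_3$ as in the construction above, and recall from Lemma~\ref{lem:good-verts-symp} that a random vertex of $S_i$ lies in $\good_i$ with probability $1-O(t/q)$; by a union bound a random edge $(V,W)$ has both endpoints good with probability $1-\poly(t)/q$. For every such good edge Lemma~\ref{lem:triangulation-symp1} supplies a triangulation $T(U,V,W)$ of the cycle $C(U,V,W)$ of size $O(K^2)$. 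Applying $M\in\sp_{2d}(\F_q)$ to everything (using that $M$ preserves isotropy and inclusions, so it sends valid faces to valid faces and the paths/triangulations are carried over verbatim) yields, for each $M$, paths $P_M(M(U),M(V))$ and triangulations $T_M(M(U),M(V),M(W))$ of the same sizes.

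Next I would analyse Algorithm~\ref{algo:prop2}. For a fixed $M$, write $f_M$ for the output. An edge $(M(V),M(W))$ with $(V,W)$ a good edge is satisfied by $f_M$ whenever propagating the identity label of $M(U)$ around the cycle $M(U)\to M(V)\to M(W)\to M(U)$ returns to the identity, and this holds as soon as every triangle of $T_M(M(U),M(V),M(W))$ is consistent in $\Phi$. Hence
\[
\viol(f_M)\leq \Pr_{(V,W)}[(V,W)\notin\good(E)] + \E_{(V,W)\sim\good(E)}\big[\Ind(\exists\,\Delta\in T_M(M(U),M(V),M(W))\cap\incons(\Phi))\big],
\]
where $\good(E)=\bigcup_{b<a}\good_{ab}$ is the set of good edges. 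Bounding the first term by $\poly(K)/q$ via Lemma~\ref{lem:good-verts-symp} and the second by a union bound over the at most $O(K^2)$ triangles of each triangulation gives
\[
\viol(f_M)\leq \frac{\poly(K)}{q} + O(K^2)\,\E_{(V,W)\sim\good(E)}\,\E_{\Delta\in T(U,V,W)}\big[\Ind(M(\Delta)\in\incons(\Phi))\big].
\]
Taking expectation over $M\sim\sp_{2d}(\F_q)$ and swapping the order of expectations, the key point is transitivity: since $\sp_{2d}(\F_q)$ acts transitively on the triangles of $S(k_1,k_2,k_3)$, for each fixed base triangle $\Delta$ the image $M(\Delta)$ is a uniformly random triangle of the complex, so $\E_M[\Ind(M(\Delta)\in\incons(\Phi))]=\incons(\Phi)=\delta$. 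Therefore $\E_M[\viol(f_M)]\leq \poly(K)/q + O(K^2)\delta$, which proves the lemma, and the existence of a single good $M$ (hence a single assignment with violation at most $O(K^2)\delta+\poly(K)/q$) then gives Lemma~\ref{lem:base-symplectic}.

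The routine but slightly delicate points to get right are: (i) checking that $M$ genuinely preserves all the structure used in the construction — isotropy of subspaces, the inclusion chains, and the fact that $M$ maps $\good_i$-membership of $V$ relative to $U$ to $\good_i$-membership of $M(V)$ relative to $M(U)$ (this is automatic because all defining conditions are stated in terms of dimensions, intersections, and the symplectic form, all of which $M$ preserves); and (ii) the transitivity of $\sp_{2d}(\F_q)$ on triangles of $S(k_1,k_2,k_3)$, which follows from the stated transitivity of $\sp_{2d}(\F)$ on isotropic subspaces of each fixed dimension together with Witt-type extension for flags of isotropic subspaces. I do not anticipate a genuine obstacle here; the heavy lifting has already been done in Lemmas~\ref{lem:8-cycle-symp} and~\ref{lem:triangulation-symp1}, and this final lemma is the bookkeeping step that converts small triangulations of most cycles into a coboundary-expansion bound with the additive $\poly(K)/q$ error accounting exactly for the non-good edges. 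The only place to be careful is to keep the "most cycles" (rather than "all cycles") nature of the triangulation visible throughout, exactly as in Remark~\ref{rem:cones-grassmann}.

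\begin{proof}
The proof is identical to that of Lemma~\ref{lem:expected-sol-grassmann}, replacing $\GL_d(\F_q)$ by $\sp_{2d}(\F_q)$ and using that the latter acts transitively on the triangles of $S(k_1,k_2,k_3)$; we include it for completeness. Fix $U\in S_2$, the block decomposition, and the good sets $\good_1,\good_2,\good_3$ as above, and set $\good(E) = \bigcup_{b<a\in[3]}\good_{ab}$. For $M\in\sp_{2d}(\F_q)$ let $f_M$ be the assignment produced by Algorithm~\ref{algo:prop2}. Since $M$ preserves the symplectic form, it maps isotropic subspaces to isotropic subspaces of the same dimension and preserves inclusions, hence sends $V(S(k_1,k_2,k_3))$ and $E(S(k_1,k_2,k_3))$ to themselves, and for $V\in\bigcup_{i\in[3]}\good_i$ the path $P_M(M(U),M(V))$ is a valid path. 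Moreover, for any good edge $(V,W)\in\good(E)$, $T_M(M(U),M(V),M(W))$ is a valid triangulation of the cycle $M(U)\xrightarrow{P_M(M(U),M(V))}M(V)\to M(W)\xrightarrow{P_M(M(W),M(U))}M(U)$, consisting of the triangles $M(\Delta)$ for $\Delta\in T(U,V,W)$, and hence has size at most $O(K^2)$ by Lemma~\ref{lem:triangulation-symp1}.

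If $(V,W)\in\good(E)$ and every triangle in $T_M(M(U),M(V),M(W))$ is consistent in $\Phi$, then the cycle is consistent and so $f_M$ satisfies the edge $(M(V),M(W))$. Therefore
\begin{align*}
\viol(f_M) &\leq \Pr_{(V,W)\sim E}[(V,W)\notin\good(E)] + \E_{(V,W)\sim\good(E)}\big[\Ind(\exists\,\Delta\in T(U,V,W):M(\Delta)\in\incons(\Phi))\big]\\
&\leq \frac{\poly(K)}{q} + \max_{(V,W)\in\good(E)}(|T(U,V,W)|)\cdot\E_{(V,W)\sim\good(E)}\E_{\Delta\in T(U,V,W)}[\Ind(M(\Delta)\in\incons(\Phi))]\\
&\leq \frac{\poly(K)}{q} + O(K^2)\cdot\E_{(V,W)\sim\good(E)}\E_{\Delta\in T(U,V,W)}[\Ind(M(\Delta)\in\incons(\Phi))],
\end{align*}
where the first term is bounded using Lemma~\ref{lem:good-verts-symp} and the union bound, and the size of the triangulation is bounded using Lemma~\ref{lem:triangulation-symp1}. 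Taking an expectation over $M\sim\sp_{2d}(\F_q)$ and using that $\sp_{2d}(\F_q)$ acts transitively on triangles of $S(k_1,k_2,k_3)$ (so that $M(\Delta)$ is a uniformly random triangle for each fixed $\Delta$), we get
\begin{align*}
\E_M[\viol(f_M)] &\leq \frac{\poly(K)}{q} + O(K^2)\cdot\E_{(V,W)\sim\good(E)}\E_{\Delta\in T(U,V,W)}\E_M[\Ind(M(\Delta)\in\incons(\Phi))]\\
&\leq \frac{\poly(K)}{q} + O(K^2)\delta,
\end{align*}
which completes the proof.
\end{proof}
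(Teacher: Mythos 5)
Your proof is correct and follows the same cones-method argument as the paper's: bound $\viol(f_M)$ by the fraction of non-good edges plus a union bound over the $O(K^2)$ triangles in each triangulation, then average over $M\sim\sp_{2d}(\F_q)$ using transitivity on triangles. The only slip is purely notational: in the symplectic case the paper defines $\good(E)$ simply as the set of edges with both endpoints in $\bigcup_i\good_i$ (there is no analogue of the Grassmannian $\good_{ab}$ sets in Section~\ref{sec:symp-base-case}), but this does not affect the argument since Lemma~\ref{lem:triangulation-symp1} requires only that both endpoints be good.
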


\begin{proof}
Suppose the propagation algorithm chooses a linear transformation $M$. Consider the assignment $f_M: V(S(k_1,k_2,k_3)) \rightarrow \S_m$ that Algorithm~\ref{algo:prop2} outputs in this case. For $V,W \in \bigcup_i \good_i$, an edge $(M(V),M(W))$ is satisfied if the cycle $M(U) \xrightarrow[]{P_M(M(U),M(V))} M(V) \rightarrow M(W) \xrightarrow[]{P_M(M(W),M(U))} M(U)$ is consistent, i.e. the permutations on the edges product to $\text{id}$. Furthermore this is true if every triangle in $T_M(M(U),M(V),M(W))$ is consistent. Recall that this is the set $M(\Delta)$ as $\Delta$ ranges over $T(U,V,W)$. Let $E$ denote $E(S(k_1,k_2,k_3))$ and $\good(E)$ denote the set of edges $(V,W)$ for $V,W \in \bigcup_i \good_i$. We get that,
\begin{align*}
\viol(f_M) &\leq \hspace{-0.5ex}\Pr_{(V,W) \sim E}[(V,W)\notin \good(E)] +\hspace{-2.4ex}\E_{(V,W)\sim \good(E)}[\Ind(\exists \Delta \in T_M(M(U),M(V),M(W)) \cap \incons(\Phi)]\\
&\leq O\left(\frac{K}{q}\right) + \max_{(V,W) \in \good(E)}(|T(U,V,W)|)\E_{(V,W) \sim \good(E)}\E_{\Delta \in T(U,V,W)}[\Ind(M(\Delta) \in \incons(\Phi)]\\
&\leq O\left(\frac{K}{q}\right) + O(K^2)\E_{(V,W) \sim \good(E)}\E_{\Delta \in T(U,V,W))}[\Ind(M(\Delta) \in \incons(\Phi)],
\end{align*}
where in the second inequality we used Lemma~\ref{lem:good-verts-symp} and the last one we used Lemmas~\ref{lem:triangulation-symp1} to bound the size of the triangulation.

Recall that $M$ acts transitively on the set of $t$-dimensional isotropic subspaces, therefore the distribution $M(\Delta)$ over $M \sim \sp_{2d}(\F_q)$ is uniform over the triangles in $S(k_1,k_2,k_3)$. Using this fact, we can now take an expectation over $M \sim \sp_{2d}(\F_q)$ for the above equation to get:
\begin{align*}
\E[\viol(f_M)] &\leq O\left(\frac{K}{q}\right)+O(K^2)\E_{(V,W) \sim \good(E)}\E_{\Delta \in T(U,V,W))}\E_M[\Ind(M(\Delta) \in \incons(\Phi)]\\
&\leq O\left(\frac{K}{q}\right)+O(K^2)\delta,
\end{align*}
which completes the proof.
\end{proof}

\subsubsection{Modifications in Triangulations in Edge Cases}\label{sec:edge_cases}
In this section, we explain the necessary modifications to the triangulation argument 
in the case that $k_2$ and $k_1$ are not multiples of $k = k_2-k_1$. Let $k_2=(t-1)k+a$ for some $2 \leq t \in \N$ and $a \in \{0,\ldots,k-1\}$. Then $k_1 = (t-2)k+a$. Fix a vertex $U \in S_2$ and let $U = (U_{(1)},\ldots,U_{(t)})$ where each block has dimension $k$, except for the last one which has dimension $a$. We can define the set $\bigcup_i \good_i$ accordingly which gives the following set of paths from $U$.

\paragraph{Set of Paths from $U$:}
The paths from $U$ are the same for all but the second to last step. For every $V$ we will associate with $V$ a vertex $V'$. In the case that $\dim(V) = k_2$, we will take $V' = V$, and otherwise $V'$ will be an appropriately chosen subspace or superspace of $V$. We will also associate with $V$ an $a$-dimensional subspace $V'_{(t-1)}$ which is a subset of $V_{(t-1)}$ that is obtained while creating a path from $U$ to $V$.
\begin{enumerate}
\item For a vertex $V$ of dimension $k_2$, using Claim~\ref{claim:isotropic-path} on $V$ and $U_{\geq 2}$ we find $V_{(1)} \subseteq_{k} V$ such that $(V_{(1)}, U_{(2)},\ldots, U_{(t)}) \in S_2$. Applying this claim iteratively, we find $V_{(2)}$ such that 
\[
(V_{(1)}, V_{(2)},U_{(3)},\ldots, U_{(t)}) \in S_2,
\] 
and so on. Let $V'_{(t-1)}$ be a random $a$-dimensional subspace of $V_{(t-1)}$. Then consider the following path from $U \rightarrow V$ which flips a block of $U$ to a block of $V$ one at a time: 
\begin{align*}
P(U,V) = &(U_{(1)},\ldots, U_{(t)}) \rightarrow U_{\geq 2} \rightarrow (V_{(1)},U_{\geq 2}) 
\rightarrow (V_{(1)},U_{\geq 3}) \rightarrow (V_{(1)},V_{(2)},U_{\geq 3}) \rightarrow \ldots \\
&\rightarrow (V_{< t},U_{(t)}) \rightarrow (V_{<t-1},V'_{(t-1)}) \rightarrow V.    
\end{align*}
Note that this path alternates between vertices of $S_2$ and $S_1$. We set $V' = V$.
\item We do the same for vertices in $\good_1,\good_3$. The paths remain the same for all but the step where we put in the vertex $(V_{<t-1},V'_{(t-1)})$ and one can show that these are valid paths alternating between $S_2$ and $S_1$.
\end{enumerate}

\paragraph{Triangulating the cycles $C(U,V,W)$:} Let us focus on the case where $(V,W)$ is an edge for $V \in \good_2, W\in \good_1$. The other two cases follow analogously. We create the same paths $R^i$ from $P_{2i}(U,W)$ to $P_{2i}(U,V)$. The only path that is slightly different is the path $R^t$:
\begin{align*}
R^t := P_{2i}(U,W) \rightarrow (W_{(2)},\ldots, W_{(t)}) 
\rightarrow (V_{(1)},W_{(2)},\ldots, W_{(t)}) \rightarrow \ldots \rightarrow (V_{<t}, W_{(t)}) &\rightarrow (V_{<t-1}, V'_{(t-1)})\\
&\rightarrow P_{2i}(U,V).    
\end{align*}
One can now tile each cycle $C^i$ created using 8-cycles and triangles in an identical manner when $i \in [1,t-2] \cup \{t\}$. Let us discuss the tiling of $C^{t-1}$. Putting in horizontal length two paths between $R^{t-1}_j$ and $R^t_j$ we can break $C^{t-1}$ into 8-cycles of the form:
\begin{align}
&(W_{(1)},U_{(t)},X,Y) \rightarrow (W_{(1)},X) \rightarrow (W_{(1)},W_{(t)},X,Y) \rightarrow (W_{(t)},X,Y)\notag \\
&\rightarrow (V_{(1)},W_{(t)},X,Y) \rightarrow (V_{(1)},X) \rightarrow (V_{(1)},U_{(t)},X,Y) \rightarrow (U_{(t)},X,Y) \rightarrow (W_{(1)},U_{(t)},X,Y),\label{eq:8-cycle-symp-modified}
\end{align}
with $X = (W_{(2)},\ldots,W_{(t-2)},W'_{(t-1)})$ and $Y = W''_{(t-1)})$, where $W''_{(t-1)}+W'_{(t-1)} = W_{(t-1)}$ and $W''_{(t-1)} \cap W'_{(t-1)} = \{0\}$. The only difference in this 8-cycle from the one in~\eqref{eq:8-cycle-symp} is that the vertices $(W_{(1)},U_{(t)},X,Y)$ and $(W_{(1)},W_{(t)},X,Y)$ have an intersection $(W_{(1)},X,Y)$ which is larger than the $k_1$-dimensional subspace $(W_{(1)},X)$, and the same holds for the vertices $(V_{(1)},U_{(t)},X,Y)$ and $(V_{(1)},W_{(t)},X,Y)$. Intuitively this is only easier to tile than~\eqref{eq:8-cycle-symp} where every two adjacent points on the square intersect in a $k_1$-dimensional subspace. Indeed this holds, and in both the cases $k_3-k_2 \geq k_2-k_1$ and $k_3-k_2 \leq k_2-k_1$ we can use the same strategy to tile this 8-cycle with triangles.

Given this tiling of $C(U,V,W)$, the rest of the proof remains the same and we omit the details.

\subsection{The Extended Base Cases}
In this section we use Lemmas~\ref{lem:base-grassmann} and~\ref{lem:base-symplectic} to show that the spherical buildings of type A and C, and their tensor products, satisfy the Assumptions~\ref{assumption1}. That is, we will prove a bound on the coboundary constant of tripartite graphs $T(A,B,C;\mu|X_S = A_0)$, where $\max_{a \in A} a< \min_{b \in B} b$ and $\max_{b \in B} b < 
\min_{c \in C}c$, and
$\mu$ is the uniform distribution over the maximal faces of one of these complexes.

\subsubsection{The Extended Base Case for Restrictions of Type A and Type C}
To establish Assumption~\ref{assumption1}
we proceed in two steps. 
First, we prove an auxiliary
lemma handling the case 
that $\mu$ has a product structure in the sense that $\mu^{S_1\cup S_2} = \mu^{S_1}\times \mu^{S_2}$
and $\mu^{S_1\cup S_3} = \mu^{S_1}\times \mu^{S_3}$.
Let $\diam(G)$ denote the diameter of a graph $G$.
\begin{lemma}\label{lem:product-cbdry}
Let $\mu$ be a distribution over $\prod_{i \in [d]} X_i$, and let $G$ be a group acting on $\prod X_i$ such that for all $g \in G$ it holds that $g(X_i) = X_i$ and $g(\supp(\mu)) = \supp(\mu)$, as well as for all $X \in \supp(\mu)$, the distribution over $g(X)$ for a uniformly chosen $g\in G$ is $\mu$. Then for all pairwise disjoint sets $S_1,S_2,S_3 \subset [d]$ that satisfy $\mu^{S_1 \cup S_2} = \mu^{S_1} \times \mu^{S_2}$ and $\mu^{S_1 \cup S_3} = \mu^{S_1} \times \mu^{S_3}$, we get that 
\[
C(T(S_1,S_2,S_3;\mu)) \lll \diam(A(S_2,S_3;\mu)).
\]
\end{lemma}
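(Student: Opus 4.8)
The plan is to run the cones (propagation) method, exactly as in the proofs of Lemma~\ref{lem:base-grassmann} and Lemma~\ref{lem:expected-sol-grassmann}, but with a drastically simpler system of canonical paths that exploits the product structure. Write $P_i=\supp(\mu^{S_i})$ for the three parts of $T:=T(S_1,S_2,S_3;\mu)$ and set $L=\diam(A(S_2,S_3;\mu))$; we may assume $L<\infty$, otherwise the bound is vacuous. The product structure has two consequences that drive the proof: (i) in $T$, every vertex of $P_1$ is joined to every vertex of $P_2$ and of $P_3$, so the $P_1$--$P_2$ and $P_1$--$P_3$ bipartite graphs are complete; and (ii) $(u,v,w)$ is a genuine triangle of $T$ (i.e.\ lies in the support of the triangle distribution $\cD$) whenever $u\in P_1$ and $(v,w)$ is an edge of $A:=A(S_2,S_3;\mu)$. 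Thus the only nontrivial graph sitting inside $T$ is $A$, and a vertex of $P_1$ is ``universal'': it can be coned over any cycle living in $A$.

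Fix a root $u_0\in P_1$ and a pivot $v_0\in P_2$, and take as canonical paths $u_0\to v$ (a single edge) for $v\in P_2$, $u_0\to w$ (a single edge) for $w\in P_3$, and $u_0\to v_0\to u$ for $u\in P_1$. For each edge $(x,y)$ of $T$ one must triangulate, with few \emph{genuine} triangles, the cycle $C(u_0,x,y)$ formed by $P(u_0,x)$, the edge $(x,y)$, and $P(y,u_0)$. Edges incident to $u_0$, and the edges $(u,v_0)$, give degenerate cycles that are consistent for free; a $P_2$--$P_3$ edge $(v,w)$ gives the triangle $u_0\to v\to w\to u_0$, of size $1$; and an edge $(u,x)$ with $u\in P_1\setminus\{u_0\}$ and $x\in P_2\cup P_3$ gives the $4$-cycle $u_0\to v_0\to u\to x\to u_0$. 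To triangulate the last kind, choose a geodesic $\gamma:\;v_0=z_0,z_1,\dots,z_{\ell}=x$ in $A$ (so $\ell\le L$) and fill $C$ by the \emph{double cone} over $\gamma$: the triangles $(u_0,z_j,z_{j+1})$ and $(u,z_j,z_{j+1})$ for $0\le j<\ell$, a total of $2\ell\le 2L$ triangles, all genuine by (ii). One checks this is a valid triangulation in the sense of Definition~\ref{def:triangulation} and, more to the point, that it certifies consistency of $C$: if every $(u_0,z_j,z_{j+1})$ and $(u,z_j,z_{j+1})$ is consistent then, writing $a_j=\pi_{z_j,u_0}$ and $b_j=\pi_{z_j,u}$, consistency forces $\pi_{z_j,z_{j+1}}=a_ja_{j+1}^{-1}=b_jb_{j+1}^{-1}$, so $b_j^{-1}a_j$ is independent of $j$, and a two-line computation then gives that the product of the edge permutations around $C(u_0,x,y)$ is the identity. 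Hence every cycle $C(u_0,x,y)$ has a triangulation of size $O(L)$ by genuine triangles.

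The lemma now follows from the standard averaging argument of the cones method, identical to the one in Lemma~\ref{lem:expected-sol-grassmann}. Given a UG instance $\Psi$ on $T$ with $\incons(\Psi)=\delta$, apply a uniformly random $g\in G$ to the whole configuration: the root becomes $g(u_0)$, the paths become their $g$-images, and the triangulations become their $g$-images, all valid because $g$ preserves $\supp(\mu)$, hence acts as an automorphism of $T$ carrying genuine triangles to genuine triangles. Propagating along the $g$-shifted paths yields an assignment $A_g$, and since an edge is satisfied by $A_g$ whenever all triangles of its ($g$-shifted) triangulation are consistent, a union bound over the triangulation together with the fact that for a fixed genuine triangle $\Delta$ the image $g(\Delta)$ (for uniform $g$) is distributed according to $\cD$ --- this is exactly the hypothesis that $g(X)\sim\mu$ for uniform $g$ --- yields
\[
\E_g\bigl[\viol_\Psi(A_g)\bigr]\ \le\ \Bigl(\max_{(x,y)}|T(u_0,x,y)|\Bigr)\cdot\incons(\Psi)\ =\ O(L)\,\delta .
\]
Thus some $g$ gives $\viol_\Psi(A_g)\le O(L)\,\delta$, which is the desired $C(T(S_1,S_2,S_3;\mu))\lll\diam(A(S_2,S_3;\mu))$, with \emph{no} additive error, since every cycle (not merely a typical one) has a small triangulation.

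The one delicate point --- and the step I expect to require the most care --- is the interface between the combinatorial triangulation and coboundary expansion: every triple occurring in a triangulation must be a genuine triangle of $T$, since the UG instance imposes no control on the ``consistency'' of a $3$-clique that is not in the support of $\cD$. This is precisely why property (ii) is needed and why the double cone must be taken over a path living in $A(S_2,S_3;\mu)$ rather than an arbitrary path through $T$. The remaining verifications --- existence of the canonical paths, that the double cone tiles the $4$-cycle, and the $\S_m$ identity above --- are routine.
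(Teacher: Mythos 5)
Your proposal follows the same route as the paper's proof: the same canonical paths (single edges from a root $U_0\in P_1$ to $P_2\cup P_3$, and $U_0\to V_0\to U$ for $U\in P_1$), the same ``double cone'' triangulation over a geodesic in $A(S_2,S_3;\mu)$, and the same group-averaging step as in Lemma~\ref{lem:expected-sol-grassmann}. The extra detail you supply --- the computation showing that consistency of the double-cone triangles forces consistency of the $4$-cycle --- is correct.

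The one substantive issue is exactly the point you flag as ``delicate'' and then wave away. Your property~(ii), that every triple $(u,v,w)$ with $u\in P_1$ and $(v,w)$ an edge of $A(S_2,S_3;\mu)$ lies in $\supp(\cD)$, does \emph{not} follow from the stated pairwise hypotheses $\mu^{S_1\cup S_2}=\mu^{S_1}\times\mu^{S_2}$ and $\mu^{S_1\cup S_3}=\mu^{S_1}\times\mu^{S_3}$. Take $\mu$ uniform on $\{(a,b,c)\in\{0,1\}^3:a+b+c\equiv 0\pmod 2\}$ and $G$ the translations $(a,b,c)\mapsto(a+t_1,b+t_2,c+t_3)$ with $t_1+t_2+t_3\equiv 0$: both pairwise marginals involving coordinate $1$ are uniform products and the group hypotheses hold, yet $(0,0,1)\notin\supp(\mu)$ even though $0\in\supp(\mu^{\{1\}})$ and $(0,1)$ is an edge of $A$. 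Worse, the lemma's conclusion itself fails for this $\mu$: the four $\cD$-triangles constrain only four of the twelve edge permutations, so there are $|\S_m|^8$ triangle-consistent instances against only $|\S_m|^5$ coboundaries, hence some $\Psi$ with $\incons(\Psi)=0$ admits no fully satisfying assignment. The mechanism by which your averaging step breaks is that for a $3$-clique $\Delta\notin\supp(\cD)$ the orbit $g(\Delta)$ under uniform $g\in G$ is not $\cD$-distributed (in the example it is supported entirely off $\supp(\cD)$), so $\E_g[\Ind(g(\Delta)\in\incons(\Phi))]$ cannot be bounded by $\incons(\Phi)$.

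To be fair, this imprecision is present in the paper's own statement and proof, which likewise tacitly uses~(ii) when calling $(U_0,V,W)$ ``already a triangle.'' What the argument actually requires --- and what holds in every application, Lemmas~\ref{lem:grassmann-extended-base-case},~\ref{lem:symp-extended-base-case},~\ref{lem:tensor-extended-base-case}, where conditioning on a single coordinate splits the building into an honest product --- is the three-way factorization $\mu^{S_1\cup S_2\cup S_3}=\mu^{S_1}\times\mu^{S_2\cup S_3}$, i.e.\ that conditioning on $X_{S_1}$ does not change the marginal on $S_2\cup S_3$. Under that hypothesis~(ii) is immediate and everything else you wrote goes through; you should assume it explicitly rather than the two pairwise products.
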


\begin{proof}
We first construct a set of paths and triangulations for propagation. Let $P_i$ denote the set of vertices corresponding to the set $S_i$ in $H = T(S_1,S_2,S_3;\mu)$. Since $\mu^{S_1,S_2}$ and $\mu^{S_1,S_3}$ are both product distributions, we have an edge between $U,W$ for all $U \in P_1$ and $W \in P_3$, as well as between any $U \in P_1$ and $V \in P_2$. Fix two vertices $U_0 \in P_1, V_0 \in P_2$. For every $V \in P_2$ fix the path $P(U_0, V) = U_0 \rightarrow V$ and similarly for $W \in P_3$ fix the path $P(U_0,W) = U_0 \rightarrow W$. For any $U \in P_1$ fix the path $P(U_0, U) = U_0 \rightarrow V_0 \rightarrow U$. Now for an edge $(V,W)$ between $P_2$ and $P_3$ the cycle $(U_0,V,W)$ is already a triangle, therefore has a triangulation of size $1$. For an edge $(U,V)$ we need to tile the cycle $U_0 \rightarrow V_0 \rightarrow U \rightarrow V \rightarrow U_0$. To do so we use a path $R$ of length $\diam(A(S_2,S_3;\mu))$ from $V_0 \rightarrow V$ that alternates between vertices of $P_2$ and $P_3$. Every vertex in this path is connected to both $U$ and $U_0$ therefore this cycle has a triangulation of size $\lll \diam(A(S_2,S_3;\mu))$. We can follow the same strategy for triangulating the cycle $U_0 \rightarrow W \rightarrow U \rightarrow V_0 \rightarrow U_0$ corresponding to the edge $(V,W)$ -- we build a path $R$ of length $\diam(A(S_2,S_3;\mu))$ from $V_0$ to $W$, and given that every vertex in the path is connected to both $U_0$ and $U$ this gives us a triangulation of size $\lll \diam(A(S_2,S_3;\mu))$. We can now use the fact that the triangles in $H$ are transitive under the action of $G$ to get that the coboundary constant of $H$ is $\lll \diam(A(S_2,S_3;\mu))$. The proof of this fact is the same as that of Lemma~\ref{lem:expected-sol-grassmann}, wherein the group $G$ was $\GL_d(\F_q)$, and therefore we omit the details.     
\end{proof}

Armed with Lemma~\ref{lem:product-cbdry}, we now prove that the restrictions of type A 
spherical building give tripartite graphs that are
coboundary expanders.
\begin{lemma}\label{lem:grassmann-extended-base-case}
Let $\mu = SB^A_{d}(\F_q)$. 
For all $t \leq d/3$, $S \subseteq I$ of size at most $d - 3t$ and restrictions $A_0 \in \supp(\mu^S)$, for all $t$-sized pairwise disjoint sets $A,B,C$ of $I \setminus S$, with $i = \max_{a \in A}a, j = \min_{b \in B}b, j' = \max_{b \in B} b$ and $k = \min_{c \in C} c$ satisfying $i < j \leq j' < k$, the graph $T(A,B,C;\mu|X_S = A_0)$ is a $(\poly(K),\poly(K)/q)$-coboundary expander over $\S_m$, for $K = \max\left(\frac{k}{j-i},\frac{k}{k-j'}\right)$.
\end{lemma}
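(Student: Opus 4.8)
The plan is to peel off one carefully chosen coordinate from each of $A,B,C$ using a single step of the lopsided recursion established in the proof of Lemma~\ref{lem:easy-recursion}, and then to bound the three resulting graphs by the product case (Lemma~\ref{lem:product-cbdry}) and the leftover single‑coordinate graph by the base case (Lemma~\ref{lem:base-grassmann}). Write $\cD=\mu\,|\,X_S=A_0$; by Lemma~\ref{lem:eps-product-A} and the fact that $\eps$‑productness passes to conditional distributions, $\cD$ and all of its further conditionings are $O(1/\sqrt q)$‑product, so the hypotheses needed to run the argument of Lemma~\ref{lem:easy-recursion} hold. With $i,j,j',k$ as in the statement (so $i<j<k$), I would invoke the inequalities~\eqref{eq:easy-recursion-main-C} and~\eqref{eq:easy-recursion-main-beta} --- which are proved there for an arbitrary conditioning of an $\eps$‑product measure and an arbitrary choice of one coordinate per part --- taking $R_1=A$, $R_2=B$, $R_3=C$ and distinguished coordinates $j_1=i=\max A$, $j_2=j=\min B$, $j_3=k=\min C$. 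This gives
\[
C(T(A,B,C;\cD))\lll C(H)\cdot\max_{\ell\in[3]}C(G_{a_\ell}),
\qquad
\beta(T(A,B,C;\cD))\lll C(H)\cdot\max_{\ell\in[3]}\beta(G_{a_\ell})+\beta(H),
\]
where $H=T(\{i\},\{j\},\{k\};\cD)$, $G_{a_1}=T(A\setminus\{i\},B,C;\cD\,|\,X_i=a_1)$, $G_{a_2}=T(A,B\setminus\{j\},C;\cD\,|\,X_j=a_2)$ and $G_{a_3}=T(A,B,C\setminus\{k\};\cD\,|\,X_k=a_3)$, the maxima being over the relevant restrictions. (If $t=1$ the three $G_{a_\ell}$ are empty and $T(A,B,C;\cD)=H$, so only the last step below is needed.)

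To bound the $G_{a_\ell}$ I would use that conditioning a single coordinate of a type‑$A$ chain on a subspace $V$ splits the chain \emph{exactly} into an independent flag inside $V$ (the lower‑dimensional coordinates) and an independent flag in the quotient by $V$ (the higher‑dimensional ones). Since $i=\max A$, $j=\min B$, $k=\min C$, this means that in $G_{a_1}$ the block $A\setminus\{i\}$ factors off from $B\cup C$, in $G_{a_2}$ the block $A$ factors off from $(B\setminus\{j\})\cup C$, and in $G_{a_3}$ the block $C\setminus\{k\}$ factors off from $A\cup B$. Lemma~\ref{lem:product-cbdry} therefore applies in each case, with the group being the stabiliser in $\GL_{d+1}(\F_q)$ of the conditioned partial flag (which acts transitively on its refinements), yielding $\beta(G_{a_\ell})=0$ and
\[
C(G_{a_1})\lll\diam\!\big(A(B,C;\cD\,|\,X_i=a_1)\big),\quad
C(G_{a_2})\lll\diam\!\big(A(B\setminus\{j\},C;\cD\,|\,X_j=a_2)\big),\quad
C(G_{a_3})\lll\diam\!\big(A(A,B;\cD\,|\,X_k=a_3)\big).
\]
Each of these bipartite graphs is an inclusion‑type graph in which adjacency of a chain on one side with a chain on the other depends only on the top subspace of the former and the bottom subspace of the latter; the interface dimensions are $(j',k)$, $(j',k)$ and $(i,j)$, with gaps $k-j'$, $k-j'$ and $j-i$ respectively. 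A direct path construction that rotates one block of size equal to the gap at a time connects any two chains on a side in $O(\lceil j'/(k-j')\rceil)$ steps in the first two cases and $O(\lceil i/(j-i)\rceil)$ in the third; since $j'<k$ and $i<j\le k$, all of these are $O(K)$. Hence $\max_\ell C(G_{a_\ell})\le\poly(K)$ and $\beta(G_{a_\ell})=0$.

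It remains to bound $C(H)$ and $\beta(H)$ for $H=T(\{i\},\{j\},\{k\};\cD)$. If $S$ contains no coordinate strictly between $i$ and $k$, then, with $s_1$ the largest coordinate of $S$ below $i$ (or $0$ if none), quotienting by the conditioned flag identifies $H$ with the Grassmann tripartite graph $\Gr_{d'}(i-s_1,j-s_1,k-s_1)$ for the appropriate effective ambient dimension $d'$; since $j-s_1\le k$, $k-j\ge k-j'$ and the $(i,j)$‑gap equals $j-i$, the relevant parameter $\max\!\big(\tfrac{j-s_1}{k-j},\tfrac{j-s_1}{j-i}\big)$ is at most $K$, so Lemma~\ref{lem:base-grassmann} gives $C(H)\le O(K^2)$ and $\beta(H)\le\poly(K)/q$. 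If instead $S$ has a coordinate strictly between two of $i,j,k$, then conditionally on the $S$‑flag the ``outer'' coordinate (the one not between them) decouples from the other two, and Lemma~\ref{lem:product-cbdry} gives $C(H)\lll\diam\!\big(A(\{j\},\{k\};\cD)\big)\le\poly(K)$ (or the analogous two‑dimensional inclusion‑graph diameter, again $O(K)$ by the same rotation argument) with $\beta(H)=0$. In all cases $C(H)\le O(K^2)$ and $\beta(H)\le\poly(K)/q$, and plugging these into the two inequalities above gives $C(T(A,B,C;\cD))\le\poly(K)$ and $\beta(T(A,B,C;\cD))\le\poly(K)/q$, which is the assertion. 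The main obstacle is the $O(K)$ bound on the diameters of the interface inclusion graphs: the rotation must be arranged so that at every step all the lower subspaces of the moving chain --- in particular the just‑conditioned coordinate --- remain contained in, and contain, the prescribed conditioned subspaces, which constrains the block sizes and the partial‑flag choices at each step and is the only place where genuine care is needed.
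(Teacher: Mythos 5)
Your proposal follows essentially the same route as the paper's proof: one application of the lopsided recursion~\eqref{eq:easy-recursion-main-C},~\eqref{eq:easy-recursion-main-beta} with distinguished coordinates $i=\max A$, $j=\min B$, $k=\min C$, then Lemma~\ref{lem:product-cbdry} for the three conditioned subgraphs (each of which factors across the conditioned coordinate, with the relevant bipartite diameter $O(K)$), and the two-subcase analysis of the single-coordinate graph $H$ via Lemma~\ref{lem:base-grassmann} or Lemma~\ref{lem:product-cbdry}. The argument is correct and matches the paper's approach.
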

\begin{proof}
In this proof we will apply Lemma~\ref{lem:product-cbdry} multiple times. To do so, we will use the fact that for all $S \subseteq [d], A_0 \in \supp(\mu^S)$, the subgroup $H$ of $\GL_d(\F_q)$ that fixes $A_0$, has the property that for all $X \in \supp(\cD)$, the distribution $h(X), h \sim H$ is equal to $\cD$ for $\cD = \mu|X_S = A_0$. 

We first consider the case of $t=1$.
Fix a set $S \subset [d]$ of size at most $d-3$, $A_0 \in \supp(\mu^S)$ and consider coordinates $i,j,k \in [d] \setminus S$ with $i < j < k$. Let $\cD$ denote the distribution $\mu|X_S=A_0$. We divide the proof into two cases:
\paragraph{There exists $\ell \in S$ such that $\ell \in (i,j)$ or $\ell \in (j,k)$:} Let us assume that there is $\ell \in S$ with $\ell \in (i,j)$. 
Then note that $\cD^{\{i\},\{j\}} = \cD^{\{i\}}\times \cD^{\{j\}}$ and 
$\cD^{\{i\},\{k\}} = \cD^{\{i\}}\times \cD^{\{k\}}$. 
Therefore we can apply Lemma~\ref{lem:product-cbdry} to get that $C(T(\{i\},\{j\},\{k\};\cD)) \lll \diam(A(\{j\},\{k\};\cD))$, which we know is at most $O(k/(k-j))$. 

The same proof works when $\ell \in (j,k)$ to give a coboundary constant $O(j/j-i)$.

\paragraph{There is no $\ell \in S$ with $\ell \in (i,k)$:} In this case, let $d_1 \in S$ be the largest index smaller than $i$ and let $d_2 \in S$ be the smallest index larger than $k$. We know that $G$ is isomorphic to $\Gr(i-d_1,j-d_1,k-d_1)$ over the ambient space $\F_q^{d_2-d_1}$. Therefore we can use Lemma~\ref{lem:base-grassmann} to conclude that our graph is a $(\poly(K'),\poly(K')/q)$-coboundary expander for 
\[
K'=\max\left(\frac{j-d_1}{(j-d_1)-(i-d_1)},
\frac{j-d_1}{(k-d_1)-(j-d_1)}\right) 
\leq 
\max
\left(
\frac{k}{k-j},
\frac{k}{j-i}
\right)
\leq K,
\]
as required.
\skipi
We now move on to the case that $t>1$. Namely fix a set $S \subset [d]$ of size at most $d-3t$, $A_0 \in \mu^S$ and consider three pairwise disjoint $t$-sized sets $A,B,C \subset [d] \setminus S$ with $i < j < j' < k$. Let $\cD$ denote the distribution $\mu|X_S=A_0$. Using the argument in Lemma~\ref{lem:easy-recursion}, and more specifically~\eqref{eq:easy-recursion-main-C}, we get that
\begin{align*}
&C(T(A,B,C;\cD)) \leq C(T(\{i\},\{j\},\{k\};\cD))\cdot \\
&\qquad\qquad\max_{\substack{a \in \supp(\cD^i)\\b \in \supp(\cD^j)\\c \in \supp(\cD^k)}}(C(T(A\setminus \{i\},B,C;\cD|a)),C(T(A,B\setminus \{j\},C;\cD|b)),C(T(A,B,C\setminus \{k\};\cD|c))) \end{align*}
The first term above
is at most $K$ using
the case $t=1$ from above.
As for the second term, 
consider for instance 
$C(T(A\setminus\{i\}, B, C; \cD|a)$, and consider 
$\cD' = \cD|a$ and $A' = A\setminus \{i\}$. 
Note that $(\cD')^{A'\cup B} = (\cD')^{A'}\times (\cD')^{B}$ and 
$(\cD')^{A'\cup C} = (\cD')^{A'}\times (\cD')^{C}$, so applying Lemma~\ref{lem:product-cbdry} we get that the second term above 
is at most $\diam(A(S_2,S_3; \cD'))$.
Note that the diameter of
the graph $A(S_2,S_3; \cD')$
is at most a constant times
the diameter of 
$A(\{j'\}, \{k\}; \cD')$, 
which is easily seen to 
be at most $O\left(\frac{k}{k-j'}\right)\leq O(K)$. 
Combining the two bounds, 
we conclude that 
$C(T(A,B,C; \cD))\lll K^2$. Similarly using~\eqref{eq:easy-recursion-main-beta} It is easy to check that the additive error $\beta$ is at most $\poly(K)/q$.
\end{proof}

Next, we handle restrictions of type C spherical buildings. 
\begin{lemma}\label{lem:symp-extended-base-case}
Let $\mu = SB^C_{d}(\F_q)$. 
For all $t \leq d/3$, $S \subseteq I$ of size at most $d - 3t$ and restrictions $A_0 \in \supp(\mu^S)$, for all $t$-sized pairwise disjoint sets $A,B,C$ of $I \setminus S$, with $i = \max_{a \in A}a, j = \min_{b \in B}b, j' = \max_{b \in B} b$ and $k = \min_{c \in C} c$ satisfying $i < j \leq j' < k$, the graph $T(A,B,C;\mu|X_S = A_0)$ is a $(\poly(K),\poly(K)/q)$-coboundary expander over $\S_m$, for $K = \max\left(\frac{k}{j-i},\frac{k}{k-j'}\right)$.
\end{lemma}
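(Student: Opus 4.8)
The plan is to mirror the proof of Lemma~\ref{lem:grassmann-extended-base-case} almost verbatim, substituting Lemma~\ref{lem:base-symplectic} for Lemma~\ref{lem:base-grassmann} wherever the innermost building happens to be of type $C$ rather than type $A$. Two structural facts about type $C$ buildings will be used freely. The first is the ``productization'' property: for any $\ell\in S$, writing $\cD=\mu\,|\,X_S=A_0$, the coordinates below $\ell$ and those above $\ell$ are conditionally independent given $V_\ell$ --- once $V_\ell$ is fixed, the chain below it is a uniform complete flag inside the isotropic space $V_\ell$ (a type $A$ object) and the chain above it is a uniform maximal isotropic chain in $V_\ell^{\perp}/V_\ell$ (a type $C$ object), and a uniform maximal isotropic chain is exactly the independent product of these two pieces. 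The second is that the stabilizer in $\sp_{2d}(\F_q)$ of any partial flag of isotropic subspaces acts transitively on the completions of that flag, so the hypotheses of Lemma~\ref{lem:product-cbdry} are met with $G$ the relevant stabilizer subgroup of $\sp_{2d}(\F_q)$. Throughout I would assume, as is standard in the paper, that $q\ge\poly(d)$, so that by Lemma~\ref{lem:eps-product-C} the measure $\mu$ --- and hence every conditional $\mu\,|\,X_S=A_0$, since $\eps$-productness is preserved under conditioning --- is $O(1/\sqrt q)$-product; this is what licenses invoking the argument behind Lemma~\ref{lem:easy-recursion}.

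For $t=1$ I would fix $S$ with $|S|\le d-3$, $A_0\in\supp(\mu^S)$, coordinates $i<j<k$ in $I\setminus S$ and $\cD=\mu\,|\,X_S=A_0$, and split into the same two subcases as in Lemma~\ref{lem:grassmann-extended-base-case}. If some $\ell\in S$ lies strictly between $i$ and $j$ (or between $j$ and $k$), productization gives $\cD^{\{i\},\{j\}}=\cD^{\{i\}}\times\cD^{\{j\}}$ and $\cD^{\{i\},\{k\}}=\cD^{\{i\}}\times\cD^{\{k\}}$ (respectively the analogous splittings), so Lemma~\ref{lem:product-cbdry}, applied with $G$ the stabilizer of $A_0$ in $\sp_{2d}(\F_q)$, bounds $C(T(\{i\},\{j\},\{k\};\cD))$ by $O(\diam(A(\{j\},\{k\};\cD)))\lll k/(k-j)\le K$ (respectively by $O(j/(j-i))\le O(K)$), with no additive error. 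Otherwise $S$ has no coordinate strictly between $i$ and $k$; let $d_1$ be the largest element of $S$ below $i$ (or $d_1=0$). If $S$ has a smallest element $d_2$ above $k$, conditioning on $V_{d_1}$ and $V_{d_2}$ forces $V_i,V_j,V_k$ to lie inside the isotropic space $V_{d_2}$, making the isotropy constraint vacuous, and after quotienting by $V_{d_1}$ the graph $T(\{i\},\{j\},\{k\};\cD)$ is isomorphic to $\Gr_{d_2-d_1}(i-d_1,j-d_1,k-d_1)$; Lemma~\ref{lem:base-grassmann} then applies, with relevant parameter $\max\big(\tfrac{j-d_1}{j-i},\tfrac{j-d_1}{k-j}\big)\le K$. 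If $S$ has no element above $k$, conditioning only on $V_{d_1}$ and quotienting identifies $T(\{i\},\{j\},\{k\};\cD)$ with $S_{d-d_1}(i-d_1,j-d_1,k-d_1)$, and Lemma~\ref{lem:base-symplectic} applies; either way this yields a $(\poly(K),\poly(K)/q)$-coboundary expander.

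For $t>1$ I would set $i=\max_{a\in A}a$, $j=\min_{b\in B}b$, $j'=\max_{b\in B}b$, $k=\min_{c\in C}c$, $\cD=\mu\,|\,X_S=A_0$, and run the argument of Lemma~\ref{lem:easy-recursion} --- concretely inequalities \eqref{eq:easy-recursion-main-C} and \eqref{eq:easy-recursion-main-beta} with distinguished coordinates $i\in A$, $j\in B$, $k\in C$ --- to obtain
\[
C(T(A,B,C;\cD))\le C(T(\{i\},\{j\},\{k\};\cD))\cdot\max(C_A,C_B,C_C),
\]
where $C_A=\max_{a\in\supp(\cD^{i})}C(T(A\setminus\{i\},B,C;\cD\,|\,X_i=a))$ and $C_B,C_C$ are defined analogously, together with the matching inequality for $\beta$. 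The first factor is $\poly(K)$ by the $t=1$ case (note $k/(k-j)\le k/(k-j')\le K$). For the second factor, fixing $a$ and writing $\cD'=\cD\,|\,X_i=a$, all of $A\setminus\{i\}$ lies below $i$ while $B,C$ lie above it, so productization yields $(\cD')^{(A\setminus\{i\})\cup B}=(\cD')^{A\setminus\{i\}}\times(\cD')^{B}$ and the analogous splitting with $C$; Lemma~\ref{lem:product-cbdry} then bounds $C(T(A\setminus\{i\},B,C;\cD'))$ by $O(\diam(A(B,C;\cD')))\lll\diam(A(\{j'\},\{k\};\cD'))\lll k/(k-j')\le K$, and $C_B,C_C$ are symmetric. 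Multiplying gives $C(T(A,B,C;\cD))\lll\poly(K)$, and the identical bookkeeping for the additive terms gives $\beta(T(A,B,C;\cD))\lll\poly(K)/q$.

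I do not expect a real obstacle: all the difficulty is already packaged inside Lemma~\ref{lem:base-symplectic}, and the bootstrapping above is formally identical to the type $A$ argument. The only step needing genuine (though routine) care is justifying the two structural facts quoted in the first paragraph --- productization of type $C$ buildings under conditioning on a single isotropic subspace, and transitivity of $\sp_{2d}(\F_q)$ on partial isotropic flags --- which I would record as small preliminary claims and then use as black boxes, exactly as the type $A$ proof uses the corresponding facts about $\GL_d(\F_q)$.
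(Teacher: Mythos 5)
Your proposal is correct and follows essentially the same route as the paper's proof: the same case split for $t=1$ (some $\ell\in S$ between $i$ and $j$ or between $j$ and $k$ gives productization and Lemma~\ref{lem:product-cbdry}; no $\ell\in(i,k)$ with a bigger restricted index lands on $\Gr_{d_2-d_1}(\cdot)$ via Lemma~\ref{lem:base-grassmann}; no bigger restricted index lands on $S_{d-d_1}(\cdot)$ via Lemma~\ref{lem:base-symplectic}), and the same bootstrapping for $t>1$ via the recursion \eqref{eq:easy-recursion-main-C} with distinguished coordinates $i,j,k$ followed by Lemma~\ref{lem:product-cbdry} on the residual graph. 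The ``two structural facts'' you isolate (productization after conditioning on an isotropic $V_\ell$, and transitivity of the $\sp_{2d}(\F_q)$-stabilizer of the fixed flag) are exactly what the paper invokes, just stated there as the two enumerated isomorphism facts and the remark about the stabilizer subgroup $H$.
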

\begin{proof}
The proof is very similar to the proof of 
Lemma~\ref{lem:grassmann-extended-base-case}, and 
we will need to use the following facts 
(which were already used in Lemma~\ref{lem:eps-product-C}): 
\begin{enumerate}
\item Let $V_{i'} \subset V_{k'}$ be two $i'$ and $k'$-dimensional isotropic subspaces. For $i'<i<j<k<k'$, the tripartite graph over $i,j$ and $k$-dimensional isotropic subspaces that are contained in $V_{k'}$ and contain $V_{i'}$ is isomorphic to $\Gr_{k'-i'}(i-i',j-i',k-i')$ as defined in Section~\ref{sec:grass-base-case}.
\item Let $V_{i'}$ be some $i'$-dimensional isotropic subspace. For $i'<i<j<k$, the tripartite graph over $i,j$ and $k$-dimensional isotropic subspaces that contain $V_{i'}$, is isomorphic to $S_{d-i'}(i-i',j-i',k-i')$ as defined in Section~\ref{sec:symp-base-case}.
\end{enumerate}

We will again apply Lemma~\ref{lem:product-cbdry} and to do so, we will use the fact that for all $S \subseteq [d], A_0 \in \supp(\mu^S)$, the subgroup $H$ of $\sp_{2d}(\F_q)$ that fixes $A_0$, has the property that for all $X \in \supp(\cD)$, the distribution $h(X), h \sim H$ is equal to $\cD$ for $\cD = \mu|X_S = A_0$. 

With these facts in mind, we begin with the proof in the case that $t=1$. Let $\cD =\mu|X_S =A_0$. We now prove that $G = T(\{i\},\{j\},\{k\}\};\cD)$ is a $(\poly(K),\poly(K)/q)$-coboundary expander, and there are a few cases to consider depending on the set of coordinates $S$ that we restricted.
\paragraph{There exists $\ell \in S$ such that $\ell \in (i,j)$ or $\ell \in (j,k)$:} 
Let us assume that there is $\ell \in S$ with $\ell \in (i,j)$. 
Then note that $\cD^{\{i\},\{j\}} = \cD^{\{i\}}\times \cD^{\{j\}}$ and $\cD^{\{i\},\{k\}} = \cD^{\{i\}}\times \cD^{\{k\}}$. 
Therefore we can apply Lemma~\ref{lem:product-cbdry} to get that $C(T(\{i\},\{j\},\{k\};\cD)) \lll \diam(A(\{j\},\{k\};\cD))$, which we can check is at most $O(k/(k-j))$. 
The same proof works when $\ell \in (j,k)$ to give a coboundary constant $O(j/j-i)$.

\paragraph{There is no $\ell \in S$ with $\ell \in (i,k)$ and there is $k' \in S, k' >k$:} In this case, let $k' \in S$ be the smallest index larger than $k$ and let $i' \in S$ be the largest index smaller than $i$ (set it to $0$ if no such index). By the above, $G$ is isomorphic to $\Gr_{k'-i'}(i-i',j-i',k-i')$ which is a $(\poly(K),\poly(K)/q)$-coboundary expander by Lemma~\ref{lem:base-grassmann}.

\paragraph{There is no $\ell \in S$ with $\ell \in (i,k)$ and no $k' \in S, k' >k$:}
In this case, again let $k' \in S$ be the smallest index larger than $k$ and let $i' \in S$ be the largest index smaller than $i$ ($i'=0$ if no such index).
By Fact 2 above, we know that $G$ is isomorphic to 
$S_{d-i'}(i-i',j-i',k-i')$ which is a $(\poly(K),\poly(K)/q)$-coboundary expander by Lemma~\ref{lem:base-symplectic}.

\skipi

We now move on to handle $t>1$. 
Using the argument in Lemma~\ref{lem:easy-recursion}, and more specifically~\eqref{eq:easy-recursion-main-C}, we get that
\begin{align*}
&C(T(A,B,C;\cD)) \leq C(T(\{i\},\{j\},\{k\};\cD))\cdot \\
&\qquad\qquad\max_{\substack{a \in \supp(\cD^i)\\b \in \supp(\cD^j)\\c \in \supp(\cD^k)}}(C(T(A\setminus \{i\},B,C;\cD|a)),C(T(A,B\setminus \{j\},C;\cD|b)),C(T(A,B,C\setminus \{k\};\cD|c)))    
\end{align*}
The first term above
is at most $K$ using
the case $t=1$ from above.
As for the second term, 
consider for instance 
$C(T(A\setminus\{i\}, B, C; \cD|a)$, and consider 
$\cD' = \cD|a$ and $A' = A\setminus \{i\}$. Note 
that $(\cD')^{A'\cup B} = (\cD')^{A'}\times (\cD')^{B}$ and 
$(\cD')^{A'\cup C} = (\cD')^{A'}\times (\cD')^{C}$, so applying Lemma~\ref{lem:product-cbdry} we get that the second term above 
is at most $\diam(A(S_2,S_3; \cD'))\lll K$. 
Combining the two bounds, 
we conclude that 
$C(T(A,B,C; \cD))\lll K^2$. Similarly using~\eqref{eq:easy-recursion-main-beta} It is easy to check that the additive error $\beta$ is at most $\poly(K)/q$.
\end{proof}

\subsubsection{The Extended Base Case for Tensors of Type A and C}
We now extend the base case from the 
previous section to a base case regarding \emph{tensors}
of type A and type C complexes.
\begin{definition}
Let $X = (X(0),X(1),\ldots,X(d_1))$ 
and $Y = (Y(0),Y(1),\ldots,Y(d_2))$ be simplicial complexes. We define the tensored simplicial complex $X\otimes Y$ as the complex $(Z(0),\ldots,Z(d_1+d_2))$
where for each $1\leq \ell\leq d_1+d_2$
we have
\[
    Z(\ell) = 
    \{A\cup B~|A\in X, B\in Y, |A|+|B| = \ell~\}.
\]
\end{definition}
We note that if we let $\mu,\nu$ be the uniform distributions on top dimensional faces of $X,Y$ respectively, then the distribution
$\mu\otimes\nu$ is uniform over the top dimensional faces
of $X\otimes Y$. Thus, it will often be convenient for
us to discuss tensors of complexes using the language
of associated probability distributions.

\begin{lemma}\label{lem:tensor-extended-base-case}
Let $\mu_1,\mu_2$ be two distributions over the domains $\prod_{i\in[d_1]}X_i$ and $\prod_{i=d_1+1}^{d} X_i$, with $\mu_1$ being either $SB^A_{d_1}(\F_q)$ or $SB^C_{d_1}(\F_q)$ and $\mu_2$ being either $SB^A_{d-d_1}(\F_q)$ or $SB^C_{d-d_1}(\F_q)$. Let $\mu = \mu_1 \times \mu_2$, $t\leq d/3$, $S \subseteq [d]$ be of size at most $d - 3t$, $A_0 \in \supp(\mu^S)$ a restriction of $S$, 
and let $A,B,C\subseteq [d]\setminus S$ be of size $t$ such that 
\[
i = \max_{a\in A} a
<
j = \min_{b\in B} b
\leq 
j' = \max_{b\in B} b
<
k = \min_{c\in C} c.
\]
Then the graph $G = T(A,B,C;\mu|X_S = A_0)$ is a $(\poly(K),\poly(K)/q)$-coboundary expander over $\S_m$ for $K = \max\left(\frac{k}{k-j'},\frac{k}{j-i}\right)$.  
\end{lemma}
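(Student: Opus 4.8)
The plan is to mimic the proofs of Lemma~\ref{lem:grassmann-extended-base-case} and Lemma~\ref{lem:symp-extended-base-case}, treating the tensor case by the same two-step strategy: first dispose of $t=1$ by a case analysis on where the restricted coordinates $S$ sit relative to the three active coordinates $i,j,k$, and then bootstrap to general $t$ using the peeling inequality~\eqref{eq:easy-recursion-main-C} from Lemma~\ref{lem:easy-recursion} together with Lemma~\ref{lem:product-cbdry}. The one genuinely new feature is the ``seam'' at coordinate $d_1$: because $\mu=\mu_1\times\mu_2$, coordinates on opposite sides of $d_1$ are automatically independent, which only helps us. Concretely, for the symmetry hypothesis needed to invoke Lemma~\ref{lem:product-cbdry} and the cones method, I would take $G$ to be the product $G_1\times G_2$ of the symplectic/linear groups acting on the two factors, note that it preserves $\supp(\mu)$ and acts transitively on each $X_i$, and that the stabilizer of any restriction $A_0$ still acts transitively on the fibers of $\cD=\mu|X_S=A_0$; this is exactly the property both auxiliary lemmas require.

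For $t=1$, fix coordinates $i<j<k$ in $[d]\setminus S$ and set $\cD=\mu|X_S=A_0$. I would split into cases exactly as before. If some restricted coordinate $\ell\in S$ lies strictly between $i$ and $j$, or strictly between $j$ and $k$ — and here I also count the ``virtual'' seam: if $i$ and $j$ lie on opposite sides of $d_1$, or $j$ and $k$ do — then $\cD^{\{i\},\{j\}}=\cD^{\{i\}}\times\cD^{\{j\}}$ and $\cD^{\{i\},\{k\}}=\cD^{\{i\}}\times\cD^{\{k\}}$ (or the analogous split), so Lemma~\ref{lem:product-cbdry} applies and bounds $C(T(\{i\},\{j\},\{k\};\cD))$ by $\diam(A(\{j\},\{k\};\cD))\lll k/(k-j)$ or by $\diam(A(\{i\},\{j\};\cD))\lll j/(j-i)$, both $O(K)$. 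Otherwise $i,j,k$ all lie on the same side of $d_1$ with no restricted coordinate between $i$ and $k$; then $T(\{i\},\{j\},\{k\};\cD)$ depends only on the factor $\mu_1$ or $\mu_2$ containing them, after quotienting by the nearest restricted subspaces below $i$ and above $k$ as in Lemmas~\ref{lem:grassmann-extended-base-case} and~\ref{lem:symp-extended-base-case}, so it is isomorphic to one of $\Gr(\cdot,\cdot,\cdot)$ or $S(\cdot,\cdot,\cdot)$, and Lemma~\ref{lem:base-grassmann} or Lemma~\ref{lem:base-symplectic} gives the $(\poly(K),\poly(K)/q)$ bound with $K'\leq K$.

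For general $t>1$, I would apply~\eqref{eq:easy-recursion-main-C} with the peeled-off coordinates being $i=\max A$, $j=\min B$, $k=\min C$, obtaining
\[
C(T(A,B,C;\cD))\leq C(T(\{i\},\{j\},\{k\};\cD))\cdot\max_{a,b,c}\big(C(T(A\setminus\{i\},B,C;\cD|a)),\ C(T(A,B\setminus\{j\},C;\cD|b)),\ C(T(A,B,C\setminus\{k\};\cD|c))\big).
\]
The first factor is $O(K)$ by the $t=1$ analysis. For the second factor, say for $C(T(A\setminus\{i\},B,C;\cD|a))$: writing $\cD'=\cD|X_i=a$ and $A'=A\setminus\{i\}$, every coordinate of $A'$ is below $j\leq\min B$, so in each factor complex (type A or type C) conditioning on the subspace at coordinate $i$ makes the coordinates below $i$ independent of those above it — equivalently $(\cD')^{A'\cup B}=(\cD')^{A'}\times(\cD')^B$ and $(\cD')^{A'\cup C}=(\cD')^{A'}\times(\cD')^C$ — and the same is forced across the seam $d_1$; hence Lemma~\ref{lem:product-cbdry} bounds this by $\diam(A(B,C;\cD'))\lll \diam(A(\{j'\},\{k\};\cD'))\lll k/(k-j')\leq O(K)$. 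Symmetrically the $b$- and $c$-cases give $O(K)$. Multiplying, $C(T(A,B,C;\cD))\lll K^2$, and tracking the additive errors through~\eqref{eq:easy-recursion-main-beta} together with the $\poly(K)/q$ errors of the base cases gives $\beta\leq\poly(K)/q$. The main obstacle is purely bookkeeping: one must verify carefully that in the tensor complex the relevant conditional marginals really do factor both across a restricted coordinate of the \emph{same} factor and across the seam $d_1$ between the two factors, and that the group $G_1\times G_2$ supplies the transitivity needed for the cones argument on whichever sub-building the coordinates land in; none of this is deep, but the case split has a few more branches than in the single-building lemmas.
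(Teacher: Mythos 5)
Your proposal is correct and follows essentially the same two-step strategy as the paper's proof: a $t=1$ case analysis (distinguishing whether the three active coordinates are split across the seam $d_1$ or lie in a single factor, and invoking Lemma~\ref{lem:product-cbdry} in the former case and the single-building base/extended-base lemmas in the latter) followed by the peeling step via~\eqref{eq:easy-recursion-main-C} together with Lemma~\ref{lem:product-cbdry} for general $t$. The only cosmetic difference is that you fold the ``$\ell\in S$ between $i,k$'' subcases into your first branch rather than delegating them wholesale to Lemmas~\ref{lem:grassmann-extended-base-case} and~\ref{lem:symp-extended-base-case}, but the underlying argument and the use of the $G_1\times G_2$ symmetry are the same.
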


\begin{proof}
Let $G_1 = \GL_{d_1}(\F_q)$ or $\sp_{2d_1}(\F_q)$ depending on whether $\mu_1$ is $SB^A_{d_1}(\F_q)$ or 
$SB^C_{d_1}(\F_q)$, and similarly define $G_2$ as  $\GL_{d-d_1}(\F_q)$ or $\sp_{2(d-d_1)}(\F_q)$. The set 
$\supp(\mu)$ is transitive under the action of $G_1 \times G_2$ and moreover for all $S \subseteq [d], A_0 \supp(\mu^S)$, the subgroup $H$ of $G_1 \times G_2$ that fixes $A_0$, has the property that for all $X \in \supp(\cD)$, the distribution $h(X), h \sim H$ is equal to $\cD$ for $\cD = \mu|X_S = A_0$. 

Given this fact, we begin with the case that $t=1$. Fix a set $S \subset [d]$ of size at most $d-3$, $A_0 \in \supp(\mu^S)$ and consider coordinates $i,j,k \in [d] \setminus S$ with $i < j < k$. Let $\cD = \mu|X_S = A_0$, $I_1=[d_1]$ and $I_2=\{d_1+1,\ldots,d\}$. We divide the proof into 4 cases:

\paragraph{The case that $i\in I_1$ and $j,k \in I_2$:}
First note that $\cD^{\{i,j\}}$ and $\cD^{\{i,k\}}$ are both product distributions. Hence we can apply Lemma~\ref{lem:product-cbdry} to get that the coboundary constant of $G$ is $\lll \diam(A(\{j\},\{k\};\cD))\lll K$. 

\paragraph{The case that $i,j\in I_1$ and $k \in I_2$:}
This case is similar to the above and we get a bound of $O(j/(j-i))\lll K$ on the coboundary constant.

\paragraph{All three coordinates in $I_2$:} Depending on whether $\mu_1$ is $SB^A_{d_1}$ or $SB^C_{d_1}$ we can use Lemma~\ref{lem:grassmann-extended-base-case} or~\ref{lem:symp-extended-base-case} to get that $G$ is a $(\poly(K),2^{-\Omega(r^{12})})$-coboundary expander.

\paragraph{All three coordinates in $I_1$:} This case follows analogously to the third case above.

\skipi
We now move on to the case that $t>1$, and the argument
here is the same as in Lemma~\ref{lem:grassmann-extended-base-case} and~\ref{lem:symp-extended-base-case}. 
Using the argument in Lemma~\ref{lem:easy-recursion}, and more specifically~\eqref{eq:easy-recursion-main-C}, we get that
\begin{align*}
&C(T(A,B,C;\cD)) \leq C(T(\{i\},\{j\},\{k\};\cD))\cdot \\
&\qquad\qquad\max_{\substack{a \in \supp(\cD^i)\\b \in \supp(\cD^j)\\c \in \supp(\cD^k)}}(C(T(A\setminus \{i\},B,C;\cD|a)),C(T(A,B\setminus \{j\},C;\cD|b)),C(T(A,B,C\setminus \{k\};\cD|c)))    
\end{align*}
The first term above
is at most $K$ using
the case $t=1$ from above.
As for the second term, 
consider for instance 
$C(T(A\setminus\{i\}, B, C; \cD|a)$, and consider 
$\cD' = \cD|a$ and $A' = A\setminus \{i\}$. Note 
that $(\cD')^{A'\cup B} = (\cD')^{A'}\times (\cD')^{B}$ and 
$(\cD')^{A'\cup C} = (\cD')^{A'}\times (\cD')^{C}$, so applying 
Lemma~\ref{lem:product-cbdry} we get that the second term above 
is at most $\diam(A(S_2,S_3; \cD'))\lll K$.
Combining the two bounds, 
we conclude that 
$C(T(A,B,C; \cD))\lll K^2$. Similarly using~\eqref{eq:easy-recursion-main-beta} It is easy to check that the additive error $\beta$ is at most $\poly(K)/q$.
\end{proof}

\section{UG coboundary expansion of Spherical Buildings}
We can now use the base case of induction along with the local to global lemma to get that the spherical buildings of type A and C are UG coboundary expander.

\begin{claim}\label{claim:poisson}
There is a constant $C>0$ such that 
if $\alpha(r)$ is a function of $r$ such that $\alpha(r)\geq 6$, and suppose that $d\geq 10r\alpha(r)$.
Then
\[\Pr_{i_1,\ldots,i_r \sim [d]}\left[\min_{j \neq k \in [r]}|i_j - i_k| \geq \frac{d}{r\alpha(r)}\right] \geq 2^{-r/\alpha(r)C}.\]
\end{claim}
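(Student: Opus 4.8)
The plan is to evaluate the probability essentially exactly by an order‑statistics counting argument, and then lower–bound the resulting expression. Write $L=\frac{d}{r\alpha(r)}$ for the target separation and set $m=\lceil L\rceil$. Since $d\ge 10r\alpha(r)$ we have $L\ge 10$, hence $m\le L+1\le \tfrac{11}{10}L$. Because the $i_j$ are integers, an integer is $\ge L$ if and only if it is $\ge\lceil L\rceil$, so the event $\min_{j\ne k}|i_j-i_k|\ge L$ is \emph{identical} to the event $\min_{j\ne k}|i_j-i_k|\ge m$.

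First I would count the ordered tuples $(i_1,\dots,i_r)\in[d]^r$ realizing this event. Every such tuple has distinct entries, and is one of the $r!$ orderings of a strictly increasing tuple $i_1<\dots<i_r$ satisfying $i_{j+1}-i_j\ge m$. The substitution $i_j'=i_j-(j-1)(m-1)$ is a bijection between such increasing tuples and strictly increasing tuples in $\{1,\dots,d-(r-1)(m-1)\}$, so there are exactly $\binom{d-(r-1)(m-1)}{r}$ of them. Hence
\[
\Pr_{i_1,\dots,i_r\sim[d]}\Big[\min_{j\ne k}|i_j-i_k|\ge L\Big]
=\frac{r!\,\binom{d-(r-1)(m-1)}{r}}{d^r}.
\]
Now $r!\binom{N}{r}=N(N-1)\cdots(N-r+1)\ge (N-r+1)^r$ with $N=d-(r-1)(m-1)$, and
\[
N-r+1=d-(r-1)m\ \ge\ d-rm\ \ge\ d-\tfrac{11}{10}rL\ =\ d-\tfrac{11}{10}\cdot\tfrac{d}{\alpha(r)}\ =\ d\Big(1-\tfrac{11}{10\alpha(r)}\Big),
\]
using $rL=d/\alpha(r)$ and $r\le d/(10\alpha(r))$; in particular $N-r+1\ge \tfrac45 d>0$, so all quantities above are positive. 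Dividing by $d^r$ gives
\[
\Pr_{i_1,\dots,i_r\sim[d]}\Big[\min_{j\ne k}|i_j-i_k|\ge L\Big]\ \ge\ \Big(1-\frac{11}{10\alpha(r)}\Big)^{r}.
\]

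Finally, since $\alpha(r)\ge 6$ we have $x:=\tfrac{11}{10\alpha(r)}\le\tfrac{11}{60}<\tfrac12$, so $-\ln(1-x)=\sum_{k\ge1}x^k/k\le \tfrac{x}{1-x}\le 2x$. Therefore $\big(1-\tfrac{11}{10\alpha(r)}\big)^{r}=\exp\!\big(-r(-\ln(1-x))\big)\ge e^{-2rx}=e^{-\frac{22r}{10\alpha(r)}}\ge 2^{-\frac{4r}{\alpha(r)}}$, which proves the claim with $C=4$ (any sufficiently large absolute constant works, and one may rescale to match the exact normalization of $C$ in the statement). The argument is elementary and there is no genuine obstacle; the only points requiring care are the passage from the real threshold $L$ to the integer threshold $m=\lceil L\rceil$ (which is exact by integrality of the $i_j$) and checking that the hypothesis $d\ge 10r\alpha(r)$ leaves enough slack — it guarantees $L\ge 10$ so that $m\le\frac{11}{10}L$, and it keeps $d-(r-1)(m-1)$ comfortably larger than $r$ so the binomial coefficient has the expected order of magnitude.
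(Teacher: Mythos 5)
Your proof is correct, and it takes a genuinely different route from the paper's. The paper argues sequentially: it picks $i_1,\dots,i_r$ one at a time (allowing repetitions), bounds the conditional probability that $i_j$ lands outside the forbidden neighborhoods of $i_1,\dots,i_{j-1}$ by roughly $1-\tfrac{j-1}{r\alpha(r)}$, and multiplies these conditional probabilities. You instead compute the probability essentially exactly via the classical stars-and-bars bijection $i_j\mapsto i_j-(j-1)(m-1)$, obtaining the closed form $r!\binom{d-(r-1)(m-1)}{r}/d^r$, and then lower-bound the falling factorial by a power. Both arguments are elementary and yield the same $2^{-O(r/\alpha(r))}$ bound. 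The paper's sequential approach is somewhat more robust (it would adapt easily if the ambient set were not an interval or if distances were measured differently), whereas your exact-count approach is cleaner and gives a sharp combinatorial identity before any estimation enters; the only delicate points you handle — the integer threshold $m=\lceil L\rceil$ versus the real $L$, and the slack from $d\geq 10r\alpha(r)$ ensuring $m\leq\tfrac{11}{10}L$ and $N-r+1>0$ — are exactly the right things to check. One cosmetic note: your displayed formula uses denominator $d^r$, i.e.\ sampling with replacement; since the event forces all coordinates distinct, this only strengthens the bound relative to sampling without replacement, which is the setting in which the claim is applied, so everything is consistent.
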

\begin{proof}
Denote $r' = r\alpha(r)$ 
for simplicity. We will show 
that the event holds even if 
we pick elements from $[d]$ with repetitions, which is clearly stronger.
To do so, we first choose $a_1$, then calculate the probability that $a_2$ lies outside the $d/r'$ interval around $a_1$, then further calculate the probability that $a_3$ lies outside the $d/r'$ interval around $a_1$ and $a_2$ and so on. Formally it suffices to lower bound:
\[\Pr_{a_i \sim [d]}
\left[\min_{a \in \{a_1,\ldots,a_{i-1}\}}|a - a_i| \geq \frac{d}{r'} ~\Big|~ a_1,\ldots a_{i-1}\right],\]
for all $i \in [k]$.

Let us start by bounding the first expression, and let us fix an $i$. There are a total of at most 
$(i-1) \frac{d}{r'}$ elements
that are $\frac{d}{r'}$-close to some $a_{i'}$ for $1\leq i'\leq i-1$. Thus, 
the probability $a_i$ is chosen 
among them is at most $\frac{i-1}{r'}$, so the first expression is at least $1-\frac{i-1}{r'}$.

As the probability of the event we are interested in the product of the expressions over $i=1,\ldots,r$, we get that
\[\Pr_{\substack{a_1,\ldots,a_r \sim [d]}}\left[\min_{a \neq b \in \cup a_i}|a-b| \geq \frac{d}{r'}\right]  \geq 
\prod_{i \in [r]}\left(1-\frac{i-1}{r'}\right) 
\geq \left(1-\frac{r}{r'}\right)^r \geq 2^{-O(r/\alpha(r))}.
\qedhere\]
\end{proof}


\begin{definition}
Let $r \leq 3d$ and $\cD$ be a distribution over $\prod_{i \in [d]} X_i$. We define a ${d \choose r}$-partite graph $G_r(\cD)$ as follows: for every $S \subset_r [d]$, let the part $P_S$ contain the vertices $\supp(\cD^S)$ and to sample an edge we sample $S, S' \subset_r [d]$ with $S \cap S' =\emptyset$, $X \sim \cD$ and output $(X_{S},X_{S'})$. We associate $G_r(\cD)$ with the distribution on triangles that samples disjoint sets $S_1,S_2,S_3 \subset_r [d]$, $X \sim \cD$ and outputs $(X_{S_1},X_{S_2},X_{S_3})$.
\end{definition}

The next lemma shows that the graph $G_{r}(\cD)$ is 
a coboundary expander over $S_m$ with strong parameters
if $\cD$ is any restriction of 
a spherical building of type A, of type C 
or a tensor of them.
\begin{lemma}\label{lem:ug-cobdry-spherical}
Let $m, r,d,q\in \N$ with $r \ll d \ll q$ and let $\mu$ be the distribution $SB^A_d(\F_q)$, $SB^C_d(\F_q)$ or one of their tensor products. Then for all $S \subset_{\leq r} [d]$ and $A_0 \in \supp(\mu^S)$, $G_r(\mu|X_S = A_0)$ is a $(2^{O(r^{0.99}\log r)}, 2^{-\Omega(r^{12})})$-coboundary expander over $\S_m$.
\end{lemma}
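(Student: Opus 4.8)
I would prove Lemma~\ref{lem:ug-cobdry-spherical} by combining the non-lopsided induction of Lemma~\ref{lem:exp-bound-mu} with the extended base cases established in Section~4.3. First I would fix $m,r,d,q$ with $r\ll d\ll q$, fix a set $S\subset_{\le r}[d]$ and a restriction $A_0\in\supp(\mu^S)$, and set $\nu=\mu\mid X_S=A_0$. The goal is to show that the ${d-|S|\choose r}$-partite graph $G_r(\nu)$ is a $(2^{O(r^{0.99}\log r)},2^{-\Omega(r^{12})})$-coboundary expander over $\S_m$. The first step is to reduce this to a statement about the tripartite graphs $T(R_1,R_2,R_3;\nu|X_B=B_0)$ for \emph{well-separated, well-spread} triples $R_1,R_2,R_3$ of size $r$ in $[d]\setminus S$. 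Concretely, given a UG instance $\Psi$ on $G_r(\nu)$ that is $(1-\delta)$-triangle-consistent, one samples $R_1,R_2,R_3\subset_r [d]\setminus S$ disjointly at random; with probability $\Theta(1)$ they satisfy the spread condition in item (3) of Assumption~\ref{assumption1} (by standard concentration), and with probability $2^{-O(r/\alpha(r))}$, where $\alpha(r)=r^{0.01}$ say, they satisfy $\min_{a\ne b\in\cup R_i}|a-b|\ge d/(r\alpha(r))$ by Claim~\ref{claim:poisson}. Taking $\alpha(r)=r^{0.01}$ gives a $2^{-O(r^{0.99})}$ lower bound on the probability the triple is good, which is exactly what one can afford: if one shows that for a good triple the induced instance on $T(R_1,R_2,R_3)$ has cosystolic coefficient $C$, then (by the standard ``restriction amplifies the coefficient by the reciprocal of the restriction probability'' argument, communicated from~\cite{DiksteinD23} and used throughout the paper) the instance $\Psi$ has coefficient $O(C\cdot 2^{O(r^{0.99})})$; since $\Psi$ was arbitrary this bounds the coboundary constant of $G_r(\nu)$.

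**Verifying the hypotheses of Lemma~\ref{lem:exp-bound-mu}.** The heart of the proof is to check that for a good triple $R_1,R_2,R_3$, the measure $\nu$ restricted to triangles over $R_1\cup R_2\cup R_3$ satisfies Assumption~\ref{assumption1}, so that Lemma~\ref{lem:exp-bound-mu} applies and yields coboundary constant $\poly(r)^{r/k}=2^{O(r^{0.99}\log r)}$ (with $k=r^{0.01}$) and additive error $2^{-\Omega(r^{12})}$. For item (1): $\mu$ is a $SB^A$, $SB^C$ or tensor, so by Lemmas~\ref{lem:eps-product-A},~\ref{lem:eps-product-C} (and the fact that a product of $\eps$-product distributions is $\eps$-product) it is an $O(1/\sqrt q)$-product distribution, and $\eps$-productness is preserved under conditioning; since $q$ is taken enormous compared to $r$ (the hypothesis $d\ll q$ together with $r\ll d$), $\eps\le 2^{-r^{12}}$ holds. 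For item (3): this is precisely the well-spread condition, which holds by the choice of good triple. For item (2): this is exactly the content of the extended base case lemmas---Lemma~\ref{lem:grassmann-extended-base-case} for type A, Lemma~\ref{lem:symp-extended-base-case} for type C, and Lemma~\ref{lem:tensor-extended-base-case} for tensors---which say that for disjoint well-ordered sets $A,B,C$ and any restriction, the graph $T(A,B,C;\mu|X_{S'}=a_0)$ is a $(\poly(r),\poly(r)/q)$-coboundary expander; since $q$ is huge the additive error is $\le 2^{-r^{12}}$, and since the well-ordered sets we use in the induction have size $\le 3k=\poly(r)$ the separatedness parameter $K$ appearing there is $\poly(r)$, whence $\poly(K)=\poly(r)$. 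One small point to address carefully: the extended base cases are stated for $\mu$ itself, but item (2) of Assumption~\ref{assumption1} requires it for $\mu\mid X_S=A_0$; this is fine because a further restriction of a spherical building (by fixing more coordinates) is again covered by the same lemmas---the lemmas already quantify over an arbitrary restricting set $S'$ disjoint from $A,B,C$, so one simply absorbs $S$ into $S'$.

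**Assembling the bound.** Once Assumption~\ref{assumption1} is verified, Lemma~\ref{lem:exp-bound-mu} gives that $T(R_1,R_2,R_3;\nu)$ is a $(\poly(r)^{r/k},2^{-\Omega(r^{12})})$-coboundary expander over $\S_m$ for every good triple; with $k=r^{0.01}$ the multiplicative term is $\poly(r)^{r^{0.99}}=2^{O(r^{0.99}\log r)}$. Then the restriction-to-tripartite reduction converts this into a bound of $2^{O(r^{0.99}\log r)}\cdot 2^{O(r^{0.99})}=2^{O(r^{0.99}\log r)}$ on the coboundary constant of $G_r(\nu)$, and the additive error multiplies only by the same $2^{O(r^{0.99})}=2^{o(r^{12})}$ factor so remains $2^{-\Omega(r^{12})}$. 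I would write out the reduction lemma (triangle-consistency of the induced tripartite instance is $(1+o(1))\delta$ by an expander-mixing / sampling argument using Lemma~\ref{lem:bip-eml} and Lemma~\ref{claim:gll-sing-val}; lifting a good tripartite solution back to $G_r(\nu)$ uses the sampling lemma, Lemma~\ref{lem:sampling}, exactly as in the ``lifting the solution'' paragraphs of Lemmas~\ref{lem:easy-recursion} and~\ref{lem:exp-bound-mu}) as a separate short step.

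**The main obstacle.** I expect the genuinely delicate part to be the restriction-to-tripartite reduction---specifically, showing that restricting a $(1-\delta)$-triangle-consistent instance on the full multipartite graph $G_r(\nu)$ to a \emph{randomly chosen} good triple $(R_1,R_2,R_3)$ still yields an instance that is $(1-O(\delta))$-triangle-consistent \emph{with good probability over the triple}, and then inverting this to pass a cosystolic bound back up. The conditioning on the triple being ``good'' (separated and spread) is the subtle point: one must ensure that conditioning on this $2^{-O(r^{0.99})}$-probability event does not blow up the inconsistency by more than a constant factor, which requires that the ``bad'' triples (inconsistent ones) are not concentrated on the good ones---this follows because the separation/spread events depend only on the \emph{index sets} $R_i$, not on the constraint structure, so they are independent of the inconsistency, but one has to phrase the averaging argument with care (sample the triple, condition on goodness, and apply Markov). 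Everything else---$\eps$-productness, the extended base cases, Lemma~\ref{lem:exp-bound-mu}---is by now available as a black box from earlier in the paper.
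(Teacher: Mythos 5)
Your plan tracks the paper's argument closely: restrict to a random well-separated, well-spread tripartite graph, verify Assumption~\ref{assumption1}, invoke Lemma~\ref{lem:exp-bound-mu}, then lift. The base-case verification, the absorption of the initial restriction $S$ into the conditioning set in the extended base case, and the high-level lifting strategy are all correct and essentially what the paper does.

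However, there is a concrete gap in your probability estimate for the existence of a good triple. You apply Claim~\ref{claim:poisson} with $\alpha(r)=r^{0.01}$, so you demand that $\bigcup_i R_i$ be $d/r^{1.01}$-separated, which holds with probability only $2^{-O(r^{0.99})}$. But the well-spread condition (item 3 of Assumption~\ref{assumption1}) fails with probability about $r\cdot 2^{-\Omega(r^{0.02})}$ by Chernoff, and $2^{-\Omega(r^{0.02})}$ is \emph{enormously larger} than $2^{-O(r^{0.99})}$. Consequently the union-bound subtraction $\Pr[\text{separated}]-\Pr[\text{not spread}]$ is negative, and you have not actually exhibited any good triple. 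You would need a genuinely new argument --- e.g.\ that conditioned on the (exponentially rare) separation event the spread event still holds with high probability --- and this is not obvious. The paper sidesteps the issue entirely by demanding only $\Omega(d/r^2)$-separatedness, i.e.\ $\alpha(r)\asymp r$ in Claim~\ref{claim:poisson}, which holds with probability $\Omega(1)$ and is still enough for the extended base cases to yield $K=O(r^2)=\poly(r)$; the spread failure probability is then $o(1)$ and can safely be subtracted. Replacing your $\alpha(r)=r^{0.01}$ by $\alpha(r)=\Theta(r)$ fixes the gap, incidentally removes the $2^{O(r^{0.99})}$ restriction factor you were budgeting for, and brings your proposal in line with the paper's proof (the remaining difference --- your appeal to Lemma~\ref{lem:bip-eml} and Lemma~\ref{lem:sampling} in the lifting step versus the paper's control of cross-inconsistencies via the three expectations~\eqref{eq:incons1}--\eqref{eq:incons3} together with Cheeger via Lemma~\ref{claim:gll-sing-val} --- is a matter of bookkeeping, not substance).
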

\begin{proof}
Let us fix $\mu$ to be the distribution corresponding to the spherical building of type C. The proof when $\mu$ equals $SB^A_d(\F_q)$ is identical except that we use Lemmas~\ref{lem:eps-product-A} and \ref{lem:grassmann-extended-base-case} for type A instead of Lemmas~\ref{lem:eps-product-C} and~\ref{lem:symp-extended-base-case} for type C that are used below. Similarly, 
in the case that $\mu$ is one of the various possible tensor products, we use the fact that $\mu$ is an $O(1/\sqrt{q})$-product distribution (being the product of two $O(1/\sqrt{q})$-product distributions), and Lemma~\ref{lem:tensor-extended-base-case} instead of 
Lemma~\ref{lem:symp-extended-base-case}.

Consider $\cD = \mu|(X_S = A_0)$ for $S \subset [d]$ of size at most $r$ and $A_0 \in \supp(\mu^S)$. Let $\Phi = (G_r(\cD),\Pi)$ be a UG instance over $\S_m$ with $\incons(\Phi) = \delta$. Let $\cS$ be the set of tuples $(S_1,S_2,S_3)$ of subsets of $[d]\setminus S$ of size $r$ that are pairwise disjoint, and let $\cS'\subseteq \cS$ be the set of tuples $(S_1,S_2,S_3) \in \cS$ where $\cup S_i$ is $\Omega(d/r^2)$-separated and satisfies the third item
in Assumption~\ref{assumption1}. Using Claim~\ref{claim:poisson} we have
\begin{equation}\label{eq:apply_chernoff}
\Pr_{(S_1,S_2,S_3) \in \cS}[(S_1,S_2,S_3) \in \cS'] 
\geq \Omega(1) - \Pr_{(S_1,S_2,S_3) \in \cS}[\exists j\in [r^{0.9}], i\in [3] \text{ such that }||S_i\cap I_j| - r^{0.1}|\geq r^{0.06}].
\end{equation}
By the union bound, symmetry and the fact that 
$r\leq \sqrt{d}$ we have that
\[
\Pr_{(S_1,S_2,S_3) \in \cS}[\exists j\in [r^{0.9}], i\in [3] \text{ such that }||S_i\cap I_j| - r^{0.1}|\geq r^{0.06}]
\lll r \Pr_{S_1}[||S_1\cap I_1| - r^{0.1}|\geq r^{0.06}].
\]
We bound the latter probability using the multiplicative Chernoff bound (that holds for sampling without replacement too): 
\[\Pr_{S_1 \subset_r [d]}[||S_1\cap I_1| - r^{0.1}|\geq r^{0.06}] \leq 2^{-\Omega(r^{0.1} (r^{0.06}/r)^2)}=2^{-\Omega(r^{0.02})}.\]
Overall, plugging in these estimates into~\eqref{eq:apply_chernoff} we get that
\[
\Pr_{(S_1,S_2,S_3) \in \cS}[(S_1,S_2,S_3) \in \cS']
\ggg 1.
\]
Let $\cT$ be the set of $S_4 \subset_r [d]\setminus S$ with $S_4 \cap (\cup_{i \in [3]} S_i) = \emptyset$, 
and note that
\begin{equation}\label{eq:incons1}
\E_{(S_1,S_2,S_3) \sim \cS'}\E_{(a_1,a_2,a_3) \sim \mu^{\cup S_i}}[(a_1,a_2,a_3) \in \incons(\Phi)] \lll\delta,    
\end{equation}
\begin{equation}\label{eq:incons2}
\E_{(S_1,S_2,S_3) \sim \cS'}\E_{\substack{S_4 \sim \cT}}\E_{(a_1,a_2,a_4) \sim \mu^{S_1 \cup S_2 \cup S_4}}[(a_1,a_2,a_4) \in \incons(\Phi)] \lll\delta, 
\end{equation}
and
\begin{equation}\label{eq:incons3}
\E_{(S_1,S_2,S_3) \sim \cS'}\E_{\substack{S_4,S_5 \sim \cT:\\ S_4 \cap S_5 = \emptyset}}\E_{(a_1,a_4,a_5) \sim \mu^{S_1 \cup S_4 \cup S_5}}[(a_1,a_4,a_5) \in \incons(\Phi)] \lll\delta, 
\end{equation}
where we used the fact that the expectations of the same event under $(S_1,S_2,S_3) \sim \cS$ is equal to $\delta$. Using Markov's inequality we can henceforth fix a tuple $(S_1,S_2,S_3) \in \cS'$ where both the above expectations are bounded by $O(\delta)$.  Let $I = \cup_{i \in [3]} S_i$. We verify that $\cD^I$ and $I$ satisfy the Assumptions~\ref{assumption1}. Indeed,
\begin{enumerate}
\item By Lemma~\ref{lem:eps-product-C}, the measure $\cD$ is an $\eps$-product distribution with $\eps \lll 1/\sqrt{q}$ which can be made at most $2^{-r^{12}}$ by taking $q$ large.
\item The second item in Assumption~\ref{assumption1} 
follows from Lemma~\ref{lem:symp-extended-base-case}.
\item The third item holds by the definition of $\cS'$.
\end{enumerate}
This means that we may apply Lemma~\ref{lem:exp-bound-mu}, 
and indeed we do so.
\paragraph{Solving $\Phi$ on $H = T(S_1,S_2,S_3;\cD^I)$:} Since $\cD^I$ and $I$ satisfy Assumption~\ref{assumption1} we can apply Lemma~\ref{lem:exp-bound-mu} to get that $H$ is an $(2^{O(r^{0.99}\log r)},2^{-\Omega(r^{12})})$-coboundary expander over $\S_m$. In particular if $\Phi|_H$ is the restricted UG instance on $H$ then there exists a solution $A$ to the vertices of $H$ with: 
\[\viol(A) \leq 2^{O(r^{0.99}\log r)}\incons(\Phi|_H)+2^{-\Omega(r^{12})}) \leq  2^{O(r^{0.99}\log r)}\delta+2^{-\Omega(r^{12})}.\]

\paragraph{Lifting the solution:} We will now use $A$ to create a highly satisfying solution $B$ to $G = G_r(\cD)$. This proof is similar to the lifting proof from Lemma~\ref{lem:exp-bound-mu}, but we give the full proof here for completeness. For every $S_4 \in \cT$, every vertex $u \in \supp(\cD^{S_4})$ and every restriction $s \in \supp(\cD^{S_i}|X_{S_4} = u), i \in [3]$, let $g_s(v)$ denote the permutation $\pi(v,s)A(s)$. We will choose a randomized assignment as follows: for every set $S_4 \in \cT$, every vertex $u \in \supp(\cD^{S_4})$, choose a random $s \sim \cD^{S_1}|(X_{S_4} = u)$ and assign $B(u)= g_{s}(u)$. We will now upper bound the expected fraction of edges that $B$ violates for $\Phi$. 

Consider an edge $(u,v) \in G$ between two parts $S_4,S_5 \in \cT$ (with $S_4 \cap S_4=\emptyset$). This edge is satisfied if there exists $s' \in \supp(\cD^{S_1}|(X_{S_4} = u, X_{S_5} = v))$ such that, (1) $B(u) = g_{s'}(u)$, (2) $B(v) = g_{s'}(v)$ and (3) The triangle $(s',u,v)$ is consistent. 

To evaluate the probability there is such $s'$, we 
sample $s'\sim \cD^{S_1}|(X_{S_4} = u, X_{S_5} = v)$ 
and consider each one of the events, starting with event (1).
For it, the probability it doesn't hold is at most
\[\E_B\E_{(u,v) \sim \cD^{S_4 \cup S_5}}\E_{s' \sim \cD^{S_1}|(X_{S_4} = u, X_{S_5} = v)}[\Ind(B(u) \neq g_{s'}(u))]
=\E_{u \sim \cD^{S_4}}\E_{s,s' \sim \cD^{S_1}|u}[\Ind(g_{s'}(u) \neq g_s(u))]. 
\]
To calculate this, let us first calculate a bound on:
\[\E_{u \sim \cD^{S_4}}\E_{\substack{(s,s') \sim \cD^{S_1 \cup S_2}|u}}[\Ind(g_{s}(u) \neq g_{s'}(u))].\]
It is easy to check that an if an edge $(s,s') \in T(S_1,S_2,S_3;\cD|u)$ is satisfied by $A$, and the triangle $(s,s',u)$ is consistent then $g_s(u) = g_{s'}(u)$. Thus,
\begin{align*}
&\E_{u \in \cD^{S_4}}\E_{\substack{(s,s') \sim \cD^{S_1 \cup S_2}|u}}[\Ind(g_{s'}(u) \neq g_s(u))]\\
&\leq \E_{u \in G}\E_{\substack{(s,s') \sim \cD^{S_1 \cup S_2}|u}}[\Ind((s,s') \text{ not satisfied by $A$})]+\E_{u \in \cD^{S_4}}\E_{\substack{(s,s') \sim \cD^{S_1 \cup S_2}|u}}[\Ind((s,s',u) \in \incons(\Phi))]\\
&\lll \viol(A) + \E_{\substack{(s,s',u) \sim \cD^{S_1 \cup S_2 \cup S_4}}}[\Ind((s,s'u) \in \incons(\Phi))] :=\delta'(S_4).
\end{align*}

We are ready to bound the probability that event (1) does not happen. Towards this end, for each vertex $u \in \supp(\cD^{S_4})$ define $p_u := \Pr_{\substack{(s,s') \sim \cD^{S_1 \cup S_2}|u}}[g_{s'}(u) \neq g_s(u)]$, so that the above inequality translates to $\E_{u \sim \cD^{S_4}}[p_u] = \delta'(S_4)$. Let $p'_u = \Pr_{\substack{s,s' \sim \cD^{S_1}|u}}[g_{s'}(u) \neq g_s(u)]$. By Lemma~\ref{claim:gll-sing-val}, the second largest singular value of the bipartite graph $A(S_1,S_2;\cD|u)$ is at most $\eps\cdot \poly(r) \leq 0.01$ for all $u$, so by the easy direction of Cheeger's inequality we get that, $p'_u \leq O(p_u)$, which gives us that,
\[\E_{u \sim \cD^{S_4}}\E_{s,s' \sim \cD^{S_1}|u}[\Ind(g_{s'}(u) \neq g_s(u))] \lll \delta'(S_4),\]
thus bounding the probability of event (1). One can check that the probability of event (2) is the same as event (1), hence let us proceed to event (3). For that we get,
\[
\E_{(u,v) \sim \cD^{S_4 \cup S_5}}\E_{s' \sim \cD^{S_1}|(X_{S_1} = u, X_{S_2} = v)}[\Ind((s',u,v) \in \incons(\Phi))] = \E_{(s',u,v) \sim \cD^{S_1 \cup S_4 \cup S_5}}[\Ind((s',u,v) \in \incons(\Phi))]. 
\]
Adding up the probabilities that events (1),(2) or (3) do not happen and taking an expectation over $S_4,S_5 \sim \cT$ with $S_4 \cap S_5 = \emptyset$ we get that,
\[\E_B\E_{\substack{S_4,S_5 \sim \cT:\\S_4 \cap S_5}}\E_{(u,v)\sim \cD^{S_4 \cup S_5}}\E_{s' \sim \cD^{S_1}|(u,v)}[\Ind[\text{events (1),(2), or (3) don't hold}]] \lll 2^{O(r^{0.99}\log r)}\delta+2^{-\Omega(r^{12})}:=\delta',\]
where we used \eqref{eq:incons2} and \eqref{eq:incons3}.

We see that sampling sets $S_4,S_5 \sim \cT$ and an edge $(u,v) \sim \cD^{S_4 \cup S_5}$, the restrictions $s_u,s_v$ chosen by the randomized assignment $B$ and $s'$ fails to satisfy at least one of the events (1), (2) and (3) with probability $\lll \delta'$. Thus, 
with probability at most $O(\delta')$ over the choice of $S_4,S_5$, $(u,v)$, $s_u, s_v$, there is no $s'$ that satisfies all events and otherwise we get that 
$s'$ satisfies all of (1), (2) and (3).
This shows that in expectation the assignment $B$ that we get violates $\lll \delta'$-fraction of the edges of $G$ that are between disjoint $S_4,S_5 \in \cT$. Noting that this is a $1-o(1)$-fraction of all the edges of $G$ (since $r \ll d$) completes the proof.
\end{proof}

\section{Construction of Sparse UG Coboundary Expander}
We show that the complex from~\cite{ChapmanL} is a sufficiently strong UG coboundary expander. As an immediate corollary of~\cite{BafnaMinzer} we get that the Chapman-Lubotzky complex admits direct product testers.

\subsection{Vanishing Cohomology for $G_1[X]$ over $\S_m$}\label{sec:vc}
The main goal of this section 
is to present the Chapman-Lubotzky
complex. In particular, we need the statement that for all $m\in\mathbb{N}$, 
for an appropriate choice of parameters, 
this complex has vanishing $1$-cohomology. 
This statement was communicated to us by Dikstein, 
Dinur and Lubotzky~\cite{DDLpersonal}, and below we
give an alternative proof.

\begin{definition}
We say a graph $G$ has vanishing cohomology with respect to $\S_m$ if the following holds. Let $\Phi$ be a Unique-Games instance on $G$ over $\S_m$ in which every triangle in $G$ is consistent. Then there exists a solution $S \in \S_m^{V(G)}$ that satisfies all the constraints of $\Phi$. A complex $X$ has vanishing $1$-cohomology over $\S_m$ if $G_1[X]$ has vanishing cohomology over $\S_m$. 
\end{definition}
We note that the above definition is equivalent to the standard topological definition of vanishing 
$1$-cohomology. Therein, given a function $f\colon E\to H$ defined
on the edges where $H$ is some finite group, one 
defines the coboundary map 
$\partial f$ on triangles via
\[
\partial f(u,v,w) = f(u,v)f(v,w)f(w,u).
\]
With these notations, we care about functions $f$
such that $\partial f\equiv \text{id}$. If $f(u,v) = g(u)g(v)^{-1}$ for all edges $(u,v)\in E$ for 
some $g\colon V\to H$, then $\partial f \equiv \text{id}$
clearly. The first cohomology group of $G$ with coefficients in $H$ is defined via
\[
H^{1}(G, H) = 
\{f~|~\partial f \equiv \text{id}\}
\backslash
\{f~|~\exists g\colon V\to H, f(u,v) = g(u)g(v)^{-1}\},
\]
and note that the fact that $G$ has vanishing cohomology 
over $S_m$ with respect to the definition above 
is equivalent to $H^{1}(G,H) = 0$. We will 
use this language henceforth in this section.

First, we need the following general lemma 
that relates the cohomology of 
a complex to the homomorphisms 
of groups.
\begin{lemma}\label{lem:cohomology_hom}
    Let $G$ be a group acting transitively on a contractible topological space $X$, let $\Gamma$ be a discrete subgroup of $G$ that acts simply on $X$, and let $H$ be a finite group. Then $H^1(\Gamma\backslash X, H) \cong \mathrm{Hom}(\Gamma, H)$. 
\end{lemma}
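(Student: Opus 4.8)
The claim is a standard fact from equivariant topology: when $\Gamma$ acts freely (``simply'') on a contractible space $X$, the quotient $\Gamma\backslash X$ is a $K(\Gamma,1)$, i.e.\ an Eilenberg--MacLane space, so its cohomology with coefficients in any group is the group cohomology of $\Gamma$, and in degree one $H^1(\Gamma, H) \cong \mathrm{Hom}(\Gamma, H)$ for any group $H$ when the action on the coefficients is trivial (which is the implicit setting here, since the Unique-Games formulation uses constant coefficients). The plan is to make this precise using the combinatorial/simplicial description of $H^1$ given in the excerpt, rather than invoking singular cohomology directly, since that is the notion we actually use later.

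First I would set up the dictionary: since $X$ is contractible and $\Gamma$ acts freely, the quotient map $X \to \Gamma\backslash X$ is a covering map with deck group $\Gamma$, and $X$ is its universal cover; hence $\pi_1(\Gamma\backslash X) \cong \Gamma$. The key topological input is that a free action of a discrete group on a contractible space makes the quotient aspherical, so $\Gamma\backslash X$ is a classifying space $B\Gamma$. Then $H^1(\Gamma\backslash X, H)$ — interpreted in the combinatorial sense of the excerpt, namely cocycles $f$ with $\partial f \equiv \mathrm{id}$ modulo coboundaries $f(u,v) = g(u)g(v)^{-1}$ — is by definition the first (non-abelian) cohomology set of the space. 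For a $K(\Gamma,1)$ this is $H^1(\Gamma, H)$, and the standard computation of the first group cohomology with trivial coefficients gives $H^1(\Gamma,H) = \mathrm{Hom}(\Gamma,H)$ (crossed homomorphisms into a trivial module are genuine homomorphisms, and the only principal crossed homomorphism is trivial).

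Concretely, to avoid black-boxing, I would exhibit the isomorphism directly. Given a $1$-cocycle $f$ on $\Gamma\backslash X$ with $\partial f \equiv \mathrm{id}$, pull it back to a cocycle $\tilde f$ on $X$; since $X$ is contractible (in particular simply connected with trivial $H^1$), $\tilde f$ is a coboundary on $X$, say $\tilde f(\tilde u, \tilde v) = \phi(\tilde u)\phi(\tilde v)^{-1}$ for some $\phi\colon X^{(0)} \to H$, and $\phi$ is unique up to right multiplication by a global constant. The $\Gamma$-equivariance of $\tilde f$ (it is pulled back from the quotient) forces $\phi(\gamma \tilde u)\phi(\tilde u)^{-1}$ to be independent of $\tilde u$; call it $\rho(\gamma)$. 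A short check that $\partial f \equiv \mathrm{id}$ and the cocycle identity give $\rho(\gamma_1\gamma_2) = \rho(\gamma_1)\rho(\gamma_2)$, so $\rho \in \mathrm{Hom}(\Gamma,H)$; and $\rho$ is independent of the choice of $\phi$ up to nothing (the constant cancels), so is well-defined. Conversely, a homomorphism $\rho\colon \Gamma \to H$ determines an $H$-bundle over $B\Gamma$, equivalently (on the simplicial model) a cocycle via $\phi$ chosen on a fundamental domain and extended equivariantly; one checks these two constructions are mutually inverse and that $f$ is a coboundary on $\Gamma\backslash X$ iff $\rho$ is trivial.

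The main obstacle is purely expository: translating between the topological statement ``$\Gamma\backslash X$ is a $K(\Gamma,1)$, hence $H^1(-,H) = \mathrm{Hom}(\Gamma,H)$'' and the concrete simplicial/graph-theoretic definition of $H^1$ used in the excerpt (functions on edges, coboundary $\partial f(u,v,w) = f(u,v)f(v,w)f(w,u)$). One must be a little careful that the combinatorial $H^1$ on the $1$-skeleton-plus-triangles of the quotient complex indeed computes the topological $H^1$ — this requires that triangles generate $\pi_1$ relations, i.e.\ that $X$ (and hence the relevant simplicial structure) is simply connected at the level of the $2$-skeleton, which is exactly where contractibility of $X$ is used. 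I would cite a standard reference for the equivalence of simplicial non-abelian $H^1$ with $\mathrm{Hom}(\pi_1,H)/\text{conj}$, and note that here, since we quotient by coboundaries (not just up to conjugation) and the coefficient action is trivial, we land on the nose on $\mathrm{Hom}(\Gamma,H)$.
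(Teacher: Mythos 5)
Your proposal takes essentially the same route as the paper's proof: pull the cocycle back to the contractible cover $X$, use contractibility to trivialize it by a $0$-cochain $g$ (your $\phi$), and read off a homomorphism $\Gamma \to H$ from the failure of $g$ to be $\Gamma$-invariant. The paper skips the $K(\Gamma,1)$/classifying-space framing and argues directly with paths in $X$, but the substantive construction and the well-definedness check via $\Gamma$-invariance of $\partial g$ are the same.
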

\begin{proof}
    We may identify functions on $\Gamma \backslash X$ with functions on $X$ that are invariant under $\Gamma$. For a cycle $x$ let us write $[x]$ for its equivalence class modulo the boundaries. Given $[f] \in H^1(\Gamma \backslash X, H)$ we obtain that $\partial(f) = \text{id}$. Since $X$ is contractible, we may find a function $g$ on $X$ with $\partial(g) = f.$ Without loss of generality $g(x) = \text{id}$ for some $x\in X$. We now set $\varphi(\gamma) = g(\gamma (x))$ for each 
    $\gamma \in \Gamma$. We assert that $\varphi$ is a homomorphism. Indeed, for each $\gamma$ in $\Gamma$ choose a path from $\gamma x$ to $x$.  By hypothesis, $f = \partial(g)$ is invariant under left-multiplication by $\Gamma$. Therefore, the value of $\partial(g)$ on the path is $\varphi(\gamma) = g (\gamma x)g(x)^{-1} = g(\gamma \tau x)g(\tau (x))^{-1} = \varphi(\gamma \tau) \varphi(\tau)^{-1}.$ This yields that $\varphi$ is indeed a homomorphism. To show that the homomorphism is well defined we need to show that if $g$ is $\gamma$-invariant, then $\varphi=1$, which of  course holds. To construct the homomorphism in the reverse direction we choose a representative $x$ for each coset $\Gamma x$ and define a function $g$ on $X$ by setting its value on $\gamma x$ to be $\varphi(\gamma).$ We then obtain back an element $[\partial(g)]\in H^1(\Gamma\backslash X,H).$ It is easy to verify that the maps that we defined are inverses of each other.        
\end{proof}

Recall that a quarternion algebra over $\mathbb{F}$ is a field extension $\mathbb{F}[i,j,k]$ with $i^2= a, j^2 = b, k^2 =c$ and $k=ij = -ji$ for $a,b,c\in \mathbb{F}$. Let $\mathbb{Q}_p$ be the $p$-adic rationals and $\mathbb{Q}_{\infty}$ be $\mathbb{R}$. Let $\nu$ be either a prime $p$ or infinity. A quarternion algebra $D$ over $\mathbb{Q}$ is said to be split at $\nu$ if $D\otimes \mathbb{Q}_p$ is isomorphic to the algebra of $2\times2$ matrices over $\mathbb{Q}_{\nu}$. Otherwise, it is a division ring and it is said to be \emph{ramified} or unsplit.
Given a quarternion algebra $D$ there exists a natural involution $\tau$ sending each of $i,j,k$ to its negation. For a matrix $A$ with coordinates in $D$ let us write $A^* = (\tau(a_{ji})).$ Then the group $SU(n,D)$ consists of all the matrices with coordinates in $D$, such that $A^*A = I$. The first result from number theory used by Chapman and Lubotzky is the following.

\begin{fact}
For every $p_0$ there exits a quarternion algebra that is ramified (non-split) only over $p_0$ and $\infty.$
\end{fact}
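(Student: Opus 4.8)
The plan is to derive this from the classical arithmetic classification of quaternion algebras over $\mathbb{Q}$ by their ramification sets. The first step is to recall the Hilbert symbol description: every quaternion algebra over $\mathbb{Q}$ is isomorphic to $\left(\frac{a,b}{\mathbb{Q}}\right)$ for some $a,b\in\mathbb{Q}^\times$, and this algebra is ramified (i.e. a division ring rather than a matrix algebra) over a place $v$ of $\mathbb{Q}$ precisely when the local Hilbert symbol $(a,b)_v$ equals $-1$. Hilbert reciprocity gives $(a,b)_v=1$ for all but finitely many places and $\prod_v(a,b)_v=1$, so the ramification set of any quaternion algebra over $\mathbb{Q}$ is automatically a finite set of even cardinality. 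The substantive input in the other direction is the Albert--Brauer--Hasse--Noether theorem: \emph{every} finite set $S$ of places of $\mathbb{Q}$ of even cardinality is the ramification set of a (unique) quaternion algebra over $\mathbb{Q}$. Granting this, the Fact follows at once by taking $S=\{p_0,\infty\}$, which has cardinality $2$; the resulting algebra is a division ring because it is ramified at some place.

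For readers who prefer a self-contained argument I would also spell out an explicit construction in the prime case. If $p_0\equiv 3\pmod 4$, I would take $D=\left(\frac{-1,-p_0}{\mathbb{Q}}\right)$: at the real place both entries are negative, so $(-1,-p_0)_\infty=-1$; at $p_0$ the symbol reduces to the Legendre symbol $\left(\frac{-1}{p_0}\right)=(-1)^{(p_0-1)/2}=-1$; and at $2$ and at every odd prime $\ell\neq p_0$ the standard local formulas give $+1$. If $p_0=2$ I would take the Hamilton quaternions $\left(\frac{-1,-1}{\mathbb{Q}}\right)$, which ramify exactly at $2$ and $\infty$. If $p_0\equiv 1\pmod 4$, I would first invoke Dirichlet's theorem on primes in arithmetic progressions to choose an auxiliary prime $q\equiv 3\pmod 4$ that is a quadratic non-residue modulo $p_0$, and then take $D=\left(\frac{-q,-p_0}{\mathbb{Q}}\right)$; quadratic reciprocity then forces the local Hilbert symbols of $D$ to be $-1$ at $p_0$ and at $\infty$ and $+1$ everywhere else, including at $q$ and at $2$.

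I expect the main obstacle to be the existence half of the classification, i.e. producing an algebra with a prescribed ramification set; this is the only step that is not elementary, resting ultimately on class field theory. The explicit constructions above circumvent it, at the cost of a short case analysis built on quadratic reciprocity and Dirichlet's theorem. Everything else --- the Hilbert symbol description, Hilbert reciprocity, and the verification that the algebra produced is ramified at exactly the two desired places --- is routine, so I would either cite a standard reference (for instance Serre's \emph{A Course in Arithmetic} or Vign\'eras' monograph on quaternion algebras) or confine it to a short lemma recording the values of the local Hilbert symbols.
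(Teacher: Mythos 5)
Your argument is correct. The paper states this as an unproved \emph{Fact} borrowed from the number theory underlying the Chapman--Lubotzky construction, so there is no proof in the paper to compare against; both branches of your proposal---the abstract route via Hilbert reciprocity and the Albert--Brauer--Hasse--Noether classification of quaternion algebras over $\Q$ by their (finite, even-cardinality) ramification sets, and the explicit Hilbert-symbol constructions with the case split on $p_0 \bmod 4$ and the auxiliary prime from Dirichlet---are standard and verifiably correct, and either suffices.
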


The Chapman-Lubotzky high dimensional expanders consist of $\Gamma \backslash B$, where $\Gamma$ is a lattice and $B$ is the Bruhat-Tits building of type $\tilde{C}_n$ over $\mathbb{Q}_p$. The lattice $\Gamma$ is given by taking a lattice inside the set $SU(n,D)$, where $D$ is a quarternion algebra splitting over $p_0,\infty$ for some $p_0,\infty$. In our case we choose the prime $p_0$ to be larger than $m$. The lattice is obtained as follows: we first embed $SU(n,D)$ diagonally inside the product $\prod_{\nu \ne p_0,\infty} SU(n,D\otimes Q_\nu)$. We then intersect it with a product of the compact groups $K_{\nu}$, where each $K_{\nu}$ can be an arbitrary compact open subgroup, which we choose to be $SP_{2n}(\mathbb{Z}_\nu)$ for each $\nu \ne p_0$, and when $\nu =p_0$, $K_{\nu}$ can be taken to be a pro $p_0$-group, which means that all its finite quotients have order which is a power of $p_0$:
\begin{fact}
    The subgroup $SU(n,D)\otimes Q_p$ contains a compact open pro-$p$ group for each $p$.
\end{fact}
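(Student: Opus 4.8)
The plan is to realize $SU(n,D)\otimes_{\mathbb{Q}}\mathbb{Q}_p$ as the group of $\mathbb{Q}_p$-points of a connected reductive algebraic group over $\mathbb{Q}_p$ and then quote the standard structure theory of $p$-adic analytic groups, which always contains a compact open pro-$p$ subgroup. Concretely, $D' := D\otimes_{\mathbb{Q}}\mathbb{Q}_p$ is a central simple $\mathbb{Q}_p$-algebra of dimension $4$ equipped with the involution $\tau$ — it is $M_2(\mathbb{Q}_p)$ when $D$ splits at $p$, and the unique quaternion division algebra over $\mathbb{Q}_p$ when $D$ is ramified at $p$ — and in either case $SU(n,D') = \{A\in M_n(D') : A^{*}A = I\}$ is the $\mathbb{Q}_p$-point group of a connected reductive $\mathbb{Q}_p$-group (a form of $Sp_{2n}$ in the split case). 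Since that is a $p$-adic analytic group, it suffices to exhibit a single compact open subgroup that is pro-$p$.

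First I would fix an integral model. Choose a $\tau$-stable $\mathbb{Z}_p$-order $\Lambda\subseteq D'$ (for instance the unique maximal order), which is a local ring with maximal two-sided ideal $\mathfrak{m}$ and finite residue ring $\Lambda/\mathfrak{m}$ of characteristic $p$, and set $K = \{A\in M_n(\Lambda) : A^{*}A = I\}$. A routine argument shows $K$ is a compact open subgroup of $SU(n,D')$: it is closed and bounded in $M_n(D')$, hence compact, and a dimension count against the smooth subvariety cut out by $A^{*}A = I$ shows it has nonempty interior, hence is open. Then define the congruence subgroups $K_j = \ker\bigl(K\to \GL_n(\Lambda/\mathfrak{m}^j)\bigr)$ for $j\geq 1$; each is open and normal in $K$, one has $\bigcap_{j\geq 1}K_j=\{I\}$, and each layer $K_j/K_{j+1}$ embeds into the finite abelian $p$-group $M_n(\mathfrak{m}^j/\mathfrak{m}^{j+1})$. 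Consequently $K_1=\varprojlim_j K_1/K_j$ is a pro-$p$ group, and it is the desired compact open pro-$p$ subgroup.

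The step I expect to need the most care is the ramified case, where $D'$ is a division algebra: one has to check that $\Lambda$ really is a local ring of residue characteristic $p$ (so that the layers above are genuinely $p$-groups) and that the unitary equations still define an open subset of the unit group rather than a lower-dimensional locus. Both facts are standard in Bruhat--Tits theory — the parahoric subgroups of $SU(n,D')$ are compact and open, and their pro-unipotent radicals are pro-$p$ — so in the final write-up I would either cite this directly or, to stay self-contained, deduce it from the general statement that any closed subgroup of $\GL_N(\mathbb{Z}_p)$ contains a finite-index pro-$p$ subgroup (namely its intersection with the first, or for $p=2$ the second, principal congruence subgroup of $\GL_N(\mathbb{Z}_p)$), applied to an embedding $K\hookrightarrow \GL_N(\mathbb{Z}_p)$ coming from a $\mathbb{Z}_p$-basis of $M_n(\Lambda)$. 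With this the Fact holds uniformly over all primes $p$.
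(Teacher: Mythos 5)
The paper does not supply a proof of this Fact --- it is quoted as background from the theory of $p$-adic groups --- so there is no paper argument to compare against, but your proof correctly fills it in via the standard route. Two small cleanups. First, the caveat about passing to the second principal congruence subgroup when $p=2$ is unnecessary: the first congruence subgroup $I+pM_N(\mathbb{Z}_p)$ is already pro-$p$ for every prime, including $p=2$, because each layer $(I+p^jM_N(\mathbb{Z}_p))/(I+p^{j+1}M_N(\mathbb{Z}_p))\cong M_N(\mathbb{F}_p)$ is an elementary abelian $p$-group; the $p=2$ subtlety concerns uniformity and torsion-freeness, not pro-$p$-ness. Second, in the split case the maximal order $\Lambda\cong M_2(\mathbb{Z}_p)$ is not a local ring in the usual left-ideal sense, though it does have a unique maximal two-sided ideal $pM_2(\mathbb{Z}_p)$ with residue ring $M_2(\mathbb{F}_p)$ of characteristic $p$, which is all your layer argument needs; it would be worth phrasing it that way. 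Finally, the self-contained fallback you mention --- pick a $\mathbb{Z}_p$-basis of $M_n(\Lambda)$ to get a closed embedding of $SU(n,D')$ into $\GL_N(\mathbb{Q}_p)$ and intersect with $I+pM_N(\mathbb{Z}_p)$ --- is arguably the cleanest way to write this, since it avoids the split/ramified case distinction altogether: a closed subgroup of a pro-$p$ group is pro-$p$, and the intersection is open in $SU(n,D')$ because $I+pM_N(\mathbb{Z}_p)$ is open in $\GL_N(\mathbb{Q}_p)$. With those adjustments the proof is complete and correct.
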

Following Chapman-Lubotzky we set $K =\prod_{\nu \ne p,\infty}K_{\nu}$ and $\Gamma = K\cap SU(n,D)$. 

Finally, using the congruence subgroup property and the strong approximation theorem, Chapman--Lubotzky showed that $K$ is the profinite completion of $\Gamma$, which means that every homomorphism from $\gamma$ to a finite group can be extended uniquely into $K$. The final number theoretic facts that we need is the following. Recall that a discrete subgroup $L \le G$ is said to be a \emph{lattice} if there exists a $G$-invariant probability measure on $L\backslash G$. 

\begin{fact}$\Gamma$ is a lattice inside $SP_{2n}(\mathbb{Q}_p)$.
\end{fact}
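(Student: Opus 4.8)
The plan is to recognise $\Gamma$ as an $S$-arithmetic subgroup of the $\mathbb Q$-group $\mathbf G = SU(n,D)$ for $S=\{p,\infty\}$ and to combine the Borel--Harish-Chandra finite-covolume theorem with the compactness of $\mathbf G(\mathbb R)$. First I would pin down the local structure. Since $D$ is ramified only at $p_0$ and $\infty$ and $p\ne p_0$, the algebra $D$ splits at $p$, i.e.\ $D\otimes_{\mathbb Q}\mathbb Q_p\cong M_2(\mathbb Q_p)$; as the involution $\tau$ is the standard symplectic-type involution of the quaternion algebra, the special unitary group of the chosen Hermitian form becomes a symplectic group, $\mathbf G(\mathbb Q_p)\cong SP_{2n}(\mathbb Q_p)$, and under this isomorphism the Bruhat--Tits building of $\mathbf G(\mathbb Q_p)$ is precisely the affine building $B$ of type $\tilde C_n$ carrying the $\Gamma$-action. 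The groups $K_\nu$ are compact open subgroups of $\mathbf G(\mathbb Q_\nu)$, equal to the hyperspecial maximal compact $SP_{2n}(\mathbb Z_\nu)$ for all $\nu\ne p,p_0$; hence $\{K_\nu\}_{\nu\ne p,\infty}$ defines an integral model of $\mathbf G$ away from $p$, and $\Gamma=K\cap SU(n,D)$ is the corresponding $S$-arithmetic group, i.e.\ the group of $\mathbb Q$-points of $\mathbf G$ that are integral at every finite place $\nu\ne p$.

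Next I would observe compactness at the archimedean place. Because $D$ is ramified at $\infty$, $D\otimes_{\mathbb Q}\mathbb R\cong\mathbb H$ is Hamilton's quaternions, and the Hermitian form in the Chapman--Lubotzky construction is chosen to be definite there; consequently $\mathbf G(\mathbb R)=SU(n,\mathbb H)$ with respect to a definite form, which is the compact Lie group $SP(n)$. In particular this form is anisotropic over $\mathbb R$, hence anisotropic over $\mathbb Q$, so the semisimple simply connected group $\mathbf G$ is $\mathbb Q$-anisotropic.

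Finally, by the Borel--Harish-Chandra theorem together with Godement's compactness criterion (in its $S$-arithmetic/adelic form, see e.g.\ Platonov--Rapinchuk or Margulis), for the semisimple simply connected $\mathbb Q$-anisotropic group $\mathbf G$ and $S=\{p,\infty\}$ the $S$-arithmetic subgroup $\Gamma$ is a \emph{cocompact} lattice in $\mathbf G(\mathbb Q_p)\times\mathbf G(\mathbb R)$. Since $\mathbf G(\mathbb R)\cong SP(n)$ is compact, projection to the first factor sends this lattice to a cocompact lattice in $\mathbf G(\mathbb Q_p)\cong SP_{2n}(\mathbb Q_p)$: the kernel $\Gamma\cap(\{e\}\times\mathbf G(\mathbb R))$ is discrete inside a compact group, hence finite; the image is discrete, since an accumulating sequence of images would lift, by compactness of $\mathbf G(\mathbb R)$ and after passing to a subsequence, to an accumulating sequence in the discrete group $\Gamma$; and $\mathbf G(\mathbb Q_p)/\mathrm{pr}(\Gamma)$ is the continuous image of the compact space $\bigl(\mathbf G(\mathbb Q_p)\times\mathbf G(\mathbb R)\bigr)/\Gamma$ under the map collapsing the compact factor, so it is compact and carries a finite invariant measure, which we normalise to a probability measure. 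As $\mathrm{pr}(\Gamma)$ is exactly $\Gamma$ in its action on $B$, this proves the Fact. The step that genuinely carries the weight is the invocation of the $S$-arithmetic lattice theorem, and the point most in need of care is checking that the construction fulfils its hypotheses --- that the Hermitian form on $D^n$ really is positive definite at $\infty$ (so that $\mathbf G(\mathbb R)$ is compact, equivalently $\mathbf G$ is $\mathbb Q$-anisotropic and $\Gamma$ is even cocompact) and that all but one of the $K_\nu$ are hyperspecial maximal compacts, so that the cited theorem applies without modification.
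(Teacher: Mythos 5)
Your argument is correct, and it supplies essentially the only reasonable proof: the paper itself states this as an unproved Fact imported from the Chapman--Lubotzky construction, and the standard justification is exactly the one you give --- $D$ splits at $p$ so $\mathbf G(\mathbb Q_p)\cong SP_{2n}(\mathbb Q_p)$, $D$ ramifies at $\infty$ and the form $A^*A=I$ is definite so $\mathbf G(\mathbb R)$ is compact and $\mathbf G$ is $\mathbb Q$-anisotropic, and then Borel--Harish-Chandra/Godement for $S=\{p,\infty\}$ gives a cocompact lattice in $\mathbf G(\mathbb Q_p)\times\mathbf G(\mathbb R)$ whose projection to the noncompact factor is again a cocompact lattice. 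Your projection argument (finite kernel, discrete image, compact quotient) is carried out correctly. One small remark: the worry at the end about all but one $K_\nu$ being hyperspecial is unnecessary for the lattice property itself --- any choice of compact open subgroups $K_\nu$ yields a commensurable group, and commensurability preserves being a (cocompact) lattice; the specific choices in the paper matter only for the later cohomology-vanishing argument, not for this Fact.
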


We are now ready to state the main
lemma of this section.
\begin{lemma}\label{lem:CL-vc}
For all $m\in\mathbb{N}$, choosing 
$n$ and $p$ to be sufficiently 
large, the Chapman-Lubotzky $\Gamma \backslash B$ has vanishing cohomology. Namely, $H^{1}(\Gamma/B, S_m) = 0$.
\end{lemma}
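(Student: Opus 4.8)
The strategy is to combine the topological input from Lemma~\ref{lem:cohomology_hom} with the arithmetic facts quoted above about $\Gamma$ and $K$. Recall that the Chapman--Lubotzky complex is $\Gamma\backslash B$, where $B$ is the Bruhat--Tits building of type $\tilde C_n$ over $\mathbb{Q}_p$; the building $B$ is contractible, the group $G = SP_{2n}(\mathbb{Q}_p)$ acts on it, and $\Gamma$ is a (cocompact, by the lattice fact) discrete subgroup acting simply on $B$. Hence Lemma~\ref{lem:cohomology_hom} applies and gives an isomorphism
\[
H^1(\Gamma\backslash B, S_m) \cong \mathrm{Hom}(\Gamma, S_m).
\]
So it suffices to show that every group homomorphism $\varphi\colon \Gamma\to S_m$ is trivial, for $n,p$ large enough. (Strictly, Lemma~\ref{lem:cohomology_hom} is stated for the first cohomology of the quotient space and the $1$-skeleton $G_1[\Gamma\backslash B]$ is the relevant graph; I would note that $H^1$ of the complex agrees with $H^1$ of its $1$-skeleton since $H^1$ only depends on the $2$-skeleton, and the triangles of $G_1[X]$ are exactly the $2$-faces.)

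\textbf{Killing all homomorphisms to $S_m$.} Here I would use the arithmetic facts quoted in the excerpt. First, by the congruence subgroup property together with strong approximation, $K = \prod_{\nu\neq p,\infty}K_\nu$ is the profinite completion of $\Gamma$; therefore every homomorphism $\varphi\colon\Gamma\to S_m$ to a finite group extends uniquely to a continuous homomorphism $\hat\varphi\colon K\to S_m$. The image $\hat\varphi(K)$ is a finite quotient of $K$, hence a finite quotient of $\prod_{\nu\neq p_0,\infty}K_\nu$ where $K_{p_0}$ is a pro-$p_0$ group and the prime $p_0$ was chosen larger than $m$. A finite quotient of a product of profinite groups is a quotient of a finite subproduct, and the $p_0$-factor can only contribute a $p_0$-group, which must be trivial because $|S_m| = m! $ has no $p_0$-torsion (as $p_0 > m$). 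The remaining factors $K_\nu = SP_{2n}(\mathbb{Z}_\nu)$ for $\nu\neq p_0$: for $n$ large, $SP_{2n}(\mathbb{Z}_\nu)$ has no nontrivial finite quotient of order dividing $m!$ — indeed $SP_{2n}(\mathbb{F}_\nu)$ is (up to center) a finite simple group of Lie rank $n$ and thus of order growing with $n$, and by the congruence subgroup property every finite quotient of $SP_{2n}(\mathbb{Z}_\nu)$ factors through reduction modulo some power of $\nu$, whose composition factors are $PSP_{2n}(\mathbb{F}_{\nu^k})$ and abelian $\nu$-groups; none of these can inject into $S_m$ once $n$ is large enough relative to $m$ (for the simple factors because their minimal faithful permutation degree grows with $n$, and the abelian $\nu$-part requires $\nu\le m$, of which there are finitely many, and then the $\nu$-power exponent is again controlled). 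Consequently $\hat\varphi$ is trivial, hence $\varphi$ is trivial, hence $\mathrm{Hom}(\Gamma,S_m)=0$ and $H^1(\Gamma\backslash B,S_m)=0$.

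\textbf{Main obstacle.} The delicate point is the quantitative statement that for $n$ sufficiently large (depending on $m$), no finite quotient of $\prod_{\nu\neq p_0,\infty}SP_{2n}(\mathbb{Z}_\nu)$ admits a nontrivial homomorphism to $S_m$; this is where one must invoke the congruence subgroup property for $\Gamma$ (so that one genuinely only sees congruence quotients), strong approximation (to identify the closure with the explicit compact group $K$), and a lower bound on the minimal degree of a faithful permutation representation of $PSP_{2n}(\mathbb{F}_q)$, which is classical and grows like $q^n$. I would organize the write-up so that the topological reduction via Lemma~\ref{lem:cohomology_hom} is dispatched first in one paragraph, and then spend the bulk of the argument citing the Chapman--Lubotzky setup and the above group-theoretic facts to conclude $\mathrm{Hom}(\Gamma, S_m) = 1$. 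A secondary but purely bookkeeping obstacle is checking the hypotheses of Lemma~\ref{lem:cohomology_hom} precisely: contractibility of the Bruhat--Tits building, and that $\Gamma$ acts simply (freely, without fixed points on cells) on $B$ — the latter may require passing to a finite-index torsion-free subgroup or using that for suitable level the action is simplicial and free, but this is already part of what makes $\Gamma\backslash B$ a genuine simplicial complex in~\cite{ChapmanL}, so I would simply cite it.
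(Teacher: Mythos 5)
Your proof follows the same high-level route as the paper: reduce via Lemma~\ref{lem:cohomology_hom} to showing $\mathrm{Hom}(\Gamma,S_m)=0$, pass to the profinite completion $K=\prod_{\nu\ne p_0,\infty}K_\nu$, kill the $p_0$-factor via the pro-$p_0$ property (since $p_0>m$), and then kill the remaining $SP_{2n}(\mathbb{Z}_\nu)$ factors by a growth-in-$n$ argument. Where you diverge is in how you handle the last step. The paper proves a new lemma (via the Gurevich--Howe $U$-rank method applied to the Siegel parabolic) that the minimal dimension of a nontrivial \emph{complex} representation of $SP_{2n}(\mathbb{Z}_p)$ grows with $n$, and directly deduces that a permutation representation on $m$ points (equivalently a complex representation of dimension $m$) must be trivial. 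You instead reduce to composition factors of congruence quotients $SP_{2n}(\mathbb{Z}/\nu^k)$ and cite the classical lower bound on the minimal faithful permutation degree of $PSP_{2n}(\mathbb{F}_\nu)$ (of order $\nu^n$). This is a legitimate alternative that avoids the $U$-rank machinery by leaning on a classical bound.

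Two small issues in your write-up of that last step. First, a typo: the simple composition factors of $SP_{2n}(\mathbb{Z}/\nu^k)$ are copies of $PSP_{2n}(\mathbb{F}_\nu)$ (together with cyclic groups of order $\nu$ and central $2$-groups), not $PSP_{2n}(\mathbb{F}_{\nu^k})$; $\mathbb{Z}/\nu^k$ is not a field for $k>1$. Second and more substantively, after you have ruled out $PSP_{2n}(\mathbb{F}_\nu)$ as a composition factor of the image (for $n$ large), what remains is only a solvable image, and your clause about ``the $\nu$-power exponent is again controlled'' does not actually finish the argument. To conclude triviality you should invoke that $SP_{2n}(\mathbb{Z}/\nu^k)$ (equivalently the topological group $SP_{2n}(\mathbb{Z}_\nu)$) is perfect for $n\ge 2$, hence has no nontrivial solvable quotient. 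With that addition your argument is complete; without it there is a genuine (if standard and easily repairable) gap. The paper's $U$-rank argument sidesteps this because it lower-bounds \emph{all} nontrivial representations at once, including one-dimensional ones, so perfectness is a byproduct rather than an extra input.
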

The rest of this section is devoted 
to the proof of Lemma~\ref{lem:CL-vc}.
First, we note that Lemma~\ref{lem:cohomology_hom}, 
to show that $H^{1}(\Gamma\backslash B, S_m) = 0$ it suffices to show that $\mathrm{Hom}(\Gamma ,S_m) = 0$. 
By the above, it is sufficient to establish that $\mathrm{Hom}(K_{\nu} ,S_m) = 0$ for each $\nu\ne p,\infty$. 

When $\nu = p_0$, this follows from the fact that $K_{\nu}$ is a pro $p_0$-group and $p_0>m$. This implies that its quotients are $p$-groups and therefore cannot be embedded inside $S_m$ when $m<p$.

For $\nu \ne p_0$ this follows from the fact that the minimal dimension of a sub-representation of $\mathrm{SP}_{2n}(\mathbb{Z}_p)$ goes to infinity with $n$, and so if $n$ is sufficiently large $\mathrm{Hom}(\mathrm{SP}_{2n}(\mathbb{Z}_p), S_m) = 0$ as each such permutation representation would give rise to a complex representation of dimension $m$, which would therefore have to be trivial. The lower bound on the dimension can be easily deduced by adapting the method due to Howe and Gurevich of $U$-rank, and we give
the details below.
\footnote{A stronger form of the lemma below will appear in a future paper of Evra, Kindler, Lifshitz, and Pirani that will extend the theory of $U$-rank over $p$-adic groups.}

\begin{lemma}
The minimal dimension of a non-trivial representation of $SP_{2n}(\mathbb{Z}_p)$ tends to infinity as a function of $n$ uniformly over all $p$.
\end{lemma}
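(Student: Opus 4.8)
The plan is to prove the more-than-sufficient bound that every non-trivial representation of $SP_{2n}(\mathbb{Z}_p)$ has dimension at least $\tfrac{p^n-1}{2}\ge\tfrac{2^n-1}{2}$, via a concrete version of the Howe--Gurevich $U$-rank method applied to the Siegel unipotent radical. First I would reduce to finite quotients: the group $SP_{2n}(\mathbb{Z}_p)$ is profinite, and since $\mathrm{GL}_d(\mathbb{C})$ has no small subgroups every continuous $d$-dimensional representation $\pi$ has finite image, hence factors through a principal congruence quotient $G:=SP_{2n}(\mathbb{Z}/p^k\mathbb{Z})$. A non-trivial representation has a non-trivial irreducible constituent of no larger dimension, so it suffices to produce $f(n)\to\infty$, \emph{independent of $p$ and $k$}, with $\dim\pi\ge f(n)$ for every non-trivial irreducible representation $\pi$ of $G$.

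Fixing a symplectic basis, let $U\le G$ be the unipotent radical of the Siegel parabolic, $U=\{\left(\begin{smallmatrix} I & A\\ 0 & I\end{smallmatrix}\right): A=A^{T}\}$. This is an abelian subgroup isomorphic to $\mathrm{Sym}^2\big((\mathbb{Z}/p^k)^n\big)$, normalized by the Levi $\mathrm{GL}_n(\mathbb{Z}/p^k)\hookrightarrow G$ acting by congruence $A\mapsto gAg^{T}$; identify the characters of $U$ with symmetric matrices $B$ over $\mathbb{Z}/p^k$ in the usual way. The structural heart of the argument is that the normal closure of $U$ in $G$ is all of $G$, so $U\not\subseteq\ker\pi$, i.e.\ $\pi|_U$ is non-trivial. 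Decomposing $\pi|_U$ into characters, some non-trivial $\chi_B$ ($B\ne 0$) occurs; since conjugation by the Levi carries the $\chi_B$-isotypic component onto the $\chi_{gBg^{T}}$-isotypic component, \emph{every} character $\chi_{gBg^{T}}$, $g\in\mathrm{GL}_n(\mathbb{Z}/p^k)$, occurs in $\pi|_U$, whence
\[
\dim\pi\;\ge\;\big|\{\,gBg^{T}:g\in\mathrm{GL}_n(\mathbb{Z}/p^k)\,\}\big|.
\]

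It remains to lower bound this orbit. Writing $B=p^{j}C$ with $C\not\equiv 0\pmod p$, the orbit of $B$ surjects (strip $p^{j}$ and reduce mod $p$) onto the $\mathrm{GL}_n(\mathbb{F}_p)$-congruence orbit of the non-zero symmetric matrix $\bar C\in\mathrm{Sym}^2(\mathbb{F}_p^n)$, so it suffices that the smallest non-trivial orbit of $\mathrm{GL}_n(\mathbb{F}_p)$ on $\mathrm{Sym}^2(\mathbb{F}_p^n)$ has size at least $\tfrac{p^n-1}{2}$; this is an elementary count, the minimum being attained on rank-one forms $v\mapsto c\langle v,w\rangle^2$. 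Hence $\dim\pi\ge\tfrac{p^n-1}{2}\ge\tfrac{2^n-1}{2}$, which tends to infinity with $n$ uniformly in $p$ and $k$. (For the level $k=1$ one may instead quote the Landazuri--Seitz bound for $SP_{2n}(\mathbb{F}_p)$.)

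The main obstacle is the structural claim that the normal closure of $U(\mathbb{Z}/p^k)$ is all of $SP_{2n}(\mathbb{Z}/p^k)$. For this I would use that $SP_{2n}(\mathbb{Z}/p^k)$ is perfect once $n$ is large (uniformly in $p,k$) and that $SP_{2n}(\mathbb{F}_p)$ is quasi-simple: a normal subgroup containing $U$ surjects onto $SP_{2n}(\mathbb{F}_p)$ and therefore has cokernel that is both a quotient of the $p$-group $K_1/K_k$ and perfect, hence trivial. Making the climb through the congruence filtration $K_1\supset K_2\supset\cdots$ precise, and disposing of the finitely many small $n$ (irrelevant to the asymptotic statement) separately, is the fiddly part; alternatively one can bypass it through explicit commutator identities among elementary symplectic matrices showing $SP_{2n}(\mathbb{Z}/p^k)$ is generated by conjugates of $U$. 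The remaining ingredients --- the no-small-subgroups reduction, the isotypic bookkeeping, and the orbit count --- are routine.
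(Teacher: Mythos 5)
Your proposal follows essentially the same route as the paper's: identify the abelian unipotent radical $U$ of the Siegel parabolic, use the fact that its normal closure is the whole group to see that $\pi|_U$ is nontrivial, use the Levi $\mathrm{GL}_n$ acting by congruence to conclude that a full $\mathrm{GL}_n$-orbit of nontrivial characters of $U$ appears in $\pi|_U$, and finally reduce from level $p^k$ to level $p$ and lower bound the resulting $\mathrm{GL}_n(\mathbb{F}_p)$-orbit on symmetric matrices. The one genuine divergence is the last step: the paper simply cites Gurevich and Howe for the $\mathbb{F}_p$-orbit lower bound, whereas you claim an explicit elementary bound of $\frac{p^n-1}{2}$ attained at rank-one forms. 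That orbit count is correct and not hard for odd $p$ (the two rank-one congruence classes, distinguished by discriminant, each have size exactly $\frac{p^n-1}{2}$, and higher-rank orbits are larger), but you should verify the case $p=2$ separately, since symmetric bilinear forms and quadratic forms part ways there and the rank/discriminant classification does not apply verbatim; the bound survives but the counting is slightly different. Beyond that, you are more explicit than the paper about two points it passes over quickly --- the reduction to finite quotients via no-small-subgroups, and the fact that the normal closure of $U$ is everything --- but these are the same implicit steps, and both texts leave the normal-closure argument at the level of a sketch.
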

\begin{proof}
Let $\chi = \mathrm{tr
}\circ \rho $ be the character of a nontrivial representation. Let $\mathcal{B}$ be the Abelian group of symmetric matrices over $\mathbb{Z}_p$ and let $G = \mathrm{GL}_n(\mathbb{Z_p}),$ which acts on the group $\mathcal{B}$ via $A^B = B^t A B.$ The group $\mathrm{SP}_{2n}(\mathbb{Z_p})$. Recall that the group $SP_{2n}(\mathbb{Z_p})$ conists of the matrices that preserves a symplectic form. They are generated by the matrices of the form 
$\begin{pmatrix} I & A \\
O & I \end{pmatrix}$, where $A$ is symmetric matrix with entries in $\mathbb{Z}_p$, those matrices of the form 
$\begin{pmatrix} C& O\\ 
O & \left(C^{t}\right)^{-1}
\end{pmatrix}$, where $C$ is in 
$\mathrm{GL}_n(\mathbb{Z}_p)$ together with the element 
$w = \begin{pmatrix} O& I\\
-I& O\end{pmatrix}$. 

Let us denote the first group of elements by $B$ and the second group of elements $G.$ Then the product $BG$
is a semi direct product $B\rtimes G$ and it known as the \emph{Siegel parabolic} subgroup. 

Now the normal subgroup that $B$ generates can easily be seen to be all of $\mathrm{SP}_{2n}(\mathbb{Z}_p).$ This shows that the restriction of $\rho$ to $B$ is also nontrivial. Indeed,  if the restriction of $\rho$ to $B$ is trivial this would imply that $\rho
$ on all the conjugates of $B$ and therefore also on the normal subgroup that it generates. 

Therefore, the restriction of $\chi$ to $B$ is a a non-constant function. Let us denote the restriction by  $f$.  Then since $f$ is invariant under the conjugation action of 
$G$ on $B$ we have $f(B^t A B) = f(A)$ for all $B\in G$. This implies that for each character $\chi_X$ in the Pontryagin dual of $B$ all its equivalence classes $\chi_{BXB^{t}}$ also lies in the support of $f$. Now the dimension of $\chi$ is equal to $f(1)$ and standard representation theory implies that the Fourier coefficients of $f$ are nonnegative integers. 

We may therefore lower bound the dimension $\chi(1)$ by the minimal size of an orbit of a nontrivial character in the Pontryagin dual of $B$. Each character in the 
Pontryagin dual corresponds to a symmetric matrix $X$ over $\mathbb{Z}/p^k$ and $\chi_X(A) = \omega_p^{\mathrm{tr}(
XA)}$. Now the orbit of $X$ under the action of $G$ can be lower bounded by the orbit of $P^i X$ for each $i$. We may therefore multiply $X$ by powers of $p$ until all its entries are multiples of $p^{k-1}$ and lower bound the orbit of the resulting character. However, the resulting matrix has entries in $\mathbb{F_p}$  and the lower bound on the orbit of $X$ was established for that case by Gurevich and Howe~\cite{gurevich2017small}.
\end{proof}

\subsection{Cosystolic Expansion for $G_r[X]$}
In this section we show that $G_r[X]$ has sufficiently strong cosystolic expansion. To prove this we use a local to global theorem of~\cite{DiksteinD23} that shows that when the links are coboundary expanders then the complex is a cosystolic expander. Let us first show that the vertex links of the Chapman-Lubotzky complex are spherical buildings of type C, as well as tensor products of type A and type C buildings.

\begin{lemma}\label{lem:vertex-links-CL}
The Chapman-Lubotzky complex $X$ discussed in Section~\ref{sec:vc} has vertex links that are spherical buildings of either type $C_{n-1}$, tensor of type $A_1$ and type $C_{n-2}$,
or tensor of type $C_k$ and type $C_{n-k-1}$. Furthermore, $X$ is an $O(1/\sqrt{p})$-one-sided local spectral expander.
\end{lemma}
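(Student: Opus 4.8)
The plan is to deduce both assertions from the structure theory of Bruhat--Tits buildings. Recall that $X = \Gamma\backslash B$, where $B$ is the affine Bruhat--Tits building of the symplectic group over $\Q_p$ used in~\cite{ChapmanL}; it is an affine building of type $\tilde C$, whose Coxeter diagram is a path with a double bond at each end. Since the lattice $\Gamma$ is chosen in~\cite{ChapmanL} so that $X$ is a genuine simplicial complex whose links are isomorphic to the corresponding links of $B$, it is enough to describe the links in $B$. By the standard description of residues in a building, the link of a vertex $v$ of $B$ is the thick spherical building over the residue field $\F_p$ whose Coxeter diagram is obtained from the affine diagram of $B$ by deleting the node indexing the type of $v$; inspecting $\tilde C$, deleting an extremal node yields a diagram of type $C_{n-1}$, deleting a node adjacent to an extremal one yields $A_1\times C_{n-2}$, and deleting an interior node at position $k$ yields $C_k\times C_{n-1-k}$. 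To finish this part I would identify these abstract buildings with the concrete models of the paper: the building of type $C_m$ over $\F_p$ is $SB^C_m(\F_p)$ (a vertex being an isotropic subspace for the relevant alternating form and a face a flag of such), the building of type $A_1$ over $\F_p$ is $SB^A_1(\F_p)$, and the building of a reducible type $Y_1\times Y_2$ is the simplicial join of the buildings of types $Y_1$ and $Y_2$, which is exactly the tensor $\otimes$ of the corresponding complexes; this gives the three stated cases.

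For the spectral statement I would invoke the trickling-down theorem (Theorem~\ref{thm:trickling-down}). Its hypotheses are that the $1$-skeleton of every link of $X$, including the empty one, is connected -- which holds since $B$ is thick and hence all of its residues are connected -- and that the $1$-skeleton of the link of every codimension-$2$ face of $X$ has second eigenvalue at most some $\lambda$. For the latter, the link of a codimension-$2$ face of $B$ is a rank-$2$ spherical building over $\F_p$, and from the shape of $\tilde C$ the only possibilities are type $A_1\times A_1$ (a complete bipartite graph, $\lambda_2=0$), type $A_2$ (the incidence graph of the projective plane over $\F_p$), or type $C_2$ (the incidence graph of the symplectic generalized quadrangle over $\F_p$); in all cases $\lambda_2 = O(1/\sqrt p)$ by the classical eigenvalue computations for these graphs, which are precisely the ones already used in the proofs of Lemmas~\ref{lem:eps-product-A} and~\ref{lem:eps-product-C}. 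Alternatively one can feed trickling-down the fact, from the first part together with Lemma~\ref{lem:eps-product-C} and its type-$A$ analog, that the vertex links of $X$ are themselves $O(1/\sqrt p)$-one-sided local spectral expanders. Taking $\lambda = O(1/\sqrt p)$, trickling-down gives that $X$ is a $\frac{\lambda}{1-(d-1)\lambda}$-one-sided local spectral expander, where $d=\dim X$; since $p$ is taken large with $d\ll p$, this is $O(1/\sqrt p)$, as claimed.

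The step I expect to be the main obstacle is the bookkeeping in the first part: one must pin down the exact affine diagram and its indexing so that deleting nodes produces precisely $C_{n-1}$, $A_1\times C_{n-2}$ and $C_k\times C_{n-1-k}$ and not shifted variants, and one must verify that the abstract spherical building arising as a residue really is the isotropic-flag model $SB^C$ (respectively the full-flag model $SB^A$) over the residue field $\F_p$ -- including getting the alternating form right so that ``isotropic'' is the correct notion. This is classical, being the Bruhat--Tits description of the parahoric subgroups of $\mathrm{Sp}$, but it needs to be stated with care. A second, minor point is the assertion that passing to the quotient $\Gamma\backslash B$ does not alter links; this is part of what the construction of~\cite{ChapmanL} already guarantees, and I would cite it rather than reprove it.
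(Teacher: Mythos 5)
Your proposal is correct and follows essentially the same route as the paper: identify the vertex links of $X=\Gamma\backslash B$ with residues of the affine building $B$ of type $\tilde C_n$, read off the three possible spherical types by deleting a node from the Coxeter diagram (the paper cites~\cite[Proposition 3.16]{AB} for exactly this), identify them with the concrete flag models $SB^C$, $SB^A$ and their tensor, and then apply trickling-down (Theorem~\ref{thm:trickling-down}). The one place you diverge slightly is in how you verify the trickling-down hypothesis: your primary suggestion is to go directly to the codimension-$2$ links, observe that these are rank-$2$ spherical buildings (of type $A_1\times A_1$, $A_2$ or $C_2$), and quote the classical $O(1/\sqrt p)$ eigenvalue bounds for these bipartite incidence graphs. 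The paper instead first shows the vertex links are $O(1/\sqrt p)$-product distributions via Lemmas~\ref{lem:eps-product-A} and~\ref{lem:eps-product-C} (and their tensors), which in particular bounds the codimension-$2$ links. Your route is a bit more direct since it matches the literal hypothesis of Theorem~\ref{thm:trickling-down}; the paper's is essentially equivalent but records the stronger statement about the vertex links because it is used again later (e.g., in Lemma~\ref{lem:CL-cosystolic}). You also flag this as an alternative, so nothing is missing. The indexing caveat you raise at the end is real but harmless — your final answer matches the paper's.
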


\begin{proof}
Consider the complex $X$, and note that 
by construction the links of $X$ are the same as the links of $\Gamma$. The affine building $\Gamma$ has a Coxeter diagram of type $\tilde{C}_n$. By~\cite[Proposition 3.16]{AB}, the links
of $\Gamma$ correspond to spherical buildings 
whose Coxeter diagram is the result of 
deleting one vertex from the diagram 
$\tilde{C}_n$ (see also~\cite[Lemma 3.1.13]{Essert}). By inspection, it follows that 
all links are spherical buildings that are 
either type $C_{n-1}$, tensor of type $A_1$ and type $C_{n-2}$,
or tensor of type $C_k$ and type $C_{n-k-1}$. These complexes correspond to subspaces over the field $\F_p$ (see~\cite{nelken2023geometric} for a description of the $\tilde{C}_n$ building).

Using Lemmas~\ref{lem:eps-product-C} and~\ref{lem:eps-product-A} we know that the spherical buildings of type C and type A are 
$O(1/\sqrt{p})$-product distributions or equivalently $O(1/\sqrt{p})$-one-sided local spectral expanders. This implies that the same also holds for their tensor products. Using the trickling-down theorem, Theorem~\ref{thm:trickling-down} we then get that $X$ is also an $O(1/\sqrt{p})$-one-sided local spectral expander.
\end{proof}

Below we state the formal definition of cosystolic expansion for graphs and complexes that will be useful for us. 
\begin{definition}
We say a graph $G$ is a $C$-cosystolic expander with respect to $\S_m$ if the following holds. Let $\Phi$ be a Unique-Games instance on $G$ over $\S_m$ with $\incons(\Phi) =\delta$. Then
one can change the constraints of $\Phi$ on $C\delta$-fraction of the edges to get $\Phi'$ such that $\incons(\Phi')=0$. We say that a complex $X$ is a $C$-cosystolic expander over $\S_m$ if the graph $G_1[X]$ is a $C$-cosystolic expander over $\S_m$. 
\end{definition}
In literature the above definition is known as the $1$-cosystolic expansion of $X$ over $\S_m$, and has various generalizations to higher levels of $X$ (when $\S_m$ is replaced by an Abelian group) but we refrain from stating those definitions.

Let us now state the local to global theorem. For a complex $X$ let $Y = X^{\leq R}$ denote the ``cutoff'' of $X$, i.e. $Y = (X(1),\ldots, X(R))$, equipped with the same distributions.
\begin{theorem}[\cite{DiksteinD23} Theorem 1.2 modified]\label{thm:dd-local-to-global}
There exists a constant $R$ such that for all $\beta,\lambda >0$ and $m,d, C \in \N$ with $\beta,\lambda \leq \poly(1/C)$ the following holds. Let $X$ be a $d$-dimensional complex such  that $X^{\leq R}$ is a $\lambda$-one-sided local spectral expander and for all $v \in X(1)$, $G_1[X_v]$ is a $(C,\beta)$-coboundary expander over $\S_m$. Then $X$ is a $\poly(C)$-cosystolic expander over $\S_m$.   
\end{theorem}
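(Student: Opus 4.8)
The plan is to adapt the local-to-global argument for cosystolic expansion of~\cite{EvraKaufman} and~\cite{DiksteinD23} to the non-Abelian setting (alphabet $\S_m$) and to coboundary expanders carrying an additive error $\beta$. View a Unique-Games instance $\Phi$ on $G_1[X]$ as an edge-labelling $f\colon X(2)\to\S_m$ with $f(uv)=f(vu)^{-1}$, and set $\partial f(\{u,v,w\})=f(uv)f(vw)f(wu)$, which is well defined on unordered triangles; then $\incons(\Phi)=\Pr_{\tau\sim\mu_3}[\partial f(\tau)\neq\text{id}]=:\delta$, and the goal is to alter $f$ on a $\poly(C)\delta$ fraction of edges so that $\partial$ of the result is everywhere $\text{id}$, which is exactly the assertion that $G_1[X]$ is a $\poly(C)$-cosystolic expander over $\S_m$.

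\emph{Localization and local solutions.} For each vertex $v$, the restriction of $f$ to the edges internal to the link $X_v$ is a Unique-Games instance $\Phi_v$ on $G_1[X_v]$ whose triangles are precisely the $3$-faces of $X$ contained in the link of $v$; since $\mu_3$ is the projection of $\mu_4$, one gets $\E_{v\sim\mu_1}[\incons(\Phi_v)]=O(\delta)$. Hence, by Markov, all but a $1/\poly(C)$ fraction of vertices are \emph{good} in the sense that $\incons(\Phi_v)\le\poly(C)\delta$. For each good $v$, the hypothesis that $G_1[X_v]$ is a $(C,\beta)$-coboundary expander yields, via Definition~\ref{def:coboundary-constant}, a \emph{local solution} $g_v\colon X_v(1)\to\S_m$ with $\viol_{\Phi_v}(g_v)\le C\cdot\incons(\Phi_v)+\beta\le\poly(C)\delta+\beta$; that is, $f(ab)=g_v(a)g_v(b)^{-1}$ for all but a $\poly(C)\delta+\beta$ fraction of the link-edges $\{a,b\}$.

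\emph{Patching.} It remains to assemble the $g_v$'s into a global correction of $f$, and this is the heart of the proof. Following the minimal-cochain technique of~\cite{EvraKaufman}, one chooses $h\colon X(1)\to\S_m$ minimising the $\mu_2$-measure of edges on which $f(uv)\neq h(u)h(v)^{-1}$, among those $h$ for which this measure is below a fixed $\poly(1/C)$ threshold (the case that no such $h$ exists — meaning $\Phi$ is far from every coboundary — is handled by the systolic lower bound that the link coboundary expansion also provides, and is the reason the definition of cosystolic expander asks only for a nearby cocycle rather than a nearby coboundary). Writing $\tilde f$ for the resulting residual labelling, minimality of $h$ forces a strong local structure: for each vertex $v$, $\text{id}$ is the plurality value of $\tilde f$ over the star of $v$, which makes $\tilde f$ vanish on most link-edges incident to a typical neighbour of $v$. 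Feeding this into the coboundary expansion of $X_v$ applied to $\tilde f|_{X_v}$ — whose inconsistency averages to $O(\delta)$ — one concludes that $\tilde f$ is supported on a set of $\mu_2$-measure $\poly(C)\delta$, contained in a small union of stars around bad vertices, on which the constraints can be overwritten to destroy all inconsistent triangles; the resulting $\Phi'$ then satisfies $\incons(\Phi')=0$ and differs from $\Phi$ on a $\poly(C)\delta$ fraction of edges.

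\emph{The main obstacle.} I expect the delicate point to be this patching step, for two reasons. First, in the non-Abelian world the local solutions must be aligned: for an edge $\{u,v\}$, both $g_u$ and $g_v$ restrict to near-solutions of $\Phi$ on the link of the edge $\{u,v\}$ — a spectral expander because $X^{\le R}$ is a local spectral expander — so by Claim~\ref{claim:shift-partition} there is a shift $\pi_{uv}\in\S_m$ matching them on most common neighbours, and one must verify that $\{\pi_{uv}\}$ is a cocycle up to a $\poly(C)\delta+\poly(C)\beta$ fraction of triangles in order to compose the alignments coherently; this replaces the simpler bookkeeping available over $\mathbb F_2$ in~\cite{EvraKaufman}. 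Second, one must avoid the spurious $\log|X(1)|$ factor that a naive propagation along a spanning tree would incur, which is exactly why the argument is routed through a minimal cochain. Throughout, the errors accumulated when moving between faces of $X$ of different dimensions are controlled by the expander mixing lemma and the sampling lemma (Lemmas~\ref{lem:bip-eml} and~\ref{lem:sampling}) applied to the bipartite inclusion walks of $X$, whose second singular values are supplied by Lemma~\ref{lem:spectral_gap_of_graphs_from_HDX}; this is where the smallness hypotheses $\lambda,\beta\le\poly(1/C)$ are used, together with the largeness of the absolute constant $R$, which must be taken big enough that $X^{\le R}$ already controls every $2$- and $3$-dimensional link and bipartite walk invoked above. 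The additive error $\beta$ enters linearly in each estimate and is absorbed into the final $\poly(C)$ loss.
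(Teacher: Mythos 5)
Your proposal attempts a from-scratch reconstruction of the Dikstein--Dinur local-to-global theorem, whereas the paper's own proof is deliberately thin: it invokes \cite[Theorem~1.2]{DiksteinD23} as a black box, which already gives the conclusion under the hypotheses that \emph{all} links of $X$ are $\lambda$-expanders and that the vertex links are $C$-coboundary expanders (with no additive error), and then explains why two relaxations are harmless. Specifically: (i)~the DD argument only ever appeals to spectral expansion of up-down walks on constantly many levels, so $X^{\le R}$ expansion suffices in place of expansion of all of $X$; and (ii)~the additive $\beta$ only shows up when the coboundary expansion of a link is invoked, where the DD bound $\incons(\Phi)\ge\frac1C\viol(\Phi)$ degrades to $\incons(\Phi)\ge\frac1C\viol(\Phi)-\frac{\beta}{C}$, and this extra $\beta/C$ is absorbed into the additive error terms DD already carries, provided $\beta\le\poly(1/C)$. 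You do correctly anticipate both relaxations, and you correctly identify that the non-Abelian alignment step (Claim~\ref{claim:shift-partition}) and spanning-tree-free bookkeeping are what make the inner argument delicate.

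The gap is in your ``patching'' step, and it is more than an obstacle you can defer. You propose to choose a $0$-cochain $h\colon X(1)\to\S_m$ minimizing $\Pr_{\mu_2}[f(uv)\ne h(u)h(v)^{-1}]$ and then read off local structure from minimality. But such an $h$ only certifies proximity to a \emph{coboundary}, whereas cosystolic expansion as defined in the paper requires producing a nearby fully consistent instance $\Phi'$ (a cocycle), which need not be a coboundary when $H^1$ is nontrivial; your parenthetical that ``the case that no such $h$ exists is handled by the systolic lower bound'' does not resolve this, since systolic expansion controls how far cocycles are from coboundaries and says nothing about how to produce a cocycle near a given cochain with small coboundary. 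Moreover, the local solutions $g_v$ you construct never actually enter your described correction: you set up lists and alignments but then jump to a global minimizer $h$ without using them, so the ``assembling the $g_v$'s'' you announce never happens. The correct machinery (as in \cite{EvraKaufman,DiksteinD23}) is a \emph{local} minimization/iterative-correction scheme on $f$ itself, not a global minimization over $0$-cochains, together with a link-level coboundary-expansion argument showing that a locally minimal $f$ with nonzero $\partial f$ would already have $\incons(f)$ proportional to $\|f-f^*\|$ for the nearest cocycle $f^*$. Since reproducing that machinery in the non-Abelian setting is exactly the content of \cite{DiksteinD23}, the cleaner route here is to cite their Theorem~1.2 and verify the two modifications, as the paper does, rather than attempt to rebuild it.
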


\begin{proof}
First, \cite[Theorem 1.2]{DiksteinD23} asserts that if the one-skeletons of all links of $X$ are $\lambda$-one-sided expanders and
for all $v \in X(1)$ the complex $X_v$ is a $C$-coboundary expander over $\S_m$, then $X$ is a $\poly(C)$ cosystolic expander over $\S_m$. 

We discuss how to change their argument to get the stronger theorem. Firstly their argument only uses the link expansion to derive spectral gaps of the up-down walks on $X$ on $R' < R$ levels. Therefore we can look at $Y = X^{\leq R}$ instead and using the local spectral expansion of $Y$, derive that these up-down walks on $Y$ are sufficiently expanding. But these are the same as the walks on $X$ therefore hence the latter also have the required expansion properties. Hence we only require the local spectral expansion assumption on $X^{\leq R}$.

Now let us discuss the requirements on the coboundary expansion of links -- we only have that the 1-links are $(C,\beta)$-coboundary expanders over $\S_m$, instead of $C$-coboundary expanders. Their argument works as is, except in the step that they apply coboundary expansion. They use the fact that given any UG instance: $\incons(\Phi) \geq \frac{1}{C}\viol(\Phi)$, but we can instead get that, $\incons(\Phi) \geq \frac{1}{C}\viol(\Phi) - \frac{\beta}{C}$. This error of $\beta/C$ gets absorbed into the other additive errors that depend on ``$\eta$'' (check the proof overview) and $\lambda$ as long as $\beta \leq \poly(1/C)$. Hence this error does not affect the rest of the argument and we get the same conclusion.
\end{proof}

\begin{lemma}\label{lem:CL-cosystolic}
For all $r \ll d\ll q$ the following holds. Let $X$ be the $d$-dimensional Chapman-Lubotzky complex over $\F_q$. Then $G_r[X]$ is a $2^{O(r^{0.99}\log r)}$-cosystolic expander over $\S_m$ for all $m \in \N$.  
\end{lemma}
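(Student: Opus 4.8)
The plan is to obtain Lemma~\ref{lem:CL-cosystolic} from the local-to-global principle for cosystolic expansion, Theorem~\ref{thm:dd-local-to-global}, applied at level $r$ rather than level $1$, using Lemma~\ref{lem:vertex-links-CL} to control the links of $X$ and Lemma~\ref{lem:ug-cobdry-spherical} to control their coboundary expansion. Write $C = 2^{O(r^{0.99}\log r)}$. First I would record the spectral input: by Lemma~\ref{lem:vertex-links-CL}, $X$ is an $O(1/\sqrt q)$-one-sided local spectral expander, so since $q\gg d$ we may take $X^{\leq R'}$ to be a $\lambda$-one-sided local spectral expander for the relevant constant $R'$ (a suitable multiple of the $R$ in Theorem~\ref{thm:dd-local-to-global}) and for any $\lambda\leq\poly(1/C)$ we wish; by the standard bounds on down-up walks (Lemma~\ref{lem:spectral_gap_of_graphs_from_HDX}) and on the swap/colored walks built from $\eps$-product distributions (Section~\ref{sec:prelim}), this also makes the $G_r$-type walks on $X$ and on its links that appear inside the local-to-global argument spectrally expanding.

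Next I would supply the coboundary input. Lemma~\ref{lem:vertex-links-CL} tells us that every vertex link $X_v$, $v\in X(1)$, is a spherical building of type $C$, a tensor of a type $A_1$ and a type $C$ building, or a tensor of two type $C$ buildings --- precisely the complexes covered by Lemma~\ref{lem:ug-cobdry-spherical}. Applying that lemma (with the empty restriction, and, if the local-to-global argument calls for them, with the bounded-level restrictions of $X_v$, which are again restrictions of type $A$/type $C$/tensor buildings and hence still in the scope of Lemma~\ref{lem:ug-cobdry-spherical}) yields that $G_r[X_v]$ is a $(C, 2^{-\Omega(r^{12})})$-coboundary expander over $\S_m$, uniformly in $m$. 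Since $r^{12}\gg r^{0.99}\log r$, the additive error $2^{-\Omega(r^{12})}$ is at most $\poly(1/C)$, and $\lambda = O(1/\sqrt q)\leq\poly(1/C)$ for $q$ large, so the quantitative hypotheses of Theorem~\ref{thm:dd-local-to-global} are met.

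Finally I would feed these into Theorem~\ref{thm:dd-local-to-global} in the form adapted from level $1$ to level $r$: the graph $G_r[X]$ is a $\poly(C)$-cosystolic expander over $\S_m$ provided $X^{\leq R'}$ is a good local spectral expander and $G_r[X_v]$ is a $(C,\beta)$-coboundary expander for every $v\in X(1)$ with $\beta\leq\poly(1/C)$. By the previous two paragraphs the hypotheses hold, and $\poly(C) = 2^{O(r^{0.99}\log r)}$, which is the claimed bound (and it is uniform in $m$).

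The main obstacle is this last step: verifying that the DiksteinD23 argument, whose statement concerns the level-$1$ graph $G_1[X]$, transfers faithfully to $G_r[X]$. Concretely one must trace through their proof and confirm that the only coboundary-expansion input it consumes is that of the vertex links' $G_r$-graphs (and of their links/restrictions below level $R$), and that every spectral-gap estimate it uses --- now for the $G_r$-style down-up and swap walks on $X$ and its links --- is still delivered by the one-sided local spectral expansion of $X$ once $q$ is taken sufficiently large; after this bookkeeping the remainder is a direct substitution of parameters. An alternative route is to realise $G_r[X]$ as $G_1[Z]$ for the simplicial complex $Z$ whose $i$-faces are unordered collections of $i$ pairwise disjoint $r$-faces of $X$ whose union lies in $X(ir)$, and to apply Theorem~\ref{thm:dd-local-to-global} directly to $Z$; but then the vertex links of $Z$ have $G_1$-graphs equal to $G_r$ of links of $r$-faces of $X$ (which can be tensors of several smaller buildings), so one still needs exactly the restriction cases of Lemma~\ref{lem:ug-cobdry-spherical} to see that these are coboundary expanders.
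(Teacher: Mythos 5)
Your alternative route---passing to the complex $Z$ whose $\ell$-faces are $\ell$ pairwise disjoint $r$-faces of $X$, so that $G_r[X]=G_1[Z]$, and applying Theorem~\ref{thm:dd-local-to-global} to $Z$---is exactly the paper's proof (the paper denotes $Z$ by $X^r$). Your primary route, adapting the Dikstein--Dinur local-to-global argument from level $1$ to level $r$, would require re-deriving their theorem for $G_r$, whereas passing to $X^r$ lets you invoke it black-box; the two are morally equivalent, but the paper's is the cleaner bookkeeping.

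One point you only gesture at needs to be resolved, since it is really the substance of the step. The vertex links of $X^r$ are indexed by $r$-faces $I\in X(r)$, and a priori $X_I$ could be a tensor of up to $r$ smaller buildings, which would fall outside the scope of Lemma~\ref{lem:ug-cobdry-spherical} (the base case only treats single buildings and two-fold tensors). The paper avoids this by fixing a single $v\in I$ and writing $X_I=(X_v)_{I\setminus\{v\}}$: then $X_v$ is, by Lemma~\ref{lem:vertex-links-CL}, a single type $C$ building or a two-fold tensor, and $X_I$ is realized as a restriction $\mu\,|\,X_S=A_0$ with $|S|\leq r-1$ of one of the distributions to which Lemma~\ref{lem:ug-cobdry-spherical} applies. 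This is exactly why that lemma is stated for restrictions with $|S|\leq r$, and it is the reason one never needs coboundary expansion for tensors of more than two factors. You correctly intuit that the restriction case is what is needed, but the mechanism---viewing $X_I$ as a restriction of a single vertex link rather than as a high-fold tensor---is essential and worth stating. Finally, the paper must separately verify that $(X^r)^{\leq R}$ is a one-sided local spectral expander; it does so by comparing the $1$-skeleton of each link of $X^r$ to a down-up walk on $X$ and applying Lemma~\ref{lem:spectral_gap_of_graphs_from_HDX}, taking $d$ large relative to $r$ and $R$. Your remark that the relevant $G_r$-type walks are spectrally expanding is aimed at this, but the comparison argument is needed to make it precise.
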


\begin{proof}
Consider the complex $X^r$, with 
\[
X^r(1) = \{v~|~v \in X(r)\},
\qquad 
X^r(2) = \{(u,v)~|~ u,v \in X(r) \text{ such that } u \cup v \in X(2r)\}
\]
and more generally
\[
X^r(\ell) = 
\{(u_1,\ldots,u_{\ell}): u_1,\ldots,u_{\ell} \in X(r) \text{ such that } u_1 \cup \ldots \cup u_{\ell} \in X(\ell r)\}.
\] 
Note that that $G_r[X] = G_1[X^r]$.
We intend to use Theorem~\ref{thm:dd-local-to-global}, and for that we 
verify that the constant-sized links have one-sided expansion on the 1-skeletons, and that the
$1$-links of $X^r$ are coboundary expanders over $\S_m$.

\paragraph{One-sided local spectral expansion of $(X^r)^{\leq R}$:} We will show that $(X^r)^{\leq R}$ is an $\exp(-r^{10})$-one-sided local spectral expander. Let us fix an $i$-face $F$ of $X^r$ for $i\leq R-2$ and upper bound the second eigenvalue of the 1-skeleton of $(X^r)_F$. Let $Y$ be the complex $X_F$ with $d':=\dim(Y)=d-ir$. Then bounding the second eigenvalue of the 1-skeleton corresponds to bounding the second eigenvalue of the random walk $W$ that picks a random $r$-face $A \in Y(r)$, goes up to a random $2r$-face $(A,B) \in Y(2r)$ and then outputs the $r$-face $B \in Y(r)$. We can check that $W$ is $1-O(r^2/d')$-close to the walk $W'$ that picks a random $d'$-face $I$ in $Y$ and two uniformly random $r$-faces inside $I$. This is because $W'$ conditioned on outputting two disjoint $r$-faces is the same as $W$. The probability that $W'$ outputs disjoint faces is at least $1-O(r^2/d')$, therefore giving us that the second eigenvalues of $W$ and $W'$ differ by at most $O(r^2/d')$. But $W'$ is the same as the up-down walk from $X(r) \rightarrow X(d') \rightarrow X(r)$, which by Lemma~\ref{lem:spectral_gap_of_graphs_from_HDX} has a second eigenvalue of at most $O(r/d')$ since $X_F$ is a $O(1/\sqrt{q})$-one-sided local spectral expander. This gives us that every $i$-link of $X^r$ has a second eigenvalue of at most $O(r/(d-ir))$ which is at most $2^{-r^{12}}$ by taking $d$ to be a large enough function of $r,R$, as required.

\paragraph{Coboundary expansion of 1-links of $X^r$:} We will show that for all links $I \in X^r(1)$, $G_1[(X^r)_I]$ are $(2^{O(r^{0.99}\log r)},\exp(-r^{10}))$-coboundary expanders. Fix some $I \in X^r(1)$. By definition of $X^r$, $I$ corresponds to an $r$-face in $X$, which we also denote by $I$. Let $v \in X(1)$ be some vertex that belongs to $I$. By Lemma~\ref{lem:vertex-links-CL} we know that $X_v$ is a spherical building either of type $C_{n-1}$, tensor of type $A_1$ and type $C_{n-2}$,
or tensor of type $C_k$ and type $C_{n-k-1}$. Let $Y = X_v$, associated with the distribution $\mu$ over its maximal faces. We get that the complex $X_I$ is some $r-1$-link of $Y$ denoted by $Y_{S \rightarrow A_0}$ and associated with the distribution $\mu|Y_S = A_0$. One can check that the complex $(X^r)_I$ is then equal to $(X_I)^r$ which in turn equals $(Y_{S \rightarrow A_0})^r$. So we get that, $G_1[(X^r)_I] = G_1[(Y_{S \rightarrow A_0})^r] = G_r(\mu|X_S = A_0)$ which is a $(2^{O(r^{0.99}\log r)},2^{-\Omega(r^{12})})$-coboundary expander by Lemma~\ref{lem:ug-cobdry-spherical}. So we get that all the vertex links of $X^r$ are coboundary expanders.

In the two paragraphs above, we have shown that $X^r$ satisfies the hypotheses of Lemma~\ref{thm:dd-local-to-global}. Therefore applying the lemma on $X^r$ we get that $X^r$ or equivalently the graph $G_1[X^r]=G_r[X]$ is a $2^{O(r^{0.99}\log r)}$-cosystolic expander over $\S_m$.
\end{proof}

\subsection{Chapman-Lubotzky complex is a UG Coboundary Expander}
\begin{theorem}\label{lem:CL-coboundary}
For all $m,r \ll d \ll q \in \N$ the following holds. Let $X$ be the $d$-dimensional Chapman-Lubotzky complex over $\F_q$. Then $X$ is an $(m,r,\alpha(r),\alpha(r))$-UG coboundary expander for 
\[
\alpha(r) = 2^{O(r^{0.99}\log r)}.
\]
\end{theorem}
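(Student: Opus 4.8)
The plan is to combine the three main ingredients established earlier in the paper: (1) the cosystolic expansion of $G_r[X]$ over $\S_m$ from Lemma~\ref{lem:CL-cosystolic}, which says that any UG instance on $G_r[X]$ with $\delta$-fraction of inconsistent triangles can be made fully triangle-consistent by modifying the constraints on a $2^{O(r^{0.99}\log r)}\delta$-fraction of edges; (2) the vanishing $1$-cohomology of the Chapman--Lubotzky complex over $\S_m$ from Lemma~\ref{lem:CL-vc}, which characterizes fully triangle-consistent UG instances on $G_1[X]$ as coboundaries; and (3) a bootstrapping argument passing from vanishing cohomology of $G_1[X]$ to vanishing cohomology of $G_r[X]$ for all $r$ in the relevant range. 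Concretely, to show $X$ is an $(m,r,\alpha(r),\alpha(r))$-UG coboundary expander, I must show that for every $t\leq r$ and every function $f\colon E_t[X]\to \S_m$ that is $(1-\alpha(r))$-triangle-consistent, there is $g\colon X(t)\to\S_m$ with $\Pr_{u\cup v\sim\mu_{2t}}[\pi(u,v)=g(u)g(v)^{-1}]\geq 1-\alpha(r)$.

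First I would fix $t\leq r$ and a UG instance $\Phi$ on $G_t[X]$ with $\incons(\Phi)=\delta\leq\alpha(r)$. Applying Lemma~\ref{lem:CL-cosystolic} (with parameter $t$ in place of $r$, noting $t\leq r\ll d$), I obtain a modified instance $\Phi'$ on $G_t[X]$ that agrees with $\Phi$ on all but a $2^{O(t^{0.99}\log t)}\delta\leq 2^{O(r^{0.99}\log r)}\delta$-fraction of edges and has $\incons(\Phi')=0$, i.e.\ is fully triangle-consistent. The next step is to invoke vanishing cohomology for $G_t[X]$ to conclude that $\Phi'$ is a coboundary: there exists $g\colon X(t)\to\S_m$ with $\pi'(u,v)=g(u)g(v)^{-1}$ for \emph{every} edge $(u,v)\in E_t[X]$. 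Since $\Phi$ and $\Phi'$ differ on at most a $2^{O(r^{0.99}\log r)}\delta$-fraction of edges under the measure $\mu_{2t}$, this same $g$ satisfies $\Pr_{u\cup v\sim\mu_{2t}}[\pi(u,v)=g(u)g(v)^{-1}]\geq 1-2^{O(r^{0.99}\log r)}\delta\geq 1-\alpha(r)$ after adjusting constants in the definition of $\alpha(r)$, which is exactly what Definition~\ref{def:ug_coboundary_expander} demands.

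The main obstacle is the missing middle step: the excerpt only states vanishing $1$-cohomology for $G_1[X]$ (Lemma~\ref{lem:CL-vc}), but the argument above needs vanishing cohomology for $G_t[X]$ for all $t\leq r$. I would handle this by the same local-to-global philosophy used for cosystolic expansion: a fully triangle-consistent UG instance on $G_t[X]=G_1[X^t]$ restricts, on each vertex link of $X^t$, to a fully triangle-consistent instance, and by Lemma~\ref{lem:ug-cobdry-spherical} these links are (honest, $\delta$-dependent, with vanishing additive part in the $\delta\to 0$ limit only up to $2^{-\Omega(r^{12})}$) coboundary expanders — so a triangle-consistent instance on a link is a genuine coboundary up to the additive error, and in the fully-consistent case one can iterate/patch to kill the additive error since the instance was exactly consistent to begin with. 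Combined with the global vanishing cohomology of $G_1[X]$ from Lemma~\ref{lem:CL-vc}, which provides the ``global'' obstruction vanishing, a standard Mayer--Vietoris / cone-style gluing argument (of the type already used in the local-to-global theorem, Theorem~\ref{thm:dd-local-to-global}, in the cosystolic setting) upgrades this to vanishing cohomology of $G_t[X]$. The remaining details are routine: checking that the additive errors $2^{-\Omega(r^{12})}$ are negligible against $\alpha(r)=2^{O(r^{0.99}\log r)}$ and that the range $m,r\ll d\ll q$ makes all the invoked lemmas applicable simultaneously.
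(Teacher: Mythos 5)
Your overall architecture matches the paper exactly: combine cosystolic expansion of $G_r[X]$ (Lemma~\ref{lem:CL-cosystolic}) with vanishing cohomology (Lemma~\ref{lem:CL-vc}) to get coboundary expansion of $G_r[X]$, which is the content of being a UG coboundary expander. You also correctly identify the genuine gap: Lemma~\ref{lem:CL-vc} only gives vanishing $1$-cohomology of $G_1[X]$, whereas the argument needs vanishing cohomology of $G_t[X]$ for every $t\leq r$.

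Where you and the paper diverge is in how that gap is filled. The paper handles it with a single black-box citation: it invokes~\cite[Lemma~3.7]{DiksteinD-agreement}, which says that for a \emph{well-connected clique complex} $X$, if $G_1[X]$ has vanishing cohomology over $\S_m$ then so does $G_r[X]$; the well-connectedness of the Chapman--Lubotzky complex is taken from~\cite{ChapmanL}. Your proposed replacement — a local-to-global / Mayer--Vietoris style gluing that uses Lemma~\ref{lem:ug-cobdry-spherical} on the vertex links of $X^t$ together with the global vanishing cohomology of $G_1[X]$ — is the right philosophy but does not close the gap as stated. The sticking point is the additive error: Lemma~\ref{lem:ug-cobdry-spherical} only yields $(2^{O(r^{0.99}\log r)},\,2^{-\Omega(r^{12})})$-coboundary expansion of the links, with $\beta=2^{-\Omega(r^{12})}>0$. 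So a fully triangle-consistent instance on a link is only certified to be a $\beta$-approximate coboundary, not an exact one, and your suggestion to ``iterate/patch to kill the additive error since the instance was exactly consistent to begin with'' is precisely the nontrivial step you would need to justify — it is not routine, and nothing in the toolkit you've cited supplies it. This is why the paper does not try to derive vanishing cohomology of $G_t[X]$ from the quantitative link statements, and instead uses the qualitative well-connectedness result of~\cite{DiksteinD-agreement}, which is designed to bootstrap exact vanishing of cohomology from level $1$ to level $r$ without tolerating any additive slack.
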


\begin{proof}
From Lemma~\ref{lem:CL-vc} we know that $X$ has vanishing 1-cohomology over $\S_m$. We also know that $X$ is a well-connected clique complex~\cite{ChapmanL}.
We now use~\cite[Lemma 3.7]{DiksteinD-agreement}, which asserts that if a complex $X$ is a well-connected clique complex with $G_1[X]$ having vanishing cohomology then $G_r[X]$ also has vanishing cohomology. We also know from Lemma~\ref{lem:CL-cosystolic} that $G_r[X]$ is an $2^{O(r^{0.99}\log r)}$-cosystolic expander over $\S_m$. Combining the two facts proves that $G_r[X]$ is an $2^{O(r^{0.99}\log r)}$-coboundary expander over $\S_m$, therefore an $(m,r,\alpha(r),\alpha(r))$-UG coboundary expander.
\end{proof}

\subsection{Proof of Theorem~\ref{thm:main}}
Fix $\eps,\delta>0$ and take 
$r$ and $m$ sufficiently large.
Take the Chapman Lubotzky 
complex $X$ for sufficiently large 
$n$ and prime $p$, which is a one-sided local spectral expander~\cite{ChapmanL}. Combining this with  Theorem~\ref{lem:CL-coboundary} we get that  $X$ satisfies the conditions
of Theorem~\ref{thm:BM}, and therefore
we conclude that the canonical direct product tester of $X$ has soundness at most $\delta$.
\qed

\section{Acknowledgements}
We would like to thank Alex Lubotzky for several 
helpful discussions. In particular, Alex suggested to us the use of the Chapman-Lubotzky complex and sent us a preliminary
version of~\cite{ChapmanL}. We thank 
Yotam Dikstein, Irit Dinur and Alex Lubotzky for
telling us that the Chapman-Lubotzky complex has vanishing $1$-cohomology over $\S_m$. We would also like to thank Shai Evra, Michael Chapman and Tali Kaufman for helpful conversations and the Simons Institute for the Theory of Computing, where part of this work took place when the authors were participating in the ``Analysis and TCS: New Frontiers'' program. 
\bibliographystyle{alpha}
\bibliography{references}

\end{document}